\DeclareMathOperator{\capac}{cap}
\DeclareMathOperator{\diag}{diag}
\DeclareMathOperator{\supp}{supp}
\DeclareMathOperator{\bal}{bal}
\DeclareMathOperator{\re}{Re}
\DeclareMathOperator{\im}{Im}
\newcommand{\R}{\mathbb{R}}
\newcommand{\C}{\mathbb{C}}
\newcommand{\ud}{\,\mathrm{d}}
\newcommand{\Boh}{\mathcal{O}}
\newcommand{\boh}{\mathit{o}}
\tikzset{->-/.style={decoration={
  markings,
  mark=at position #1 with {\arrow[thin,scale=2.5]{>}}},postaction={decorate,line width=1pt}}}
\tikzset{-<-/.style={decoration={
  markings,
  mark=at position #1 with {\arrow[thin,scale=2.5]{<}}},postaction={decorate,line width=1pt}}}
\tikzstyle{vecArrow} = [thick, decoration={markings,mark=at position
\tikzstyle{innerWhite} = [semithick, white,line width=1.4pt, shorten >= 4.5pt]
\tikzset{naming/.style={align=center,font=\small}}
\tikzset{antenna/.style={insert path={-- coordinate (ant#1) ++(0,0.25) -- +(135:0.25) + (0,0) -- +(45:0.25)}}}
\tikzset{station/.style={naming,draw,shape=dart,shape border rotate=90, minimum width=10mm, minimum height=10mm,outer sep=0pt,inner sep=3pt}}
\tikzset{mobile/.style={naming,draw,shape=rectangle,minimum width=12mm,minimum height=6mm, outer sep=0pt,inner sep=3pt}}
\tikzset{radiation/.style={{decorate,decoration={expanding waves,angle=90,segment length=4pt}}}}
\def\grd@save@target#1{%
  \def\grd@target{#1}}
\def\grd@save@start#1{%
  \def\grd@start{#1}}
\tikzset{
  grid with coordinates/.style={
    to path={%
      \pgfextra{%
        \edef\grd@@target{(\tikztotarget)}%
        \tikz@scan@one@point\grd@save@target\grd@@target\relax
        \edef\grd@@start{(\tikztostart)}%
        \tikz@scan@one@point\grd@save@start\grd@@start\relax
        \draw[minor help lines] (\tikztostart) grid (\tikztotarget);
        \draw[major help lines] (\tikztostart) grid (\tikztotarget);
        \grd@start
        \pgfmathsetmacro{\grd@xa}{\the\pgf@x/1cm}
        \pgfmathsetmacro{\grd@ya}{\the\pgf@y/1cm}
        \grd@target
        \pgfmathsetmacro{\grd@xb}{\the\pgf@x/1cm}
        \pgfmathsetmacro{\grd@yb}{\the\pgf@y/1cm}
        \pgfmathsetmacro{\grd@xc}{\grd@xa + \pgfkeysvalueof{/tikz/grid with coordinates/major step}}
        \pgfmathsetmacro{\grd@yc}{\grd@ya + \pgfkeysvalueof{/tikz/grid with coordinates/major step}}
        \foreach \x in {\grd@xa,\grd@xc,...,\grd@xb}
        \node[anchor=north] at (\x,\grd@ya) {\pgfmathprintnumber{\x}};
        \foreach \y in {\grd@ya,\grd@yc,...,\grd@yb}
        \node[anchor=east] at (\grd@xa,\y) {\pgfmathprintnumber{\y}};
      }
    }
  },
  minor help lines/.style={
    help lines,
    step=\pgfkeysvalueof{/tikz/grid with coordinates/minor step}
  },
  major help lines/.style={
    help lines,
    line width=\pgfkeysvalueof{/tikz/grid with coordinates/major line width},
    step=\pgfkeysvalueof{/tikz/grid with coordinates/major step}
  },
  grid with coordinates/.cd,
  minor step/.initial=.2,
  major step/.initial=1,
  major line width/.initial=1.2pt,
}
\tikzset{every state/.style={minimum size=0pt}}
\newtheorem{thm}{Theorem}[section]
\newtheorem{lem}[thm]{Lemma}
\newtheorem{prop}[thm]{Proposition}
\theoremstyle{definition}
\newtheorem{rhp}[thm]{RH Problem}
\numberwithin{equation}{section}
\newcommand\restr[2]{{#1}\raisebox{-.5ex}{$|$}_{#2}}
\theoremstyle{remark}
\newtheorem{remark}[thm]{Remark}
\numberwithin{equation}{section}
\newcommand{\eq}{\begin{equation}}
\newcommand{\nq}{\end{equation}}
\newcommand{\eqa}{\begin{eqnarray}}
\newcommand{\nqa}{\end{eqnarray}}
\begin{document}

\title{Large $n$ limit for the product of two coupled random matrices}
\author{Guilherme L. F. Silva\footnote{{Department of Mathematics, University of Michigan, Ann Arbor, MI, 48109, USA.
E-mail: \texttt{silvag@umich.edu}}} ~and Lun Zhang \footnote{School of Mathematical Sciences
and Shanghai Key Laboratory for Contemporary Applied Mathematics,
Fudan University, Shanghai 200433, People's Republic of China. E-mail: \texttt{lunzhang@fudan.edu.cn}}}
\date{}
\maketitle

\begin{abstract}
For a pair of coupled rectangular random matrices we consider the squared singular values of their product, which form a determinantal point process. We show that the limiting mean distribution of these squared singular values is described by the second component of the solution to a vector equilibrium problem. This vector equilibrium problem is defined for three measures with an upper constraint on the first measure and an external field on the second measure. We carry out the steepest descent analysis for a 4 $\times$ 4 matrix-valued Riemann-Hilbert problem, which characterizes the correlation kernel and is related to mixed type multiple orthogonal polynomials associated with the modified Bessel functions. A careful study of the vector equilibrium problem, combined with this asymptotic analysis, ultimately leads to the aforementioned convergence result for the limiting mean distribution, an explicit form of the associated spectral curve, as well as local Sine, Meijer-G and Airy universality results for the squared singular values considered.
\end{abstract}

\renewcommand{\thefootnote}{\arabic{footnote}}

\tableofcontents

\section{Introduction}
The studies of products of random matrices might be traced back to the work of Furstenberg and Kesten \cite{FK60} in the context of random Schr\"{o}dinger operators \cite{BL85} and statistical physics relating to disordered and chaotic dynamical systems \cite{CPV93} in the 1960s. The emphasis at that time was put on the statistical behavior of individual entries as the number of factors in the product tends to infinity. The recent rapid developments, however, are focused on the eigenvalue or singular value distributions, at various scales, as the sizes of the matrices tend to infinity while the number of matrices in the product is kept fixed.

Among various progresses of this aspect, significant contributions are due to the works of Akemann, Ipsen, Kieburg and Wei \cite{AIK13,AKW}, in which they showed that the squared singular values of products of independent complex Gaussian matrices (i.e., the matrices whose entries are independent with a complex Gaussian distribution, also known as Ginibre random matrices) form a determinantal point process over the positive real axis. The various local limits of the correlation kernel then reveal an interesting mathematical structure behind the products of independent random matrices, and various scaling limits can be predicted once one knows properties of the global distribution of the squared singular values \cite{BJLNS}. On one hand, after proper centering and scaling, the correlation kernel tends to the sine kernel for points in the bulk, and to the Airy kernel for the right endpoint of the limiting spectrum \cite{LWZ16}, which obey the principle of universality in random matrix theory \cite{KuiBookChapter}. One the other hand, a new family of kernels, namely, the Meijer G-kernels, are found to describe the scaling limit of the correlation kernel near the origin \cite{KZ}. The Meijer G-kernels generalize the classical Bessel kernel and represent a new universality class in random matrix theory as evidenced by their later appearances in many other random matrix models including Cauchy-chain matrix models \cite{BB15,BGS14}, products of Ginibre matrices with inverse ones \cite{Fore14}, Muttalib-Borodin ensembles \cite{Bor,KS14}, a matrix model with Bures measure \cite{ForKie16}, etc. For more information about recent results for products of independent random matrices, we refer to the review article \cite{AI15} and references therein.

In view of these interesting results obtained for products of independent complex Gaussian random matrices, a natural question to ask is how far such results remain valid, or yet if different ones arise, if some of the conditions on the models are relaxed. One attempt towards this direction is to drop the requirement of independence of the matrices in the product, as initiated by Akemann and Strahov \cite{AS15} and further explored by them and Liu \cite{AS18,AS16b,Liu16}. Following \cite{AS15,Liu16}, let us consider a coupled two-matrix model defined by the probability distribution
\begin{equation}\label{def:coupledmatrix}
\frac{1}{\widehat Z_n} \exp \left(-\beta \textrm{Tr}(X_1X_1^*+X_2^*X_2)+\textrm{Tr}(\Omega X_1 X_2 +(\Omega X_1 X_2)^*)\right)\ud X_1\ud X_2,
\end{equation}
over pairs of rectangular complex matrices $(X_1,X_2)$ of sizes $L\times M$ and $M\times n$ respectively,
where the superscript $^*$ stands for the Hermitian transpose, $\ud X_1$ and $\ud X_2$ are the flat complex Lebesgue measures on the entries of $X_1$ and $X_2$, and $\widehat Z_n$ is a normalization constant. Here, $\beta>0$ and $\Omega$ is a fixed $n\times L$ complex matrix playing the role of coupling between $X_1$ and $X_2$, which should satisfy
\begin{equation}\label{eq:coupling_strength}
\Omega\Omega^* < \beta^2 I_n
\end{equation}
to make sure the model is well defined, where $I_n$ is the $n\times n$ identity matrix,. The interest lies in the singular values of the product matrix
\begin{equation}\label{def:Ymatrix}
\widehat Y:=X_1X_2,
\end{equation}
where the matrices $X_1$ and $X_2$ are drawn from \eqref{def:coupledmatrix}.

There are several motivations to study the product \eqref{def:Ymatrix}. First, if $L=n$ and $\Omega$ is a scalar matrix, the model \eqref{def:coupledmatrix} can be interpreted as the chiral two-matrix model \cite{ADOS07,OSborn04}, which was introduced in the context of quantum chromodynamics (QCD). In this case, an alternative formulation of the model is the following (see \cite{AS15,Liu16}). Let $A$ and $B$ be two independent matrices of size $n\times M (M\geq n)$ with independent and identically distributed standard complex Gaussian entries. Define two random matrices
      \begin{equation}\label{def:X12}
      X_1:=\frac{1}{\sqrt{2}}(A-i\sqrt{\tau}B),\qquad X_2:=\frac{1}{\sqrt{2}}(A^*-i\sqrt{\tau}B^*), \qquad 0<\tau<1.
      \end{equation}
 Then the pair $(X_1,X_2)$ is distributed according to \eqref{def:coupledmatrix} with
\begin{equation}\label{def:interpolating_parameters}
 L=n, \quad \beta=\frac{1+\tau}{2\tau}\quad \mbox{and}\quad \Omega=\frac{1-\tau}{2\tau}I_n,
\end{equation}
 and one can see $\tau$ as an interpolation parameter between a model for singular values of the Ginibre matrix $A$ (corresponding to $\tau=0$) and a correlated product (for $\tau>0$).

Also, in the context of QCD with a baryon chemical potential \cite{OSborn04}, the Dirac operator is realized as a block matrix whose diagonal entries are null matrices and the off-diagonal entries are matrices of the form \eqref{def:X12}. The singular values of $\widehat Y$ can be viewed as the correlations of complex eigenvalues of the QCD dirac operator.

In addition, as observed in \cite{AS15}, the product of $X_1$ and $X_2$ defined in \eqref{def:X12} provides a new interpolating ensemble, in a sense extending \eqref{def:X12}--\eqref{def:interpolating_parameters} to a rectangular coupling matrix $\Omega$. It interpolates between the classical Laguerre ensemble \cite{Mehta} (for $\tau=0$) and the product of two independent Ginibre random matrices (for $\tau=1$).

A striking feature is that the squared singular values of $\widehat Y$  are distributed according to a determinantal point process over the positive real axis \cite{AS15,Liu16}. This determinantal point process is a biorthogonal ensemble \cite{Bor} with joint probability density function (see \cite[Proposition 1.1]{Liu16})
\begin{equation}\label{def:jpdf}
\frac{1}{Z_n}\det \left[I_{\kappa}(2\alpha_i\sqrt{x_j})\right]_{i,j=1}^n
\det\left[x_j^{\frac{\nu+i-1}{2}}K_{\nu-\kappa+i-1}(2\beta\sqrt{x_j})\right]_{i,j=1}^n,
\end{equation}
with $I_\mu$ and $K_\nu$ being the modified Bessel functions of first kind and second kind, respectively, where
\begin{equation}\label{def:kappanu}
\kappa:=L-n, \qquad \nu:=M-n,
\end{equation}
$\alpha_1,\hdots, \alpha_n$ are the singular values of the coupling matrix $\Omega$ and $Z_n$ is a normalization constant explicitly known.
The correlation kernel describing the point process \eqref{def:jpdf} admits a double contour integral representation, which can be used to establish various limits near the origin if one further couples the $\alpha_i$'s and $\beta$ on one parameter; see \cite{AS16b,AS15,Liu16} for details. In particular, the universal Meijer G-kernel also appears in one of these limits.

An interesting yet open question posed in \cite{AS15} is to find the limiting mean distribution of the singular values for $\widehat Y$ and the local limits of the correlation kernel beyond the origin. Due to the missing of
independence of the matrices, the challenge we encounter is the fact that the approaches developed for the products of independent matrices are not applicable directly. The main contribution of this paper to fully resolve this problem, and along the way obtain several other asymptotic results when, in contrast with the mentioned previous works, the parameters $\alpha$ and $\beta$ are not coupled together.

\section{Statement of results}

\subsection{The confluent case}
We will focus on the confluent case that all the singular values of $\Omega$ are the same, i.e.,
\begin{equation}\label{eq:alphailimit}
\alpha_i \to \alpha>0.
\end{equation}
In virtue of \eqref{eq:coupling_strength}, we stress that
\begin{equation}\label{eq:inequality_alpha_beta}
\alpha<\beta,
\end{equation}
condition which is not a restriction but only ensures the model \eqref{def:coupledmatrix} is well defined.
In addition, it is assumed that
\begin{equation*}
M \geq L \geq n,
\end{equation*}
so that
\begin{equation*}
\nu \geq \kappa \geq 0.
\end{equation*}
The condition $ M,L \geq n$ assures us that, almost surely, $X_1$ and $X_2$ do not have $0$ as a singular value, and the case $L<M$ can be handled by swapping the roles of $X_1$ and $X_2$.

Under the condition \eqref{eq:alphailimit}, the vector space spanned by the functions $x \mapsto I_{\kappa}(2\alpha_j\sqrt{x})$, $j=1,\ldots,n$, becomes
the linear space spanned by
\begin{equation*}
x \mapsto \frac{\partial^{j-1}}{\partial y^{j-1}}I_{\kappa}(2y \sqrt{x})|_{y=\alpha}, \quad j=1,\ldots,n.
\end{equation*}
Using the recurrence relations (see \cite[Equation 10.29.1]{DLMF})
\begin{equation*}
I_{\mu-1}(z)-I_{\mu+1}(z) =\frac{2\mu}{z}I_{\mu}(z), \qquad
I_{\mu-1}(z)+I_{\mu+1}(z) = 2I_{\mu}'(z),
\end{equation*}
satisfied by the modified Bessel functions of the first kind, it is readily seen that the resulting space is spanned by the functions $x \mapsto x^{\frac{j-1}{2}}I_{\kappa+j-1}(2\alpha\sqrt{x})$, $j=1,\ldots,n$. Thus, a further algebraic calculation implies that the joint probability density function for the squared singular values of $\widehat Y$ is given by
\begin{equation}\label{eq:jpdfcon}
\frac{1}{Z_n}\det \left[x_k^{\frac{\kappa+j-1}{2}}I_{\kappa+j-1}(2\alpha\sqrt{x_k})\right]_{j,k=1}^n
\det\left[x_k^{\frac{\nu-\kappa+j-1}{2}}K_{\nu-\kappa+j-1}(2\beta\sqrt{x_k})\right]_{j,k=1}^n,
\end{equation}
under the condition that the coupling matrix $\Omega$ has a single singular value $\alpha$. For the case $\kappa=0$, this result was first obtained by Akemann and Strahov \cite{AS15}.

From general properties of biorthogonal ensembles \cite{Bor}, it is known that \eqref{eq:jpdfcon}
is a determinantal point process with correlation kernel
\begin{equation} \label{def:Kn}
    K_n(x,y) = \sum_{k=0}^{n-1} \mathcal{Q}_k(x) \mathcal{P}_k(y),
    \end{equation}
where for each $k = 0, 1, \ldots$, $\mathcal{Q}_k$ belongs to the linear span of $x^{\frac{\kappa+j}{2}}I_{\kappa+j}(2\alpha\sqrt{x})$, $j=0,\ldots,k$, while $\mathcal{P}_k$ belongs to the linear span of $x^{\frac{\nu-\kappa+j}{2}}K_{\nu-\kappa+j}(2\beta \sqrt{x})$, $j=0,\ldots,k$, in such a way that
\begin{equation*}
    \int_0^{\infty} \mathcal{Q}_k(x) \mathcal{P}_j(x) \ud x = \delta_{j,k},
\end{equation*}
with $\delta_{j,k}$ being the Kronecker delta.

This characterization of $K_n$ will be the starting point of our work. To describe the large $n$ limit of the correlation kernel $K_n$, we introduce next a vector equilibrium problem.

\subsection{A vector equilibrium problem}\label{subsec:vep}
Given any two finite measures $\mu$ and $\nu$ on $\C$, we denote by, as usual (cf. \cite{SaffTotik}),
\begin{equation*}
I(\mu,\nu)=\iint \log\frac{1}{|x-y|}\ud \mu(x) \ud \nu(y)
\end{equation*}
their mutual logarithmic interaction, and by
\begin{equation}\label{eq:log_energy}
I(\mu)=I(\mu,\mu)=\iint \log \frac{1}{|x-y|} \ud\mu(x) \ud \mu(y)
\end{equation}
the logarithmic energy of the measure $\mu$.

The vector equilibrium problem relevant to the present work asks for minimizing the energy functional
\begin{equation}\label{definition_vector_energy}
E(\nu_1,\nu_2,\nu_3)= I(\nu_1)+I(\nu_2)+I(\nu_3)-I(\nu_1,\nu_2)-I(\nu_2,\nu_3)+2(\beta-\alpha)\int \sqrt{x}\ud \nu_2(x),
\end{equation}
over the set $\mathcal M$ of admissible measures, which is defined to be the set of triples of measures
$\pmb{\nu}=(\nu_1,\nu_2,\nu_3)$ satisfying the following conditions.
\begin{enumerate}
\item[(E1)] All three measures $\nu_1$, $\nu_2$ and $\nu_3$ have finite logarithmic energy.

\item[(E2)] $\nu_1$ is a measure on $\R_-:=(-\infty, 0]$ with total mass $1/2$, i.e.,
$2|\nu_1|=1$, and satisfies the upper constraint
    $$\nu_1 \leq \sigma,$$
where $\sigma$ is the absolutely continuous measure on $\R_-$ with density
\begin{equation}\label{def:constraint_measure}
\frac{\ud \sigma}{\ud x}(x)=\frac{\alpha}{\pi \sqrt{|x|}},\qquad x<0.
\end{equation}

\item[(E3)] $\nu_2$ is a measure on $\R_+:=[0,\infty)$ with total mass $1$, i.e., $|\nu_2|=1$,

\item[(E4)]  $\nu_3$ is a measure on $\R_-$ with total mass $1/2$, i.e.,
$2|\nu_3|=1$.
\end{enumerate}

At first sight, the exact form of $E(\cdot)$ and the conditions on the measures might look mysterious. In the Appendix~\ref{section:heuristics_equil_probl} we present the calculations that led us to this exact form. Similar vector equilibrium problems have appeared before in the literature \cite{DKRZ12,DK09,duits_kuijlaars_mo,KMW09}, and existence and uniqueness of solution are known under very mild conditions \cite{BKMW13,hardy_kuijlaars} which include ours.

Our first result concerns the structure of the minimizer of the above equilibrium problem.

\begin{thm}\label{thm_equilibrium_problem}
There exists a unique vector of measures $\pmb{\mu}=(\mu_1,\mu_2,\mu_3)\in \mathcal M$ that minimizes the energy functional \eqref{definition_vector_energy} over $\mathcal M$. In addition, the components $\mu_1$, $\mu_2$ and $\mu_3$ have the following properties.
\begin{enumerate}[$($$a$$)$]

\item The support of $\mu_1$ is the negative real axis, and
   \begin{equation*}
   \supp(\sigma-\mu_1)=(-\infty,-q]
   \end{equation*}
with
\begin{equation}\label{eq:formula_q_endpoint}
q=\frac{\alpha ^6+\beta ^6-33 \alpha ^4 \beta ^2-33 \alpha ^2 \beta ^4+\sqrt{\left(\alpha ^4+14 \alpha ^2 \beta ^2+\beta ^4\right)^3}}{8 \alpha ^2 \beta ^2 \left(\beta ^2-\alpha^2\right)^2}>0.
\end{equation}

Furthermore, $\mu_1$ is absolutely continuous with respect to the Lebesgue measure and satisfies
\begin{equation}\label{eq:local_behavior_mu1}
\frac{\ud \sigma}{\ud x}(x)-\frac{\ud \mu_1}{\ud x}(x) = c_1(-q-x)^\frac{1}{2}(1+o(1)),\qquad x\to (-q)^-,
\end{equation}
for some positive constant $c_1$.

\item The support of $\mu_2$ is
\begin{equation*}
\supp\mu_2=[0,p]
\end{equation*}
with
\begin{equation}\label{eq:formula_p_endpoint}
p=\frac{-\alpha ^6-\beta ^6+33 \alpha ^4 \beta ^2+33 \alpha ^2 \beta ^4+\sqrt{\left(\alpha ^4+14 \alpha ^2 \beta ^2+\beta ^4\right)^3}}{8 \alpha ^2 \beta ^2 \left(\beta ^2-\alpha^2\right)^2}.
\end{equation}

Furthermore, $\mu_2$ is absolutely continuous with respect to the Lebesgue measure on $[0,p]$ and
\begin{equation}\label{eq:local_behavior_mu2}
\frac{\ud \mu_2}{\ud x}(x)=
\begin{cases}
c_2 x^{-\frac{2}{3}}(1+o(1)), & x\to 0^+, \\
\tilde c_2 (p-x)^{\frac{1}{2}}(1+o(1)), & x\to p^-, \\
\end{cases}
\end{equation}
for some positive constants $c_2$ and $\tilde c_2$.
\item The support of $\mu_3$ is the negative real axis and $\mu_3$ is absolutely continuous with respect to the
Lebesgue measure with density
\begin{equation}\label{eq:density_mu3}
\frac{\ud \mu_3}{\ud x}(x)=\frac{1}{2\pi \sqrt{|x|}}\int \frac{\sqrt{s}}{s-x}\ud \mu_2(s).
\end{equation}
In particular,
\begin{equation}\label{eq:local_behavior_mu3}
\frac{\ud \mu_3}{\ud x}(x)=c_3|x|^{-\frac{2}{3}}(1+o(1)),\quad x \to 0^-,
\end{equation}
for some positive constant $c_3$.
\end{enumerate}
\end{thm}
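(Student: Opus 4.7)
The plan is to follow the classical recipe for vector equilibrium problems: first establish existence and uniqueness by a convexity/lower semicontinuity argument, then extract the Euler--Lagrange conditions, and finally exploit them to derive an algebraic equation for the Cauchy transforms of the minimizer, from which all the quantitative statements follow.

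\textbf{Step 1 (existence and uniqueness).} The quadratic part of $E$ has interaction matrix of type $A_3$ (after sign conventions), which is positive definite, so $E$ is strictly convex on $\mathcal M$. The admissible set is convex and weak-$*$ closed, the upper constraint on $\nu_1$ gives tightness for that component, and the external field $2(\beta-\alpha)\sqrt{x}$ with $\beta>\alpha$ confines $\nu_2$ at infinity; the unconstrained $\nu_3$ is controlled through its interaction with $\nu_2$. These are precisely the hypotheses of the general framework in \cite{BKMW13,hardy_kuijlaars}, which I would quote to conclude existence and uniqueness of a minimizer $\pmb{\mu}=(\mu_1,\mu_2,\mu_3)$.

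\textbf{Step 2 (variational equations).} Writing $U^{\nu}(x)=\int\log|x-y|^{-1}\ud\nu(y)$, the first variation gives, for some Lagrange constants $\ell_1,\ell_2,\ell_3$,
\begin{align*}
2U^{\mu_1}(x)-U^{\mu_2}(x) &= \ell_1 \text{ on } \supp(\sigma-\mu_1), \quad \leq \ell_1 \text{ on } \R_-, \\
2U^{\mu_2}(x)-U^{\mu_1}(x)-U^{\mu_3}(x)+2(\beta-\alpha)\sqrt{x} &= \ell_2 \text{ on } \supp\mu_2, \quad \geq \ell_2 \text{ on } \R_+, \\
2U^{\mu_3}(x)-U^{\mu_2}(x) &= \ell_3 \text{ on } \R_-,
\end{align*}
the last being an unconstrained equality because $\mu_3$ has no external field and its support is allowed to be all of $\R_-$. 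Differentiating the third relation and inverting the resulting Cauchy-type integral equation on the half-line by the standard Tricomi inversion produces exactly the explicit formula \eqref{eq:density_mu3}. The $x^{-2/3}$ singularity \eqref{eq:local_behavior_mu3} of $\mu_3$ at the origin then follows from the $x^{-2/3}$ behavior of $\mu_2$ at $0$ inserted into this integral transform.

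\textbf{Step 3 (spectral curve and explicit endpoints).} Introduce the Cauchy transforms $G_i(z)=\int(z-s)^{-1}\ud\mu_i(s)$ for $i=1,2,3$ together with $G_\sigma(z)=\int(z-s)^{-1}\ud\sigma(s)=\alpha/\sqrt{-z}$. Differentiating the three variational equalities across their supports, one obtains boundary relations among $G_\sigma-G_1$, $G_2$ and $G_3$ of the type familiar from \cite{DKRZ12,duits_kuijlaars_mo}. Piecing these boundary values together as sheets of a Riemann surface, I expect them to form the branches of an algebraic function $\xi(z)$ satisfying a polynomial equation $P(z,\xi)=0$ whose coefficients are rational in $z$ and involve only $\alpha$ and $\beta$. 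The endpoints $-q$ and $p$ are the real branch points of this curve, characterised by the vanishing of its discriminant; solving this discriminant condition algebraically should yield \eqref{eq:formula_q_endpoint} and \eqref{eq:formula_p_endpoint}. Generic simple branch points give precisely the square-root local behavior asserted in \eqref{eq:local_behavior_mu1} at $-q$ and in \eqref{eq:local_behavior_mu2} at $p$, while the $x^{-2/3}$ behavior of $\mu_2$ near $0$ corresponds to three sheets merging at $z=0$, a point of ramification of order three forced by the combined effect of the endpoint of $\supp\mu_2$, the endpoint of $\supp\mu_3$ and the $\sqrt{x}$ external field.

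\textbf{Step 4 (support structure; main obstacle).} The hardest part is not the local endpoint analysis but the global one: proving that the supports are exactly $\supp\mu_1=\R_-$, $\supp(\sigma-\mu_1)=(-\infty,-q]$ (so that $\mu_1$ saturates the constraint on $[-q,0]$), and $\supp\mu_2=[0,p]$ is a single interval. From the algebraic curve one produces candidate densities that are positive on their proposed supports and satisfy the equalities in the Euler--Lagrange conditions; to identify these candidates with the actual minimizer I would need to verify the strict inequalities off the supports. This requires choosing the correct sheet of $\xi(z)$ and analysing the sign of the relevant combinations of logarithmic potentials globally on $\R_\pm$. Once this sign analysis is in place, uniqueness from Step 1 forces the candidate to coincide with $\pmb{\mu}$, completing the proof of all three parts of the theorem.
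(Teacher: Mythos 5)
Your Steps 1--3 track the paper's strategy fairly well: existence/uniqueness is quoted from the general theory, the Euler--Lagrange conditions take the form you wrote, $\mu_3$ is obtained as the balayage of $\frac12\mu_2$ onto $\R_-$ (your ``Tricomi inversion'' is the same thing in different language), and the explicit formulas for $p,q$ and the local endpoint behaviors come from the discriminant and ramification structure of an order-$4$ algebraic curve in $(z,\xi)$. Those parts are correct in outline.

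The genuine gap is Step 4, and it is not a technicality you can defer --- it is logically prior to Step 3. To write down the four-sheeted Riemann surface and the spectral curve at all, you already need to know that $\supp(\sigma-\mu_1)$ is a single ray $(-\infty,-q]$ and that $\supp\mu_2$ is a single interval $[0,p]$ (otherwise there would be extra finite branch points and the curve would not be determined by $\alpha,\beta$ alone). Your proposal inverts this: it wants to \emph{use} the curve to identify the supports, which is circular unless you posit the curve as an ansatz and verify it, and you do not do that either. In the paper these structural facts are proved directly from the variational conditions before the curve is ever introduced, and they require real work. For $\mu_1$, the constraint $\sigma$ has infinite total mass and the supports are unbounded, so the classical Lower Envelope Theorem (which assumes a fixed compact set) does not apply; the paper first extends it to the unbounded setting (Proposition~\ref{prop:lower_envelope}) and then runs Dragnev's iterative balayage algorithm (Proposition~\ref{prop:support_constrained_equilibrium_measure}, via Lemmas~\ref{lem:density_balayage} and \ref{lem:support_saturation}) to show $\supp(\sigma-\mu_\rho^\sigma)=(-\infty,-c]$; this is then applied with $\rho=\mu_2$. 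For $\mu_2$, one-interval support is obtained not from the curve but by checking $(xQ_2'(x))'>0$ for the effective external field $Q_2=-U^{\mu_1}-U^{\mu_3}+2(\beta-\alpha)\sqrt{x}$ (using that $\mu_1,\mu_3$ live on $\R_-$ and $\beta>\alpha$) and invoking \cite[Theorem~IV.1.10(c)]{SaffTotik}, after which $\tilde p=0$ follows by a monotonicity/contradiction argument. None of these ingredients appear in your sketch, and a bare ``analyse the sign of the relevant combinations of potentials globally'' does not substitute for them: that sign analysis is exactly what the balayage machinery and the $Q_2$-convexity argument accomplish, and without them you have no way to rule out that $\supp\mu_2$ has several components or that $\sigma-\mu_1$ vanishes on a more complicated set.

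A smaller omission: the theorem also asserts $\supp\mu_1=\R_-$ and $\supp\mu_3=\R_-$ in their entirety, which you never address; in the paper these follow for free once $\mu_1=\sigma$ on $(-q,0]$ and $\mu_3=\frac12\bal(\mu_2,\R_-)$ are established, but they still need to be said.
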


\subsection{The spectral curve}
One of the fundamental objects for a matrix model is its associated  {\it spectral curve} that has been explored for various other matrix models \cite{balogh_bertola_09,BEH03,BEH06,KT15,Neuschel14,MFS19}. To describe the spectral curve for the model  \eqref{def:Ymatrix}, denote by
\begin{equation}\label{def:Cauchytransform}
C^{\mu}(z)=\int \frac{\ud \mu(x)}{x-z}, \qquad z\in \mathbb{C}\setminus \supp \mu,
\end{equation}
the Cauchy transform of a measure $\mu$, let $\pmb{\mu}=(\mu_1,\mu_2,\mu_3)$ be the unique minimizer given in Theorem~\ref{thm_equilibrium_problem} and set
\begin{equation}\label{eq:definition_xi_functions}
\begin{aligned}
&\xi_1(z) =C^{\mu_1}(z)+\frac{\alpha} {\sqrt{z}}, && \quad z\in \C\setminus \R_-, \\
&\xi_2(z) =C^{\mu_2}(z)- C^{\mu_1}(z)-\frac{\alpha} {\sqrt{z}},  && \quad z\in \C\setminus (-\infty,p],\\
&\xi_3(z) =C^{\mu_3}(z)- C^{\mu_2}(z)-\frac{\beta} {\sqrt{z}}, && \quad z\in \C \setminus (-\infty,p], \\
&\xi_4(z) =-C^{\mu_3}(z)+\frac{\beta} {\sqrt{z}},  && \quad z\in \C\setminus \R_-,
\end{aligned}
\end{equation}
where the branch cut of the square root function $\sqrt{z}$ is taken along the negative real axis.
The spectral curve for \eqref{def:Ymatrix} takes the form of an algebraic equation and is given by our next theorem.
\begin{thm}\label{thm:spectral_curve}
The functions $\xi_1,\xi_2,\xi_3$ and $\xi_4$ are the four solutions to the algebraic equation
\begin{equation}\label{eq:spectral_curve}
\xi^4-\frac{\alpha^2+\beta^2}{z}\xi^2+\frac{\alpha^2-\beta^2}{z^2}\xi+\frac{\alpha^2\beta^2}{z^2}=0,
\end{equation}
and \eqref{eq:spectral_curve} is parametrized by
$$
(z,\xi)=\left(\frac{t}{h(t)^2},h(t)\right),\qquad t\in \overline \C,
$$
where
\begin{equation}\label{def:h}
h(t)=\frac{t^2-(\alpha^2+\beta^2)t+\alpha^2\beta^2}{\beta^2-\alpha^2}.
\end{equation}
\end{thm}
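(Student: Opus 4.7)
The plan is to view $\xi_1,\xi_2,\xi_3,\xi_4$ as the four branches of a single algebraic function on $\overline{\mathbb{C}}$, read off the quartic equation from their elementary symmetric polynomials, and verify the rational parametrization by direct substitution.

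First, I would use the Euler--Lagrange variational conditions associated with the minimizer $\pmb\mu=(\mu_1,\mu_2,\mu_3)$ of Theorem~\ref{thm_equilibrium_problem}. On the free parts of the supports these read
\begin{equation*}
2U^{\mu_1}-U^{\mu_2}=\ell_1 \text{ on } (-\infty,-q),\ \ 2U^{\mu_2}-U^{\mu_1}-U^{\mu_3}+2(\beta-\alpha)\sqrt{x}=\ell_2 \text{ on } (0,p),\ \ 2U^{\mu_3}-U^{\mu_2}=\ell_3 \text{ on } \mathbb{R}_-,
\end{equation*}
for suitable constants, while the constraint $\mu_1=\sigma$ is active on $(-q,0)$. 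Differentiating and applying the Sokhotski--Plemelj formulas to the Cauchy transforms $C^{\mu_j}$ and to the boundary values of $\alpha/\sqrt{z},\beta/\sqrt{z}$ across $\mathbb{R}_-$, one obtains the swap relations $\xi_1^\pm=\xi_2^\mp$ on $(-\infty,-q)$, $\xi_3^\pm=\xi_4^\mp$ on $\mathbb{R}_-$, and $\xi_2^\pm=\xi_3^\mp$ on $(0,p)$. On the saturated interval $(-q,0)$ the jump $2i\alpha/\sqrt{|x|}$ of $C^{\mu_1}$ coming from $\rho_{\mu_1}=\alpha/(\pi\sqrt{|x|})$ cancels exactly with that of $\alpha/\sqrt{z}$, so $\xi_1$ and $\xi_2$ extend analytically across $(-q,0)$. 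Consequently the four $\xi_j$'s permute among themselves under analytic continuation, and their elementary symmetric polynomials $e_k:=e_k(\xi_1,\xi_2,\xi_3,\xi_4)$ are single-valued and analytic on $\overline{\mathbb{C}}\setminus\{0\}$, the apparent branch points $-q$ and $p$ being removable in the symmetric combinations.

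Next, I would determine each $e_k$ by analyzing its behaviour at $\infty$ and at $0$. From $C^{\mu_j}(z)=-|\mu_j|/z+\mathcal{O}(z^{-2})$ with $|\mu_1|=|\mu_3|=\tfrac12$ and $|\mu_2|=1$, the large-$z$ expansions of $\xi_1,\xi_2,\xi_3,\xi_4$ begin with $\alpha z^{-1/2}-\tfrac{1}{2z}$, $-\alpha z^{-1/2}-\tfrac{1}{2z}$, $-\beta z^{-1/2}+\tfrac{1}{2z}$, $\beta z^{-1/2}+\tfrac{1}{2z}$ respectively, yielding
\begin{equation*}
e_1=0,\quad e_2=-\frac{\alpha^2+\beta^2}{z}+o(z^{-1}),\quad e_3=\frac{\beta^2-\alpha^2}{z^2}+o(z^{-2}),\quad e_4=\frac{\alpha^2\beta^2}{z^2}+o(z^{-2}).
\end{equation*}
At $z=0$, the constraint density \eqref{def:constraint_measure} of $\sigma$ contributes $-\alpha/\sqrt{z}$ to $C^{\mu_1}$, which cancels exactly with the $\alpha/\sqrt{z}$ in the definition of $\xi_1$, leaving $\xi_1$ bounded; using \eqref{eq:density_mu3} together with the $|x|^{-2/3}$ behaviours \eqref{eq:local_behavior_mu2}, \eqref{eq:local_behavior_mu3}, one checks that $\xi_2,\xi_3,\xi_4$ diverge at worst like $z^{-2/3}$. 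Hence $e_2=\mathcal{O}(z^{-4/3})$ and $e_3,e_4=\mathcal{O}(z^{-2})$ near $0$, which combined with the rationality and the leading behaviour at infinity forces
\begin{equation*}
e_1=0,\quad e_2=-\frac{\alpha^2+\beta^2}{z},\quad e_3=\frac{\beta^2-\alpha^2}{z^2},\quad e_4=\frac{\alpha^2\beta^2}{z^2}.
\end{equation*}
Substituting into $\prod_j(\xi-\xi_j)=\xi^4-e_1\xi^3+e_2\xi^2-e_3\xi+e_4$ then recovers precisely \eqref{eq:spectral_curve}.

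Finally, for the rational parametrization I would substitute directly: multiplying \eqref{eq:spectral_curve} by $z^2$ and setting $t:=z\xi^2$ yields $t^2-(\alpha^2+\beta^2)t+\alpha^2\beta^2=(\beta^2-\alpha^2)\xi$, so $\xi=h(t)$ with $h$ as in \eqref{def:h}, and then $z=t/\xi^2=t/h(t)^2$, as claimed. The main obstacle I anticipate is the delicate local analysis at $z=0$: establishing that the $\xi_j$'s have singularities of order at most $z^{-2/3}$ rather than $z^{-1/2}$ requires carefully identifying the $1/\sqrt{z}$ contributions to the various Cauchy transforms (coming from the constraint saturation for $\mu_1$, and observing the absence of an analogous $1/\sqrt{|x|}$ component in the densities of $\mu_2,\mu_3$), which in turn is what controls the pole orders of $e_k$ at the origin and pins the rational functions down uniquely.
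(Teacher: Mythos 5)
Your proposal captures the correct high-level strategy, and indeed it coincides with the paper's proof in its essentials: establish from the variational conditions that $\xi_1,\dots,\xi_4$ permute under analytic continuation (this is the content of Propositions~\ref{prop:xij}--\ref{prop:merofunction}), conclude that the elementary symmetric polynomials $e_k$ are rational with poles only at the origin, and then pin each $e_k$ down by matching the behaviour at $z=\infty$ and $z=0$. Your large-$z$ expansions are correct, the cancellation giving $e_1=0$ is right, and the rational parametrization via the substitution $t=z\xi^2$ is exactly the computation in the paper (the paper just adds the geometric motivation that $(\xi,z)=(0,\infty)$ is the unique higher-order branching of the curve, and checks the parametrization is maximal).

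The genuine gap is in how you control the pole orders of the $e_k$ at $z=0$. You cite \eqref{eq:local_behavior_mu2} and \eqref{eq:local_behavior_mu3} from Theorem~\ref{thm_equilibrium_problem} to assert that $\xi_2,\xi_3,\xi_4$ grow at worst like $z^{-2/3}$; but in the paper's logical development these two local behaviours are \emph{consequences} of the spectral-curve analysis, established in Section~\ref{sec:thm_rational_parametrization} (see \eqref{eq:mujorigin}, \eqref{eq:q2q3}) and only then invoked in Section~\ref{section:conclusion_proof} to finish the proof of Theorem~\ref{thm_equilibrium_problem}. Using them here is circular. What is known \emph{a priori} is only the finiteness bound $q_j>-1$, giving $\xi_j=o(z^{-1})$, which is not sharp enough: for instance, to force $e_4=\alpha^2\beta^2/z^2$ one needs $\xi_2\xi_3\xi_4=O(z^{-2})$, i.e., precisely the exponent $-2/3$. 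You flag this as ``the main obstacle'' but the mechanism you sketch --- tracking which Cauchy transforms do or do not carry a $1/\sqrt{|x|}$ piece --- is not what actually closes the gap. The paper's argument is different and crucially uses the Riemann surface $\mathcal R$ constructed \emph{before} the spectral curve is known: because $z=0$ is a branch point of ramification index $3$ connecting $\mathcal R_2,\mathcal R_3,\mathcal R_4$, one has $\xi_j(z)=\tilde c_j z^{\delta/3}(1+o(1))$ for an \emph{integer} $\delta$; combining the finiteness $q_j>-1$ with the variational relation $2U^{\mu_3}=U^{\mu_2}$ on $\R_-$ (which via Plemelj forces either $q_2,q_3\geq 0$ or $q_2=q_3$) and the explicit $\beta/\sqrt z$ term rules out $\delta/3=-1/2$ (not an integer quotient), leaving $\delta=-2$ as the only possibility. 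Without this integrality argument your bound at the origin is not justified and the rational functions $e_2,e_3,e_4$ are not uniquely determined.
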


Using the parametrization of \eqref{eq:spectral_curve}, one can describe the densities of the components of $\pmb{\mu}$. For instance, the graph of the density of $\mu_2$ takes the form
$$
\left(x,\frac{\ud\mu_2}{\ud x}(x)\right)=\left(\frac{t}{h(t)^2},\pi i h(t)\right),\qquad t\in \gamma_2^-,
$$
where $\gamma_2^-$ is a specific contour on $\C$ along which $h$ becomes purely imaginary. We refer the reader to Section~\ref{sec:rational_parametrization} for details, in particular Figure~\ref{fig:uniformization} where $\gamma_2^-$ is evaluated numerically.

\subsection{Limiting mean distribution and hard edge scaling limit of the correlation kernel}
Our main result is the following theorem relating the large $n$ limit of the correlation kernel $K_n(x,y)$ to the unique minimizer of the vector equilibrium problem introduced in Section \ref{subsec:vep}.
\begin{thm}\label{thm:limitingmeandistri}
Let $K_n(x,y)$  be the correlation kernel defined in \eqref{def:Kn} for the squared singular values of $\widehat{Y}$ \eqref{def:Ymatrix} in the confluent case. With $\nu$ and $\kappa$ being fixed, we have
\begin{equation}\label{eq:convergence_kernels_global_scaling}
\lim_{n\to\infty}nK_{n}\left(n^2 x, n^2 x\right)=\frac{\ud \mu_2}{\ud x}(x),
\qquad x>0,
\end{equation}
where $\pmb{\mu}=(\mu_1,\mu_2,\mu_3)\in \mathcal M$ is the unique minimizer of the energy functional \eqref{definition_vector_energy} over $\mathcal M$ stated in Theorem \ref{thm_equilibrium_problem} and the limit above is uniform for $x$ in any compact subset of $(0,\infty)$.
\end{thm}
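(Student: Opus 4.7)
The plan is to carry out a Deift--Zhou steepest descent analysis on a $4\times 4$ Riemann--Hilbert (RH) problem characterizing the correlation kernel $K_n$, following the mixed-type multiple orthogonal polynomial framework associated with the weights $x^{(\kappa+j-1)/2}I_{\kappa+j-1}(2\alpha\sqrt{x})$ and $x^{(\nu-\kappa+j-1)/2}K_{\nu-\kappa+j-1}(2\beta\sqrt{x})$. First I would identify $\mathcal{P}_k$ and $\mathcal{Q}_k$ of \eqref{def:Kn} as the mixed-type MOPs attached to these two Nikishin-like systems and set up an RH problem for a matrix $Y(z)$ of size $4\times 4$ whose jumps encode the biorthogonality, together with a Christoffel--Darboux-type identity expressing $K_n(x,y)$ as a bilinear form in the entries of $Y$.

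Next I would rescale $z\mapsto n^2 z$ to bring the spectrum onto the macroscopic $\mathcal{O}(1)$ scale, and then apply the $g$-function transformation $Y\mapsto T$ built out of $\xi_1,\xi_2,\xi_3,\xi_4$ (equivalently, out of primitives thereof). The Euler--Lagrange inequalities characterizing the minimizer $\pmb{\mu}=(\mu_1,\mu_2,\mu_3)$ of Theorem~\ref{thm_equilibrium_problem}---including the upper constraint $\mu_1\leq\sigma$ on $\R_-$ and the external field $2(\beta-\alpha)\sqrt{x}$ acting on $\mu_2$---translate, via the identification of the $\xi_j$'s with appropriate combinations of Cauchy transforms in \eqref{eq:definition_xi_functions}, into the standard Deift--Zhou normalization: after the $g$-transformation, $T(z)$ is bounded at infinity, its jumps on the supports of $\mu_1,\mu_2,\mu_3$ become oscillatory factors of unit modulus, and its jumps on the complementary parts of $\R_\pm$ are exponentially close to the identity.

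The next step is a lens opening transformation $T\mapsto S$ deforming the oscillatory jumps into ones exponentially close to the identity on the newly introduced upper and lower lips of the lenses around each of the three supports. Then I would construct parametrices. Away from the critical points $\{0,-q,p\}$, a global parametrix $P^{(\infty)}$ is built as a meromorphic function on the Riemann surface of the spectral curve \eqref{eq:spectral_curve}; this is where the rational parametrization $(z,\xi)=(t/h(t)^2,h(t))$ of Theorem~\ref{thm:spectral_curve} becomes essential, since it uniformizes the four-sheeted cover and yields $P^{(\infty)}$ in closed form. Local parametrices in shrinking disks around the soft edges $-q$ and $p$ are of the standard Airy type, while the hard edge at the origin---where both $\mu_2$ and $\mu_3$ blow up like $|x|^{-2/3}$ by \eqref{eq:local_behavior_mu2} and \eqref{eq:local_behavior_mu3}---requires a Meijer-$G$ parametrix in the spirit of \cite{KMW09,KZ}, matched to $P^{(\infty)}$ on the boundary. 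A small-norm argument then gives $R:=S(P^{(\infty)})^{-1}=I+\mathcal{O}(1/n)$ uniformly in $\C$.

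Unraveling the transformations, for a fixed $x$ in a compact subset of $(0,p)$ one obtains
\begin{equation*}
nK_n(n^2 x,n^2 x)=\frac{1}{2\pi i}\bigl(\xi_{2,+}(x)-\xi_{2,-}(x)\bigr)+\mathcal{O}(1/n),
\end{equation*}
because the $g$-function transformation was precisely engineered so that the CD-type formula for $K_n$ reduces to the jump of $\xi_2$ across $\supp\mu_2$; combined with the Sokhotski--Plemelj identity $\xi_{2,+}-\xi_{2,-}=2\pi i\,d\mu_2/dx$ that is built into \eqref{eq:definition_xi_functions} (only $C^{\mu_2}$ jumps across $(0,p)$), this yields \eqref{eq:convergence_kernels_global_scaling}, while for $x>p$ the same argument produces a vanishing limit consistent with $\supp\mu_2=[0,p]$. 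The main obstacle is not any single transformation but rather the careful bookkeeping for the $4\times 4$ system together with the construction and matching of the Meijer-$G$ parametrix at the origin: the exponent $-2/3$ in \eqref{eq:local_behavior_mu2} reflects a specific branching of the spectral curve at $z=0$ in which three sheets merge while one remains unramified, and verifying the matching of this local model with $P^{(\infty)}$ on the boundary of a shrinking disk is the technical core upon which uniformity of the estimate down to small positive $x$ rests.
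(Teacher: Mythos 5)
Your proposal matches the paper's proof in all essential respects: the same $4\times 4$ RH problem, the $g$-function transformation built from the $\xi_j$'s (the paper splits this into $Y\to X\to T$, using Bessel asymptotics first, but that is mere bookkeeping), lens opening, a global parametrix constructed on the spectral curve via the rational uniformization, Airy parametrices at $\pm q,p$, a Meijer-$G$ parametrix at the origin with the Kuijlaars--Molag shrinking-disk matching, a small-norm estimate $R=I+\mathcal{O}(1/n)$, and the final reduction of $nK_n(n^2x,n^2x)$ to the jump of a $\xi$-function across $(0,p)$, which by Sokhotski--Plemelj equals $d\mu_2/dx$. The paper writes the final step via $\xi_3$ instead of your $\xi_2$, but these give the same density since $\xi_{2,\pm}=\xi_{3,\mp}$ on $\supp\mu_2$; the only small slip is that the Airy disks at $-q$ and $p$ are fixed, not shrinking --- the shrinking inner circle appears only at the origin.
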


According to \eqref{eq:local_behavior_mu2}, the density of $\mu_2$ blows up at $x=0$, so it does not make sense to talk about the convergence \eqref{eq:convergence_kernels_global_scaling} when $x=0$. But, alternatively, the vector equilibrium problem stated in Section \ref{subsec:vep} is directly related to the
matrix model \eqref{def:Ymatrix} in the way we now explain, which then provides \eqref{eq:convergence_kernels_global_scaling} also for $x=0$ in a weaker sense. 

Let us denote by $y_1,\hdots,y_n$ the
squared singular values of the matrix $\widehat{Y}$ in \eqref{def:Ymatrix} and set
$$
P_n(z)=\mathbb E\left(\prod_{j=1}^n(z-y_j)\right),
$$
where the expectation is over the $y_j$'s with respect to the density in \eqref{eq:jpdfcon}.
That is, $P_n$ is the average characteristic polynomial for $\widehat Y^* \widehat Y$. If we denote by
$x_1,\hdots,x_n$ the zeros of $P_n$ and construct the sequence of zero counting measures
$$
\mu(P_n)=\frac{1}{n}\sum_{j=1}^n\delta_{x_j/n^2}
$$
with $\delta_a$ being the Dirac delta measure with mass at $a$, then the sequence $\{\mu(P_n)\}$ converges weakly to the second component $\mu_2$ of
the minimizer given in Theorem \ref{thm_equilibrium_problem}. This claim follows from the uniform convergence above, or also from the Riemann-Hilbert (shortly RH) asymptotic analysis that we perform.

We next come to the hard edge scaling limit of the correlation kernel. As aforementioned, if the parameters $\alpha$  and $\beta$ are coupled in a specific way, it was shown in \cite{AS15,Liu16} that the hard edge scaling limit of $K_n$ is given by the universal Meijer G-kernel, which in a format appropriate for us takes the form \cite{BGS14,KZ}
\begin{equation}\label{def:MeijerKer}
K_{\nu_1,\nu_2}(x,y)=\int_0^1 G^{1,0}_{0,3}\left({- \atop 0,-\nu_1,-\nu_2} \; \Big{|} \; ux \right)G^{2,0}_{0,3}\left({- \atop \nu_1, \nu_2, 0} \; \Big{|} \; uy \right) \ud u,
\end{equation}
where $G^{m,n}_{p,q}\left({a_1,\ldots,a_p \atop b_1,\ldots,b_q} \; \Big{|} \; z \right)$ is the Meijer G-function (see \eqref{def:Meijer} below for the definition). We extend the results just mentioned to any fixed $\alpha$ and $\beta$.

\begin{thm}\label{thm:hardedge}
Let $K_n$  be the correlation kernel defined in \eqref{def:Kn} for the squared singular values of $\widehat{Y}$ \eqref{def:Ymatrix} in the confluent case. With $\nu$ and $\kappa$ being fixed, we have
\begin{equation*}
\lim_{n\to\infty}\;\frac{1}{n(\beta^2-\alpha^2)}K_{n}\left(\frac{x}{n(\beta^2-\alpha^2)}, \frac{y}{n(\beta^2-\alpha^2)}\right)=\left(\frac y x\right)^{\kappa/2}K_{\nu,\kappa}(y,x),
\end{equation*}
uniformly for $x,y$ in compact subsets of $(0,\infty)$, where the limiting kernel $K_{\nu,\kappa}$ is given in \eqref{def:MeijerKer} and the parameters $\alpha,\beta$ satisfying \eqref{eq:inequality_alpha_beta} are fixed.
\end{thm}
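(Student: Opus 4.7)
The plan is to derive the stated limit from the Deift--Zhou steepest descent analysis of the $4\times 4$ matrix-valued Riemann--Hilbert problem that characterizes $K_n$ through the mixed-type multiple orthogonal polynomials associated with the weights $x^{(\kappa+j)/2} I_{\kappa+j}(2\alpha\sqrt{x})$ and $x^{(\nu-\kappa+j)/2} K_{\nu-\kappa+j}(2\beta\sqrt{x})$. The first step is to express $K_n(x,y)$ through a Christoffel--Darboux type identity involving a fixed row and column of the RH solution $Y$; the theorem then reduces to controlling these entries uniformly for $x,y$ of order $1/(n(\beta^2-\alpha^2))$ as $n\to\infty$.

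The steepest descent would follow the standard chain $Y\mapsto T\mapsto S\mapsto R$. The first transformation uses $g$-functions built from the $\xi_j$ of \eqref{eq:definition_xi_functions}, whose variational (in)equalities, inherited from the minimizer of \eqref{definition_vector_energy}, turn the jumps on the complements of the supports into exponentially decaying ones and convert the jumps on $[0,p]$ and $(-\infty,0]$ into oscillatory ones. Opening of lenses around $\supp\mu_2$ reduces the problem to a global parametrix $P^{(\infty)}$, which is to be constructed explicitly from the rational uniformization $(z,\xi)=(t/h(t)^2,h(t))$ of the spectral curve given in Theorem~\ref{thm:spectral_curve}. The local parametrices at the endpoints $-q$ and $p$ are of standard Airy type, in view of the square-root behavior \eqref{eq:local_behavior_mu1} and \eqref{eq:local_behavior_mu2}.

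The main obstacle is the local parametrix at the origin, which simultaneously dictates the scaling $1/(n(\beta^2-\alpha^2))$ and produces the Meijer-G structure of the limit. Because $\ud\mu_2/\ud x\sim c_2 x^{-2/3}$ and $\ud\mu_3/\ud x\sim c_3|x|^{-2/3}$ near $0$, the local model is not of Bessel type but of the Meijer-G type introduced in \cite{KZ,BGS14}: I would build a $4\times 4$ model RH problem whose entries are contour integrals of products of Meijer G-functions $G^{1,0}_{0,3}$ and $G^{2,0}_{0,3}$ with parameters $0,-\nu,-\kappa$ and $\nu,\kappa,0$. The matching condition with $P^{(\infty)}$ on a small circle around the origin pins down the local variable to $n(\beta^2-\alpha^2)z$, a scaling forced by the linear expansion of $h(t)-h(0)$ at the preimage of $z=0$ together with the normalizations in \eqref{eq:definition_xi_functions}. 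The most delicate point is to verify that the proposed model indeed realizes the prescribed jump structure with the correct $\kappa,\nu$ dependence and matches the global parametrix to the required order.

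Once all parametrices are in place and the final transformation yields $R=I+O(n^{-1})$ uniformly, substituting $x\mapsto x/(n(\beta^2-\alpha^2))$ and $y\mapsto y/(n(\beta^2-\alpha^2))$ into the Christoffel--Darboux representation and keeping only the contribution of the local parametrix produces the Meijer-G kernel $K_{\nu,\kappa}(y,x)$. The prefactor $(y/x)^{\kappa/2}$ emerges from undoing the diagonal conjugations used to bring the modified Bessel weights into the model problem, and uniformity on compact subsets of $(0,\infty)$ is inherited from the uniform validity of the parametrix matching on compact subsets of the rescaled variable.
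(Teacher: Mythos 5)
Your overall strategy matches the paper's (RH characterization, $g$-function normalization, lens opening, global parametrix from the rational uniformization, Airy parametrices at $\pm q, p$, Meijer-G model at the origin, final error $R$), so the framework is right. But there is a genuine gap at the origin, and it is precisely the place where the paper has to deviate from the standard procedure.

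The issue is that the local parametrix at $z=0$ \emph{cannot} be matched with the global parametrix on a single fixed circle $\partial D(\delta)$. If you set up a $3\times 3$ Meijer-G model $\Psi$ (the paper uses the Bertola--Bothner ``bare Meijer-G parametrix'' embedded as $\diag(1,\Psi)$ inside the $4\times 4$ problem) and try the usual one-circle matching, the off-diagonal conjugation factor $n^{\widehat B/2}(\cdots)n^{-\widehat B/2}$ forces the leading error to be of size $O(n^{1/2})$ on $\partial D(\delta)$, not $o(1)$ — see \eqref{eq:L01bnd}. This is a direct consequence of the $x^{-2/3}$ blowup of $\ud\mu_2/\ud x$ at the origin, and it means your claim that ``the final transformation yields $R=I+O(n^{-1})$'' would simply fail with a naive single-circle construction. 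The paper's key idea, taken from Kuijlaars--Molag \cite{KM19}, is to impose a \emph{double matching}: the local parametrix must match $G$ on the fixed circle $\partial D(\delta)$ and must have jump $I_4+O(n^{-1})$ on a second, \emph{shrinking} circle $\partial D(r_n)$ with $r_n=n^{-3/2}$; achieving both requires the five-step iterative correction $L_0^{(1)}\to\cdots\to L_0^{(5)}$ of Section 10.3, which successively kills the growing contributions $T_0$, $A_1^{(2)}(0)$, $A_1^{(3)}(0)$, $A_2^{(3)}(0)$. Without this machinery the argument does not close. You also omit the preliminary transformation $Y\to X$ (needed to turn the modified Bessel entries into algebraic jumps via Wronskian identities before any $g$-function is applied), and your explanation of where the scaling constant $\beta^2-\alpha^2$ comes from is not correct: the preimage of $z=0$ under $H$ is $t=\infty$, where $h$ blows up, so ``the linear expansion of $h(t)-h(0)$ at the preimage of $z=0$'' is not well posed. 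The constant actually enters through $\xi_4(z)\sim 3(\beta^2-\alpha^2)^{1/3}z^{-2/3}$ as $z\to 0$ (i.e.\ $f_4(0)=3(\beta^2-\alpha^2)^{1/3}$), which makes the conformal map $\varphi(z)=z(\beta^2-\alpha^2)(1+O(z))$ in \eqref{eq:varphizero}.

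Finally, even granting the parametrix, the passage from $R=I+O(n^{-1})$ to the stated kernel limit in the shrinking window $x,y=O(n^{-3})$ is not a routine substitution. The factors $n^{\widehat B/2}$ and the matrices $\widehat D_n$, $\mathcal A^{(j)}$, $\widehat G$ all depend on the scale, and one needs refined difference estimates (Lemmas 12.1--12.3) exploiting analyticity of $R$ and $\widehat L_0$ on $D(r_n)$ to show that $S_+(v_n)^{-1}S_+(u_n)$ factors through $\diag(1,\Psi)$ up to controllable errors, and ultimately to identify the limit with $K_{\nu,\kappa}$ via the RH characterization of the Meijer-G kernel. These steps are absent from your outline; sketching them would be necessary for a complete proof.
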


Our asymptotic analysis, leading to the proofs of the theorems above, also allows us to obtain the expected universality results for the local statistics of the squared singular values of $\widehat{Y}$ beyond the origin. This means that the scaling limits of $K_n$ tend to the sine kernel when centered around a point $x_0\in (0,p)$ (bulk universality), and to the Airy kernel for $x_0=p$ (soft edge universality). All the ingredients for obtaining such results are presented, but we will not write the details down neither comment them any further; instead, we refer to \cite{ABK,BK04,DK09} for a more detailed analysis in similar situations.

\subsection{About the proofs and organization of the rest of the paper}
The proofs of our asymptotic results rely on the fact that the biorthogonal functions $\mathcal{P}_k$ and $\mathcal{Q}_k$ in \eqref{def:Kn} can be interpreted as multiple orthogonal polynomials of mixed type \cite{DK}, which is first observed by the second-named author in \cite{Zhang17}. This in particular implies an RH problem characterization \cite{DK} of the correlation kernel, which extends the classical results in \cite{Fokas,VGJK2001}, and of relevance to us here takes the following form.

\begin{rhp} \label{rhp:Y}
We look for a $4\times 4$ matrix-valued function
$Y : \mathbb C \setminus \R_+ \to \mathbb C^{4 \times 4}$ satisfying the following properties:
\begin{enumerate}
\item[\rm (1)] $Y$ is defined and analytic in $ \mathbb{C} \setminus \R_+$.
\item[\rm (2)] $Y$ has limiting values $Y_{\pm}$ on $(0,\infty)$,
where $Y_+$ ($Y_-$) denotes the limiting value from the upper
(lower) half-plane, and
\begin{equation}\label{defjumpmatrixY}
Y_{+}(x) = Y_{-}(x)
\begin{pmatrix} I_2 & W(x)\\
0 & I_2
\end{pmatrix}, \qquad  x \in \mathbb (0,+\infty),
\end{equation}
where $W(x)$ is the rank-one matrix
\begin{align}\label{eq:matrixW}
W(x) &=
\begin{pmatrix}
\omega_{\kappa,\alpha}(x) \\
\omega_{\kappa+1,\alpha}(x)
\end{pmatrix}
\begin{pmatrix}
\rho_{\nu-\kappa,\beta}(x) &
\rho_{\nu-\kappa+1,\beta}(x)
\end{pmatrix} \nonumber \\
&=\begin{pmatrix}
\omega_{\kappa,\alpha}(x)\rho_{\nu-\kappa,\beta}(x) & \omega_{\kappa,\alpha}(x)\rho_{\nu-\kappa+1,\beta}(x) \\
\omega_{\kappa+1,\alpha}(x)\rho_{\nu-\kappa,\beta}(x) & \omega_{\kappa+1,\alpha}(x)\rho_{\nu-\kappa+1,\beta}(x)
\end{pmatrix},
\end{align}
with
$$\omega_{\mu,a}(x)=x^{\frac{\mu}{2}}I_{\mu}(2an\sqrt{x}), \qquad\mu>-1, \qquad a>0,$$
and
$$\rho_{\nu,b}(x)=x^{\frac{\nu}{2}}K_{\nu}(2bn\sqrt{x}), \qquad \nu \geq 0, \qquad b>0.$$
In \eqref{eq:matrixW}, the parameters $\kappa$ and $\nu$ are given in \eqref{def:kappanu}.

\item[\rm (3)] As $z\to\infty$ and $z\in \mathbb{C}\setminus \R_+$, we have
\begin{equation*}
    Y(z) = \left(I_4+\frac{Y_1}{z}+\mathcal{O}\left(\frac{1}{z^2}\right)\right)
    \diag \left(z^{n_1},z^{n_2},z^{-n_1},z^{-n_2} \right).
\end{equation*}
with $n_1=\lfloor \frac{n-1}{2} \rfloor+1$ and $n_2=\lfloor \frac{n-2}{2} \rfloor+1$, where $\lfloor x \rfloor= \max\{n\in\mathbb{Z}:n \leq x\}$ stands for the integer part of $x$.

\item[\rm (4)]As $z\to 0$, $z\in \C\setminus \R_+$, the matrix $Y(z)$ has the following behavior:
\begin{equation}\label{eq:Yzero}
Y(z)=
\Boh
\begin{pmatrix}
1 & 1 & h(z) & \widetilde h(z) \\
1 & 1 & h(z) & \widetilde h(z) \\
1 & 1 & h(z) & \widetilde h(z)\\
1 & 1 & h(z) & \widetilde h(z)
\end{pmatrix},
\quad
Y^{-1}{}(z)=
\Boh
\begin{pmatrix}
h(z) & h(z) & h(z) & h(z) \\
\widetilde h(z) & \widetilde h(z) & \widetilde h(z) & \widetilde h(z) \\
1 & 1 & 1 & 1\\
1 & 1 & 1 & 1
\end{pmatrix},
\end{equation}
where
\begin{equation*}
h(z):=
\begin{cases}
1,& \kappa>0, \\
\log z, & \kappa=0, \; \nu>0, \\
(\log z)^2, & \kappa=\nu=0,
\end{cases}
\qquad
\widetilde h(z):=
\begin{cases}
1,& \kappa>0, \\
\log z, & \kappa=0,
\end{cases}
\end{equation*}
and the $\Boh$ condition in \eqref{eq:Yzero} is understood in an entry-wise manner.
\end{enumerate}
\end{rhp}

The above RH problem can be uniquely solved with the aid of mixed type multiple orthogonal polynomials associated with the modified Bessel functions; see \cite{Zhang17}. Moreover, a general result in \cite{DK} shows that the correlation kernel \eqref{def:Kn} admits the following representation in terms of the solution of the RH problem \ref{rhp:Y} for $Y$:
\begin{multline}\label{kernel representation}
n^2K_{n}(n^2x,n^2y)
\\
=\frac{1}{2\pi i(x-y)}\begin{pmatrix}0 &0 & \rho_{\nu-\kappa,\beta}(y)&
\rho_{\nu-\kappa+1,\beta}(y)\end{pmatrix} Y_{+}(y)^{-1}Y_{+}(x)
\begin{pmatrix}
\omega_{\kappa,\alpha}(x) \\ \omega_{\kappa+1,\alpha}(x) \\ 0 \\ 0
\end{pmatrix}.
\end{multline}

We will then perform a Deift/Zhou steepest descent analysis \cite{deift_book,DKMVZ99} for the RH problem for $Y$. The analysis consists of a series of explicit and invertible transformations
\begin{equation}\label{eq:transformations}
Y \to X \to T \to S \to R,
\end{equation}
which leads to a RH problem for $R$ tending to the identity matrix as $n\to \infty$. Analyzing the effect of the transformations \eqref{eq:transformations} gives us the large $n$ limits of the correlation kernel in various regimes.

The rest of this paper is organized as follows. In Section \ref{sec:anaVEP}, we analyze the equilibrium problem, along the way also extending some classical results from potential theory, introducing a four-sheeted Riemann surface built from the solution to the vector equilibrium problem and describing its uniformization in detail. Theorems \ref{thm_equilibrium_problem} and \ref{thm:spectral_curve} are finally established in Sections~\ref{section:conclusion_proof} and ~\ref{sec:thm_rational_parametrization}, respectively.

Some auxiliary functions, constructed using objects from Section~\ref{sec:anaVEP}, are then introduced in Section \ref{sec:auxifunc} as a preparation for the asymptotic analysis.

Sections \ref{sec:firstrans}--\ref{sec:StoR} are devoted to the steepest descent analysis of the RH Problem \ref{rhp:Y} for $Y$ described above. In particular, we construct a local parametrix near the origin with the aid of the Meijer-G parametrix introduced by Bertola and Bothner in \cite{BB15}, using a recently introduced matching technique by Kuijlaars and Molag \cite{KM19}.

After the RH asymptotic analysis is finished, the conclusion of our main asymptotic results, i.e., Theorems \ref{thm:limitingmeandistri} and \ref{thm:hardedge}, are presented in Section \ref{sec:proofasy}.

We conclude this paper with an Appendix to give some heuristic arguments on how to obtain the precise formulation of the vector equilibrium problem introduced in Section \ref{subsec:vep}, which plays an important role in this paper.

\paragraph{Assumptions and notations}
Throughout this paper, it is assumed that $n$ is an even number so that
$$n_1=n_2=\frac{n}{2}.$$
This assumption is not essential and is only made to simplify the proof.

Since the asymptotic analysis of $4\times 4$ RH problems takes a substantial part of this work, it is notationally convenient to denote by $E_{jk}$ the $4\times 4$ elementary matrix
whose entries are all $0$, except for the $(j,k)$-entry, which is $1$, that is,
\begin{equation}\label{def:Eij}
E_{jk}=\left( \delta_{l,j}\delta_{k,m} \right)_{l,m=1}^4.
\end{equation}
A fact of simple verification that comes in handy is the identity
\begin{equation*}
E_{jk}E_{lm}=\delta_{k,l}E_{jm}.
\end{equation*}

Finally, we adopt the notations
\begin{equation}\label{def:deltai}
\Delta_1=(-\infty,-q),\qquad \Delta_2=(0,p), \qquad \Delta_3=(-\infty, 0),
\end{equation}
i.e., $\Delta_1$ is the interior of $\supp(\sigma-\mu_1)$, $\Delta_2$ is the interior of $\supp \mu_2$, $\Delta_3$ is the interior of $\supp \mu_3$,
and also set
\begin{equation}\label{def:calpha}
\mathfrak c_\alpha=e^{ \alpha \pi i }.
\end{equation}
It is worthwhile to point out that for integer $\alpha$ the symmetry relation
$$\mathfrak c_\alpha=\mathfrak c_{-\alpha}$$ takes place.

\section{Analysis of the vector equilibrium problem}\label{sec:anaVEP}

The goal of this section is to analyze the equilibrium problem associated to the energy functional \eqref{definition_vector_energy}, which will ultimately lead to the proofs of Theorems \ref{thm_equilibrium_problem}, \ref{thm:spectral_curve}, and Proposition \ref{prop:varicondition} about the relevant Euler-Lagrange variational conditions.

\subsection{Preliminaries from potential theory}
In this subsection, we will review some basic concepts and their properties from potential theory, which will be needed in what follows. For more details, we refer to the standard references \cite{Landkofbook,SaffTotik,ransford_book,Tsuji75}.

\paragraph{Logarithmic potential and Cauchy transform of a measure}
Given a measure $\mu$ on $\mathbb{C}$, recall that its Cauchy transform $C^\mu$ was
previously defined in \eqref{def:Cauchytransform}. Closely connected is its logarithmic potential, which is defined by
\begin{equation*}
U^{\mu}(x)=\int \log \frac{1}{|x-y|}\ud \mu(y),\qquad x\in \C,
\end{equation*}
whenever the integral makes sense as a finite real number.

By expanding the integrands into powers of $z$ around infinity, it immediately follows that, as $z\to\infty$,
\begin{equation}\label{eq:asymptotics_cauchy_transf_log_pot}
C^{\mu}(z)=-\frac{|\mu|}{z}(1+\boh(1)), \qquad U^{\mu}(z)=-|\mu|\log |z|(1+\boh(1)).
\end{equation}
If $\mu$ is compactly supported, the terms $\boh(1)$ in \eqref{eq:asymptotics_cauchy_transf_log_pot} can
be replaced by $\Boh(z^{-1})$. Furthermore, these functions are related through
$$
U^{\mu}(z)=\re\int^z C^{\mu}(s)\ud s +c,
$$
where the constant $c$ is chosen so as to have the same asymptotic behavior as $z\to \infty$ on
both sides of the identity above. This last relation implies that
$$
\frac{\partial U^{\mu}}{\partial z}(z)=\frac{1}{2}C^{\mu}(z), \qquad z\in \C\setminus \supp\mu,
$$
where $\frac{\partial }{\partial z}=\frac{1}{2}(\frac{\partial }{\partial x}-i\frac{\partial }{\partial y})$.
This identity also extends to the $\pm$-boundary values on smooth arcs of $\supp\mu$. In this sense,
for a measure $\mu$ on $\R$ with real-differentiable potential, we have
\begin{equation}\label{eq:derivative_potential_boundary_value}
\frac{\ud U^{\mu}}{\ud x}(x)=\re C^{\mu}_{\pm}(x),\qquad x\in \supp\mu,
\end{equation}
so
\begin{equation}\label{eq:derivative_real_potential}
\frac{\ud U^{\mu}}{\ud x}(x)=\int\frac{\ud \mu(s)}{s-x}\left\{
                             \begin{array}{ll}
                               >0, & \hbox{if $x<\inf\supp\mu$,} \\
                               <0, & \hbox{if $x>\sup\supp\mu$.}
                             \end{array}
                           \right.
\end{equation}
In addition, for $z_0\in \supp\mu$ and $\delta>0$ for which $\supp\mu\cap \{  |z-z_0|<\delta\}=\gamma$ is an analytic arc with complex line element $\ud s$,
the Sokhotski-Plemelj relations
\begin{equation}\label{eq:plemelj_relations}
C^{\mu}_+(z)-C^{\mu}_-(z)=2\pi i\frac{\ud \mu}{\ud s}(z),\qquad
C^{\mu}_+(z)+C^{\mu}_-(z)=2\textrm{PV}\int\frac{\ud \mu(x)}{x-z},
\end{equation}
hold for $z\in\gamma$, where $\textrm{PV}$ denotes the Cauchy principal value.

Given a function $\omega(x)$ on $K$, $K=\R_-$ or $K=\R_+$, with
\begin{equation}\label{eq:general_local_behavior}
\omega(x)=c |x|^{a}(1+\boh(1)), \qquad x\to 0 \mbox{ along } K, \quad a>-1,
\end{equation}
its Cauchy transform
$$
C^\omega(z):=C^{\mu_\omega}(z),\qquad \ud \mu_\omega(x):=\omega(x)\ud x
$$
satisfies \cite[Section~29]{Muskhelishvili}
\begin{equation}\label{eq:behavior_cauchy_transform_endpoint}
C^\omega(z)=
\begin{cases}
\Boh(1), & a>0, \\
\Boh(\log z), & a=0, \\
C_Kz^a(1+\boh(1)), & -1<a<0,
\end{cases}
\qquad
\mbox{ as } z\to 0,
\end{equation}
where, for $-1<a<0$, the branch of $z^a$ is chosen so that
$$
\lim_{\delta\to 0+}(x+i \delta)^a=|x|^a, \qquad x\in K,
$$
and
$$
C_K=
\begin{cases}
\dfrac{c \pi e^{a\pi i}}{\sin(a\pi)}, & \textrm{if $K=\R_-$,}    \\
-\dfrac{c \pi e^{-a\pi i}}{\sin(a\pi)}, & \textrm{if $K=\R_+$.}
\end{cases}
$$
Obviously, the behavior near the origin in \eqref{eq:general_local_behavior} and \eqref{eq:behavior_cauchy_transform_endpoint} could be replaced by $x-x_0\to 0$ for any finite point $x_0\in \R$.

\paragraph{Logarithmic capacity}
The {\it logarithmic capacity} $\capac{K}$ of a compact set $K\subset \C$ is defined by
$$
\capac{K}=\sup_{\substack{ |\mu|=1 \\ \supp\mu\subset K }} e^{-I(\mu)}=\exp\left({-\inf_{\substack{ |\mu|=1 \\ \supp\mu\subset K }} I(\mu)}\right),
$$
where we emphasize that the inf/sup is taken over probability measures supported on $K$ and $I(\mu)$ is the logarithmic energy of $\mu$ previously defined in \eqref{eq:log_energy}. In particular, if $\capac{K}=0$, then there is no probability measure on $K$ with finite logarithmic energy.

If $G\subset \C$ is an arbitrary Borel set, its capacity is defined by
$$
\capac G=\sup \{ \capac K \; \mid \; K\subset G,\; K \mbox{ compact} \}.
$$
A property is said to hold {\it quasi-everywhere} (shortly {\it q.e.}), if it holds everywhere except on a set of capacity zero. For a general treatise on capacity and its relation to complex analysis, we refer the reader to \cite{pommerenke_book,ransford_book}.

\paragraph{Balayage measure}
Given a closed set $K\subset \C$ with positive capacity and a finite measure $\mu$ on $\C$,
the {\it balayage measure} of $\mu$ associated with $K$ is the unique measure $\widehat \mu$ such that
$|\mu|=|\widehat \mu|$ and
\begin{equation}\label{eq:balayage_identity_potentials}
U^{\widehat \mu}(z)=U^{\mu}(z)+c,\quad \mbox{q.e. } z\in K,
\end{equation}
where $c$ is a constant.
In particular, if $K$ has an unbounded connected component,
then comparing the behavior of both sides of \eqref{eq:balayage_identity_potentials} as $z\to \infty$
tells us that $c=0$. When needed, we write $$\widehat \mu=\bal (\mu,K)$$ to emphasize the underlying set $K$. A direct relation between the measures $\mu$ and $\bal(\mu,K)$ is given by the formula
\begin{equation}\label{eq:balK}
\bal(\mu,K)=\int \bal(\delta_z,K)\ud \mu(z),
\end{equation}
where $\delta_z$ denotes the Dirac measure at the point $z$.

A choice of our particular interest is
$$K=K_c=(-\infty,-c], \qquad c\geq 0.$$
In this case,
if $z>-c$, then $\bal(\delta_z,K_c)$ is absolutely continuous with respect to the Lebesgue measure, and
\begin{equation}\label{eq:baldelta}
\frac{\ud \bal(\delta_z,K_c)}{\ud x}(x)=\frac{1}{\pi}\frac{\sqrt{z+c}}{\sqrt{|x+c|}(z-x)},\qquad x\in K_c.
\end{equation}

For a measure $\mu$ with $\supp\mu\subset [-c,+\infty)$, for simplicity we denote
$$\widehat \mu_c=\bal(\mu,K_c).$$
Assuming that $\mu(\{-c\})=0$, it is easily seen from \eqref{eq:balK}
and \eqref{eq:baldelta} that
\begin{equation}\label{eq:balayage_formula}
\frac{\ud \widehat \mu_c}{\ud x}(x)=\frac{1}{\pi\sqrt{|x+c|}}\int\frac{\sqrt{z+c}}{z-x} \ud \mu(z),\qquad x\in K_c.
\end{equation}

As an application of \eqref{eq:balayage_formula}, we have the following two simple lemmas which will be essential in establishing the characterization of $\supp\mu_1$ given by Theorem~\ref{thm_equilibrium_problem}.

\begin{lem}\label{lem:density_balayage}
If $\mu$ is a finite measure on $[-c,+\infty)$ with $\mu(\{-c\})=0$, then the function
$$
x\mapsto \sqrt{|x|} \frac{\ud \widehat \mu_c}{\ud x}(x)
$$
is increasing on $K_c$.
\end{lem}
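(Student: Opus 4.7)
The plan is to write the quantity $\sqrt{|x|}\,\tfrac{\ud \widehat\mu_c}{\ud x}(x)$ explicitly using the balayage formula \eqref{eq:balayage_formula} and then show that the integrand is, pointwise in the integration variable, a monotone function of $x$. Concretely, for $x\in K_c=(-\infty,-c]$ we substitute $x=-y$ with $y\geq c$, so $|x|=y$ and $|x+c|=y-c$, giving
\begin{equation*}
\sqrt{|x|}\,\frac{\ud \widehat \mu_c}{\ud x}(x)=\frac{1}{\pi}\int \frac{\sqrt{y}\,\sqrt{z+c}}{\sqrt{y-c}\,(z+y)}\ud \mu(z).
\end{equation*}
Since the map $x\mapsto -y$ is orientation-reversing, showing that the left-hand side is increasing in $x\in(-\infty,-c]$ is equivalent to showing that the above integral is a decreasing function of $y$ on $(c,+\infty)$.

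Next I would argue that, for each fixed $z\in\supp\mu\setminus\{-c\}$ (so $z>-c$), the kernel
\begin{equation*}
g_z(y):=\frac{\sqrt{y}}{\sqrt{y-c}\,(z+y)}=\sqrt{1+\frac{c}{y-c}}\,\cdot\,\frac{1}{z+y},\qquad y>c,
\end{equation*}
is strictly decreasing in $y$. This is an elementary check: the factor $\sqrt{1+c/(y-c)}$ is positive and decreasing since $c\geq 0$, while $1/(z+y)$ is positive and decreasing since $z+y>0$ for $y>c$ and $z>-c$; the product of two positive decreasing functions is decreasing. Equivalently, the logarithmic derivative
\begin{equation*}
\frac{g_z'(y)}{g_z(y)}=\frac{1}{2y}-\frac{1}{2(y-c)}-\frac{1}{z+y}=-\frac{c}{2y(y-c)}-\frac{1}{z+y}
\end{equation*}
is negative on $(c,+\infty)$.

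Once pointwise monotonicity is established, it is immediate that for $c<y_1<y_2$ and for $\mu$-a.e.\ $z$,
\begin{equation*}
\frac{\sqrt{y_2}\,\sqrt{z+c}}{\sqrt{y_2-c}\,(z+y_2)}\leq \frac{\sqrt{y_1}\,\sqrt{z+c}}{\sqrt{y_1-c}\,(z+y_1)},
\end{equation*}
and integrating against the finite measure $\mu$ yields that the integral above is decreasing in $y$. The assumption $\mu(\{-c\})=0$ is exactly what allows us to discard the problematic point $z=-c$ (where $z+y$ could vanish at $y=c$) without affecting the integral, so no obstacle arises there. Since the argument reduces to a one-line calculus check of a $2$-variable rational function, I do not expect any real obstacle; the only thing to be mindful of is the orientation flip coming from the substitution $x=-y$, which could easily lead to a sign error if done carelessly.
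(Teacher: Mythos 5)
Your proposal is correct and follows essentially the same route as the paper: both start from the balayage formula \eqref{eq:balayage_formula}, factor the quantity into a piece depending only on $x$ (your $\sqrt{1+c/(y-c)}$ is exactly the paper's $\sqrt{|x|/|x+c|}$ after the substitution $x=-y$) times the kernel $1/(z-x)$, and observe that both positive factors are monotone. The only cosmetic differences are the orientation-reversing change of variable $x\mapsto -y$ and the explicit logarithmic-derivative check in place of the paper's one-line inspection.
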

\begin{proof}
By \eqref{eq:balayage_formula}, it follows that
\begin{equation*}
\sqrt{|x|} \frac{\ud \widehat \mu_c}{\ud x}(x)=\frac1\pi\sqrt{\frac{|x|}{|x+c|}}\int\frac{\sqrt{z+c}}{z-x} \ud \mu(z),\qquad x\in K_c.
\end{equation*}
Since both $\sqrt{\frac{|x|}{|x+c|}}$ and the integrand on the right-hand side of the above formula are increasing functions of $x$ on $K_c$, the lemma follows immediately.
\end{proof}

With the measure $\sigma$ introduced in \eqref{def:constraint_measure}, we have

\begin{lem}\label{lem:support_saturation}
If $\mu$ is an absolutely continuous finite measure on $K_c$ for which $\sqrt{|x|}\frac{\ud \mu}{\ud x}(x)$ is increasing on $K_c$, then the positive part $(\mu-\sigma)^+$ of the signed measure $\mu-\sigma$ is either zero or satisfies $$\supp((\mu-\sigma)^+)=[-\widetilde c,-c],$$
for some $\widetilde c>c$.
\end{lem}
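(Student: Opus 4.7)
The key observation is that the weight $\sqrt{|x|}$ appearing in the monotonicity hypothesis exactly matches the singularity of the density of $\sigma$. Setting $f(x):=\sqrt{|x|}\,\frac{\ud \mu}{\ud x}(x)$, which is increasing on $K_c$ by hypothesis, I would rewrite the density of $\mu-\sigma$ on $K_c$ as
\[
\frac{\ud \mu}{\ud x}(x)-\frac{\alpha}{\pi\sqrt{|x|}}=\frac{1}{\sqrt{|x|}}\Bigl(f(x)-\frac{\alpha}{\pi}\Bigr).
\]
On the complementary interval $(-c,0)$ the measure $\mu$ is absent, so the density of $\mu-\sigma$ reduces to $-\alpha/(\pi\sqrt{|x|})<0$, and $(\mu-\sigma)^+$ is automatically supported inside $K_c$. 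The task then reduces to analysing the super-level set $S:=\{x\in K_c : f(x)>\alpha/\pi\}$, whose closure coincides with $\supp((\mu-\sigma)^+)$.

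Monotonicity of $f$ renders the structure of $S$ essentially immediate: if $x_0\in S$ and $x_0<x\leq -c$, then $f(x)\geq f(x_0)>\alpha/\pi$, so $x\in S$ as well. Thus $S$ is upward-closed in $K_c$, and must therefore be either empty, the single point $\{-c\}$, or an interval of the form $(a,-c]$ or $[a,-c]$ for some $a\in[-\infty,-c)$. In the first two cases the Lebesgue measure of $S$ vanishes and $(\mu-\sigma)^+=0$. To rule out $a=-\infty$ I would simply note that it would force $\frac{\ud \mu}{\ud x}(x)>\alpha/(\pi\sqrt{|x|})$ throughout $K_c$, an inequality whose right-hand side has divergent integral over $(-\infty,-c]$, contradicting the finiteness of $\mu$. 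Consequently $a$ is finite, and setting $\widetilde c:=-a$ gives $\widetilde c>c$ and $\supp((\mu-\sigma)^+)=[-\widetilde c,-c]$, as required.

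There is no real obstacle in the argument: the main content is the fortunate cancellation whereby multiplying by $\sqrt{|x|}$ converts the pointwise comparison $\frac{\ud \mu}{\ud x}\leq \frac{\ud \sigma}{\ud x}$ into the comparison of a single function $f(x)$ with the constant $\alpha/\pi$, which by monotonicity can only fail on an interval abutting the right endpoint of $K_c$. The only piece of book-keeping that requires attention is the exclusion of an unbounded super-level set, which is handled by the finite-mass hypothesis on $\mu$.
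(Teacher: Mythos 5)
Your proof is correct and follows essentially the same route as the paper: both arguments observe that the pointwise comparison $\frac{\ud\mu}{\ud x}$ vs. $\frac{\ud\sigma}{\ud x}$ reduces, after multiplying by $\sqrt{|x|}$, to comparing the increasing function $\sqrt{|x|}\,\frac{\ud\mu}{\ud x}$ with the constant $\alpha/\pi$, so there is at most one crossing point, and both use the finiteness of $\mu$ against the infinite mass of $\sigma$ to rule out an unbounded super-level set. Your write-up is just somewhat more explicit about the structure of the super-level set, but the mathematical content is the same.
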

\begin{proof}
Because $\mu$ is finite but $\sigma$ is not, we are sure that $\frac{\ud \mu}{\ud x}-\frac{\ud \sigma}{\ud x}$ is negative for $x$ large.
By \eqref{def:constraint_measure}, we can write
$$
\frac{\frac{\ud \mu}{\ud x}(x)}{\frac{\ud\sigma}{\ud x}(x)}=
 \frac{\pi}{\alpha}
\left( \sqrt{|x|} \; \frac{\ud \mu}{\ud x}(x)\right).
$$
Thus, the previous Lemma tells us that the quotient on the left-hand side above is strictly increasing, so there exists at most one point in which this quotient changes from smaller to bigger than $1$. That is, there is at most one point for which the difference $\frac{\ud \mu}{\ud x}-\frac{\ud \sigma}{\ud x}$ changes from negative to positive, and the result follows.
\end{proof}

\subsection{An extension of the Lower Envelope Theorem}
In this subsection, we will extend the so-called Lower Envelope Theorem. The results presented here are well-known under the stronger assumption that the underlying measures are supported in a fixed compact set of $\C$, but later we will need these results for measures with unbounded support.

\begin{prop}\label{prop:lower_envelope}
Let $\{\mu_n\}$ be a sequence of probability measures on $\C$ that converges weakly to a probability measure $\mu$ on $\C$
and satisfies the following conditions:
\begin{enumerate}[$($$i$$)$]
\item The quantities
$$
\int \log(1+|z|^2)\ud \mu_n(z)
$$
are finite and uniformly bounded in $n$.
\item As $R\to \infty$, the quantities
$$
\int_{|z|\geq R} \log(1+|z|^2)\ud \mu_n(z)
$$
converge to zero uniformly in $n$.
\end{enumerate}

Then, we have
$$
U^{\mu}(z)\leq \liminf_{n\to\infty}  U^{\mu_n}(z), \qquad z\in \C,
$$
and
$$
\liminf_{n\to\infty}  U^{\mu_n}(z)=U^{\mu}(z)
$$
for quasi-every $z \in \C$.
\end{prop}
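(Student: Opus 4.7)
My plan is to adapt the classical Lower Envelope Theorem, which is usually stated for measures on a common compact set, by splitting the logarithmic kernel in a way that isolates the tail behavior. For each fixed $z\in\C$ I would write
\begin{equation*}
\log\frac{1}{|z-w|}=\log\frac{\sqrt{1+|w|^2}}{|z-w|}-\frac{1}{2}\log(1+|w|^2).
\end{equation*}
The first summand is lower semicontinuous in $w$ and uniformly bounded from below on $\C$ (it tends to $0$ as $|w|\to\infty$ and to $+\infty$ as $w\to z$), so the portmanteau theorem for bounded-below lower semicontinuous functions yields
\begin{equation*}
\int\log\frac{\sqrt{1+|w|^2}}{|z-w|}\ud\mu(w)\leq\liminf_{n\to\infty}\int\log\frac{\sqrt{1+|w|^2}}{|z-w|}\ud\mu_n(w).
\end{equation*}
For the second summand, condition $(ii)$ allows truncation outside a large ball uniformly in $n$, while weak convergence of $\mu_n$ to $\mu$ on the complementary bounded region (where the integrand is bounded and continuous) handles the remainder, giving full convergence
\begin{equation*}
\int\tfrac{1}{2}\log(1+|w|^2)\ud\mu_n\to\int\tfrac{1}{2}\log(1+|w|^2)\ud\mu.
\end{equation*}
Subtracting yields the desired pointwise bound $U^\mu(z)\leq\liminf_{n\to\infty}U^{\mu_n}(z)$ for every $z\in\C$.

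For the quasi-everywhere equality I would test the pointwise inequality just obtained against probability measures $\nu$ of compact support with continuous logarithmic potential $U^\nu$. Because $U^\nu(w)+\log|w|=\Boh(|w|^{-1})$ at infinity, the function $U^\nu(w)+\frac{1}{2}\log(1+|w|^2)$ is bounded and continuous on $\C$, so the same truncation argument from the first step applies and delivers $\int U^\nu\ud\mu_n\to\int U^\nu\ud\mu$; symmetry of the logarithmic kernel (Fubini) then turns this into $\int U^{\mu_n}\ud\nu\to\int U^\mu\ud\nu$. On the other hand the decomposition above provides a uniform lower bound $U^{\mu_n}(z)\geq -M(z)$ with $M$ bounded on $\supp\nu$, which legitimizes Fatou's lemma and produces
\begin{equation*}
\int\liminf_{n\to\infty} U^{\mu_n}\ud\nu\leq\liminf_{n\to\infty}\int U^{\mu_n}\ud\nu=\int U^\mu\ud\nu.
\end{equation*}
Combined with the already established $\liminf_n U^{\mu_n}\geq U^\mu$, the non-negative integrand $\liminf_n U^{\mu_n}-U^\mu$ must vanish $\nu$-almost everywhere. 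Since every Borel set of positive capacity carries such a test measure (restrict $\nu$ to a compact subset of positive capacity on which $U^\nu$ is continuous), the exceptional set must be polar, which is exactly the q.e. statement.

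The key obstacle is precisely the unbounded support of the measures: without conditions $(i)$ and $(ii)$, the logarithmic tail term $-\tfrac{1}{2}\int\log(1+|w|^2)\ud\mu_n$ could fail to have a well-defined limit, and the otherwise clean decomposition above would lose its grip on the passage from weak convergence of measures to the corresponding statements for their logarithmic potentials. These two hypotheses are exactly what is needed to tame that tail and allow the compactly supported proof to be mimicked almost verbatim; the rest of the argument is a routine rearrangement of the classical one, once this tail control is in place.
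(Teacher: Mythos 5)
Your proof is correct, and it is at heart the same idea as the paper's but executed more directly. The paper proves this by stereographic projection $T$ onto the sphere $\mathcal S\subset\R^3$, using the chordal-metric identity $\|T(z)-T(w)\|=\tfrac{|z-w|}{\sqrt{1+|z|^2}\sqrt{1+|w|^2}}$, which turns the planar potential into a spherical potential plus the tail term $\tfrac12\int\log(1+|w|^2)$. Your decomposition $\log\tfrac{1}{|z-w|}=\log\tfrac{\sqrt{1+|w|^2}}{|z-w|}-\tfrac12\log(1+|w|^2)$ is exactly this chordal normalization with the constant $\tfrac12\log(1+|z|^2)$ in $z$ absorbed, so the underlying mechanism is identical; you simply avoid introducing the sphere and its pushforward measures explicitly. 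Where your argument is genuinely lighter is in proving the convergence $\int\log(1+|w|^2)\,\ud\mu_n\to\int\log(1+|w|^2)\,\ud\mu$: you use condition (ii) to truncate the tail uniformly and weak convergence on the remaining bounded piece, whereas the paper goes through an auxiliary normalized family $\lambda_n$, Prohorov's theorem, and a monotone-class argument. Your truncation route is the shorter one, with the small caveat that a sharp indicator $\mathbf 1_{|w|<R}$ is not continuous, so one should use a smooth cutoff and bound the $|w|\in[R,R+1]$ strip by (ii) — a one-line repair. For the q.e. statement, both arguments rest on testing against measures with continuous potential; the paper cites the compactly supported case on the sphere, while you invoke the existence of such a measure on any non-polar set (a consequence of Evans' theorem) and apply Fatou with the uniform lower bound $U^{\mu_n}(z)\geq -\log(1+|z|)-C$ supplied by your decomposition together with (i). You should also make explicit that Fubini in $\int U^\nu\,\ud\mu_n=\int U^{\mu_n}\,\ud\nu$ is legitimate because $\log^-\tfrac{1}{|z-w|}\leq\log(1+|z|)+\log(1+|w|)$, which is integrable for $\nu\otimes\mu_n$ by compact support of $\nu$ and hypothesis (i); this is a standard but necessary check in the unbounded setting.
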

\begin{proof}
We follow an idea in \cite{hardy_kuijlaars} and map the Riemann sphere $\overline \C$ to
the sphere $\mathcal S\subset \R^3$ centered at $(0,0,1/2)$ with radius $1/2$ through the stereographic projection
$$
T(z)=
\begin{cases}
\left( \frac{\re(z)}{1+|z|^2},\frac{\im(z)}{1+|z|^2},\frac{|z|^2}{1+|z|^2} \right),& \quad z\in \C, \\
(0,0,1), & \quad  z=\infty.
\end{cases}
$$
It is straightforward to check that the mapping $T$ satisfies
\begin{equation}\label{eq:distance_spherical}
\|T(z)-T(w)\|=\frac{|z-w|}{\sqrt{1+|z|^2}\sqrt{1+|w|^2}},\qquad z,w\in \C,
\end{equation}
where $\|\cdot\|$ stands for the standard Euclidean norm in $\R^3$.

For a measure $\nu$ on $\overline\C$, denote by $\nu^T$ its pushforward measure induced by $T$. That is, $\nu^T$ is a measure on $\mathcal S$ determined by the condition that
\begin{equation}\label{def:pushforward}
\int f(x) \ud \nu^T(x) = \int f(T(z))\ud \nu(z).
\end{equation}
With
$$
V^{\lambda}(x)=\int \log\frac{1}{\|x-y\|}\ud \lambda(y)
$$
denoted by the potential of a measure $\lambda$ on $\mathcal S$,
it follows from \eqref{eq:distance_spherical} that if a measure $\nu$ on $\C$ satisfies
$$
\int \log(1+|z|^2)\ud\nu(z)<\infty
$$
and $U^{\nu}$ is finite at $z$, then $V^{\nu^T}$ is finite at $T(z)$ and
\begin{equation}\label{eq:Vpushforward}
V^{\nu^T}(T(z))=V^{\nu}(z)+\frac{1}{2}\int \log(1+|w|^2)\ud\nu(w)+\frac{1}{2}|\nu|\log(1+|z|^2).
\end{equation}

Since $\mu_n\stackrel{*}{\to} \mu$, it then follows from \eqref{def:pushforward} that $\mu^T_n\stackrel{*}{\to} \mu^T$. Thus, by
replacing the measure $\nu$ in \eqref{eq:Vpushforward} by $\mu_n$, it is readily seen that our proposition follows
if we can show that
\begin{enumerate}[(a)]
\item The following limit holds:
\begin{equation}\label{eq:logconver}
 \lim_{n\to\infty} \int \log(1+|w|^2)\ud\mu_n(w) = \int \log(1+|w|^2)\ud \mu(w).
 \end{equation}
\item Whenever $\{\nu_n \}$ is a sequence of probability measures on $\mathcal S$ converging weakly to $\nu$, then
\begin{equation}\label{eq:Vnules}
V^{\nu}(x)\leq \liminf_{n\to\infty}  V^{\nu_n}(x),
\end{equation}
for every $x\in \mathcal S$
\item For the measures $\{\nu_n\}$ and $\nu$ as in (b), there exists a log-polar set $E \subset \mathcal S$ such that
\begin{equation}\label{eq:Vnuequal}
V^{\nu}(x)= \liminf_{n\to\infty}  V^{\nu_n}(x),
\end{equation}
for $x\in \mathcal S\setminus E$.
\end{enumerate}

In part (c), by a log-polar set we mean that
$$
\int V^{\nu}(x)\ud \nu(x)=+\infty,
$$
for any probability measure $\nu$ supported on $E$. We note that $E$ is log-polar if, and only if, $T^{-1}(E)$ has zero capacity in $\C$.

The proof of \eqref{eq:Vnules} follows immediately from the weak convergence of $\nu_n$ to $\nu$,
and the fact that the function
$$
y\mapsto \log\frac{1}{\| x-y \|}
$$
is lower semi-continuous on any compact subset of $\R^3$, whereas the proof of \eqref{eq:Vnuequal} follows in the same steps as its analogue for measures supported in a fixed compact set of the plane \cite[Theorem I.6.9]{SaffTotik}.

We finally provide a proof of \eqref{eq:logconver}. Since the non-negative function $\log(1+|z|^2)$ is lower semi-continuous on $\C$,
the weak convergence $\mu_n\stackrel{*}{\to}\mu$ immediately implies that
\begin{equation}\label{eq:inequality_normalized_log}
0\leq \int \log(1+|z|^2)\ud\mu(z)\leq \liminf_{n\to \infty } \int \log(1+|z|^2) \ud \mu_n(z).
\end{equation}
By the condition (i), the right-hand side of the above inequality is finite.

Let $\{\lambda_n\}$ be a sequence of probability measures on $\C$ defined by
$$
\ud \lambda_n(z)=\frac{1}{c_n}\log(1+|z|^2) \ud \mu_n(z),
$$
where
$$
c_n=\int \log(1+|z|^2)\ud \mu_n(z).
$$
If $\limsup_{n \to \infty} c_n=0$, then the proof is over. Hence, we may assume that, without loss of generality,
$$
c_n\to \limsup_{n \to \infty} c_n:=c>0.
$$
From the condition (i), the limsup above is finite and thus $\{\lambda_n\}$ is
a well-defined sequence of probability measures on $\C$. Furthermore, from the condition (ii),
we see that this sequence is tight. By Prohorov's theorem, we can assume that, after extracting a subsequence, it
converges weakly to a probability measure $\lambda$ on $\C$. Thus, if $f$ is any bounded continuous function on $\C$, we have
$$
\lim_{n\to\infty} \int f(z)\ud \lambda_n(z) = \int f(z) \ud \lambda(z).
$$
If, in addition, the function $f$ has compact support, then function $f(z)\log(1+|z|^2)$ is continuous and bounded on $\C$.
The weak convergence $\mu_n\stackrel{*}{\to} \mu$ then implies that
$$
\int f(z)\ud \lambda_n(z)=\frac{1}{c_n}\int f(z)\log(1+|z|^2)\ud \mu_n(z)\stackrel{n\to\infty}{\to} \frac{1}{c}\int f(z)\log(1+|z|^2)\ud \mu(z),
$$
and consequently
$$
\int f(z)\log(1+|z|^2)\ud \mu(z) = c\int f(z) \ud\lambda(z),
$$
for every compactly supported continuous function $f$. Considering a sequence $\{f_m\}$ of
such functions with the extra conditions that $f_m\geq 0$ and $f_m\nearrow 1$ pointwise, it follows from
the Monotone Convergence Theorem that
$$
\int \log(1+|z|^2)\ud \mu(z)=c \int \ud \lambda(z)=c = \limsup_{n\to \infty} \int \log(1+|z|^2)\ud \mu_n(z).
$$
This, together with \eqref{eq:inequality_normalized_log}, gives us \eqref{eq:logconver}.

This completes the proof of Proposition \ref{prop:lower_envelope}.
\end{proof}

\subsection{A scalar constrained equilibrium problem}

Let $\rho$ be a probability measure on $\R_+$. The so-called $\sigma$-constrained equilibrium measure $\mu_\rho^\sigma$
of $\R_-$, if it exists, is the measure that minimizes the functional
$$
I(\mu)-2\int U^\rho(z) \ud \mu(z)
$$
over all probability measures $\mu$ on $\R_-$ subject to the condition $\mu\leq \sigma$, where $\sigma$ is a given measure on
$\R_-$.

The characterization of the measure $\mu_1$ in Theorem \ref{thm_equilibrium_problem} that we are looking for will follow from the following proposition.

\begin{prop}\label{prop:support_constrained_equilibrium_measure}
With the measure $\sigma$ given in \eqref{def:constraint_measure}, the $\sigma$-constrained
measure $\mu_\rho^\sigma$ exists uniquely. Furthermore, there exists a constant $c \geq 0$ such that
\begin{equation}\label{eq:suppkc}
\supp(\sigma-\mu_\rho^\sigma)=K_c=(-\infty,-c],
\end{equation}
and the following Euler-Lagrange variational conditions hold:
\begin{align}
U^{\mu_\rho^\sigma}(z)-U^{\rho}(z)=0,&\qquad z\in \supp(\sigma-\mu_\rho^\sigma), \label{eq:euler_lagrange_constrained_scalar}
\\
U^{\mu_\rho^\sigma}(z)-U^{\rho}(z)\leq 0, & \qquad z\in \R_- \label{eq:euler_lagrange_constrained_scalar2}.
\end{align}
\end{prop}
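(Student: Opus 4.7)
The plan is to establish the three components of the proposition in turn: existence and uniqueness of $\mu_\rho^\sigma$, the Euler-Lagrange (EL) variational conditions, and the half-line structure of $\supp(\sigma-\mu_\rho^\sigma)$.

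For existence and uniqueness, the functional $\mu \mapsto I(\mu)-2\int U^\rho\,d\mu$ is strictly convex and weak-$*$ lower semicontinuous on the convex admissible class of probability measures on $\R_-$ bounded above by $\sigma$. Tightness is automatic from the probability normalization combined with the growth of the external field $-2U^\rho(z) \sim 2\log|z|$ at infinity, and finite-energy admissible measures exist by truncating $\sigma$ to a compact subset of $\R_-$ and renormalizing. Lower semicontinuity in this non-compact setting uses the extension of classical potential theory provided by Proposition~\ref{prop:lower_envelope}. Uniqueness follows from strict convexity. This is in line with the framework of \cite{BKMW13,hardy_kuijlaars}.

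For the EL conditions, I would perturb $\mu_\rho^\sigma$ along any signed measure $\eta$ with $\int d\eta=0$, $\eta^- \leq \mu_\rho^\sigma$, and $\eta^+ \leq \sigma-\mu_\rho^\sigma$; then $\mu_\rho^\sigma + \varepsilon\eta$ is admissible for small $\varepsilon>0$, and differentiating the energy at $\varepsilon=0$ gives $\int(U^{\mu_\rho^\sigma}-U^\rho)\,d\eta \geq 0$. Decomposing $\eta$ into mass moved between the saturated region $\R_-\setminus \supp(\sigma-\mu_\rho^\sigma)$ and the free region $\supp(\sigma-\mu_\rho^\sigma)$ yields a constant $F$ with $U^{\mu_\rho^\sigma}-U^\rho = F$ quasi-everywhere on $\supp(\sigma-\mu_\rho^\sigma)$ and $U^{\mu_\rho^\sigma}-U^\rho \leq F$ on $\R_-$; continuity of the potentials promotes q.e.\ to pointwise. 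The value $F=0$ is then read off from behavior at infinity: once $\supp(\sigma-\mu_\rho^\sigma)$ is known to extend to $-\infty$ (proved below), taking $z\to -\infty$ along it and using $U^{\mu_\rho^\sigma}(z), U^\rho(z) \sim -\log|z|$ (both masses equal $1$) forces $F=0$.

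The main technical piece is the half-line structure of the support. I would set $c := -\sup\supp(\sigma-\mu_\rho^\sigma) \geq 0$, so that $\supp(\sigma-\mu_\rho^\sigma) \subset K_c$ is automatic; the task is the reverse inclusion. That the support is nonempty and unbounded below is immediate: otherwise $\mu_\rho^\sigma = \sigma$ on a half-line $(-\infty,-R]$, whose $\sigma$-mass is infinite, contradicting $|\mu_\rho^\sigma|=1$. Ruling out inner gaps is the crux. Combining the EL equality $U^{\mu_\rho^\sigma}=U^\rho$ on the free set with the saturation $\mu_\rho^\sigma = \sigma$ on $[-c,0]$, I would identify $\mu_\rho^\sigma|_{K_c}$ as the signed balayage $\widehat\rho_c - \widehat{\sigma|_{[-c,0]}}_c$ via uniqueness of the associated singular integral equation. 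Lemma~\ref{lem:density_balayage} applied to the positive balayage terms together with the explicit formula \eqref{eq:balayage_formula} would then yield the monotonicity hypothesis of Lemma~\ref{lem:support_saturation}; the admissibility constraint $\mu_\rho^\sigma \leq \sigma$ rules out the alternative conclusion that $(\mu_\rho^\sigma - \sigma)^+$ is supported on a right-anchored sub-interval of $K_c$, leaving $\supp(\sigma-\mu_\rho^\sigma) = K_c$ as the only possibility. The hardest obstacle is precisely this last monotonicity verification: since $\mu_\rho^\sigma|_{K_c}$ appears as a difference of balayages, the monotonicity of $\sqrt{|x|}\frac{d\mu_\rho^\sigma}{dx}$ is not a direct corollary of Lemma~\ref{lem:density_balayage}, and the likely route is to reorganize $\rho - \sigma|_{[-c,0]}$ as a positive measure plus a controlled correction, or to work with a sign argument on $K_c$ before bringing Lemma~\ref{lem:support_saturation} to bear.
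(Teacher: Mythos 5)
Your outline of existence, uniqueness, and the Euler--Lagrange conditions (via convexity, tightness, Proposition~\ref{prop:lower_envelope}, a perturbation argument, and reading off $\ell=0$ from the unboundedness of $\supp(\sigma-\mu_\rho^\sigma)$) matches the paper, which simply cites the standard theory of \cite{Dragnev_Saff_1997} for this part and then shows $\ell=0$ by the same limiting argument.

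The genuine gap is exactly where you flag it, but it is worse than a single missing monotonicity check: your proposed identification $\mu_\rho^\sigma|_{K_c}=\bal(\rho,K_c)-\bal(\sigma|_{[-c,0]},K_c)$ already \emph{presupposes} that $\mu_\rho^\sigma$ coincides with $\sigma$ on all of $[-c,0]$ and that the free region is exactly $K_c$ --- this is precisely the conclusion you are trying to reach. If the free region $\supp(\sigma-\mu_\rho^\sigma)$ had an interior gap, the right-hand side of your identity would not describe $\mu_\rho^\sigma$ on $K_c$, so the argument is circular on top of the monotonicity difficulty with the signed balayage. This makes the final step of your plan unsalvageable as written.

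The paper resolves this with Dragnev's \emph{iterative balayage algorithm}, which sidesteps both problems: starting from $\nu_1=\bal(\rho,\R_-)$, one repeatedly identifies the excess $(\nu_k-\sigma)^+$, which by the inductive hypothesis and Lemmas~\ref{lem:density_balayage} and~\ref{lem:support_saturation} is supported on a single right-anchored interval $[-c_k,-c_{k-1}]$, and sweeps only that \emph{positive} measure out to $K_{c_k}$. Because at every stage only positive balayages are performed, the key monotonicity of $\sqrt{|x|}\,\frac{\ud\nu_{k+1}}{\ud x}$ on $K_{c_k}$ propagates by induction, so Lemma~\ref{lem:support_saturation} keeps applying. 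The iterates preserve total mass, potentials are monotone in the sense of \eqref{eq:iterative_potential}--\eqref{eq:iterative_potential_2}, and the sequence $\{c_k\}$ is increasing and bounded. Using tightness, Prokhorov, and your Proposition~\ref{prop:lower_envelope} (which was proved precisely for this purpose, to handle unbounded supports), one shows a weak limit $\nu$ exists, satisfies $\nu\le\sigma$ with $\supp(\sigma-\nu)=K_c$, and satisfies the variational conditions; uniqueness of the minimizer then forces $\nu=\mu_\rho^\sigma$. In short, the structural result is obtained \emph{constructively} for the limit of an explicit algorithm, and only at the very end transferred to $\mu_\rho^\sigma$ by uniqueness --- never by reverse-engineering $\mu_\rho^\sigma$ from its own EL conditions.
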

\begin{proof}
Existence, uniqueness and characterization through the variational conditions of the minimizer, with possibly a nonzero constant $\ell$ on the right-hand side of \eqref{eq:euler_lagrange_constrained_scalar} and \eqref{eq:euler_lagrange_constrained_scalar2}, follow from the standard theory, we refer the reader to \cite{Dragnev_Saff_1997} for details. To see that $\ell=0$ is the correct constant, we first observe that $\sigma((-\infty,-a])=+\infty$ for any $a>0$. This, together with the fact that $\mu_\rho^\sigma$ is a probability measure, implies that $\supp(\sigma-\mu_\rho^\sigma)$ is unbounded. Thus, we can take the limit $z\to -\infty$ in \eqref{eq:euler_lagrange_constrained_scalar} and use the behavior of $U^{\mu_\rho^\sigma}(z)-U^{\rho}(z)$ near $\infty$ (see \eqref{eq:asymptotics_cauchy_transf_log_pot}) to conclude that $\ell=0$.

To show \eqref{eq:suppkc}, we follow the ideas similar to the ones in \cite{Dragnev_Kuijlaars_1999,DK09,duits_kuijlaars_mo}, which are based on
the iterative balayage algorithm introduced by Dragnev \cite{Dragnev_thesis}. To proceed, we set
\begin{equation}\label{def:nu1}
\nu_1=\bal (\rho,\R_-).
\end{equation}
The measure $\nu_1$ then has the following properties:
\begin{enumerate}[(a)]
\item For $z\in\R_-$, we have
      \begin{equation*}
      U^{\nu_1}(z)=U^{\rho}(z),
      \end{equation*}
      that is, $\nu_1$ is the unconstrained equilibrium measure of $\R_-$ with the external field $-2U^{\rho}$. This property follows from the definition of the balayage measure.
\item From Lemmas \ref{lem:density_balayage} and \ref{lem:support_saturation}, we have that $$\supp((\nu_1-\sigma)^+)=[-c_1,0],$$
    for some $c_1\geq 0$.
\item For $c_1$ as above, we have
$$\restr{\mu_\rho^\sigma}{[-c_1,0]}=\restr{\sigma}{[-c_1,0]}.$$
This follows from property (a) and the Saturation Principle \cite[Theorem~2.6]{Dragnev_Saff_1997}.
\end{enumerate}

We now define inductively
\begin{equation}\label{eq:recursion_nu_k}
\nu_{k+1}=\restr{\nu_{k}}{K_{c_{k}}}+\restr{\sigma}{[-c_{k},0]}+\widetilde \nu_k,\qquad k\geq 1,
\end{equation}
with
\begin{equation}\label{eq:definition_tilde_nu}
\widetilde \nu_k=\bal((\nu_{k}-\sigma)^+,K_{c_{k}}).
\end{equation}
In \eqref{eq:definition_tilde_nu}, if $k \geq 2$, the constant $c_{k} \geq 0$ is, as we will show in a moment,
uniquely defined through the condition
\begin{equation}\label{eq:definition_endpoint_iterative_balayage}
\begin{cases}
c_{k}=c_{k-1},& \mbox{if } (\nu_{k}-\sigma)^+=0,\\
\supp((\nu_{k}-\sigma)^+)=[-c_k,-c_{k-1}],& \mbox{if }(\nu_{k}-\sigma)^+\neq 0.
\end{cases}
\end{equation}
In words, we swap out the part of $\nu_{k}$ that saturates $\sigma$ to the set $K_{c_k}$.
From \eqref{eq:recursion_nu_k}--\eqref{eq:definition_endpoint_iterative_balayage}, we also observe that
\begin{equation}\label{eq:identity_sigma_nu_k}
\restr{\nu_{k}}{[-c_{k-1},0]}=\restr{\sigma}{[-c_{k-1},0]}
\end{equation}
and that $\nu_k$ has no mass points. This particularly implies that
$$
\begin{aligned}
|\nu_{k+1}| & =\nu_k(K_{c_k})+\sigma([-c_k,0])+|(\nu_k-\sigma)^+| \\
            & = \nu_k(K_{c_k})+\sigma([-c_k,-c_{k-1}])+\sigma([-c_{k-1},0])+\nu_k([-c_k,-c_{k-1}])-\sigma([-c_k,-c_{k-1}]) \\
			& = \nu_k(K_{c_k})+\sigma([-c_k,-c_{k-1}])+\nu_{k}([-c_{k-1},0])+\nu_k([-c_k,-c_{k-1}])-\sigma([-c_k,-c_{k-1}]) \\
			& = |\nu_k|,
\end{aligned}
$$
and because $|\nu_1|=1$ we get that $|\nu_k|=1$ for every $k$.

To see that \eqref{eq:definition_endpoint_iterative_balayage} indeed uniquely defines $c_k$, we will proceed inductively. We start with the observation that the function
\begin{equation}\label{eq:density_transform}
K_{c_k}\ni x\mapsto \sqrt{|x|}\frac{\ud \nu_{k+1}}{\ud x}(x)
\end{equation}
is increasing, once $\nu_1,\hdots,\nu_{k+1}$ are all well defined. In fact, because $\nu_1$ is absolutely continuous and $\nu_{k+1}$ is obtained from $\nu_k$ and $\sigma$ by sums, balayages and restrictions,
which are operations that preserve the absolutely continuity, it follows that $\nu_{k+1}$ is always absolutely continuous.
By \eqref{eq:recursion_nu_k}, it is readily seen that for $x\in K_{c_k}$,
$$
\sqrt{|x|}\frac{\ud \nu_{k+1}}{\ud x}(x)= \sqrt{|x|}\frac{\ud \nu_{k}}{\ud x}(x) +
\sqrt{|x|}\frac{\ud \widetilde \nu_k}{\ud x}(x).
$$
Because of \eqref{eq:definition_tilde_nu}, Lemma \ref{lem:density_balayage} tells us that the second term in the sum on the right-hand side above is increasing. Under induction hypothesis for \eqref{eq:density_transform}, the first term in this sum is increasing as well. Hence, by induction it follows that \eqref{eq:density_transform} is always increasing.

Thus, once we know that $c_k$ as in \eqref{eq:definition_endpoint_iterative_balayage} exists, the corresponding measure $\nu_{k+1}$ in \eqref{eq:recursion_nu_k} is well defined. Since $\sqrt{|x|}\frac{\ud \nu_{k+1}}{\ud x}(x)$ is increasing on $K_{c_k}$, we conclude \eqref{eq:definition_endpoint_iterative_balayage} for $k+1$ with the aid of Lemma \ref{lem:support_saturation}, showing that the recursions \eqref{eq:recursion_nu_k}--\eqref{eq:definition_endpoint_iterative_balayage} are well defined.

We also remark that, for $x\in K_{c_k}$,
\begin{align}
U^{\nu_{k+1}}(x) & =U^{\restr{\nu_{k}}{K_{c_{k}}}}(x) + U^{\restr{\sigma}{[-c_k,0]}}(x)+U^{\widetilde \nu_{k}}(x)
\nonumber
\\
			 & = U^{\restr{\nu_{k}}{K_{c_k}}}(x) + U^{\restr{\sigma}{[-c_k,0]}}(x)+U^{\restr{\nu_k}{[-c_k,-c_{k-1}]}}(x) - U^{\restr{\sigma}{[-c_k,-c_{k-1}]}}(x)
\nonumber
\\
			 & =U^{\restr{\nu_{k}}{K_{c_{k}}}}(x)+U^{\restr{\sigma}{[-c_{k-1},0]}}(x)+U^{\restr{\nu_k}{[-c_k,-c_{k-1}]}}(x)
\nonumber
\\
			 & = U^{\nu_{k}}(x),
\label{eq:iterative_potential}
\end{align}
where the first equality simply follows from the definition of $\nu_k$ in \eqref{eq:recursion_nu_k},
the second equality is a consequence of the definition \eqref{eq:definition_tilde_nu} of $\widetilde \nu_k$ as a balayage
measure and the assumption that $x\in K_{c_k}$, and for the final equality we have made use of
\eqref{eq:identity_sigma_nu_k}. Furthermore, from the Principle of Domination \cite{SaffTotik}, we also know that
$$
U^{\widetilde \nu_k}(x) \leq U^{\restr{\nu_k}{[-c_k,-c_{k-1}]}}(x) - U^{\restr{\sigma}{[-c_k,-c_{k-1}]}}(x),\qquad x\in \C.
$$
Thus, by performing similar calculations as in \eqref{eq:iterative_potential} but
replacing the second equality by an inequality, we conclude that
\begin{equation}\label{eq:iterative_potential_2}
U^{\nu_{k+1}}(x) \leq U^{\nu_{k}}(x),\qquad x \in \C\setminus K_{c_k}.
\end{equation}

We claim that the sequence $\{c_k\}$ is convergent. Indeed,
from its construction, it is readily seen that $c_k\geq c_{k-1}$,
so this sequence is increasing. It is also bounded, because by \eqref{eq:identity_sigma_nu_k}, we have
\begin{equation*}
\sigma([-c_k,0])=\nu_{k}([-c_k,0])\leq 1,
\end{equation*}
but $\sigma([x,0])\to +\infty$ when $x\to -\infty$. Hence,
\begin{equation}\label{eq:cklimit}
\lim_{k\to \infty}c_k = c
\end{equation}
for some $c\geq 0$.

Our next goal is to show that the measures $\{\nu_k\}$ has a weakly convergent subsequence. 
To see this, we observe from \eqref{eq:balayage_formula}, \eqref{def:nu1} and \eqref{eq:recursion_nu_k} that
\begin{equation}\label{eq:nukbound}
\frac{\ud \nu_k}{\ud x}(x)=\Boh(|x|^{-3/2}),\qquad x\to -\infty,
\end{equation}
where the bound is uniform in $k$. Thus, given any $\varepsilon>0$, we can find $M=M(\varepsilon)>c$ such that
\begin{equation*}
\nu_k((-\infty,-M]) < \varepsilon.
\end{equation*}
Since $\varepsilon>0$ is arbitrary, this shows that $\{\nu_k\}$ is a tight sequence of probability measures.
By Prokhorov's theorem, there is a subsequence $\{\nu_{k_j}\}$ converging weakly to a probability measure $\nu$ on $\R_-$.

Let $G \subset \R_-$ be any bounded open subset. From the weak convergence, we have
\begin{equation*}
\nu(G)\leq \liminf_{j\to \infty}\nu_{k_j}(G).
\end{equation*}
If $G \subset [-c,0]$, by \eqref{eq:identity_sigma_nu_k}, it is easily seen that $\nu(G)=\sigma(G)$. If $G\subset K_c$, note that $\frac{\ud \nu_k/\ud x}{\ud \sigma /\ud x}$ is strictly increasing on $K_{c_k}$, it then follows from \eqref{eq:definition_endpoint_iterative_balayage} and \eqref{eq:cklimit} that
$$\nu(G)<\sigma(G).$$

Moreover, the bound \eqref{eq:nukbound} implies that the requirements, and thus, the conclusions of Proposition \ref{prop:lower_envelope} are applicable to the sequence $\{\nu_{k_j}\}$. This, together with \eqref{eq:iterative_potential}, tells us that
$$
U^{\nu}(c)\leq \liminf_{j\to\infty} U^{\nu_{k_j}}(c)= U^{\nu_1}(c)=U^{\rho}(c)<+\infty,
$$
hence, $\nu$ cannot have a point mass at $z=c$. A combination of all these results then shows that
$$\nu \leq \sigma, \qquad \textrm {on $\R_-$},$$
and
$$\supp(\sigma-\nu)=K_c.$$

Finally, using Proposition~\ref{prop:lower_envelope} and equations \eqref{eq:iterative_potential}--\eqref{eq:iterative_potential_2}, we have that $\nu$ also satisfies the two conditions in \eqref{eq:euler_lagrange_constrained_scalar}. Hence, by uniqueness of the minimizer, it follows that $\nu=\mu_\rho^{\sigma}$ and $\supp(\sigma-\mu_\rho^\sigma)=\supp(\sigma-\nu)=K_c$.

This completes the proof of Proposition \ref{prop:support_constrained_equilibrium_measure}.
\end{proof}

\subsection{Qualitative properties for the vector equilibrium measure}\label{sec:qualitative}

To obtain qualitative properties for the vector of measures $\pmb{\mu}=(\mu_1,\mu_2,\mu_3)\in\mathcal M$ that minimizes
\eqref{definition_vector_energy}, we recall the Euler-Lagrange conditions of the problem, which here take the form of the following set of equalities and inequalities:
\begin{align}
& 2U^{\mu_1}(x)-U^{\mu_2}(x)=\ell_1 , && \mbox{q.e. } x\in \supp(\sigma-\mu_1),
\label{eq:vector_equilibrium_conditions1}\\
& 2U^{\mu_1}(x)-U^{\mu_2}(x)\leq \ell_1 , && x\in  \R_-\setminus\supp(\sigma-\mu_1), \label{eq:vector_equilibrium_conditions2}\\
& 2U^{\mu_2}(x)-U^{\mu_1}(x)-U^{\mu_3}(x)+2(\beta-\alpha)\sqrt{x}=\ell_2 , && \mbox{q.e. }x\in \supp\mu_2,
\label{eq:vector_equilibrium_conditions3}\\
& 2U^{\mu_2}(x)-U^{\mu_1}(x)-U^{\mu_3}(x)+2(\beta-\alpha)\sqrt{x}\geq \ell_2 , && x\in \R_+\setminus \supp\mu_2,
\label{eq:vector_equilibrium_conditions4}\\
& 2U^{\mu_3}(x)-U^{\mu_2}(x)=\ell_3 , && \mbox{q.e. }x\in \supp\mu_3,
\label{eq:vector_equilibrium_conditions5}\\
& 2U^{\mu_3}(x)-U^{\mu_2}(x)\geq \ell_3 , && x\in \R_-\setminus \supp\mu_3,
\label{eq:vector_equilibrium_conditions6}
\end{align}
where $\ell_1,\ell_2$ and $\ell_3$ are three constants. These equations actually follow from the Euler-Lagrange conditions for the usual equilibrium problem for scalar measures. For instance, the equilibrium problem for $\nu_2$ is to minimize, with fixed $\nu_1$ and $\nu_3$ satisfying conditions (E1), (E2) and (E4), the functional
\begin{equation*}
I(\nu)+\int Q_2(x)\ud \nu(x)
\end{equation*}
among all the probability measures $\nu$ on $\R_+$, where
\begin{equation}\label{eq:external_field_mu2}
Q_2(x)=-U^{\mu_1}(x)-U^{\mu_3}(x)+2(\beta-\alpha)\sqrt{x},\qquad x\in \R_+,
\end{equation}
is interpreted as the external field. Thus, given $\mu_1$ and $\mu_3$, the component $\mu_2$ is characterized by equations \eqref{eq:vector_equilibrium_conditions3} and \eqref{eq:vector_equilibrium_conditions4}. The other variational conditions can be derived in a similar manner.

In our setup, \eqref{eq:vector_equilibrium_conditions1}--\eqref{eq:vector_equilibrium_conditions6} are improved with the next result, which also provides some of the statements claimed in Theorem~\ref{thm_equilibrium_problem}.
\begin{prop}\label{prop:varicondition}
There exists a unique minimizer $\pmb{\mu}=(\mu_1,\mu_2,\mu_3)\in \mathcal M$ of the energy functional \eqref{definition_vector_energy} over $\mathcal M$ stated in Theorem \ref{thm_equilibrium_problem}.  Moreover, $\supp\mu_1=\R_-$, $\supp\mu_3=\R_+$, and for some positive numbers $p$ and $q$,
\begin{equation}\label{eq:supports_mu1_mu2}
\supp(\sigma-\mu_1)=(-\infty,-q],\qquad \supp\mu_2=[0,p].
\end{equation}
In addition, the three measures $\mu_1,\mu_2$ and $\mu_3$ are absolutely continuous with respect to the Lebesgue measure, and their densities are bounded except possibly at the origin.

Furthermore, there exists a constant $\ell \in \mathbb{R}$ such that
\begin{align}
& 2U^{\mu_2}(x)-U^{\mu_1}(x)-U^{\mu_3}(x)+2(\beta-\alpha)\sqrt{x}=\ell , &&  x\in \supp\mu_2=[0,p], \label{eq:variation123}\\
& 2U^{\mu_2}(x)-U^{\mu_1}(x)-U^{\mu_3}(x)+2(\beta-\alpha)\sqrt{x} > \ell , &&  x\in (p,+\infty). \label{eq:variation123ineq}
\end{align}
Finally, we have
\begin{align}
2U^{\mu_1}(x)-U^{\mu_2}(x)& =0, \qquad x\in \supp(\sigma-\mu_1)=(-\infty,-q],\label{eq:variation12}\\
2U^{\mu_1}(x)-U^{\mu_2}(x)&< 0,  \qquad  x\in  (-q,0], \label{eq:variation23}
\end{align}
and
\begin{equation}\label{eq:variation232}
2U^{\mu_3}(x)-U^{\mu_2}(x)=0 ,  \qquad x\in \supp\mu_3=\R_-.
\end{equation}
\end{prop}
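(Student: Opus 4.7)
The plan is to exploit a partial decoupling of the vector functional \eqref{definition_vector_energy}: once $\mu_2$ is fixed, the energy splits into independent scalar subproblems for $\mu_1$ and $\mu_3$, since no $I(\nu_1,\nu_3)$ interaction term is present. First I would invoke the general vector-equilibrium theory of \cite{BKMW13,hardy_kuijlaars} to secure existence and uniqueness of the minimizer $\pmb\mu\in\mathcal M$; the Euler-Lagrange conditions \eqref{eq:vector_equilibrium_conditions1}--\eqref{eq:vector_equilibrium_conditions6} with constants $\ell_1,\ell_2,\ell_3$ then arise as standard Lagrange multipliers (identifying $\ell_2=\ell$ in the proposition's notation). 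The remaining work is to refine these conditions and extract the support structure.

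Next I would analyze $\mu_3$, whose subproblem (with $\mu_2$ fixed) is unconstrained: minimize $I(\nu)-I(\nu,\mu_2)$ over measures on $\R_-$ of mass $\tfrac{1}{2}$. Its unique minimizer is $\mu_3=\tfrac{1}{2}\bal(\mu_2,\R_-)$, since the balayage identity \eqref{eq:balayage_identity_potentials} gives $2U^{\mu_3}(x)=U^{\mu_2}(x)$ on $\R_-$. This simultaneously delivers $\supp\mu_3=\R_-$ (the typo $\R_+$ in the statement notwithstanding), the density formula \eqref{eq:density_mu3} via \eqref{eq:balayage_formula} with $c=0$, and, by letting $z\to-\infty$ in \eqref{eq:vector_equilibrium_conditions5} while using $2|\mu_3|=|\mu_2|=1$ together with \eqref{eq:asymptotics_cauchy_transf_log_pot}, the constant $\ell_3=0$, i.e.\ \eqref{eq:variation232}. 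For $\mu_1$ the scalar subproblem is constrained by $\sigma$, and I would apply a direct adaptation of Proposition~\ref{prop:support_constrained_equilibrium_measure} to the mass-$\tfrac{1}{2}$ setting: the iterative-balayage argument given there is unaffected by rescaling the total mass and yields $\supp(\sigma-\mu_1)=(-\infty,-q]$ for some $q\geq 0$, with $\supp\mu_1=\R_-$ since $\mu_1=\sigma$ on $[-q,0]$ while the iterative construction places mass on all of $K_q$. The same argument at $-\infty$ forces $\ell_1=0$, giving \eqref{eq:variation12}.

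For $\mu_2$ the external field is $Q_2(x)=-U^{\mu_1}(x)-U^{\mu_3}(x)+2(\beta-\alpha)\sqrt{x}$, and since $|\mu_1|+|\mu_3|=1$ and $\beta>\alpha$ one has $Q_2(x)\sim 2(\beta-\alpha)\sqrt{x}$ at $+\infty$, forcing $\supp\mu_2$ to be a compact subset of $\R_+$. Continuity of $Q_2$ on $(0,\infty)$ together with standard scalar-equilibrium arguments then gives $\supp\mu_2=[0,p]$ as a single interval. The strict inequalities \eqref{eq:variation123ineq} and \eqref{eq:variation23} follow from the analyticity of the respective variational quantities outside the supports, combined with a one-sided endpoint analysis showing that their derivatives do not vanish on a neighborhood of the free endpoints $p$ and $-q$. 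Absolute continuity and boundedness of the three densities away from the origin are inherited from the balayage density formulas (for $\mu_1,\mu_3$), the analyticity of $Q_2$ (for $\mu_2$), and the algebraic structure governing the Cauchy transforms $\xi_j$ in \eqref{eq:definition_xi_functions}.

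The main obstacle is establishing the strict positivity $p,q>0$, since the scalar subproblems are only conditionally decoupled---each minimizer depends on the others, so the vector problem is a coupled fixed-point system. To break this circularity I would argue by contradiction: if $q=0$, the constraint $\mu_1\leq\sigma$ is everywhere inactive, so $\mu_1$ coincides with the unconstrained balayage, which equals $\mu_3$; substituting $\mu_1=\mu_3$ into \eqref{eq:variation123} reduces the subproblem for $\mu_2$ to a scalar equilibrium whose Cauchy transform, when analytically continued across $\R_-$ using $2U^{\mu_3}=U^{\mu_2}$ there, acquires a pole structure inconsistent with $\mu_2$ being a probability measure on $\R_+$. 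The case $p=0$ is ruled out analogously: if $\supp\mu_2$ were separated from $0$, the density \eqref{eq:density_mu3} would be bounded near the origin, rendering the constraint inactive there, and one recovers the same contradiction. These observations close the loop and complete all the claims of the proposition.
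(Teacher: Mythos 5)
Your high-level strategy matches the paper's: with $\mu_2$ fixed, the functional decouples into independent scalar problems for $\mu_1$ and $\mu_3$, the latter being an unconstrained balayage and the former the $\sigma$-constrained equilibrium measure of Proposition~\ref{prop:support_constrained_equilibrium_measure}; the constants $\ell_1=\ell_3=0$ then come from the unbounded supports, and the density formula \eqref{eq:density_mu3} falls out of \eqref{eq:balayage_formula}. That part is sound and is essentially what the paper does.

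The gap is in the analysis of $\mu_2$. You claim that ``continuity of $Q_2$ on $(0,\infty)$ together with standard scalar-equilibrium arguments then gives $\supp\mu_2=[0,p]$ as a single interval,'' but continuity of the external field does not yield single-interval support. The paper obtains connectedness by verifying the convexity criterion $(xQ_2'(x))'>0$ on $(0,\infty)$ and invoking \cite[Theorem~IV.1.10(c)]{SaffTotik}. That calculation is where the hypotheses are genuinely used: the positivity of $\mu_1,\mu_3$ on $\R_-$ forces $(x(U^{\mu_j})'(x))'=\int \tfrac{s}{(s-x)^2}\,\ud\mu_j(s)<0$ for $x>0$, and the sign condition $\beta>\alpha$ makes the square-root term contribute $(x(\sqrt{x})')'>0$ dominantly. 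Without this step you have no handle on the support of $\mu_2$ being an interval, and therefore none on the subsequent claim $\supp\mu_2=[0,p]$. You also skip the separate argument that the left endpoint $\tilde p$ is zero: the paper derives this by contradiction, noting that $Q_2+2U^{\mu_2}$ is strictly increasing on $(0,\tilde p)$ by \eqref{eq:derivative_real_potential}, which would violate the variational inequality if $\tilde p>0$.

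A second problem is your treatment of the strict inequalities. ``Analyticity outside the supports'' plus ``derivatives do not vanish near the free endpoints'' gives strictness only in a one-sided neighborhood, not on the entire rays $(p,\infty)$ and $(-q,0]$: an analytic variational quantity vanishing at a free endpoint with nonzero local derivative can still return to zero further out. The paper's actual argument for \eqref{eq:variation23} is global: it writes $2U^{\mu_1}-U^{\mu_2}$ as $\re\int_{-q}^x(\xi_{1,+}-\xi_{2,+})$ and invokes the nonvanishing of the spectral-curve discriminant on $(-q,0)\cup(p,\infty)$ (Remark~\ref{remark:discriminant_zeros}), a fact that has to be imported carefully with a footnote to avoid circularity. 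For \eqref{eq:variation123ineq} the strictness is a direct output of the Saff--Totik theorem you never invoke. Your contradiction scheme for $p,q>0$ is also not needed for $p$: once $\supp\mu_2$ is an interval $[\tilde p,p]$ and $\mu_2$ has finite logarithmic energy, automatically $p>\tilde p=0$, so $p>0$. Your proposed argument that $q=0$ forces $\mu_1=\mu_3$ and a ``pole structure inconsistent with $\mu_2$ being a probability measure'' is too vague to assess; as written it does not close.
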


\begin{proof}
The existence and uniqueness of the minimizer $\pmb{\mu}=(\mu_1,\mu_2,\mu_3)$ claimed by Theorem \ref{thm_equilibrium_problem} follows from the standard theory, we refer the reader to \cite{hardy_kuijlaars} for details, and also \cite{DKRZ12,DK09,duits_kuijlaars_mo} where similar equilibrium problems appeared.

Also, observe that once two among the measures $\mu_1$, $\mu_2$ and $\mu_3$ are fixed, the total potential acting on the third measure is real analytic on the set supporting it, except possibly at the origin. This immediately implies that the three measures are absolutely continuous, and also that their densities are bounded except possibly at the origin. For the same reason, the q.e. conditions on \eqref{eq:vector_equilibrium_conditions1}, \eqref{eq:vector_equilibrium_conditions3} and \eqref{eq:vector_equilibrium_conditions5} are actually valid everywhere on the corresponding supports, so yielding \eqref{eq:variation123}, \eqref{eq:variation12} and \eqref{eq:variation232}, where for the latter two the fact that the variational constants $\ell_1$ and $\ell_3$ are zero will follow from the unboundedness of the supports of $\sigma-\mu_1$ and $\mu_3$, to be shown in a moment.

We first show the properties of $\mu_1$. Since $\sigma((-\infty,-x))=+\infty$ for any $x\geq 0$ and $\mu_1$ is finite, we get that
$\supp(\sigma-\mu_1)$ is unbounded. In addition, it is readily seen from  \eqref{eq:vector_equilibrium_conditions1} and \eqref{eq:vector_equilibrium_conditions2} that
\begin{equation*}
\mu_1=\frac{1}{2}\mu_\rho^{\sigma} \qquad \textrm{with} \qquad \rho=\mu_2,
\end{equation*}
where the measure $\mu_\rho^{\sigma}$ is defined in Proposition \ref{prop:support_constrained_equilibrium_measure}. Hence, it follows that $\supp(\sigma-\mu_1)=(-\infty,-q]$ for some $q\geq 0$, as well as \eqref{eq:variation23} with possibly weak inequality and also \eqref{eq:variation232}. To see that the inequality is indeed strict, we start with the functions $\xi_1$ and $\xi_2$ in \eqref{eq:definition_xi_functions} that are at this point already defined off the real axis, and compute from \eqref{eq:derivative_potential_boundary_value} that
\begin{equation}\label{eq:umu12}
\re \int_{-q}^x(\xi_{1,+}(s)-\xi_{2,+}(s))\ud s =2U^{\mu_1}(x)-U^{\mu_2}(x),\qquad x\in (-q,0).
\end{equation}
This, together with Remark \ref{remark:discriminant_zeros} below\footnote{In fact, in \eqref{eq:umu12} the $+$-boundary value can be omitted. Our proof that the left-hand side of \eqref{eq:umu12} does not vanish relies on the {\it equalities} \eqref{eq:variation123}, \eqref{eq:variation12} and \eqref{eq:variation232} in an implicit manner, but obviously not on their corresponding inequalities.}, implies that the inequality \eqref{eq:vector_equilibrium_conditions2} is strict. All the conditions on $\supp\mu_1$ are thus proven.

Next we handle the conditions on $\supp\mu_2$. Observe that for $j=1,3,$ and $x>0$,
\begin{align*}
\frac{\ud}{\ud x}(x(U^{\mu_j})'(x)) & = \int \frac{x}{(s-x)^2}\ud \mu_j(s)+\int \frac{1}{s-x}\ud\mu_j(s) \\
& = \int \left( \frac{x}{(s-x)^2}+\frac{s-x}{(s-x)^2} \right) \ud\mu_j(s) = \int \frac{s}{(s-x)^2}\ud\mu_j(s)<0,
\end{align*}
where we have made use of the fact that $\mu_j$ is a positive measure supported on $\R_-$.
Furthermore, a simple calculation also shows that
$$(x(\sqrt{x})')'>0, \qquad   x>0.$$ Hence, on account of \eqref{eq:external_field_mu2}, \eqref{eq:inequality_alpha_beta} and the above two inequalities,
we conclude that $$(xQ_2'(x))'>0, \qquad x>0.$$
By \cite[Theorem~IV.1.10 - (c)]{SaffTotik}, this implies that
$$\supp\mu_2=[\tilde p,p]$$ for some $p>\tilde p\geq 0$ and also that the inequality \eqref{eq:vector_equilibrium_conditions4} is strict. To see that $\tilde p=0$, we note that the equality \eqref{eq:variation123}, already proven, gives us that
\begin{equation}\label{eq:equalitytildep}
2U^{\mu_2}(\tilde p)+Q_2(\tilde p)=\ell.
\end{equation}
If $\tilde p>0$, it then follows from \eqref{eq:derivative_real_potential} that
the function $Q_2+2U^{\mu_2}$ is strictly increasing on $(0,\tilde p)$. This,
together with \eqref{eq:equalitytildep}, implies that
$$
2U^{\mu_2}(x)+Q_2(x)<\ell,\qquad x\in (0,\tilde p),
$$
contradicting the inequality \eqref{eq:vector_equilibrium_conditions4}. Hence,
we have to have that $\tilde p=0,$, which concludes \eqref{eq:supports_mu1_mu2}.

As for $\mu_3$, it is a consequence of \eqref{eq:balayage_identity_potentials} that the measure $\frac{1}{2}\bal(\mu_2,\R_-)$ is fully supported on $\R_-$ and satisfies the equality  \eqref{eq:variation232} everywhere on its support. Hence, we must have
$$\mu_3=\frac{1}{2}\bal(\mu_2,\R_-),\qquad \ell_3=0,$$
and \eqref{eq:density_mu3} follows immediately from \eqref{eq:balayage_formula}.

This completes the proof of Proposition \ref{prop:varicondition}.
\end{proof}

The arguments above give us the qualitative properties claimed by Theorem~\ref{thm_equilibrium_problem}. The proofs of the quantitative claims of Theorem~\ref{thm_equilibrium_problem}, namely formulas \eqref{eq:formula_q_endpoint}, \eqref{eq:local_behavior_mu1}, \eqref{eq:formula_p_endpoint}, \eqref{eq:local_behavior_mu2} and \eqref{eq:local_behavior_mu3}, will be given in Section \ref{section:conclusion_proof} below.

\subsection{A four-sheeted Riemann surface \texorpdfstring{$\mathcal R$}{R}}
To prove Theorem \ref{thm:spectral_curve}, we need a Riemann surface consisting of
four sheets $\mathcal R_j$, $j=1,2,3,4$, given by
\begin{equation}\label{def:sheets}
\begin{aligned}
\mathcal R_1 &=\C\setminus (-\infty,-q],  && \qquad
\mathcal R_2 = \C\setminus \left((-\infty,-q]\cup [0,p]\right),
\\
\mathcal R_3 &= \C\setminus (-\infty,p],
&& \qquad
\mathcal R_4 = \C\setminus (-\infty,0],
\end{aligned}
\end{equation}
where the constants $p,q$ are given in \eqref{eq:formula_p_endpoint} and \eqref{eq:formula_q_endpoint}, respectively.

The sheet $\mathcal R_1$ is connected to the sheet $\mathcal R_2$ through $(-\infty,-q]$, $\mathcal R_2$ is connected to $\mathcal R_3$ through $[0,p]$ and $\mathcal R_3$ is connected to $\mathcal R_4$ through $(-\infty,0]$. All these gluings are performed in the usual crosswise manner; see Figure \ref{fig: Riemann surface}. We then compactify the resulting surface by adding a common point at $\infty$ to the sheets $\mathcal R_1$ and $\mathcal R_2$, and a common point at $\infty$ to the sheets $\mathcal R_3$ and $\mathcal R_4$. We denote this compact Riemann surface by $\mathcal R$.

\begin{figure}[t]
\begin{center}
\begin{tikzpicture}

%
%
\draw[thick] (-0.5,4.5)--(4,4.5)--(5,6);
\draw[thick] (-0.5,3)--(4,3)--(5,4.5);
\draw[thick] (-0.5,1.5)--(4,1.5)--(5,3);
\draw[thick] (-0.5,0)--(4,0)--(5,1.5);

\node[above] at (5.5,5) {$\mathcal R_1$};
\node[above] at (5.5,3.5) {$\mathcal R_2$};
\node[above] at (5.5,2) {$\mathcal R_3$};
\node[above] at (5.5,.5) {$\mathcal R_4$};

%
%
\draw[line width=1.2pt] (0,5.25)--(2,5.25);
\draw[line width=1.2pt] (0,3.75)--(2,3.75);
\draw[line width=1.2pt] (4,3.75)--(3,3.75);
\draw[line width=1.2pt] (4,2.25)--(3,2.25);
\draw[line width=1.2pt] (3,2.25)--(0,2.25);
\draw[line width=1.2pt] (3,0.75)--(0,0.75);
%
%
\filldraw [black] (2,5.25) circle (2pt) node [above] (q1) {$-q$};
\filldraw [black] (2,3.75) circle (2pt) node [above] (q2) {};
\filldraw [black] (3,3.75) circle (2pt) node [above] (02) {$0$};
\filldraw [black] (4,3.75) circle (2pt) node [above] (p2) {$p$};
\filldraw [black] (3,2.25) circle (2pt) node [above] (03) {};
\filldraw [black] (4,2.25) circle (2pt) node [above] (p3) {};
\filldraw [black] (3,0.75) circle (2pt) node [above] (04) {};

%
%
\draw[dashed] (2,5.25)--(2,3.75);
\draw[dashed] (4,3.75)--(4,2.25);
\draw[dashed] (3,3.75)--(3,0.75);

\end{tikzpicture}
\end{center}
\caption{The Riemann surface $\mathcal R$.}
\label{fig: Riemann surface}
\end{figure}
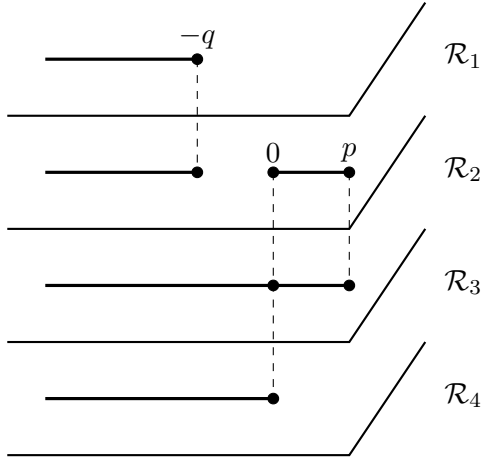

The surface $\mathcal R$ has the following branch points:
\begin{itemize}
\item Common branch points to $\mathcal R_1$ and $\mathcal R_2$ at $\infty$ and $z=-q$.
\item A common branch point to $\mathcal R_2$ and $\mathcal R_3$ at $z=p$.
\item A common branch point to $\mathcal R_3$ and $\mathcal R_4$ at $\infty$.
\item A common branch point to $\mathcal R_2$, $\mathcal R_3$ and $\mathcal R_4$ at $z=0$.
\end{itemize}
The last branch point enlisted above has ramification index $3$, whereas the others have ramification index $2$.  Consequently, it follows from the Riemann Hurwitz formula (cf. \cite{MRick}) that $\mathcal R$ has genus $0$.

\begin{prop}\label{prop:xij}
The function $\xi_j$ defined in \eqref{eq:definition_xi_functions} is analytic on $\mathcal R_j$.
\end{prop}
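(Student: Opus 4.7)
The plan is to verify analyticity of each $\xi_j$ on $\mathcal R_j$ separately, handling the two trivial cases first and then the substantive ones. For $\xi_3$ and $\xi_4$ there is essentially nothing to do: each is a sum of Cauchy transforms of measures and a branch of $1/\sqrt{z}$, all of which are individually analytic off the union of the supports of the measures involved with the branch cut $\R_-$ of $\sqrt{z}$. For $\xi_3$ this exceptional set is $\R_-\cup[0,p]=(-\infty,p]$, and for $\xi_4$ it is $\R_-=(-\infty,0]$; a glance at \eqref{def:sheets} shows these are exactly the complements of $\mathcal R_3$ and $\mathcal R_4$, respectively.

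The substantive content concerns $\xi_1$ and $\xi_2$. As written in \eqref{eq:definition_xi_functions}, $\xi_1$ is analytic on $\C\setminus\R_-$ and must be promoted to $\mathcal R_1=\C\setminus(-\infty,-q]$, so I must check analyticity across the open interval $(-q,0)$ and at the isolated point $z=0$. For $\xi_2$, which is analytic on $\C\setminus(-\infty,p]$, the analogous task is to extend across $(-q,0)$; no check is needed at the origin since $0\notin\mathcal R_2$. The crucial input I would take from Proposition~\ref{prop:varicondition} is that $\supp(\sigma-\mu_1)=(-\infty,-q]$ together with the absolute continuity of $\mu_1$, which forces $\mu_1=\sigma$ on $(-q,0)$ and hence $\frac{\ud\mu_1}{\ud x}(x)=\frac{\alpha}{\pi\sqrt{|x|}}$ there. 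Combining this with the Plemelj formulas \eqref{eq:plemelj_relations} and the elementary identity $\frac{1}{\sqrt{x+i0}}-\frac{1}{\sqrt{x-i0}}=-\frac{2i}{\sqrt{|x|}}$, valid for $x<0$ with the cut of $\sqrt{z}$ placed on $\R_-$, one computes for $x\in(-q,0)$
\[
\xi_{1,+}(x)-\xi_{1,-}(x)=2\pi i\cdot\frac{\alpha}{\pi\sqrt{|x|}}+\alpha\Bigl(-\frac{2i}{\sqrt{|x|}}\Bigr)=0,
\]
and analogously $\xi_{2,+}-\xi_{2,-}=0$ on $(-q,0)$ (this time $C^{\mu_2}$ contributes no jump since $\supp\mu_2\cap(-q,0)=\emptyset$, and the remaining pieces cancel in exactly the same way). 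This yields the desired analytic continuations across $(-q,0)$.

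It remains to check that $\xi_1$ is analytic at $z=0$. Here I would decompose $\mu_1=\restr{\mu_1}{(-\infty,-q]}+\restr{\sigma}{(-q,0)}$: the first summand is supported away from $0$, so its Cauchy transform is analytic there, while the endpoint formula \eqref{eq:behavior_cauchy_transform_endpoint} applied with $a=-1/2$, $K=\R_-$, $c=\alpha/\pi$ (and subsequently translated to the standard branch of $\sqrt{z}$ whose cut is $\R_-$) yields $C^{\restr{\sigma}{(-q,0)}}(z)=-\alpha/\sqrt{z}+\Boh(1)$ as $z\to 0$. The $-\alpha/\sqrt{z}$ piece precisely cancels the $+\alpha/\sqrt{z}$ summand in the definition of $\xi_1$, and the bounded remainder is analytic at the origin. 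The main obstacle in the whole argument is the sign and branch bookkeeping in the Plemelj and endpoint identities, which is conceptually trivial but requires care; once it is correctly carried out, everything else is already encoded in the equilibrium characterization of Proposition~\ref{prop:varicondition}.
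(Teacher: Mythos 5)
Your proposal carries out the same reduction as the paper: read off analyticity of $\xi_3,\xi_4$ from the supports, then check that $\xi_1,\xi_2$ have no jump across $(-q,0)$, then check removability of $\xi_1$ at $z=0$. The jump computation is the same in substance; the paper packages it via the closed form $C^\sigma(z)=-\alpha/\sqrt z$ from \eqref{eq:Csigma}, so that $\xi_1=C^{\mu_1}-C^\sigma$ has vanishing jump across $(-q,0)$ by Plemelj applied to $\mu_1-\sigma$, which is exactly the cancellation you compute by hand. The one place you genuinely deviate is at $z=0$, and there is a small technical overreach there. Equation \eqref{eq:behavior_cauchy_transform_endpoint} delivers $C^\omega(z)=C_Kz^a\bigl(1+\boh(1)\bigr)$, i.e.\ a \emph{relative} $\boh(1)$ error, so with $a=-1/2$ you are only entitled to $C^{\restr{\sigma}{(-q,0)}}(z)=-\alpha/\sqrt z+\boh(z^{-1/2})$, not $-\alpha/\sqrt z+\Boh(1)$ as you assert. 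Your conclusion is nonetheless correct and can be repaired in two ways: either write $C^{\restr{\sigma}{(-q,0)}}=C^\sigma-C^{\restr{\sigma}{(-\infty,-q]}}$ and use \eqref{eq:Csigma} plus the fact that the second piece is analytic at $0$ (giving your $\Boh(1)$ remainder exactly), or, as the paper does, settle for the weaker $\xi_1(z)=\Boh(|z|^{-1/2})$ and invoke the Riemann removable-singularity criterion ($z\xi_1(z)\to0$), which requires no sharpening of \eqref{eq:behavior_cauchy_transform_endpoint} at all. The paper's route is slightly cleaner precisely because it never needs the exact cancellation, only the order bound.
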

\begin{proof}
From the description of the supports of $\mu_1$, $\mu_2$ and $\mu_3$ in Proposition~\ref{prop:varicondition}, it follows that the $\xi_j$'s are analytic in their domains of definition as in \eqref{eq:definition_xi_functions}. A comparison of these domains with \eqref{def:sheets} then shows that we only need to show that $\xi_1$ and $\xi_2$ are analytic across $(-q,0)$, and in addition that $\xi_1$ does not have a singularity at $z=0$.

With the constraint measure $\sigma$ given in \eqref{def:constraint_measure}, a simple residue calculation shows that
\begin{equation}\label{eq:Csigma}
C^{\sigma}(z)=-\frac{\alpha}{\sqrt{z}}, \qquad z\in \C\setminus \R_-.
\end{equation}
By \eqref{eq:definition_xi_functions}, it is then readily seen that
\begin{equation}\label{eq:xi1_sigma}
\xi_1(z)=C^{\mu_1}(z)-C^{\sigma}(z),\qquad z\in \C\setminus \R_-.
\end{equation}
On account of the fact that $\mu_1=\sigma$ in $(-q,0)$ and the first equation in \eqref{eq:plemelj_relations}, we obtain that
$$
\xi_{1,+}(x)-\xi_{1,-}(x)=0, \qquad x\in (-q,0),
$$
thus concluding that $\xi_1$ is indeed analytic across $(-q,0)$, and also that $z=0$ is an isolated singularity. However, because $\mu_1$ is equal to $\sigma$ near $z=0$,
it follows from \eqref{eq:general_local_behavior} and \eqref{eq:behavior_cauchy_transform_endpoint} that, as $z\to 0$,
$$
C^{\mu_1}(z)=\Boh(|z|^{-1/2}).
$$
Hence, $\xi_1(z)=\Boh(|z|^{-1/2})$ as well, which implies that $z=0$ is in fact a removable singularity of $\xi_1$.

The proof for $\xi_2$ follows from the fact that
$$
\xi_2(z)=-\xi_1(z)+C^{\mu_2}(z),
$$
and that $\supp \mu_2$ does not intersect $(-q,0)$.

This completes the proof of Proposition \ref{prop:xij}.
\end{proof}

With the functions $\xi_j$, $j=1,2,3,4$, defined in \eqref{eq:definition_xi_functions}, set
\begin{equation}\label{def:xi}
\xi:\bigcup\limits_{j=1}^4 \mathcal R_j \to \C,\qquad \restr{\xi}{\mathcal R_j}=\xi_j.
\end{equation}
From the previous Proposition, $\xi$ is a well-defined meromorphic function on each of the sheets. It turns out that, in fact, it extends meromorphically to the whole surface $\mathcal R$, as claimed by our next result.

\begin{prop}\label{prop:merofunction}
The function $\xi$ defined in \eqref{def:xi} extends to a meromorphic function on the Riemann surface $\mathcal R$, and its unique pole is the branch point at $z=0$.
\end{prop}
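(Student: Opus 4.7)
I would extend $\xi$ across the three gluing cuts using the Euler-Lagrange equalities from Proposition~\ref{prop:varicondition}, and then analyze regularity at the branch points and at infinity.

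For the cut $(-\infty,-q]$ gluing $\mathcal R_1$ and $\mathcal R_2$, differentiating the equality \eqref{eq:variation12} and applying \eqref{eq:derivative_potential_boundary_value} produces $C^{\mu_1}_+(x)+C^{\mu_1}_-(x)=C^{\mu_2}(x)$ on this interval, the absence of $\pm$ on $C^{\mu_2}$ following from $\supp\mu_2\subset \R_+$. Combined with the boundary identity $\alpha/\sqrt{z}_+ + \alpha/\sqrt{z}_-=0$ on $\R_-$, a direct substitution into \eqref{eq:definition_xi_functions} yields $\xi_{1,+}(x)=\xi_{2,-}(x)$, which is exactly the crosswise matching required. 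Completely analogous computations, based on differentiating \eqref{eq:variation123} for the cut $[0,p]$ between $\mathcal R_2$ and $\mathcal R_3$ and on \eqref{eq:variation232} for the cut $(-\infty,0]$ between $\mathcal R_3$ and $\mathcal R_4$, give $\xi_{2,+}=\xi_{3,-}$ and $\xi_{3,+}=\xi_{4,-}$, respectively. Hence $\xi$ extends analytically across the interiors of all three cuts and defines a meromorphic function away from the branch points.

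It remains to analyze $\xi$ at the branch points and at infinity. At $z=-q$ and $z=p$, the square-root local behaviors in \eqref{eq:local_behavior_mu1} and \eqref{eq:local_behavior_mu2} ensure that the relevant Cauchy transforms, and hence the corresponding $\xi_j$'s, are analytic functions of the sheet uniformizers $\sqrt{z+q}$ and $\sqrt{p-z}$, so these two points are regular points of $\xi$ on $\mathcal R$. At the two points above $z=\infty$, the asymptotics \eqref{eq:asymptotics_cauchy_transf_log_pot} together with $\alpha/\sqrt{z},\beta/\sqrt{z}\to 0$ show $\xi_j(z)\to 0$ on each sheet, so $\xi$ is regular there as well. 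Finally, at the triple branch point above $z=0$, Proposition~\ref{prop:xij} already guarantees that $\xi_1$ is analytic; for the remaining sheets, the local behaviors \eqref{eq:local_behavior_mu2}, \eqref{eq:local_behavior_mu3}, combined with \eqref{eq:behavior_cauchy_transform_endpoint}, show $C^{\mu_2}(z)$ and $C^{\mu_3}(z)$ are of order $z^{-2/3}$ as $z\to 0$. In the local uniformizer $w$ with $z=w^3$, these terms become $w^{-2}$, so $\xi_2,\xi_3,\xi_4$ share a pole of order two, which is thus the unique pole of $\xi$ on $\mathcal R$.

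The main obstacle is the careful bookkeeping of the $\pm$-boundary values of $\sqrt{z}$ and of the Cauchy transforms across each cut: the cancellations between $\sqrt{z}_+$ and $\sqrt{z}_-$ on $\R_-$, the absence of jumps for a Cauchy transform off its support, and the vanishing of the variational constants $\ell_1=\ell_3=0$ in \eqref{eq:variation12}, \eqref{eq:variation232} (which is what allows the differentiated identities to match exactly what $\xi_1,\xi_2,\xi_3,\xi_4$ need) all have to line up simultaneously. Once these are in place, the meromorphic extension and the localization of the unique pole at $z=0$ follow from the argument above.
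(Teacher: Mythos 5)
Your continuation of $\xi$ across the three cuts is exactly the paper's argument: differentiate the variational equalities \eqref{eq:variation12}, \eqref{eq:variation123}, \eqref{eq:variation232}, use the Sokhotski--Plemelj relations, and track the $\pm$-boundary values of $\alpha/\sqrt{z}$ and $\beta/\sqrt{z}$. That part is fine.

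The problem is in your branch-point analysis, where you invoke the quantitative local behaviors \eqref{eq:local_behavior_mu1}, \eqref{eq:local_behavior_mu2}, \eqref{eq:local_behavior_mu3} (the square-root vanishing at $p$ and $-q$, and the $|x|^{-2/3}$ rate at the origin). At the moment Proposition~\ref{prop:merofunction} is being proved, none of these is available. Those formulas are established only in Section~\ref{section:conclusion_proof}, by analyzing the spectral curve \eqref{eq:spectral_curve} from Theorem~\ref{thm:spectral_curve}, whose proof in Section~\ref{sec:thm_rational_parametrization} begins ``By Proposition~\ref{prop:merofunction}\ldots''. The derivation of the key exponent relation \eqref{eq:q2q3}, which underlies all three local behaviors you cite, uses the ramification structure of $\mathcal R$ at $z=0$ and the algebraicity of $\xi$ --- both consequences of the proposition you are trying to prove. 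So your argument at the branch points is circular.

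The paper avoids this as follows. At $p$ and $-q$, it only uses the qualitative statement from Proposition~\ref{prop:varicondition} that the densities are bounded away from the origin, combined with the algebraicity of $\xi$ (available once the crosswise matching and the analyticity off the branch points are established), to get behavior \eqref{eq:general_local_behavior}--\eqref{eq:behavior_cauchy_transform_endpoint} with some $a>0$, hence regularity. For $z=0$, it does not estimate the Cauchy transforms at all: having shown $\xi$ is analytic and non-constant at $\infty$ and has no other possible poles, it concludes that $z=0$ must be a pole because a non-constant meromorphic function on a compact Riemann surface necessarily has at least one pole. Your computation that the pole has order two in the local uniformizer is correct \emph{a posteriori}, but it cannot be used to establish the proposition.

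A fix that stays close to your write-up: replace the use of the exact exponents at $p,-q$ by the boundedness from Proposition~\ref{prop:varicondition} together with \eqref{eq:general_local_behavior}--\eqref{eq:behavior_cauchy_transform_endpoint} for some unspecified $a>0$; and at $z=0$ replace the $z^{-2/3}$ estimate by the compactness argument (non-constant meromorphic function must have a pole, and $z=0$ is the only remaining candidate).
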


\begin{proof}
We need to show that the analytic continuation of $\xi_j$ to $\mathcal R_{j+1}$ is $\xi_{j+1}$, $j=1,2,3$. For the sake of brevity, we will only consider the case when $j=1$, while the other cases can be proved similarly.

To show that the analytic continuation of $\xi_1$ to $\mathcal R_2$ is $\xi_2$, we note from \eqref{eq:derivative_potential_boundary_value}--\eqref{eq:plemelj_relations} that for $x<-q$
\begin{equation*}
2 \frac{\ud}{\ud x} U^{\mu_1}(x)=2\textrm{PV}\int \frac{\ud \mu_1(s)}{s-x}=C^{\mu_1}_+(x)+C^{\mu_1}_-(x)
\end{equation*}
and
\begin{equation*}
\frac{\ud}{\ud x} U^{\mu_2}(x)=C^{\mu_2}(x)=C^{\mu_2}_\pm(x).
\end{equation*}
Thus, by taking derivatives with respect to $x$ on both sides of \eqref{eq:variation12}, it follows that
\begin{align*}
0  = C^{\mu_1}_\pm(x)+C^{\mu_1}_\mp(x)-C^{\mu_2}(x) &= C^{\mu_1}_\pm(x)+\frac{\alpha}{(\sqrt{x})_\pm}+C^{\mu_1}_\mp(x)-C^{\mu_2}(x)+\frac{\alpha}{(\sqrt{x})_\mp} \nonumber \\
  & = \xi_{1,\pm}(x)-\xi_{2,\mp}(x),\qquad x\in (-\infty,-q), 
\end{align*}
as required, where we have made use of the fact that $(\sqrt{x})_+=-(\sqrt{x})_-$ for $x<0$ in the second equality.

Thus, the only possible poles of $\xi$ have to be at the branch points. Proposition~\ref{prop:varicondition} already tells us that the densities of $\mu_1$, $\mu_2$ and $\mu_3$ remain bounded except possibly at the origin. At this stage, we already know that $\xi$ - and hence each $\mu_j$ - is algebraic, so for each of $\mu_1,\mu_2$ and $\mu_3$ the behavior \eqref{eq:general_local_behavior}--\eqref{eq:behavior_cauchy_transform_endpoint} has to take place as $x\to p,-q$, for some $a>0$, giving us that $\xi$ cannot blow up at these points.

This way, we have shown that the only possible poles of $\xi$ are $z=0,\infty$. However, the large $z$ asymptotics of $\xi_j$, $j=1,2,3,4$, (which are immediate from \eqref{eq:definition_xi_functions} but for convenience also given in \eqref{eq:asymptotics_xi} below) show that the function $\xi$ is analytic at $\infty$ and non-constant, so the point $z=0$ common to the last three sheets has indeed to be a pole of $\xi$.

This completes the proof of Proposition \ref{prop:merofunction}.
\end{proof}

\subsection{Proof of Theorem~\ref{thm:spectral_curve}}\label{sec:thm_rational_parametrization}
By Proposition \ref{prop:merofunction}, we have that the functions $\xi_j$, $j=1,2,3,4$, are the
four distinct solutions to the following algebraic equation of order four:
\begin{equation*}
0=\prod_{i=1}^4(\xi-\xi_i)=\xi^4+R_3(z)\xi^3+R_2(z)\xi^2+R_1(z)\xi+R_0(z),
\end{equation*}
where the functions $R_j(z)$, $j=0,1,2,3$, are rational functions whose set of poles coincide with the set of poles for $\xi$, so they can have poles only at $z=0$. In view of \eqref{eq:definition_xi_functions}, it is easily seen that
\begin{equation*}
R_3(z)=-\xi_1-\xi_2-\xi_3-\xi_4=0.
\end{equation*}

To show that $R_1$, $R_2$ and $R_3$ are indeed given by the ones in \eqref{eq:spectral_curve}, we need to know the local behavior of each $\xi_j$, $j=2,3,4$, near the origin.

Because $\mathcal R$ has a branch point of ramification index $3$ at $z=0$, we have that, as $z \to 0$,
\begin{equation}\label{eq:local_behavior_xi}
\xi_j(z)=\tilde c_jz^{\frac{\delta}{3}}(1+\boh(z)),\qquad j=2,3,4,
\end{equation}
for some nonzero integer $\delta$ and some nonzero constants $\tilde c_2,\tilde c_3,\tilde c_4$.
Thus, in virtue of the Sokhotski-Plemelj relations \eqref{eq:plemelj_relations} and \eqref{eq:behavior_cauchy_transform_endpoint},
it follows that the densities of the three measures $\mu_1$, $\mu_2$ and $\mu_3$
behave algebraically near the origin as well, that is, as $z \to 0$,
\begin{equation}\label{eq:mujorigin}
\frac{\ud\mu_j}{\ud x}(x)= c_j z^{q_j}(1+\boh(1)),\qquad j=1,2,3,
\end{equation}
for some nonzero constants $c_1, c_2, c_3$ and some rational numbers $q_1$, $q_2$ and $q_3$ with $q_j>-1$. We note that
the latter condition holds because the measures $\mu_j$'s are finite.
Also, we see from \eqref{eq:variation232} and \eqref{eq:derivative_potential_boundary_value} that
$$
2 \re C^{\mu_3}_+(x)= \re C^{\mu_2}_+(x), \qquad x<0.
$$
This, together with \eqref{eq:plemelj_relations} and \eqref{eq:behavior_cauchy_transform_endpoint}, implies that
either $q_2, q_3 \geq 0$ or $-1<q_2=q_3<0$. Hence, we further obtain from \eqref{eq:local_behavior_xi} and the definition of $\xi_3$
in \eqref{eq:definition_xi_functions} that
$$
\frac{\delta}{3}=\left\{
                   \begin{array}{ll}
                     -\frac12, & \hbox{if $q_2,q_3 \geq 0$,} \\
                     \min\{q_3,-\frac{1}{2}\}, & \hbox{if $-1<q_2=q_3<0$.}
                   \end{array}
                 \right.
$$
Because $\delta$ is an integer, we learn from the above formula that the only possibility left is
\begin{equation}\label{eq:delta2}
\delta=-2,
\end{equation}
or equivalently,
\begin{equation}\label{eq:q2q3}
q_2=q_3=-\frac{2}{3}.
\end{equation}

In view of the Vieta relations, \eqref{eq:local_behavior_xi}, \eqref{eq:delta2} and the fact that $\xi_1$
is analytic near $z=0$, we obtain that, as $z\to 0$,
\begin{align}
R_0(z) & =\xi_1\xi_2\xi_3\xi_4=\Boh (z^{-2}), \label{eq:expansion_R0_origin}\\
R_1(z) & =- \xi_1\xi_2\xi_3-\xi_1\xi_2\xi_4-\xi_1\xi_3\xi_4 - \xi_2\xi_3\xi_4=\Boh(z^{-2}), \\
R_2(z) & = \sum_{\substack{j\neq k \\ 1 \leq j,k \leq 4}}\xi_j\xi_k = \Boh(z^{-4/3}).\label{eq:expansion_R2_origin}
\end{align}
Since $R_j$, $j=0,1,2,$ are rational functions with possible finite poles only at $z=0$, we conclude that
\begin{equation}\label{eq:Rjform}
R_j(z)=\frac{P_j(z)}{z^2}, \quad j=0,1, \quad \textrm{and}
\quad
R_2(z)=\frac{P_2(z)}{z},
\end{equation}
for some polynomials $P_0,P_1$ and $P_2$.

On the other hand, as $z\to\infty$, it follows from \eqref{eq:asymptotics_cauchy_transf_log_pot} and the local coordinates on $\mathcal R$ around the branch points at
$\infty$ that
\begin{equation}\label{eq:asymptotics_xi}
\begin{aligned}
&\xi_1(z) =\frac{\alpha} {\sqrt{z}}-\frac{1}{2z}-\frac{c_1}{z^{3/2}}+\Boh(z^{-2}),  && \qquad
\xi_2(z) =-\frac{\alpha} {\sqrt{z}}-\frac{1}{2z}+\frac{c_1}{z^{3/2}}+\Boh(z^{-2}),  \\
&\xi_3(z) =-\frac{\beta} {\sqrt{z}}+\frac{1}{2z}-\frac{c_3}{z^{3/2}}+\Boh(z^{-2}) , && \qquad
\xi_4(z) =\frac{\beta} {\sqrt{z}}+\frac{1}{2z}+\frac{c_3}{z^{3/2}}+\Boh(z^{-2}),
\end{aligned}
\end{equation}
for some constants $c_1$ and $c_3$.

Looking at the polynomial part of \eqref{eq:asymptotics_xi}, and expanding as in \eqref{eq:expansion_R0_origin}--\eqref{eq:expansion_R2_origin} but near $z=\infty$, we see from \eqref{eq:Rjform} that the coefficients $R_j$, $j=0,1,2$, reduce to the ones given in \eqref{eq:spectral_curve}.

This completes the first part of the proof of  Theorem \ref{thm:spectral_curve}.

To obtain the rational parametrization for \eqref{eq:spectral_curve}, which is known to exist because $\mathcal R$ has genus $0$, we first remark that the point $(\xi,z)=(0,\infty)$ is the only point of high
order branching of the curve, as all the other points are either simple branch points or regular points. As a consequence, the line
\begin{equation}\label{eq:ansatz_rational_parametrization}
z=\frac{t}{\xi^2},\qquad t\in \C,
\end{equation}
should intersect the point $(0,\infty)$ with high multiplicity. Substituting \eqref{eq:ansatz_rational_parametrization} into \eqref{eq:spectral_curve}, we arrive at
$$
t^2-(\alpha^2+\beta^2)t+(\alpha^2-\beta^2)\xi+\alpha^2\beta^2=0,
$$
from which it follows that $\xi=h(t)$ with $h$ given in \eqref{def:h}.
Thus, the map
\begin{equation}\label{eq:rational_parametrization}
(\xi,z)=H(t):=\left(h(t),\frac{t}{h(t)^2}\right), \qquad t \in \overline \C,
\end{equation}
is a rational parametrization of the Riemann surface $\mathcal R$. Counting its degree,
we see that this parametrization is maximal \cite[Theorem~4.21]{rational_algebraic_curves_book}.

This completes the proof of Theorem~\ref{thm:spectral_curve}.
\qed

\subsection{Proof of Theorem~\ref{thm_equilibrium_problem}}
\label{section:conclusion_proof}

As we observed at the end of Section \ref{sec:qualitative}, Proposition~\ref{prop:varicondition} already provides most of the claims in Theorem~\ref{thm_equilibrium_problem}, and it only remains to prove \eqref{eq:formula_q_endpoint}, \eqref{eq:local_behavior_mu1}, \eqref{eq:formula_p_endpoint}, \eqref{eq:local_behavior_mu2} and \eqref{eq:local_behavior_mu3}.

The local behavior of the density functions near the origin for the measures $\mu_2$ and $\mu_3$ claimed in \eqref{eq:local_behavior_mu2} and \eqref{eq:local_behavior_mu3} was already obtained; see \eqref{eq:mujorigin} and \eqref{eq:q2q3}. To verify the other formulas, we need an analysis of the spectral curve \eqref{eq:spectral_curve}.

From the construction of the Riemann surface $\mathcal R$, its only finite branch points are $p$, $-q$ and $0$. The discriminant of \eqref{eq:spectral_curve} with respect to $\xi$, as computed with Mathematica, is
\begin{equation}\label{eq:discriminant_spectral_curve}
\frac{1}{z^8}\left(\alpha ^2-\beta ^2\right)^2D_1(z)
\end{equation}
with
$$
D_1(z)=
-27 \left(\alpha ^2-\beta ^2\right)^2+4z \left(\alpha ^2+\beta ^2\right) (\alpha ^4-34 \alpha ^2 \beta ^2+\beta ^4)+16z^2 \alpha ^2 \beta ^2 \left(\alpha ^2-\beta ^2\right)^2
$$
being a quadratic polynomial. The leading coefficient of $D_1$ is positive and
$$D_1(0)=-27 \left(\alpha ^2-\beta ^2\right)^2<0,$$
so we have that the discriminant of the spectral curve has two simple zeros with distinct signs. Hence, these two real roots have to be the nonzero branch points of $\mathcal R$, namely $p$ and $-q$, and the formulas \eqref{eq:formula_q_endpoint} and \eqref{eq:formula_p_endpoint} are obtained by solving the quadratic equation $D_1(z)=0$.

Finally, from the relation \eqref{eq:xi1_sigma}, the definition of $\xi_2$ in \eqref{eq:definition_xi_functions}, and the fact that $p$ and $-q$ are two simple zeros of \eqref{eq:discriminant_spectral_curve}, we conclude \eqref{eq:local_behavior_mu1} and the local behavior of $\mu_2$ near $z=p$ as stated in \eqref{eq:local_behavior_mu2}.

This completes the proof of Theorem \ref{thm_equilibrium_problem}. \qed

\begin{remark}\label{remark:discriminant_zeros}
We note that the arguments above also imply that
$$\xi_j(x)-\xi_k(x)\neq 0, \qquad~~ j\neq k, $$ for $x\in (-q,0) \cup (p,+\infty)$, because the discriminant \eqref{eq:discriminant_spectral_curve} does not vanish on these two intervals.
\end{remark}

\subsection{The uniformization of the Riemann surface \texorpdfstring{$\mathcal R$}{R} in detail}\label{sec:rational_parametrization}

For later purpose, it is convenient to give a geometric description of the opens sets $\mathcal D_k$ that are uniquely determined by
\begin{equation}\label{def:Dk}
\mathcal{D}_k=H^{-1}(\mathcal R_k), \qquad k=1,2,3,4,
\end{equation}
where $H$ is given in \eqref{eq:rational_parametrization}. To obtain these sets, we first analyze the images of the branch points of $\mathcal R$ on the $t$-sphere.

The finite branch points of $\mathcal R$ where $\xi$ remains bounded, that is, the branch points $z=p$ and $z=-q$,
are determined as the values of $t$ for which the equation
\begin{equation}\label{def:zt}
z=z(t)=\frac{t}{h(t)^2}=\frac{t(\beta^2-\alpha^2)^2}{(t-\alpha^2)^2(t-\beta^2)^2}
\end{equation}
has multiple solutions. Since
$$
z'(t)=-\frac{(\beta^2-\alpha^2)^2}{(t-\alpha^2)^3(t-\beta^2)^3}\widehat h(t),
$$
where
$$
\widehat h(t):=3t^2-t(\alpha^2+\beta^2)-\alpha^2\beta^2,
$$
these points are the roots of $\widehat h (t)$, i.e.,
\begin{equation}\label{def:tpm}
t_\pm =\frac{1}{6}(\beta^2+\alpha^2\pm \sqrt{\alpha^4+14\alpha^2\beta^2+\beta^4})
\end{equation}
with $t_-<0<t_+$. As a consequence,
$$
z(t_+)=\frac{t_+}{h(t_+)^2}>0,\qquad z(t_-)=\frac{t_-}{h(t_-)^2}<0,
$$
so actually
$$
z(t_+)=p,\qquad z(t_-)=-q,
$$
which is also consistent with \eqref{eq:formula_p_endpoint} and \eqref{eq:formula_q_endpoint}.

To find the $t$-points corresponding to $\infty^{(1)}=\infty^{(2)}$ and $\infty^{(3)}=\infty^{(4)}$,
we must find the values of $t$ for which $z(t)$ blows up. These are thus given by the zeros of $h(t)$,
that is,
$$
t=\alpha^2\quad \textrm{or} \quad t=\beta^2.
$$
To identify the images, we see from \eqref{eq:asymptotics_xi} that
$$
\sqrt{z}\xi_{1,2}(z)=\pm\alpha+\boh(1),\quad \sqrt{z}\xi_{3,4}(z)=\mp \beta+\boh(1), \quad z\to \infty,
$$
whereas using the rational parametrization $H$,
$$
|\sqrt{z}\xi(z)|=\left|\sqrt{\frac{t}{h(t)^2}}h(t)\right|=
\begin{cases}
 \alpha + \boh(1),&~~ t\to \alpha^2, \\
\beta + \boh(1), &~~ t\to \beta^2,
\end{cases}
$$
hence,
$$
z(\alpha^2)=\infty^{(1)}=\infty^{(2)},\qquad z(\beta^2)=\infty^{(3)}=\infty^{(4)}.
$$
Moreover, since
$$
\widehat h(\alpha^2)=-2\alpha^2(\beta^2-\alpha^2)<0,\qquad \widehat h(\beta^2)=2\beta^2(\beta^2-\alpha^2)>0,
$$
we have the ordering
$$
t_-<0<\alpha^2<t_+<\beta^2.
$$

The remaining branch point of $\mathcal R$ is the one at $z=0$ connecting $\mathcal R_2$, $\mathcal R_2$ and $\mathcal R_3$. According to Proposition~\ref{prop:merofunction}, this branch point corresponds to the only $t$-point for which $h(t)=\xi$ blows up, so it is $t=\infty$.

In summary, we have the following proposition regarding the mapping properties of the rational parametrization $H$ defined in \eqref{eq:rational_parametrization}.
\begin{prop}\label{prop:zt}
The $z\leftrightarrow t $ correspondence for the branch points of the Riemann surface $\mathcal R$ under the rational parametrization $H$ is listed in Table~\ref{table:branchpoints}. Furthermore, the local coordinate $z(t)$ admits the following behavior near each of its critical points.
%
\begin{equation}\label{eq:asymptotics_rational_parametrization}
\begin{aligned}
z(t) & = \frac{\alpha^2}{(\beta^2-\alpha^2)^2}\frac{1}{(t-\alpha^2)^{2}}(1+\Boh(t-\alpha^2)),&&~~ t\to \alpha^2, \\
z(t) & = \frac{\beta^2}{(\beta^2-\alpha^2)^2}\frac{1}{(t-\beta^2)^{2}}(1+\Boh(t-\beta^2)),&&~~  t\to \beta^2, \\
z(t) & = -q+\Boh((t-t_-)^2),&&~~ t \to t_-, \\
z(t) & = p+\Boh((t-t_+)^2),&&~~ t \to t_+, \\
z(t) & = \frac{(\beta^2-\alpha^2)^2}{t^3}(1+\Boh(t^{-1})),&&~~ t\to \infty.
\end{aligned}
\end{equation}
\end{prop}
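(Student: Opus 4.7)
The plan splits into two halves corresponding to the two parts of the statement: (i) the branch-point correspondence summarised in Table~\ref{table:branchpoints}, and (ii) the local expansions \eqref{eq:asymptotics_rational_parametrization}.

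For (i), essentially all the identifications have already been carried out in the paragraphs preceding the statement, so I would simply collect them. Starting from the factorisation $h(t) = (t-\alpha^2)(t-\beta^2)/(\beta^2-\alpha^2)$, which is immediate from \eqref{def:h}, one obtains the explicit rational expression $z(t)=t(\beta^2-\alpha^2)^2/[(t-\alpha^2)^2(t-\beta^2)^2]$. Setting $z'(t)=0$ reduces to the quadratic equation $\widehat h(t)=0$ and produces the two simple roots $t_\pm$ of \eqref{def:tpm}, whose images $z(t_\pm)$ are the finite branch points $p$ and $-q$; the ordering $t_-<0<\alpha^2<t_+<\beta^2$ is exactly the one justified in the text. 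The simple poles of $z(t)$ at $t=\alpha^2$ and $t=\beta^2$ are identified with $\infty^{(1,2)}$ and $\infty^{(3,4)}$ respectively, the pairing being determined by the asymptotics \eqref{eq:asymptotics_xi}: the quantity $|\sqrt{z(t)}\,h(t)|$ tends to $\alpha$ in the first limit and to $\beta$ in the second, matching the $\pm\alpha/\sqrt{z}$ and $\pm\beta/\sqrt{z}$ leading terms of $\xi_1,\xi_2$ and $\xi_3,\xi_4$. Finally, Proposition~\ref{prop:merofunction} identifies $z=0$ as the unique pole of $\xi=h(t)$, and on the $t$-sphere this is precisely $t=\infty$, supplying the last entry of the table and accounting for the ramification index three there.

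For (ii), all four asymptotics are direct Taylor expansions of the explicit rational function $z(t)$. Near $t=\alpha^2$ one factors out the double pole, writing $(t-\alpha^2)^2 z(t) = t(\beta^2-\alpha^2)^2/(t-\beta^2)^2$; the right-hand side is analytic and nonzero at $t=\alpha^2$, so a plain Taylor expansion yields the required leading coefficient together with a $(1+\Boh(t-\alpha^2))$ remainder, and the case $t=\beta^2$ is identical after swapping the roles of $\alpha$ and $\beta$. Near $t_\pm$, the derivative $z'(t) = -(\beta^2-\alpha^2)^2\widehat h(t)/[(t-\alpha^2)^3(t-\beta^2)^3]$ vanishes to exactly first order because $\widehat h$ has simple roots (its discriminant $\alpha^4+14\alpha^2\beta^2+\beta^4$ is strictly positive) while the remaining rational factor is regular and nonzero at $t_\pm$; hence $z''(t_\pm)\neq 0$ and $z(t)=z(t_\pm)+\Boh((t-t_\pm)^2)$, which recovers the middle two lines of \eqref{eq:asymptotics_rational_parametrization}. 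For $t\to\infty$ the elementary expansion $(t-\alpha^2)^2(t-\beta^2)^2=t^4(1+\Boh(t^{-1}))$ gives $z(t)=(\beta^2-\alpha^2)^2 t^{-3}(1+\Boh(t^{-1}))$ at once.

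I do not anticipate any real obstacle: the proposition essentially collects already-established identifications and then performs routine Taylor expansion of an explicit rational function. The only subtlety worth an explicit check is the simplicity of the two zeros of $\widehat h$, which is what ensures the ramification-index-two local structure of $\mathcal R$ at $p$ and $-q$, and this simplicity is immediate from the positivity of the discriminant of $\widehat h$.
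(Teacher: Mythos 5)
Your proposal follows essentially the same two-part route as the paper's very brief proof: part (i) is simply a collection of the identifications already established in the preceding paragraphs, and part (ii) is a direct Taylor expansion of the explicit rational function $z(t)=t(\beta^2-\alpha^2)^2/[(t-\alpha^2)^2(t-\beta^2)^2]$ at each of its critical points. Your reasoning is correct.

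One caveat you should not gloss over: if you actually evaluate $(t-\alpha^2)^2 z(t)=t(\beta^2-\alpha^2)^2/(t-\beta^2)^2$ at $t=\alpha^2$ you obtain $\alpha^2(\beta^2-\alpha^2)^2/(\alpha^2-\beta^2)^2=\alpha^2$, not the $\alpha^2/(\beta^2-\alpha^2)^2$ stated in the proposition (and likewise $\beta^2$, not $\beta^2/(\beta^2-\alpha^2)^2$, at $t=\beta^2$). So your phrase ``a plain Taylor expansion yields the required leading coefficient'' would not survive an explicit check against \eqref{eq:asymptotics_rational_parametrization}; it points to a misprint (an extra factor $(\beta^2-\alpha^2)^{-2}$) in the first two lines of that display. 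The remaining three asymptotics are exactly as you derive them. This discrepancy is harmless downstream in the paper, since where \eqref{eq:asymptotics_rational_parametrization} is invoked (in the proof of Proposition~\ref{prop:GlobCons}) only unspecified nonzero constants depend on these leading coefficients, but it is the kind of thing you should have flagged rather than asserting the match.
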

\begin{proof}
We have already proved the images of the branch points of $\mathcal R$ in the $t$-sphere, while the local behavior of $z$ near each of its critical points follows directly from \eqref{def:zt}.
\end{proof}

\begin{table}[h]
\begin{center}
\begin{tabular}{c|c}
branch points on $\mathcal R$ & \shortstack{points on $t$-sphere \\ (in increasing order of magnitude)} \\
\hline
$-q$ & $t_-$ \\
\hline
$\infty^{(1)}=\infty^{(2)}$ & $\alpha^2$ \\
\hline
$p$ & $t_+$ \\
\hline
$\infty^{(3)}=\infty^{(4)}$ & $\beta^2$ \\
\hline
$0$ & $\infty$
\end{tabular}
\caption{The $z\leftrightarrow t $ correspondence for the branch points of $\mathcal R$.}\label{table:branchpoints}
\end{center}
\end{table}

The inverse map $H^{-1}$ maps the branch cuts $\overline {\Delta_k}$ of $\mathcal R$ to simple analytic arcs $\gamma_k^\pm\subset \overline \C$, $k=1,2,3$ that
can only intersect at the points of the $t$-sphere enlisted in Table \ref{table:branchpoints}.
Due to the symmetry, $\gamma_k^{-}$ is the complex conjugate of $\gamma_k^+$, and the $+$-sign indicates that $\gamma_k^+$ is on the upper half plane.
The index of each of these arcs is determined by the following rules.
\begin{itemize}
\item $\gamma_1^\pm$ is the arc that connects $t_-$ and $\alpha^2$.
\item $\gamma_2^\pm$ is the arc that connects $t_+$ and $\infty$.
\item $\gamma_3^\pm$ is the arc that connects $\beta^2$ and $\infty$.
\end{itemize}
A basic geometric analysis of the conformal map $H$ then shows the following.
\begin{itemize}
\item The contour $H(\gamma_1^+)$ ($H(\gamma_1^-$)) is the upper (lower) part of the interval $\overline {\Delta_1}$ on $\mathcal R_1$, which is the same as the lower (upper) part of this interval on $\mathcal R_2$.

\item The contour $H(\gamma_2^-)$ ($H(\gamma_2^+)$) is the upper (lower) part of the interval $\overline {\Delta_2}$ on $\mathcal R_2$, which is the same as the lower (upper) part of this interval on $\mathcal R_3$.

\item The contour $H(\gamma_3^+)$ ($H(\gamma_3^-)$) is the upper (lower) part of the interval $\overline {\Delta_3}$ on $\mathcal R_3$, which is the same as the lower (upper) part of this interval on $\mathcal R_4$.
\end{itemize}
This also means that each of the arcs $\gamma_k:=\gamma_k^+\cup\gamma_k^-$ is an analytic closed contour on $\overline \C$, which is the common boundary component of
$\mathcal D_k$ and $ \mathcal D_{k+1}$, $k=1,2,3$, where $\mathcal{D}_k$ is defined in \eqref{def:Dk}. The above correspondence is illustrated in Figure~\ref{fig:uniformization}.

Finally, we observe that $H$ maps the intervals $(t_-,0)$ and $(-\infty,t_-)$ to the interval $(-q,0)$ on the sheets $\mathcal R_1$ and $\mathcal R_2$, respectively. This is an immediate consequence of the description above, combined with real symmetry.

\begin{figure}
\begin{center}
\begin{minipage}{.5\textwidth}
\begin{tikzpicture}

%
%
\draw[thick] (-0.5,4.5)--(4,4.5)--(5,6);
\draw[thick] (-0.5,3)--(4,3)--(5,4.5);
\draw[thick] (-0.5,1.5)--(4,1.5)--(5,3);
\draw[thick] (-0.5,0)--(4,0)--(5,1.5);

\node[above] at (-1,5) {$\mathcal R_1$};
\node[above] at (-1,3.5) {$\mathcal R_2$};
\node[above] at (-1,2) {$\mathcal R_3$};
\node[above] at (-1,.5) {$\mathcal R_4$};

%
%
\draw[line width=1.2pt] (0,5.25)--(2,5.25);
\draw[line width=1.2pt] (0,3.75)--(2,3.75);
\draw[line width=1.2pt] (4,3.75)--(3,3.75);
\draw[line width=1.2pt] (4,2.25)--(3,2.25);
\draw[line width=1.2pt] (3,2.25)--(0,2.25);
\draw[line width=1.2pt] (3,0.75)--(0,0.75);

%
%
\draw[thin,dash pattern={on 7pt off 3pt},red,line width=1pt] (.1,5.35) --
node[above=-2pt] {$\scriptstyle\color{black}H(\gamma_1^+)$} (2,5.35);
\draw[thin,dashed,dash pattern={on 3pt off 2pt},red,line width=1pt] (-.1,5.15) -- node[below=-3pt] {$\scriptstyle\color{black}H(\gamma_1^-)$} (2,5.15);
\draw[thin,dash pattern={on 3pt off 2pt},red,line width=1pt] (.1,3.85) --
node[above=-2pt] {$\scriptstyle\color{black}H(\gamma_1^-)$}(2,3.85);
\draw[thin,dash pattern={on 7pt off 3pt},red,line width=1pt] (-.1,3.65) --
node[below=-3pt] {$\scriptstyle\color{black}H(\gamma_1^+)$} (2,3.65);

%
%
\draw[dash pattern={on 7pt off 3pt},blue,line width=1pt] (4,3.85) --
node[above=-2pt] {$\scriptstyle\color{black}H(\gamma_2^-)$} (3,3.85);
\draw[dash pattern={on 3pt off 2pt},blue,line width=1pt] (4,3.65) --
node[below=-3pt] {$\scriptstyle\color{black}H(\gamma_2^+)$} (3,3.65);
\draw[dash pattern={on 3pt off 2pt},blue,line width=1pt] (4,2.35) --
node[above=-2pt] {$\scriptstyle\color{black}H(\gamma_2^+)$} (3,2.35);
\draw[dash pattern={on 7pt off 3pt},blue,line width=1pt] (4,2.15) --
node[below=-3pt] {$\scriptstyle\color{black}H(\gamma_2^-)$} (3,2.15);

%
%
\draw[dash pattern={on 3pt off 2pt},orange,line width=1pt] (3,2.35) --
node[above=-2pt] {$\scriptstyle\color{black}H(\gamma_3^+)$} (.1,2.35);
\draw[dash pattern={on 7pt off 3pt},orange,line width=1pt] (3,2.15) --
node[below=-3pt] {$\scriptstyle\color{black}H(\gamma_3^-)$} (-.1,2.15);
\draw[dash pattern={on 7pt off 3pt},orange,line width=1pt] (3,0.85) --
node[above=-2pt] {$\scriptstyle\color{black}H(\gamma_3^-)$} (.1,0.85);
\draw[dash pattern={on 3pt off 2pt},orange,line width=1pt] (3,0.65) --
node[below=-3pt] {$\scriptstyle\color{black}H(\gamma_3^+)$} (-.1,0.65);

%
%

\filldraw [black] (2,5.25) circle (2pt) node [above=-2pt] (q1) {$-q$};
\filldraw [black] (2,3.75) circle (2pt) node [above] (q2) {};
\filldraw [black] (3,3.75) circle (2pt) node [left] (02) {$0$};
\filldraw [black] (4,3.75) circle (2pt) node [right] (p2) {$p$};
\filldraw [black] (3,2.25) circle (2pt) node [above] (03) {};
\filldraw [black] (4,2.25) circle (2pt) node [above] (p3) {};
\filldraw [black] (3,0.75) circle (2pt) node [above] (04) {};

\end{tikzpicture}
\end{minipage}%
\begin{minipage}{.5\textwidth}
\begin{overpic}[scale=1]
{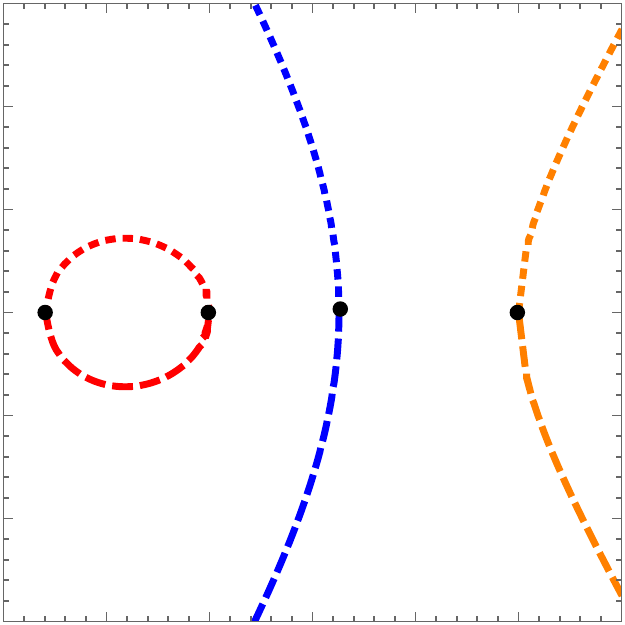}
\put(18,65){$\scriptstyle \gamma_1^+$}
\put(18,34){$\scriptstyle \gamma_1^-$}
\put(50,80){$\scriptstyle \gamma_2^+$}
\put(50,18){$\scriptstyle \gamma_2^-$}
\put(86,80){$\scriptstyle \gamma_3^+$}
\put(86,15){$\scriptstyle \gamma_3^-$}
\put(8.5,50){$t_-$}
\put(35,50){$\alpha^2$}
\put(56.5,50){$t_+$}
\put(85,50){$\beta^2$}
\end{overpic}
\end{minipage}
\caption{The uniformization of the Riemann surface $\mathcal R$. Right figure: the arcs $\gamma_k^+$ (short dashed), $\gamma_k^-$ (long dashed), $k=1,2,3,4$, and the $t$-points on the $t$-sphere that give rise to branch points on $\mathcal R$ (compare with Table \ref{table:branchpoints}). Left figure, the images on $\mathcal R$ of the arcs $\gamma_k^\pm$ under the rational parametrization $H$, with the same short-long dashed codes.}
\label{fig:uniformization}
\end{center}
\end{figure}

\section{Auxiliary functions}\label{sec:auxifunc}
In this section, we  introduce some auxiliary functions for later use.

\subsection{The \texorpdfstring{$\lambda$}{lambda}-functions}
The $\lambda$-functions are defined as the anti-derivative of the $\xi$-functions \eqref{eq:definition_xi_functions}:
\begin{align}
&\lambda_1(z) = \int_{-q}^{z}\xi_1(s)\ud s+ \int_{p}^{-q}\xi_{2,-}(s)\ud s, &&  z\in \C\setminus (-\infty,-q], \label{def:lambda_functions1}\\
&\lambda_2(z) = \int_{p}^{z}\xi_2(s)\ud s, &&  z\in \C\setminus (-\infty,p], \label{def:lambda_functions2} \\
&\lambda_3(z) = \int_{p}^{z}\xi_3(s)\ud s , &&  z\in \C\setminus (-\infty,p], \label{def:lambda_functions3}\\
&\lambda_4(z) = \int_{0}^{z}\xi_4(s)\ud s -\int_0^p \xi_{2,+}(s)\ud s , &&  z\in \C\setminus (-\infty,0].\label{def:lambda_functions4}
\end{align}

We have the following asymptotic behaviors of the $\lambda$-functions for large $z$.
\begin{prop}\label{prop:asylambda}
As $z\to \infty$, we have
\begin{align}
\lambda_1(z) & =2\alpha \sqrt{z}-\frac{1}{2}\log z + \theta_1 + \frac{2c_1}{\sqrt{z}}+\Boh(z^{-1}), \label{eq:lamda1asy}\\
\lambda_2(z) & =-2\alpha \sqrt{z}-\frac{1}{2}\log z + \theta_1-\pi i - \frac{2c_1}{\sqrt{z}}+\Boh(z^{-1}), \label{eq:lambda2asy}\\
\lambda_3(z) & =-2\beta \sqrt{z}+\frac{1}{2}\log z + \theta_3  + \frac{2c_3}{\sqrt{z}}+\Boh(z^{-1}), \label{eq:lamda3asy}\\
\lambda_4(z) & =2\beta \sqrt{z}+\frac{1}{2}\log z + \theta_3-\pi i - \frac{2c_3}{\sqrt{z}}+\Boh(z^{-1}),\label{eq:lamda4asy}
\end{align}
for some constants $\theta_1$ and $\theta_3$, where $c_1$ and $c_3$ are the same as in \eqref{eq:asymptotics_xi}.
\end{prop}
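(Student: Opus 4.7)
Plan. The strategy is two-step: term-by-term integration of \eqref{eq:asymptotics_xi} to determine each $\lambda_j$ up to an additive constant, then a branch-cut identification to pin down the constant of $\lambda_2$ (resp.\ $\lambda_4$) in terms of $\theta_1$ (resp.\ $\theta_3$).

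For the first step, I would write $\xi_j(s) = P_j(s) + r_j(s)$, where $P_j$ is the explicit linear combination of $s^{-1/2}$, $s^{-1}$, $s^{-3/2}$ extracted from \eqref{eq:asymptotics_xi} and $r_j(s) = \Boh(s^{-2})$ at infinity. Integrating $P_j$ termwise against the principal branches of $\sqrt{s}$ and $\log s$ produces all the explicit pieces $\pm 2\alpha\sqrt{z}$, $\mp \frac{1}{2}\log z$, $\pm 2c_1/\sqrt{z}$ (and their $\beta,c_3$ analogues) displayed in \eqref{eq:lamda1asy}--\eqref{eq:lamda4asy}. The integral of $r_j$ from the appropriate base point contributes a convergent constant plus an $\Boh(z^{-1})$ tail. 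Applied to $\lambda_1$ and $\lambda_3$ this already gives \eqref{eq:lamda1asy} and \eqref{eq:lamda3asy} with $\theta_1$ and $\theta_3$ defined as the corresponding constant remainders (absorbing in particular the fixed integrals $\int_p^{-q}\xi_{2,-}$ and $-\int_0^p\xi_{2,+}$ from \eqref{def:lambda_functions1}, \eqref{def:lambda_functions4}). Applied to $\lambda_2$ and $\lambda_4$ it gives \eqref{eq:lambda2asy} and \eqref{eq:lamda4asy} up to a priori different constants, say $\tilde\theta_2,\tilde\theta_4$ in place of $\theta_1-\pi i,\theta_3-\pi i$.

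To identify these constants, I would use the continuation relations for the $\xi_j$ across their branch cuts, already established in the proof of Proposition~\ref{prop:merofunction}: $\xi_{1,\pm}(x)=\xi_{2,\mp}(x)$ for $x<-q$, and $\xi_{3,\pm}(x)=\xi_{4,\mp}(x)$ for $x<0$. Differentiating across the cut and integrating, together with the direct check $\lambda_{1,+}(-q)=\int_p^{-q}\xi_{2,-}(s)\,ds=\lambda_{2,-}(-q)$ that comes straight from \eqref{def:lambda_functions1} and \eqref{def:lambda_functions2}, yields the global identity
\begin{equation*}
\lambda_{1,+}(x)=\lambda_{2,-}(x),\qquad x\le -q.
\end{equation*}
An analogous derivative-matching argument for $\lambda_3,\lambda_4$, now combined with the Sokhotski--Plemelj computation $\xi_{2,+}-\xi_{3,+}=2\pi i\,d\mu_2/dx$ on $(0,p)$ and the normalization $|\mu_2|=1$ (which accounts for the discrepancy in the base points of $\lambda_3$ and $\lambda_4$), gives
\begin{equation*}
\lambda_{3,+}(x)-\lambda_{4,-}(x)=2\pi i,\qquad x\le 0.
\end{equation*}

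Finally, letting $x\to -\infty$ in these two identities and substituting the principal-branch values $\sqrt{x\pm i0}=\pm i\sqrt{|x|}$, $\log(x\pm i0)=\log|x|\pm i\pi$ into the expansions just obtained, the leading $\pm i\pi/2$ contributions from the $\mp\tfrac{1}{2}\log z$ terms combine to force $\tilde\theta_2=\theta_1-\pi i$ and $\tilde\theta_4=\theta_3-\pi i$, while the $\sqrt{z}$ and $1/\sqrt{z}$ coefficients match automatically on both sides (providing a useful self-consistency check). The main obstacle is just the careful bookkeeping of branch conventions; once one commits to the principal branches of $\sqrt{\cdot}$ and $\log$ already fixed by \eqref{eq:definition_xi_functions}, all signs and $\pi i$ factors fall into place.
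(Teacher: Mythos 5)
Your proposal is correct and follows essentially the same route as the paper: integrate the large-$z$ expansion \eqref{eq:asymptotics_xi} term-by-term to get the displayed asymptotics up to additive constants $\theta_1,\tilde\theta_2,\theta_3,\tilde\theta_4$, then pin down $\tilde\theta_2,\tilde\theta_4$ by comparing boundary values of the $\lambda$-functions across the negative real axis. The one deviation is in the second half: for $\lambda_3,\lambda_4$ you invoke the relation $\lambda_{3,+}(x)-\lambda_{4,-}(x)=2\pi i$ on $\Delta_3$, which you derive via the Sokhotski--Plemelj jump of $\xi_2$ and $|\mu_2|=1$ to account for the different base points in \eqref{def:lambda_functions3}--\eqref{def:lambda_functions4}; the paper instead uses the complementary identity $\lambda_{3,-}(x)-\lambda_{4,+}(x)=0$, which is slightly cleaner because the base-point correction $c=\int_0^p\xi_{2,+}$ from \eqref{def:constantc} cancels directly and no normalization of $\mu_2$ is needed. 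Both identities appear in Proposition~\ref{prop:lamdaphi} and both lead to $\tilde\theta_4=\theta_3-\pi i$ upon substituting $\sqrt{x\pm i0}=\pm i\sqrt{|x|}$ and $\log(x\pm i0)=\log|x|\pm i\pi$, so your variant is sound; just be aware your route carries a bit more bookkeeping. One small point: for the $\lambda_1,\lambda_2$ identification you describe a ``derivative-matching plus value at $-q$'' argument, but in fact the integral identity $\lambda_{1,+}(x)=\lambda_{2,-}(x)$ on $(-\infty,-q]$ falls out immediately by splitting $\lambda_2$ at $-q$ and using $\xi_{1,+}=\xi_{2,-}$ there, with no need for a separate derivative step.
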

\begin{proof}
In virtue of \eqref{eq:asymptotics_xi}, it is readily seen that, as $z\to \infty$,
\begin{align*}
\lambda_1(z) & =2\alpha \sqrt{z}-\frac{1}{2}\log z + \theta_1 + \frac{2c_1}{\sqrt{z}}+\Boh(z^{-1}), \\
\lambda_2(z) & =-2\alpha \sqrt{z}-\frac{1}{2}\log z + \theta_2 - \frac{2c_1}{\sqrt{z}}+\Boh(z^{-1}), \\
\lambda_3(z) & =-2\beta \sqrt{z}+\frac{1}{2}\log z + \theta_3 + \frac{2c_3}{\sqrt{z}}+\Boh(z^{-1}), \\
\lambda_4(z) & =2\beta \sqrt{z}+\frac{1}{2}\log z + \theta_4 - \frac{2c_3}{\sqrt{z}}+\Boh(z^{-1}),
\end{align*}
for some constants $\theta_1,\theta_2,\theta_3,\theta_4$. To show that $\theta_2=\theta_1-\pi i$, we note from \eqref{def:lambda_functions1} and \eqref{def:lambda_functions2}
that, if $x<-q$,
\begin{align*}
\lambda_{1,+}(x)-\lambda_{2,-}(x)
& =\int_{-q}^{x}\xi_{1,+}(s)\ud s + \int_{p}^{-q}\xi_{2,-}(s)\ud s-\int_{p}^{x}\xi_{2,-} (s) \ud s
\\
& =\int_{-q}^{x}(\xi_{1,+}(s)-\xi_{2,-}(s))\ud s =0,
\end{align*}
since $\xi_{1,+}(s)=\xi_{2,-}(s)$ for $s<-q$. Inserting \eqref{eq:lamda1asy} and \eqref{eq:lambda2asy} into the above equality yields that  $\theta_2=\theta_1-\pi i$.

In a similar manner, it is easily seen that
$$\lambda_{3,-}(x)-\lambda_{4,+}(x) =0, \qquad x<0. $$
This, together with \eqref{eq:lamda3asy} and \eqref{eq:lamda4asy}, implies that $\theta_4=\theta_3-\pi i$, as required.

This completes the proof of Proposition \ref{prop:asylambda}.
\end{proof}

\subsection{The $\phi$-functions}

For the sake of clarity, we also define the following $\phi$-functions:
\begin{align}
 \phi_1(z) & = \int_{-q}^z(\xi_1(s)-\xi_2(s))\ud s, && \quad z\in \C\setminus \left((-\infty,-q]\cup \R_+\right), \label{def:phi_functions1}\\
 \phi_2(z) & = \int_{p}^z(\xi_2(s)-\xi_3(s))\ud s, && \quad z\in \C\setminus (-\infty, p],\label{def:phi_functions2} \\
 \phi_3(z) & = \int_0^z(\xi_3(s)-\xi_4(s))\ud s, && \quad z\in \C\setminus (-\infty,p],\label{def:phi_functions3}
\end{align}
where  the path of integration in $\phi_3$ emerges from $z=0$ in the upper half plane. Note that each of the
$\lambda$-functions and the $\phi$-functions is analytic in its domain of definition.

Some properties of these auxiliary functions are collected in the following proposition.
\begin{prop}\label{prop:lamdaphi}
Let $x\in \R$, with $\Delta_i$, $i=1,2,3$, defined in \eqref{def:deltai}, we have
\begin{align*}
 \lambda_{1,+}(x)-\lambda_{1,-}(x) & = \phi_{1,+}(x), \quad x\in \Delta_1, 
\\
 \lambda_{2,+}(x)-\lambda_{2,-}(x) & =
\begin{cases}
\phi_{2,+}(x), & x\in \Delta_2, \\
-2\pi i , & x\in \Delta_3\setminus \overline{\Delta_1}, \\
-2 \pi i + \phi_{1,-}(x), & x \in \Delta_1
\end{cases}
%
\\
\lambda_{3,+}(z)-\lambda_{3,-}(z) & =
\begin{cases}
\phi_{2,-}(x), & x\in\Delta_2, \\
2 \pi i+ \phi_{3,+}(x), & x\in\Delta_3,
\end{cases}
%
\\
\lambda_{4,+}(x)-\lambda_{4,-}(x) & =2\pi i + \phi_{3,-} (x), \quad x\in \Delta_3,
%
\end{align*}
and
\begin{align*}
\lambda_{1,+}(x)-\lambda_{2,-}(x) & =
\begin{cases}
0, & x\in\Delta_1, \\
\phi_{1}(x), & x\in \Delta_3\setminus \overline{\Delta_1},
\end{cases}
%
\\
\lambda_{1,-}(x)-\lambda_{2,+}(x)&=2\pi i, \quad x\in\Delta_1,
%
\\
\lambda_{2,\pm}(x)-\lambda_{3,\mp}(x) & =
\begin{cases}
0, & x\in \Delta_2 , \\
\phi_{2}(x) , & x>p,
\end{cases}
%
\\
\lambda_{3,+}(x)-\lambda_{4,-}(x)&=2\pi i, \quad x\in \Delta_3,
%
\\
\lambda_{3,-}(x)-\lambda_{4,+}(x)&=0, \qquad x \in \Delta_3.
%
\end{align*}
Furthermore, we have
\begin{align}
\phi_{1,+}(x)+\phi_{1,-}(x) &= 0, \quad x \in \Delta_1,
\label{eq:phi1_pm}
\\
 \phi_{2,+}(x)+\phi_{2,-}(x)&= 0,\quad x\in \Delta_2,
\label{eq:phi2_pm}
\\
\phi_{3,+}(x)+\phi_{3,-}(x) &= -2\pi i, \quad x \in \Delta_3.
\label{eq:phi3_pm}
\end{align}
\end{prop}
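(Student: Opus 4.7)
The plan is to reduce every identity to a direct computation using four ingredients. The backbone is the Riemann surface gluing from Proposition~\ref{prop:merofunction}, which yields $\xi_{1,\pm}=\xi_{2,\mp}$ on $\Delta_1$, $\xi_{2,\pm}=\xi_{3,\mp}$ on $\Delta_2$, and $\xi_{3,\pm}=\xi_{4,\mp}$ on $\Delta_3$. Combined with Sokhotski--Plemelj applied to \eqref{eq:definition_xi_functions} together with the elementary jumps of $\alpha/\sqrt{z}$ and $\beta/\sqrt{z}$ across $\R_-$, this pins down $\xi_{j,+}-\xi_{j,-}$ on every $\Delta_k$; the total masses $|\mu_1|=|\mu_3|=1/2$ and $|\mu_2|=1$ are then used to evaluate the resulting integrals.

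For all identities of the form ``$\lambda_j-\lambda_k$ equals a constant or some $\phi$-function'', I would split the integrals in \eqref{def:lambda_functions1}--\eqref{def:lambda_functions4} at common basepoints; substituting the gluing collapses the integrand on the common interval to zero, leaving only defect integrals on auxiliary intervals (such as $[0,p]$ or $(-\infty,-q]$) which either vanish by analyticity or evaluate to $\pm 2\pi i$ times a total mass. For instance, on $\Delta_1$, writing $\int_p^x\xi_{2,-}\ud s=\int_p^{-q}\xi_{2,-}\ud s+\int_{-q}^x\xi_{2,-}\ud s$ together with $\xi_{1,+}=\xi_{2,-}$ immediately gives $\lambda_{1,+}(x)-\lambda_{2,-}(x)=0$, while the same split for $\lambda_{1,-}-\lambda_{2,+}$ produces the extra term $\int_p^{-q}(\xi_{2,-}-\xi_{2,+})\ud s=\int_0^p 2\pi i\mu_2'(s)\,\ud s=2\pi i$ (using analyticity of $\xi_2$ on $(-q,0)$ and the Plemelj jump on $(0,p)$). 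The single-function jumps $\lambda_{j,+}-\lambda_{j,-}$ become closed-contour integrals $\oint_C\xi_j$ over a CCW loop around the portion of $\xi_j$'s cut enclosed between the upper and lower integration paths, evaluating as the sum of integrals of the individual jumps $\xi_{j,-}-\xi_{j,+}$; matching each contribution with the relevant $\phi$-function through the gluing on the boundary recovers each self-jump formula listed in the proposition.

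For the three symmetries \eqref{eq:phi1_pm}--\eqref{eq:phi3_pm}, the gluing relations give $(\xi_k-\xi_{k+1})_++(\xi_k-\xi_{k+1})_-=0$ on $\Delta_k$, so each sum $\phi_{k,+}+\phi_{k,-}$ is locally constant on $\Delta_k$. For $\phi_1$ and $\phi_2$ the constant is $0$, since the two integration paths defining $\phi_{k,\pm}$ share the basepoint (at $-q$ respectively $p$) where the identity trivially holds. For \eqref{eq:phi3_pm} the path defining $\phi_{3,-}$ must wrap around the whole cut $(-\infty,p]$ and the constant cannot be read off at a shared endpoint; the cleanest fix is to first establish the global identity $\phi_3(z)=-2\pi i+\lambda_3(z)-\lambda_4(z)$ on $\C\setminus(-\infty,p]$, obtained by decomposing $\int_0^z(\xi_3-\xi_4)\ud s$ through the point $p$ and identifying $\int_0^p(\xi_{3,+}-\xi_{2,+})\ud s=\int_0^p(\xi_{3,+}-\xi_{3,-})\ud s=-2\pi i|\mu_2|=-2\pi i$, and then to extract $\phi_{3,+}(x)+\phi_{3,-}(x)\to-2\pi i$ as $x\to-\infty$ from the asymptotic expansions in Proposition~\ref{prop:asylambda}. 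The main obstacle throughout is the orientation bookkeeping for these contour deformations --- making sure the sign of each $\pm 2\pi i$ matches the claim --- but once the intermediate identity relating $\phi_3$ to $\lambda_3-\lambda_4$ is in hand the rest reduces to algebraic substitution.
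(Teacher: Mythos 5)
Your proposal is correct and uses exactly the ingredients the paper invokes: the definitions of the $\lambda$- and $\phi$-functions together with the gluing relations from Proposition~\ref{prop:merofunction}, so this is essentially the paper's own (unwritten) argument with the details filled in. The one aside I would tighten is that for $\phi_{3,-}(x)$ with $x<0$ the path wraps only around $[0,p]$ (not the whole cut $(-\infty,p]$), but since you obtain the constant in \eqref{eq:phi3_pm} from the global identity $\phi_3=\lambda_3-\lambda_4-2\pi i$ and the $x\to-\infty$ asymptotics of Proposition~\ref{prop:asylambda} (or equivalently from the already-proved relations $\lambda_{3,-}=\lambda_{4,+}$ and $\lambda_{3,+}=\lambda_{4,-}+2\pi i$), this does not affect the result.
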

\begin{proof}
These formulas follow directly from the definitions of the $\lambda$-functions and the $\phi$-functions given in \eqref{def:lambda_functions1}--\eqref{def:lambda_functions4} and \eqref{def:phi_functions1}--\eqref{def:phi_functions3}, as well as Proposition \ref{prop:merofunction}. We omit the details here.
\end{proof}

Finally, we present some inequalities satisfied by the $\phi$-functions in the neighborhoods of their branch cuts. These inequalities will be essential in our further asymptotic analysis.

\begin{prop}\label{prop:inequalities_phi}
For each $i=1,2,3$, there exists an open neighborhood $\mathcal G_i$ of the interval $\Delta_i$, such that the following inequalities hold:
\begin{align*}
\re \phi_1(z)>0, &\qquad z\in \mathcal G_1\setminus \Delta_1,\\
\re \phi_2(z)<0, &\qquad z\in \mathcal G_2\setminus \Delta_2, \\
\re \phi_3(z)<0, &\qquad z\in \mathcal G_3\setminus \Delta_3.
\end{align*}
Furthermore, we also have that
\begin{equation}\label{eq:inequalities_phi_off_support}
\begin{aligned}
\phi_2(x)>0, & \qquad x>p, \\
\phi_1(x)<0, & \qquad x\in \Delta_3 \setminus \overline{\Delta_1}.
\end{aligned}
\end{equation}
\end{prop}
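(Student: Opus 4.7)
Each $\phi_i$ is real-symmetric, $\phi_i(\bar z)=\overline{\phi_i(z)}$, inherited from real-symmetry of the $\xi_j$'s (Cauchy transforms of real measures plus the real-symmetric terms $\pm\alpha/\sqrt{z}$, $\pm\beta/\sqrt{z}$). Combined with the jump identities \eqref{eq:phi1_pm}--\eqref{eq:phi3_pm}, this forces $\phi_{i,\pm}$ to be purely imaginary on $\Delta_i$ (up to the constant $-\pi i$ for $i=3$), so that $\re\phi_i$ vanishes continuously on $\Delta_i$ from both sides. The plan is first to compute the normal derivative $\partial_y\re\phi_i=-\im\phi_i'$ on the $+$-side of $\Delta_i$, then to infer the sign of $\re\phi_i$ in a one-sided strip by a first-order Taylor expansion in $y$, and finally to promote the resulting strips (above and below) to open neighborhoods $\mathcal G_i$ by endpoint analysis.

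Using Proposition \ref{prop:merofunction}, $\xi_{j,+}=\xi_{j+1,-}$ on $\overline{\Delta_j}$; combined with Sokhotski-Plemelj \eqref{eq:plemelj_relations} applied to the definitions \eqref{eq:definition_xi_functions}, this yields
\begin{equation*}
\phi_1'|_+ = -2\pi i\bigl(\tfrac{d\sigma}{dx}-\tfrac{d\mu_1}{dx}\bigr) \text{ on }\Delta_1,\quad
\phi_2'|_+ = 2\pi i\,\tfrac{d\mu_2}{dx} \text{ on }\Delta_2,\quad
\phi_3'|_+ = 2i\bigl(\pi\tfrac{d\mu_3}{dx}+\tfrac{\beta}{\sqrt{|x|}}\bigr) \text{ on }\Delta_3.
\end{equation*}
Proposition \ref{prop:varicondition}, the local behavior \eqref{eq:local_behavior_mu1}, and the formula \eqref{eq:density_mu3} ensure that each of $d\sigma/dx-d\mu_1/dx$, $d\mu_2/dx$, $d\mu_3/dx$ is strictly positive on the interior of the corresponding support. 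Hence $\im\phi_1'|_+<0$ and $\im\phi_2'|_+,\im\phi_3'|_+>0$, so the expansion $\re\phi_i(x+iy)=-y\,\im\phi_i'|_+(x)+O(y^2)$ gives $\re\phi_1>0$ and $\re\phi_2,\re\phi_3<0$ for small $y>0$; the corresponding statements for $y<0$ follow from real-symmetry. To obtain genuine open neighborhoods rather than thin strips, I will patch at the endpoints using local expansions: at the regular branch points $z=-q$ and $z=p$ the functions $\phi_1$ and $\phi_2$ behave as $3/2$-powers with leading coefficient signed via \eqref{eq:local_behavior_mu1}--\eqref{eq:local_behavior_mu2}; at the triple branch point $z=0$ the function $\phi_3$ behaves as a $1/3$-power dictated by \eqref{eq:local_behavior_xi}--\eqref{eq:q2q3}; and at $z=-\infty$ the asymptotic $\phi_1(z)\sim 4\alpha\sqrt{z}+\pi i$ from Proposition \ref{prop:asylambda} has positive real part off the negative real axis.

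For the off-support inequalities \eqref{eq:inequalities_phi_off_support}, I will relate $\phi_i$ directly to the Euler-Lagrange functionals. From \eqref{eq:definition_xi_functions}, $\xi_2-\xi_3=2C^{\mu_2}-C^{\mu_1}-C^{\mu_3}+(\beta-\alpha)/\sqrt{z}$, which on $(p,+\infty)$ is the derivative of $2U^{\mu_2}(x)-U^{\mu_1}(x)-U^{\mu_3}(x)+2(\beta-\alpha)\sqrt{x}$; integrating from $p$ and using \eqref{eq:variation123} at $x=p$ gives
\begin{equation*}
\phi_2(x)=2U^{\mu_2}(x)-U^{\mu_1}(x)-U^{\mu_3}(x)+2(\beta-\alpha)\sqrt{x}-\ell,
\end{equation*}
which is strictly positive by \eqref{eq:variation123ineq}. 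Analogously, on $(-q,0)$ the analytic continuation of $\xi_1$ satisfies $\xi_1(x)=\PV\int d\mu_1(s)/(s-x)$ (the singularity from $\alpha/\sqrt{x}$ cancels against the imaginary part $\pi\,d\mu_1/dx$ because $\mu_1=\sigma$ there), so integrating $\phi_1'(x)=2\xi_1(x)-C^{\mu_2}(x)$ from $-q$ yields $\phi_1(x)=2U^{\mu_1}(x)-U^{\mu_2}(x)$, strictly negative on $(-q,0)$ by \eqref{eq:variation23} (whose strictness is provided through Remark \ref{remark:discriminant_zeros}). The main obstacle is the endpoint analysis at $z=0$ for $\phi_3$: because $\mathcal R$ has a triple branch point there, one must carefully glue the sign of $\re\phi_3$ in the three emerging sectors with the neighborhoods along the two cuts $\Delta_2$ and $\Delta_3$ meeting at the origin; this is cleanest in the uniformizing $t$-variable of Section \ref{sec:rational_parametrization}.
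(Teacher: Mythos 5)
Your proposal is correct and matches the paper's argument: the paper likewise shows (for $\phi_1$, leaving $\phi_2,\phi_3$ to the reader) that $\phi_{1,\pm}$ is purely imaginary on $\Delta_1$ with strictly monotone imaginary part — exactly your computation of $\im\phi_1'|_{+}<0$ — and then invokes Cauchy–Riemann, and it derives the off-support inequalities by relating $\re\phi_2$ and $\phi_1$ to the variational functionals via \eqref{eq:variation123}–\eqref{eq:variation123ineq} and \eqref{eq:umu12}, \eqref{eq:variation23}, just as you do. The only cosmetic remark is that $\phi_3$ itself satisfies $\phi_3(\bar z)=\overline{\phi_3(z)}-2\pi i$ rather than literal real-symmetry (consistent with \eqref{eq:phi3_pm}), which still yields $\re\phi_{3,\pm}=0$ on $\Delta_3$ as needed; also, since the $\Delta_i$ are open, patching at the endpoints is not actually required to produce open neighborhoods $\mathcal G_i$.
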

\begin{proof}
We will only prove the existence of $\mathcal G_1$, since the existence of $\mathcal G_2$ and $\mathcal G_3$ follow in a similar manner.

If $x \in \Delta_1= (-\infty, -q)$, note that $\xi_{1,\pm}(x)=\xi_{2,\mp}(x)$, it is readily
seen from \eqref{eq:definition_xi_functions}, \eqref{eq:plemelj_relations}, \eqref{eq:Csigma} and \eqref{def:phi_functions1} that
\begin{align*}
\phi_{1,\pm}(x) & =\pm\int_{-q}^x(\xi_{1,+}(s)-\xi_{1,-}(s))\ud s
 = \pm\int_{-q}^x\left(C^{\mu_1}_+(s)-C^{\mu_1}_-(s)+\frac{\alpha}{\sqrt{s}_+}-\frac{\alpha}{\sqrt{s}_-}\right) \ud s
\\
&=\pm\int_{-q}^x\left(C_+^{\mu_1-\sigma}(s)-C_-^{\mu_1-\sigma}(s)\right)\ud s=\pm2\pi i (\sigma-\mu_1)((x,-q)).
\end{align*}
Thus $\phi_{1,\pm}(x)$ is purely imaginary along $\Delta_1$, and the functions
$$x\mapsto \im \phi_{1,+}(x),\qquad x\mapsto \im \phi_{1,-}(x),$$ are strictly decreasing and increasing, respectively. By the Cauchy-Riemann equations, we then get immediately that $\re \phi_1 (z)$ is strictly positive above and below the interval $(-\infty,-q)$, assuring the existence of $\mathcal G_1$.

To conclude the first inequality in \eqref{eq:inequalities_phi_off_support}, we start with the identity
\begin{align*}
\re \phi_2(z) & = \re \int_{p}^z(\xi_2(s)-\xi_3(s))\ud s \nonumber
 =\re \int_p^z\left(2C^{\mu_2}(s)-C^{\mu_1}(s)-C^{\mu_3}(s)+\frac{\beta-\alpha}{\sqrt{s}}\right)\ud s\nonumber\\
& =2U^{\mu_2}(z)-U^{\mu_1}(z)-U^{\mu_3}(z)+2(\beta-\alpha)\re\sqrt{z}-c,\qquad z\in \C\setminus (-\infty,p], 
\end{align*}
for some constant $c$. This identity extends to $\C$ by continuity, and in virtue of the equality \eqref{eq:variation123}, we get
$$
0=\re \phi_2(p)=\ell -c,
$$
so $c=\ell$. The inequality then follows directly from \eqref{eq:variation123ineq}.

In a similar fashion, the second inequality in \eqref{eq:inequalities_phi_off_support} follows from \eqref{eq:umu12} and \eqref{eq:variation23}.

This completes the proof of Proposition \ref{prop:inequalities_phi}.
\end{proof}

We are now ready to carry out asymptotic analysis of the RH problem \ref{rhp:Y} for $Y$.

\section{First transformation \texorpdfstring{$Y\rightarrow X$}{Y to X}}\label{sec:firstrans}
The aim of this transformation to simplify the block matrix $W(x)$ appearing in the jump condition
\eqref{defjumpmatrixY} for $Y$. The cost we have to pay is to create a new jump on the negative real axis.
Following \cite{DKRZ12,KMW09}, the main idea is to use the special properties of modified Bessel functions.

We start by setting
\begin{equation}\label{def of y_i}
y_{1,a}(z)=z^{(a+1)/2}I_{a+1}(2\sqrt{z}), \qquad
y_{2,a}(z)=z^{(a+1)/2}K_{a+1}(2\sqrt{z}),
\end{equation}
where $a>-1$ is a real parameter. In general, we have that both $y_{1,a}$ and $y_{2,a}$ are analytic in the complex plane with a cut along the negative real axis. Some properties of $y_{i,a}$ are collected in what follows for later use.

\begin{itemize}
  \item Connection formulas (see \cite[Formulas 10.34.1 and 10.34.2]{DLMF}): if $x<0$,
  \begin{equation}
\begin{aligned}\label{jump for y_i}
\left(y_{1,a}\right)_+(x) &= e^{2 a \pi i}\left(y_{1,a}\right)_{-}(x), \\
\left(y_{2,a}\right)_+(x) &= \left(y_{2,a}\right)_-(x)+ i \pi e^{ a \pi i}\left(y_{1,a}\right)_-(x),
\end{aligned}
\end{equation}
where the orientation of $\R_-$ is taken from the left to the right.
  \item Derivatives (see \cite[Formulas 10.29.2 and 10.29.5]{DLMF}):
\begin{equation}\label{derivatives of y_i}
y_{1,a}'(z)=z^{a/2}I_{a}(2\sqrt{z})=y_{1,a-1}(z), \qquad
y_{2,a}'(z)=-z^{a/2}K_{a}(2\sqrt{z})=-y_{2,a-1}(z).
\end{equation}
  \item The Wronskian relation (see \cite[Formula 10.28.2]{DLMF}):
\begin{equation}\label{eq:Wronskian relation}
y_{1,a}(z)y_{2,a}'(z)-y_{1,a}'(z)y_{2,a}(z)=-z^a/2,\qquad z\in\mathbb{C} \setminus \R_-.
\end{equation}
\end{itemize}

By \eqref{def of y_i} and \eqref{derivatives of y_i}, it is readily seen that the matrix $W$ in \eqref{eq:matrixW} can be rewritten as
\begin{equation}\label{eq:Wdecomp}
W(x)=w_1(x)^T w_2(x),
\end{equation}
where
\begin{align}
w_1(x) & :=
\begin{pmatrix}
\omega_{\kappa,\alpha}(x) & \omega_{\kappa+1,\alpha}(x)
\end{pmatrix}
=
\begin{pmatrix}
\tau_1^{-\kappa} y'_{1,\kappa}(\tau_1^2 x) &
\tau_1^{-\kappa-1} y_{1,\kappa}(\tau_1^2 x)
\end{pmatrix},
\label{def:w1}\\
w_2(x) & :=
\begin{pmatrix}
\rho_{\nu-\kappa,\beta}(x) & \rho_{\nu-\kappa+1,\beta}(x)
\end{pmatrix}=
\begin{pmatrix}
-\tau_2^{\kappa-\nu}y'_{2,\nu-\kappa}(\tau_2^2 x) &
\tau_2^{\kappa-\nu-1}y_{2,\nu-\kappa}(\tau_2^2 x)
\end{pmatrix}, \label{def:w2}
\end{align}
with
\begin{equation*}
\tau_1:=\alpha n,\qquad \tau_2:=\beta n.
\end{equation*}

With the help of functions $y_{i,a}(z)$ given in \eqref{def of y_i}, we further define two $2\times 2$ matrices
\begin{equation}\label{A1}
A_1(z)=\tau_1^{-\kappa}z^{-\frac{\kappa}{2}}
\begin{pmatrix}
-\frac{1}{\pi i}y_{2,\kappa}'(\tau_1^2z) & y_{1,\kappa}'(\tau_1^2z) \\
-\frac{1}{\pi i}\tau_1^{-1}y_{2,\kappa}(\tau_1^2z)  & \tau_1^{-1}y_{1,\kappa}(\tau_1^2z)
\end{pmatrix}
\end{equation}
and
\begin{equation}\label{A2}
A_2(z)=2\tau_2^{\kappa-\nu}z^{\frac{\kappa-\nu}{2}}
\begin{pmatrix}
 y_{1,\nu-\kappa}(\tau_2^2z) &   -\frac{1}{\pi i}y_{2,\nu-\kappa}(\tau_2^2z) \\
 \tau_2y_{1,\nu-\kappa}'(\tau_2^2z) & -\frac{\tau_2}{\pi i}y_{2,\nu-\kappa}'(\tau_2^2z)
\end{pmatrix}.
\end{equation}
In view of \eqref{eq:Wronskian relation}, it is easily seen that
\begin{equation}\label{eq:detA12}
\det A_1(z)=\frac{1}{2\pi \tau_1 i} \qquad \textrm{and} \qquad
\det A_2(z)=\frac{2\tau_2}{\pi i}.
\end{equation}

Our first transformation is then defined by
\begin{equation}\label{eq:YtoX}
X(z)=C_XY(z)\diag (A_1(z),A_2(z))\diag\left(z^{\frac{\kappa}{2}\sigma_3},z^{\frac{\kappa-\nu}{2}\sigma_3}\right),
\end{equation}
where $\sigma_3=\begin{pmatrix}
0 & 1 \\
-1 & 0
\end{pmatrix}$ is the third Pauli matrix and
\begin{equation*}
C_X=
\diag\left(
\sqrt{2\pi \tau_1}
\begin{pmatrix}
i & 0 \\
\frac{4(\kappa+1)^2-1}{16\tau_1} & 1
\end{pmatrix},
\sqrt{\frac{\pi}{2\tau_2}}
\begin{pmatrix}
1 & \frac{4(\nu-\kappa+1)^2-1}{16\tau_2} \\
0 & i
\end{pmatrix}
\right).
\end{equation*}
By \eqref{eq:detA12}, it is easily seen that $\det X=1$. We further have that $X$ satisfies the following RH problem.

\begin{lem}\label{lem:RHpforX}
The function $X$ defined in \eqref{eq:YtoX} has the following properties:
\begin{enumerate}
\item[\rm (1)] $X$ is defined and analytic in $ \mathbb{C} \setminus \R$.
\item[\rm (2)] For $x\in \mathbb{R}$, $X(z)$ satisfies the jump conditions
\begin{equation}\label{jump for X}
  X_+(x)=X_-(x)
  \left\{
  \begin{array}{ll}
  I_4+x^{\kappa}E_{23}, & \hbox{if $x>0$,} \\
  I_4-|x|^\kappa E_{21}-|x|^{\nu-\kappa}E_{34}, & \hbox{if $x<0$},
  \end{array}
  \right.
  \end{equation}
  where the $4\times 4$ matrix $E_{ij}$ is defined in \eqref{def:Eij}.

\item[\rm (3)] As $z\to\infty$, we have
\begin{equation}\label{eq:Xinfty}
X(z)=(I_4+\Boh(z^{-1}))\mathcal B(z)\diag\left( z^\frac{n}{2}e^{-2\tau_1z^{\frac{1}{2}} } ,z^\frac{n}{2}e^{2\tau_1z^{\frac{1}{2}} },z^{-\frac{n}{2}}e^{2\tau_2z^{\frac{1}{2}} } ,z^{-\frac{n}{2}}e^{-2\tau_2z^{\frac{1}{2}} } \right),
\end{equation}
where
\begin{align}\label{def:Bz}
\mathcal B(z)& = \frac{1}{\sqrt{2}}\diag(z^{-\frac{1}{4}},z^{\frac{1}{4}},z^{\frac{1}{4}},z^{-\frac{1}{4}})
\diag\left(
\begin{pmatrix}
1 & i \\
i & 1
\end{pmatrix},
\begin{pmatrix}
1 & i \\
i & 1
\end{pmatrix}
 \right)
 \diag\left(z^{\frac{\kappa}{2}\sigma_3},z^{\frac{\kappa-\nu}{2}\sigma_3}\right) \nonumber
 \\
 & = \frac{1}{\sqrt{2}}\diag(z^{(\frac{\kappa}{2}-\frac{1}{4})\sigma_3},z^{(\frac{\kappa-\nu}{2}+\frac{1}{4})\sigma_3})
\diag\left(
\begin{pmatrix}
1 & iz^{-\kappa} \\
iz^{\kappa} & 1
\end{pmatrix},
\begin{pmatrix}
1 & iz^{\nu-\kappa} \\
i z^{\kappa-\nu} & 1
\end{pmatrix}
 \right).
\end{align}

\item[\rm (4)] $X$ has the following local behaviors near the origin.
\begin{itemize}
\item For $\kappa>0$, $\nu>0$, $\nu\neq \kappa$
$$
X(z)=\Boh(1),\qquad z\to 0.
$$

\item For $\kappa=\nu>0$,
$$
X(z)\diag\left(1,1,1,\frac{1}{\log z}\right)=\Boh(1),\qquad z\to 0.
$$

\item For $\kappa=0$, $\nu>0$,
$$
X(z)\diag\left(\frac{1}{\log z},1,\frac{1}{\log z},\frac{1}{\log z}\right)=\Boh(1),\qquad z\to 0.
$$

\item For $\kappa=\nu=0$,
$$
X(z)\diag\left(\frac{1}{\log z},1,\frac{1}{\log z},\frac{1}{(\log z)^2}\right)=\Boh(1),\qquad z\to 0.
$$
\end{itemize}
\end{enumerate}
\end{lem}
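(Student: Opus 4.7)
The plan is to verify the four properties of the RH problem for $X$ in turn, exploiting the explicit Bessel structure of $A_1,A_2$ together with what is already known about $Y$. Analyticity (property (1)) is immediate: $Y$ is analytic off $\R_+$, while $A_1,A_2$ and $\diag(z^{\kappa\sigma_3/2},z^{(\kappa-\nu)\sigma_3/2})$ are analytic off $\R_-$, so $X$ has possible jumps only on $\R$.

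For the jump on $(0,\infty)$, only $Y$ jumps. Writing $P(z):=\diag(A_1(z),A_2(z))\diag(z^{\kappa\sigma_3/2},z^{(\kappa-\nu)\sigma_3/2})=\diag(P_1,P_2)$, one has $X_+=X_-(P^{-1}J_YP)$ since $P$ is continuous across $\R_+$ and $C_X$ is constant. The key is to exploit the rank-one factorization $W=w_1^Tw_2$ from \eqref{eq:Wdecomp} by matching it with the column/row structure built into $A_1,A_2$. Directly from \eqref{A1} and \eqref{def:w1}, the second column of $A_1$ equals $z^{-\kappa/2}w_1^T$, yielding $P_1^{-1}w_1(x)^T=(0,x^\kappa)^T$; and from \eqref{A2} combined with the Wronskian relation \eqref{eq:Wronskian relation} applied to $y_{1,\nu-\kappa},y_{2,\nu-\kappa}$, one checks $w_2(x)A_2(x)=(x^{(\nu-\kappa)/2},0)$, hence $w_2(x)P_2=(1,0)$. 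Multiplying these gives $P_1^{-1}W(x)P_2$ equal to $x^\kappa$ times the elementary $2\times 2$ matrix with $1$ in position $(2,1)$, which sits in the $(2,3)$ entry of the full $4\times 4$ conjugated jump, recovering $I_4+x^\kappa E_{23}$.

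For the jump on $(-\infty,0)$, $Y$ is analytic across, so the jumps come entirely from $P_1$ and $P_2$. Using the connection formulas \eqref{jump for y_i} and the fact that $y_{1,\kappa}$ and $y_{1,\nu-\kappa}$ are entire (since $\kappa$ and $\nu-\kappa$ are non-negative integers), a direct column-by-column computation gives $(A_1)_-^{-1}(A_1)_+=(-1)^\kappa I-E_{21}^{(2)}$ and $(A_2)_-^{-1}(A_2)_+=(-1)^{\nu-\kappa}I-E_{12}^{(2)}$, where $E_{jk}^{(2)}$ denotes the elementary $2\times 2$ matrix. Conjugating by the diagonal branch jumps of $z^{\kappa\sigma_3/2}$ (resp.\ $z^{(\kappa-\nu)\sigma_3/2}$) multiplies the identity summand by $(-1)^\kappa$ (resp.\ $(-1)^{\nu-\kappa}$), which absorbs the sign, and multiplies the rank-one summand by $|x|^\kappa$ (resp.\ $|x|^{\nu-\kappa}$). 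Reassembled in the $4\times 4$ arrangement, this yields precisely $I_4-|x|^\kappa E_{21}-|x|^{\nu-\kappa}E_{34}$.

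The asymptotic behavior at infinity (property (3)) is obtained by substituting the classical large-argument expansions $I_\mu(w)\sim e^w/\sqrt{2\pi w}$ and $K_\mu(w)\sim\sqrt{\pi/(2w)}\,e^{-w}$, with $w=2\tau_j\sqrt z$, into $A_1$ and $A_2$ while retaining the $\Boh(w^{-1})$ corrections; the constant matrix $C_X$ is engineered precisely so that the leading numerical constants cancel and the subleading corrections combine with the $z^{\pm n/2}$ factors from $Y$ to reproduce $\mathcal B(z)$ as in \eqref{def:Bz}, together with the exponential/power diagonal. The local behavior at the origin (property (4)) follows analogously from the small-argument expansions $I_\mu(w)\sim(w/2)^\mu/\Gamma(\mu+1)$ and $K_\mu(w)\sim\tfrac12\Gamma(\mu)(w/2)^{-\mu}$ for $\mu>0$ (and $K_0(w)\sim-\log w$), inserted into $A_1,A_2$ and combined with the near-origin form of $Y$ recorded in item (4) of RH Problem \ref{rhp:Y}; the four cases in the statement simply record which columns or rows can carry a surviving logarithm. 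The main technical step is the rank-one identification for the $\R_+$ jump; once this is set up, the remaining verifications reduce to direct and routine manipulations of Bessel-function identities.
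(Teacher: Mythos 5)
Your proposal is correct and follows essentially the same route as the paper's own proof: the $\R_+$ jump is obtained from the rank-one factorization $W=w_1^Tw_2$ matched against the Bessel columns built into $A_1,A_2$ (your observation that the second column of $A_1$ equals $z^{-\kappa/2}w_1^T$ is a clean way to recover \eqref{eq:relations_A_W}), the $\R_-$ jump comes from the Bessel connection formulas \eqref{jump for y_i}, and the behaviors at $\infty$ and $0$ follow from the classical large- and small-argument Bessel expansions, with $C_X$ engineered to absorb the subleading $\Boh(z^{-1/2})$ corrections exactly as in the paper's detailed computation.
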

\begin{proof}
To show the jump condition as stated in item (2), we see from \eqref{eq:YtoX} and \eqref{defjumpmatrixY} that
\begin{multline}\label{eq:jumpproof0}
X_-^{-1}(x)X_+(x)
\\
=\left\{
                    \begin{array}{ll}
                      \begin{pmatrix}
I_2 & x^{-\frac \kappa 2 \sigma_3}A_1^{-1}(x)W(x)A_2(x)x^{\frac{\kappa-\nu}{2}\sigma_3}
\\
0 & I_2
\end{pmatrix}, & \hbox{if $x>0$,}
\\
 \begin{pmatrix}
x_-^{-\frac \kappa 2 \sigma_3}A_{1,-}^{-1}(x)A_{1,+}(x)x_+^{\frac{\kappa}{2}\sigma_3} & 0
\\
0 & x_-^{\frac{\nu-\kappa}{2} \sigma_3}A_{2,-}^{-1}(x)A_{2,+}(x)x_+^{\frac{\kappa-\nu}{2}\sigma_3}
\end{pmatrix}, & \hbox{if $x<0$.}
                    \end{array}
                  \right.
\end{multline}
By \eqref{eq:Wdecomp}--\eqref{A2}, it follows from \eqref{eq:Wronskian relation} and a straightforward calculation that
\begin{equation}\label{eq:relations_A_W}
A_1^{-1}(x)w_1(x)^T=
\begin{pmatrix}
0 \\ x^{\frac{\kappa}{2}}
\end{pmatrix},
\qquad
w_2(x)A_2(x)=
\begin{pmatrix}
x^{\frac{\nu-\kappa}{2}} & 0
\end{pmatrix},
\end{equation}
so
\begin{align}\label{eq:jumpproof1}
x^{-\frac \kappa 2 \sigma_3}A_1^{-1}(x)W(x)A_2(x)x^{\frac{\kappa-\nu}{2}\sigma_3}
& =  x^{-\frac \kappa 2 \sigma_3}A_1^{-1}(x)w_1(x)^T w_2(x)A_2(x)x^{\frac{\kappa-\nu}{2}\sigma_3}
\nonumber \\
& = x^{-\frac \kappa 2 \sigma_3}
\begin{pmatrix}
0 & 0
\\
x^{\frac \nu 2} & 0
\end{pmatrix}
x^{\frac{\kappa-\nu}{2}\sigma_3}=\begin{pmatrix}
0 & 0
\\
x^{\kappa} & 0
\end{pmatrix}.
\end{align}
Similarly, by making use of \eqref{jump for y_i} and \eqref{eq:Wronskian relation}, one can check that if $x<0$,
\begin{equation*}
x_-^{-\frac \kappa 2 \sigma_3}A_{1,-}^{-1}(x)A_{1,+}(x)x_+^{\frac{\kappa}{2}\sigma_3}
=x_-^{-\frac \kappa 2 \sigma_3}
\begin{pmatrix}
e^{-\kappa \pi i} & 0
\\
-1 & e^{\kappa \pi i}
\end{pmatrix}x_+^{\frac{\kappa}{2}\sigma_3}=\begin{pmatrix}
1 & 0
\\
-|x|^\kappa & 1
\end{pmatrix}
\end{equation*}
and
\begin{align}\label{eq:jumpproof3}
&x_-^{\frac{\nu-\kappa}{2} \sigma_3}A_{2,-}^{-1}(x)A_{2,+}(x)x_+^{\frac{\kappa-\nu}{2}\sigma_3}
\nonumber
\\
&=x_-^{\frac{\nu-\kappa}{2} \sigma_3}
\begin{pmatrix}
e^{(\nu-\kappa) \pi i} & -1
\\
0 & e^{(\kappa-\nu) \pi i}
\end{pmatrix}
x_+^{\frac{\kappa-\nu}{2}\sigma_3}=\begin{pmatrix}
1 & -|x|^{\nu-\kappa}
\\
0 & 1
\end{pmatrix}.
\end{align}
Inserting \eqref{eq:jumpproof1}--\eqref{eq:jumpproof3} into \eqref{eq:jumpproof0} gives us \eqref{jump for X}.

To establish the large $z$ behavior of $X$, it suffices to derive the asymptotics of $A_1$ and $A_2$. We follow closely \cite{KMW09} and start with
known asymptotic formulas for the Bessel functions \cite[Formulas (10.40.1) and (10.40.2)]{DLMF} to obtain
\begin{equation*}
\begin{aligned}
y_{1,a}(\tau^2z) & =\frac{1}{2\sqrt{\pi}}\tau^{a+\frac{1}{2}} z^{\frac{a}{2}+\frac{1}{4}}e^{2\tau z^{\frac{1}{2}}}
\\  & \qquad  \times
\left(1-\frac{4(a+1)^2-1}{16\tau z^{\frac{1}{2}}} + \frac{ ( 4(a+1)^2-1 )( 4(a+1)^2-9 ) }{ 512 \tau^2 z } + \Boh(z^{-\frac{3}{2}})\right),
\\
y_{2,a}(\tau^2z)  &  =\frac{\sqrt{\pi}}{2}\tau^{a+\frac{1}{2}} z^{\frac{a}{2}+\frac{1}{4}}e^{-2\tau z^{\frac{1}{2}}}
\\ & \qquad  \times
\left(1+\frac{4(a+1)^2-1}{16\tau z^{\frac{1}{2}}} + \frac{ ( 4(a+1)^2-1 )( 4(a+1)^2-9 ) }{ 512 \tau^2 z } + \Boh(z^{-\frac{3}{2}})\right),
\end{aligned}
\end{equation*}
for $z\to\infty$ with $|\arg z|<\pi$ and $\tau>0$. This, together with \eqref{A1} and \eqref{derivatives of y_i}, implies that, as $z\to \infty$,
\begin{multline*}
A_1(z)=-\frac{i}{2\sqrt{\pi\tau_1}} z^{-\frac{\sigma_3}{4}} \\
\times
\left[
\begin{pmatrix}
\phantom{-}1 & i \\
-1 & i
\end{pmatrix}
+\frac{D_1}{z^{\frac{1}{2}}}
\begin{pmatrix}
\phantom{-}1 & -i \\
-1 & -i
\end{pmatrix}
+ \frac{D_2}{z}
\begin{pmatrix}
\phantom{-}1 & i \\
-1 & i
\end{pmatrix}
+\Boh(z^{-\frac{3}{2}})
 \right] e^{-2\tau_1z^{\frac{1}{2}}\sigma_3},
\end{multline*}
where
\begin{align*}
D_1& = \frac{1}{16\tau_1}
\begin{pmatrix}
4\kappa^2-1 & 0 \\
0 & 4(\kappa+1)^2-1
\end{pmatrix}, \\
D_2 & = \frac{1}{512\tau_1^2}
\begin{pmatrix}
(4\kappa^2-1)(4\kappa^2-9) & 0 \\
0 & (4(\kappa+1)^2-1)(4(\kappa+1)^2-9)
\end{pmatrix}.
\end{align*}
Using the identity
$$
\begin{pmatrix}
\phantom{-}1 & -i \\
-1 & -i
\end{pmatrix}
=
\begin{pmatrix}
\phantom{-}0 & -1 \\
-1 & \phantom{-}0
\end{pmatrix}
\begin{pmatrix}
\phantom{-}1 & i \\
-1 & i
\end{pmatrix},
$$
we further simplify the previous formula to
\begin{equation*}
A_1(z)=-\frac{i}{2\sqrt{\pi\tau_1}}z^{-\frac{\sigma_3}{4}}
\left[
I_2
+\frac{D_1}{z^{\frac{1}{2}}}
\begin{pmatrix}
\phantom{-}0 & -1 \\
-1 & \phantom{-}0
\end{pmatrix}
+ \frac{D_2}{z}
+\Boh(z^{-\frac{3}{2}})
 \right]
 \begin{pmatrix}
\phantom{-}1 & i \\
-1 & i
\end{pmatrix}
e^{-2\tau_1z^{\frac{1}{2}}\sigma_3}.
\end{equation*}
On account of the fact that
$$
\frac{z^{-\frac{\sigma_3}{4}}}{z^\frac{1}{2}}
\begin{pmatrix}
\phantom{-}0 & -1 \\
-1 & \phantom{-}0
\end{pmatrix}
=
\begin{pmatrix}
\phantom{-}0 & -z^{-1} \\
-1 & 0
\end{pmatrix}
z^{-\frac{\sigma_3}{4}},
$$
we finally arrive at
\begin{align}
A_1(z) &
=-\frac{i}{2\sqrt{\pi\tau_1}}
\left[
I_2
+D_1
\begin{pmatrix}
\phantom{-}0 & -z^{-1} \\
-1 & \phantom{-}0
\end{pmatrix}
+\Boh(z^{-1})
 \right]
 z^{-\frac{\sigma_3}{4}}
 \begin{pmatrix}
\phantom{-}1 & i \\
-1 & i
\end{pmatrix}
e^{-2\tau_1z^{\frac{1}{2}}\sigma_3}
\nonumber \\
&
=
-\frac{i}{2\sqrt{\pi\tau_1}}
\left[
\begin{pmatrix}
1 & 0 \\
-\frac{4(\kappa+1)^2-1}{16\tau_1} & 1
\end{pmatrix}
+\Boh(z^{-1})
 \right]
 z^{-\frac{\sigma_3}{4}}
 \begin{pmatrix}
\phantom{-}1 & i \\
-1 & i
\end{pmatrix}
e^{-2\tau_1z^{\frac{1}{2}}\sigma_3}
\nonumber \\
& =\frac{1}{\sqrt{2\pi\tau_1}}\left[
\begin{pmatrix}
-i & 0 \\
i\frac{4(\kappa+1)^2-1}{16} & 1
\end{pmatrix}
+\Boh(z^{-1})
\right]
z^{-\frac{\sigma_3}{4}}\frac{1}{\sqrt{2}}
\begin{pmatrix}
1 & i \\
i & 1
\end{pmatrix}
e^{-2\tau_1z^{\frac{1}{2}}\sigma_3}, \label{eq:asyA1}
\end{align}
which is valid for $z\to \infty$ along $ \mathbb C\setminus \R_-$.

In a similar way, we also obtain that if $z\to \infty$ along $ \mathbb C\setminus \R_-$,
\begin{align}
A_2(z) &  =
-\sqrt{\frac{\tau_2}{\pi}}i
\left[
\begin{pmatrix}
1 & -\frac{4(\nu-\kappa+1)^2-1}{16\tau_2} \\
0 & 1
\end{pmatrix}
+\Boh(z^{-1})
 \right]
 z^{\frac{\sigma_3}{4}}
 \begin{pmatrix}
i & -1 \\
i & \phantom{-} 1
\end{pmatrix}
e^{2\tau_2z^{\frac{1}{2}}\sigma_3}
\nonumber \\
& =
\sqrt{\frac{2\tau_2}{\pi}}
\left[
\begin{pmatrix}
1 & \frac{4(\nu-\kappa+1)^2-1}{16}i \\
0 & -i
\end{pmatrix}
+\Boh(z^{-1})
 \right]
 \tau_2^{\frac{\sigma_3}{2}}
 z^{\frac{\sigma_3}{4}}
 \frac{1}{\sqrt{2}}
 \begin{pmatrix}
1 & i \\
i &  1
\end{pmatrix}
e^{2\tau_2z^{\frac{1}{2}}\sigma_3}. \label{eq:asyA2}
\end{align}
A combination of \eqref{eq:YtoX}, \eqref{eq:asyA1} and \eqref{eq:asyA2} then gives us \eqref{eq:Xinfty}.

Finally, it follows from the known behavior of the modified Bessel functions near the origin (cf. \cite[Formulas 10.30.1--10.30.3]{DLMF})
that, as $z\to 0$,
\begin{align*}
y_1(z)&\sim \frac{1}{\Gamma(a+2)}z^{a+1}, &&\qquad
y_1'(z)\sim \frac{1}{\Gamma(a+1)}z^{a}, \\
y_2(z)&\sim\frac{1}{2}\Gamma(a+1),
&&\qquad y_2'(z)\sim\left\{
\begin{array}{ll}
-\frac{1}{2}\Gamma(a), &  a >0, \\
\frac{1}{2}\log(z), & a=0, \\
-\frac{1}{2}\Gamma(-a)z^{a}, &  a<0.
\end{array}
\right.
\end{align*}
The behavior of $X$ near the origin in item (4) then follows from a straightforward calculation.

This completes the proof of Lemma \ref{lem:RHpforX}.
\end{proof}

\section{Second transformation \texorpdfstring{$X \rightarrow T$}{X to T}}
With the $\lambda$-functions given in \eqref{def:lambda_functions1}--\eqref{def:lambda_functions4}, we define the second transformation $ X \rightarrow T$ by
\begin{equation}\label{eq:XtoT}
T(z)=C_T X(z) \diag\left(e^{n\lambda_1(z)},e^{n\lambda_2(z)},e^{n\lambda_3(z)},e^{n\lambda_4(z)}\right),
\end{equation}
where
$$
C_T=\left(I_4-2nc_1i E_{21}+2nc_3i E_{34}\right)\diag\left(e^{-n\theta_1},e^{-n\theta_1},e^{-n\theta_3},e^{-n\theta_3}\right)
$$
with the constants $c_1,c_3,\theta_1,\theta_3$ as in Proposition \ref{prop:asylambda}. Then, $T$ satisfies the following RH problem.

\begin{lem}\label{lem:rhpforT}
The function $T$ defined in \eqref{eq:XtoT} has the following properties:
\begin{enumerate}
\item[\rm (1)] $T$ is defined and analytic in $ \mathbb{C} \setminus \R$.
\item[\rm (2)] For $x\in \mathbb{R}$, $T$ satisfies the jump condition
\begin{equation}\label{eq:jumpT}
T_+(x)=T_-(x)J_{T}(x),
\end{equation}
where
\begin{equation}\label{eq:jumps_T_2}
J_T(x)=
\begin{dcases}
I_4+x^{\kappa}e^{-n\phi_2(x)}E_{23},& x>p, \\
\diag\left(1,e^{n\phi_{2,+}(x)},e^{n\phi_{2,-}(x)},1\right)+x^{\kappa}E_{23},& x\in \Delta_2, \\
\diag\left(1,1,e^{n\phi_{3,+}(x)},e^{n\phi_{3,-}(x)}\right)
 -|x|^\kappa e^{n\phi_1(x)}E_{21}-|x|^{\nu-\kappa}E_{34},
& x\in \Delta_3 \setminus \overline{\Delta_1}, \\
 \begin{multlined}
  \hspace{-1.5mm}
\diag\left(e^{-n\phi_{1,-}(x)},e^{-n\phi_{1,+}(x)},e^{n\phi_{3,+}(x)},e^{n\phi_{3,-}(x)}\right)\\
 -|x|^{\kappa}E_{21}-|x|^{\nu-\kappa}E_{34},
 \end{multlined}
& x \in \Delta_1,
\end{dcases}
\end{equation}
and where the $\phi$-functions are defined in \eqref{def:phi_functions1}--\eqref{def:phi_functions3}.

\item[\rm (3)] As $z\to\infty$, we have
\begin{equation}\label{eq:asyT}
T(z)=(I_4+\Boh(z^{-1}))\mathcal B(z),
\end{equation}
where the function $\mathcal B$ is given in \eqref{def:Bz}.

\item[\rm (4)]The matrix $T$ has the same behavior as $X$ as $z\to 0$.
\end{enumerate}
\end{lem}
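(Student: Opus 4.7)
The plan is to derive each of the four assertions about $T$ from the corresponding property of $X$ in Lemma~\ref{lem:RHpforX}, together with Propositions~\ref{prop:asylambda} and \ref{prop:lamdaphi}. Writing $\Lambda(z)=\diag(e^{n\lambda_1(z)},\ldots,e^{n\lambda_4(z)})$, we have $T=C_T X\Lambda$. Item~(1) is immediate since each $\lambda_j$ is analytic on its domain of definition inside $\C\setminus\R$. Item~(4) follows from the observation that each $\xi_j$ has at worst an integrable $z^{-2/3}$ singularity at the origin (by \eqref{eq:local_behavior_xi} and \eqref{eq:q2q3}), so each $\lambda_j(z)$ stays bounded as $z\to 0$ within its analyticity domain; hence $\Lambda(z)^{\pm 1}=\Boh(1)$ near the origin, and the local behavior of $T$ inherits that of $X$ given in Lemma~\ref{lem:RHpforX}(4).

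For item~(2), one writes $T_-^{-1}T_+=\Lambda_-^{-1}(X_-^{-1}X_+)\Lambda_+$ and substitutes the jumps \eqref{jump for X}. The diagonal entries $e^{n(\lambda_{j,+}-\lambda_{j,-})}$ are simplified using Proposition~\ref{prop:lamdaphi} together with the evenness of $n$ (so that the $\pm 2n\pi i$ shifts disappear) and the symmetry relations \eqref{eq:phi1_pm}--\eqref{eq:phi3_pm} (which let us replace $e^{n\phi_{1,+}}$ on $\Delta_1$ by $e^{-n\phi_{1,-}}$, matching \eqref{eq:jumps_T_2}). The off-diagonal entries produced by $E_{23},E_{21},E_{34}$ acquire factors $e^{n(\lambda_{j,+}-\lambda_{k,-})}$ that either collapse to $1$ via $\lambda_{1,+}=\lambda_{2,-}$ on $\Delta_1$, $\lambda_{2,-}=\lambda_{3,+}$ on $\Delta_2$, and $\lambda_{3,-}=\lambda_{4,+}$ on $\Delta_3$, or reduce to the expected $e^{n\phi_1}$ factor on $\Delta_3\setminus\overline{\Delta_1}$ via the identity $\lambda_{1,+}-\lambda_{2,-}=\phi_1$. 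Checking each of the four regions in \eqref{eq:jumps_T_2} case by case completes item~(2).

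Item~(3) is the main obstacle. Combining \eqref{eq:Xinfty} with Proposition~\ref{prop:asylambda}, and using $e^{\pm n\pi i}=1$, the large exponentials together with the $z^{\pm n/2}$ factors cancel to yield
$$
X(z)\Lambda(z)=\bigl(I_4+\Boh(z^{-1})\bigr)\mathcal B(z)\,D_\theta\bigl(I_4+z^{-1/2}D_c+\Boh(z^{-1})\bigr),
$$
with $D_\theta=\diag(e^{n\theta_1},e^{n\theta_1},e^{n\theta_3},e^{n\theta_3})$ and $D_c=\diag(2nc_1,-2nc_1,2nc_3,-2nc_3)$. The matrix $D_\theta$ is block-scalar with respect to the $2\times 2$ block structure of $\mathcal B(z)$, so it commutes with $\mathcal B$ and is absorbed by the diagonal factor of $C_T$. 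It remains to verify the algebraic identity
$$
\mathcal B(z)\bigl(I_4+z^{-1/2}D_c\bigr)\mathcal B(z)^{-1}=I_4+2nc_1iE_{21}-2nc_3iE_{34}+\Boh(z^{-1}),
$$
which is a direct computation from the explicit block form \eqref{def:Bz}: within each $2\times 2$ block, $D_c$ is proportional to $\diag(1,-1)$, and conjugating $\diag(1,-1)$ by the corresponding block of $\mathcal B(z)$ produces an antidiagonal matrix whose entries scale as $z^{\pm 1/2}$; multiplying by $z^{-1/2}$ then extracts precisely the $(2,1)$-entry correction $2nc_1 i$ in the first block and the $(3,4)$-entry correction $-2nc_3 i$ in the second block, modulo $\Boh(z^{-1})$. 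The nilpotent part of $C_T$ is calibrated exactly to cancel this residue, after which the nilpotent products vanish since $E_{21}^2=E_{34}^2=E_{21}E_{34}=E_{34}E_{21}=0$, and \eqref{eq:asyT} follows.
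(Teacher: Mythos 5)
Your proof is correct and follows the same route as the paper: unfold $T = C_T X\Lambda$ with $\Lambda=\diag(e^{n\lambda_j})$, read off the jump from \eqref{jump for X} and Proposition~\ref{prop:lamdaphi}, use boundedness of the $\lambda$'s at the origin for item~(4), and match \eqref{eq:Xinfty} with Proposition~\ref{prop:asylambda} for item~(3). Your conjugation identity $\mathcal B(z)(I_4+z^{-1/2}D_c)\mathcal B(z)^{-1}=I_4+2nc_1iE_{21}-2nc_3iE_{34}+\Boh(z^{-1})$ (driven by the opposite ordering of $z^{\mp 1/4},z^{\pm 1/4}$ in the two $2\times 2$ blocks of $\mathcal B$) is exactly the computation the paper summarizes as ``moving the last diagonal matrix to the left,'' and your use of $e^{\pm n\pi i}=1$ correctly invokes the standing assumption that $n$ is even.
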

\begin{proof}
To show the jump condition \eqref{eq:jumpT}, it is readily seen from \eqref{eq:XtoT} and \eqref{jump for X} that
\begin{equation*}
 J_T(x)=
 \begin{cases}
  \diag\left(e^{n(\lambda_{j,+}(x)-\lambda_{j,-}(x))}\right)_{j\leq 4}+x^{\kappa}e^{n(\lambda_{3,+}(x)-\lambda_{2,-}(x))}E_{23}, & \hbox{if $x>0$,} \\
 &
 \\
  \begin{multlined}
  \hspace{-1.5mm}
  \diag\left(e^{n(\lambda_{j,+}(x)-\lambda_{j,-}(x))}\right)_{j\leq 4} \\ -|x|^\kappa E_{21}e^{n(\lambda_{1,+}(x)-\lambda_{2,-}(x))}-|x|^{\nu-\kappa}e^{n(\lambda_{4,+}(x)-\lambda_{3,-}(x))}E_{34},
\end{multlined}
  & \hbox{if $x<0$}.
\end{cases}
  \end{equation*}
This formula simplifies further to \eqref{eq:jumps_T_2} with the aid of Proposition \ref{prop:lamdaphi}.

For the asymptotic behavior of $T$ near infinity, we observe from \eqref{eq:Xinfty} and Proposition \ref{prop:asylambda} that, as $z\to \infty$,
\begin{align*}
 &X(z) \diag\left(e^{n\lambda_1(z)},e^{n\lambda_2(z)},e^{n\lambda_3(z)},e^{n\lambda_4(z)}\right)
 \\
 & =\left(
 \diag\left(e^{n\theta_1(z)},e^{n\theta_1(z)},e^{n\theta_3(z)},e^{n\theta_3(z)}\right)+\Boh(z^{-1})\right)\mathcal{B}(z)
 \\
  &\quad \times
 \left(\diag\left( 1+\frac{2nc_1}{\sqrt{z}}, 1-\frac{2nc_1}{\sqrt{z}}, 1+\frac{2nc_3}{\sqrt{z}}, 1-\frac{2nc_3}{\sqrt{z}}\right) + \Boh(z^{-1})I_4\right).
\end{align*}
By moving the last diagonal matrix in the above formula to the left, it follows that
\begin{equation*}
 X(z) \diag\left(e^{n\lambda_1(z)},e^{n\lambda_2(z)},e^{n\lambda_3(z)},e^{n\lambda_4(z)}\right)
 =C_T^{-1}(
 I_4+\Boh(z^{-1}))\mathcal{B}(z).
\end{equation*}
This, together with \eqref{eq:XtoT}, implies \eqref{eq:asyT}.

Finally, since each of the $\lambda$-functions is bounded near the origin, it is clear that the matrix $T$ has the same behavior as $X$ as $z\to 0$.

This completes the proof of Lemma \ref{lem:rhpforT}.
\end{proof}


\section{Third transformation \texorpdfstring{$T\to S$}{T to S}}

The third transformation involves the so-called lens opening. The goal of this step is to convert the highly oscillatory jumps
into a more convenient form on the original contours while creating extra jumps tending to the identity matrices exponentially fast
on the new contours. This transformation is based on the following classical factorizations:
$$
\begin{pmatrix}
e^{-u} & v \\
0 & e^{u}
\end{pmatrix}
=
\begin{pmatrix}
1 & 0 \\
v^{-1}e^{u} & 1
\end{pmatrix}
\begin{pmatrix}
0 & v \\
-v^{-1} & 0
\end{pmatrix}
\begin{pmatrix}
1 & 0 \\
v^{-1}e^{-u} & 1
\end{pmatrix}
$$
and
$$
\begin{pmatrix}
e^{-u} & 0 \\
v & e^{u}
\end{pmatrix}
=
\begin{pmatrix}
1 & v^{-1}e^{-u} \\
0 & 1
\end{pmatrix}
\begin{pmatrix}
0 & -v^{-1} \\
v & 0
\end{pmatrix}
\begin{pmatrix}
1 & v^{-1}e^{u} \\
0 & 1
\end{pmatrix}.
$$

Note that the jump matrices in \eqref{eq:jumps_T_2} can be viewed as $2\times 2$ block matrices, the factorizations above can be easily applied.
For instance, if $x\in \Delta_2$, it follows from \eqref{eq:phi2_pm} that
\begin{align}\label{eq:JTdecomp1}
J_T(x) & = \diag\left(1,
\begin{pmatrix}
e^{n\phi_{2,+}(x)} & x^{\kappa} \\
0 & e^{n\phi_{2,-}(x)}
\end{pmatrix},
1\right)
\nonumber \\
& =
\diag\left( 1,
\begin{pmatrix}
1 & 0 \\
x^{-\kappa}e^{n\phi_{2,-}(x)} & 1
\end{pmatrix}
\begin{pmatrix}
0 & x^{\kappa} \\
-x^{-\kappa} & 0
\end{pmatrix}
\begin{pmatrix}
1 & 0 \\
x^{-\kappa}e^{n\phi_{2,+}(x)} & 1
\end{pmatrix},
1\right), \nonumber  \\
& = (I_4+x^{-\kappa}e^{n\phi_{2,-}(x)}E_{32})
\diag\left( 1,
\begin{pmatrix}
0 & x^{\kappa} \\
-x^{-\kappa} & 0
\end{pmatrix},
1\right)
(I_4+x^{-\kappa}e^{n\phi_{2,+}(x)}E_{32}).
\end{align}
In a similar spirit, we use \eqref{eq:phi3_pm} to see that for $x\in \Delta_3 \setminus \overline{\Delta_1}$,
\begin{align}\label{eq:JTdecomp2}
J_T(x) & =   \diag\left(I_2,
\begin{pmatrix}
e^{n\phi_{3,+}(x)} & -|x|^{\nu-\kappa} \\
0 & e^{n\phi_{3,-}(x)}
\end{pmatrix}
\right) -|x|^{\kappa} e^{n\phi_1(x)}E_{21}
\nonumber \\
& =
(I_4- \mathfrak c_{\kappa-\nu}x^{\kappa-\nu}_- e^{n\phi_{3,-}(x)}E_{43})
\left[
\diag\left(I_2,
\begin{pmatrix}
0 & -|x|^{\nu-\kappa} \\
|x|^{\kappa-\nu} & 0
\end{pmatrix}
\right)
-|x|^\kappa e^{n\phi_1(x)}E_{21} \right] \nonumber  \\
& \quad \times (I_4-\mathfrak c_{\nu-\kappa}x^{\kappa-\nu}_+e^{n\phi_{3,+}(x)}E_{43}),
\end{align}
where $\mathfrak c_{\alpha}$ is defined in \eqref{def:calpha},
and, finally, using \eqref{eq:phi1_pm}, we obtain that for $x\in \Delta_1$,
\begin{align}\label{eq:JTdecomp3}
J_T(x)
 & =
\diag\left(
\begin{pmatrix}
e^{-n\phi_{1,-}(x)} & 0 \\
-|x|^{\kappa} & e^{-n\phi_{1,+}(x)}
\end{pmatrix},
\begin{pmatrix}
e^{n\phi_{3,+}(x)} & -|x|^{\nu-\kappa} \\
0 & e^{n\phi_{3,-}(x)}
\end{pmatrix}
\right)
\nonumber  \\
%
& =
(I_4-\mathfrak c_{\kappa-\nu}x^{\kappa-\nu}_-e^{n\phi_{3,-}(x)}E_{43})
(I_4-\mathfrak c_{-\kappa}x^{-\kappa}_-e^{-n\phi_{1,-}(x)}E_{12})
\nonumber \\
& \quad \times
\diag\left(
\begin{pmatrix}
0 &  |x|^{-\kappa} \\
 -|x|^{\kappa} & 0
\end{pmatrix},
\begin{pmatrix}
0 & -|x|^{\nu-\kappa} \\
|x|^{\kappa-\nu} & 0
\end{pmatrix}
\right) \nonumber  \\
& \quad  \times
(I_4-\mathfrak c_{\kappa}x^{-\kappa}_+e^{-n\phi_{1,+}(x)}E_{12})
(I_4-\mathfrak c_{\nu-\kappa}x^{\kappa-\nu}_+e^{n\phi_{3,+}(x)}E_{43}).
\end{align}

For each $k=1,2,3$, we set simply connected domains $\mathcal L_k^\pm$ (the lenses) on the $\pm$-side of $\Delta_k$, with oriented boundaries $\partial \mathcal L_k^\pm\cup\Delta_k$
as shown in Figure~\ref{fig:lenses}. Moreover, it is required that
\begin{equation}\label{eq:choice_lenses}
\partial\mathcal L_1^\pm\subset \mathcal L_3^\pm \quad \textrm{and} \quad
\overline{\mathcal L_k^\pm}\setminus \Delta_k \subset \mathcal G_k,
\end{equation}
where the open neighborhood $\mathcal G_k$ of the interval $\Delta_k$ is given in Proposition \ref{prop:inequalities_phi}.

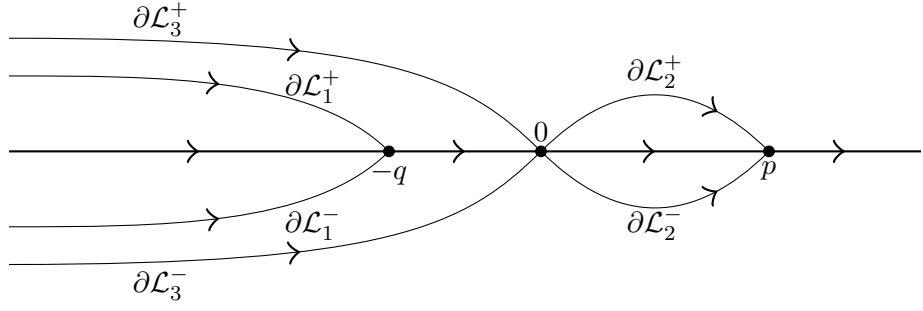
\begin{figure}[t]
\begin{center}
\begin{tikzpicture}
%
\filldraw [black] (3,3) circle (2pt)
                   (5,3) circle (2pt)
                   (8,3) circle (2pt);
\node[below] (q) at (3,3) {$-q$};
\node[above] (zero) at (5,3) {$0$};
\node[below] (p) at (8,3) {$p$};
\draw[thick,->-=.5] (5,3)--(8,3);
\draw[thick,->-=.5] (-2,3)--(3,3);
\draw[thick,->-=.5] (3,3)--(5,3);
\draw[thick,->-=.5] (8,3)--(10,3);
\draw[->-=.75] (5,3) .. controls (6,4) and (7,4) .. (8,3);
\draw[->-=.75] (5,3) .. controls (6,2) and (7,2) .. (8,3);
\draw[-<-=.5] (3,3) .. controls (2,4) and (0,4) .. (-2,4);
\draw[-<-=.5] (3,3) .. controls (2,2) and (0,2) .. (-2,2);
\draw[-<-=.5] (5,3) .. controls (4,4) and (3,4.5) .. (-2,4.5);
\draw[-<-=.5] (5,3) .. controls (4,2) and (3,1.5) .. (-2,1.5);
\node[above] at (2,3.5) {$\partial\mathcal L_1^+$};
\node[below] at (2,2.4) {$\partial\mathcal L_1^-$};
\node[above] at (6.5,3.7) {$\partial\mathcal L_2^+$};
\node[below] at (6.5,2.4) {$\partial\mathcal L_2^-$};
\node[above] at (0,4.4) {$\partial\mathcal L_3^+$};
\node[below] at (0,1.6) {$\partial\mathcal L_3^-$};
\end{tikzpicture}
\end{center}
\caption{The lenses used for the transformation $T\mapsto S$.}
\label{fig:lenses}
\end{figure}

Based on the decompositions of $J_T$ given in \eqref{eq:JTdecomp1}--\eqref{eq:JTdecomp3} and also on the lenses just defined, the third transformation reads
\begin{equation}\label{eq:TtoS}
S(z) = T(z)
\begin{cases}
(I_4\mp z^{-\kappa}e^{n\phi_2(z)}E_{32}), & z\in \mathcal L_2^{\pm},\\
(I_4\pm \mathfrak c_{\nu-\kappa} z^{\kappa-\nu}e^{n\phi_3(z)}E_{43}), & z\in \mathcal L_3^{\pm}\setminus \mathcal L_1^{\pm}, \\
(I_4\pm \mathfrak c_{\nu-\kappa} z^{\kappa-\nu}e^{n\phi_3(z)}E_{43})
(I_4\pm \mathfrak c_\kappa z^{-\kappa}e^{-n\phi_1(z)}E_{12}),
& z \in \mathcal L_1^{\pm},\\
I_4, & z \mbox{ outside the lenses}.
\end{cases}
\end{equation}
Since both $\kappa$ and $\nu$ are integers, it is easily seen that
$$\mathfrak c_{\nu-\kappa}=\mathfrak c_{\kappa-\nu},\qquad \mathfrak c_{\kappa}=\mathfrak c_{-\kappa}.$$
Also note that the factors of the form $(I_4+(\ast) E_{12})$ and $(I_4+ (\ast) E_{43})$ appearing above commute, it is then straightforward to check that
the matrix $S$ satisfies the following RH problem.

\begin{rhp}
The function $S$ defined in \eqref{eq:TtoS} has the following properties:
\begin{enumerate}
\item[\rm (1)] $S$ is defined and analytic in $ \mathbb{C} \setminus\Gamma_S$, where
\begin{equation}\label{def:gammaS}
\Gamma_S:=\R\cup \left( \bigcup\limits_{j=1}^3\partial \mathcal L_j^\pm \right).
\end{equation}
\item[\rm (2)] For $z\in \Gamma_S$, $S$ satisfies the jump condition
$$S_+(z)=S_-(z)J_S(z),$$
where
\begin{equation}\label{eq:jumps_S}
J_S(z)=
\begin{cases}
J_T(z)=I_4+x^{\kappa}e^{-n\phi_2(z)}E_{23},& z\in (p,+\infty), \\
I_4+ z^{-\kappa}e^{n\phi_2(z)}E_{32}, & z\in \partial\mathcal L_2^{\pm},\\
I_4- \mathfrak c_{\nu-\kappa} z^{\kappa-\nu}e^{n\phi_3(z)}E_{43}, & z\in \partial\mathcal L_3^{\pm}, \\
I_4- \mathfrak c_{\kappa} z^{-\kappa}e^{-n\phi_1(z)}E_{12}, & z \in \partial\mathcal L_1^{\pm},\\
\diag\left( 1,
\begin{pmatrix}
0 & z^{\kappa} \\
-z^{-\kappa} & 0
\end{pmatrix},
1\right), & z\in \Delta_2, \\
\diag\left(I_2,
\begin{pmatrix}
0 & -|z|^{\nu-\kappa} \\
|z|^{\kappa-\nu} & 0
\end{pmatrix}
\right)
-|z|^{\kappa}e^{n\phi_1(z)}E_{21} , & z\in \Delta_3 \setminus \overline{\Delta_1}, \\
\diag\left(
\begin{pmatrix}
0 &  |z|^{-\kappa} \\
 -|z|^{\kappa} & 0
\end{pmatrix},
\begin{pmatrix}
0 & -|z|^{\nu-\kappa} \\
|z|^{\kappa-\nu} & 0
\end{pmatrix}
\right), & z\in \Delta_1.
\end{cases}
\end{equation}

\item[\rm (3)] As $z\to\infty$, we have
\begin{equation*}
S(z)=(I_4+\Boh(z^{-1}))\mathcal B(z),
\end{equation*}
where the function $\mathcal B$ is given in \eqref{def:Bz}.

\item[\rm (4)]
As $z\to 0$, $S$ has the same behavior as $T$ provided $z\to 0$ outside the lenses that end in $0$.

\end{enumerate}
\end{rhp}

\section{Global parametrix}

By \eqref{eq:choice_lenses}, \eqref{eq:jumps_S} and Proposition \ref{prop:inequalities_phi}, it is easily seen that, as $n\to \infty$,
$$
J_S(z)=I_4+\boh(1), \qquad \; z\in \cup_{j=1}^3\partial \mathcal L_j^{\pm}\cup (p,+\infty),
$$
uniformly valid for $z$ bounded away from the endpoints of the sets $\Delta_k$, $k=1,2,3$. This, together with the second inequality in \eqref{eq:inequalities_phi_off_support}, leads us to the following model RH problem, also called {\it global parametrix} RH problem.

\begin{rhp}\label{rhp:global_parametrix}
We look for a $4\times 4$ matrix-valued function
$G $
satisfying the following properties:
\begin{enumerate}
\item[\rm (1)] $G$ is defined and analytic in $\C\setminus (-\infty,p]$.
\item[\rm (2)] $G$ satisfies the jump condition
\begin{equation*}
G_{+}(x) = G_{-}(x)J_G(x),
\end{equation*}
where
\begin{equation}\label{eq:jumps_global_parametrix}
J_G(x)=
\begin{cases}
\diag\left(
\begin{pmatrix}
0 &  |x|^{-\kappa} \\
 -|x|^{\kappa} & 0
\end{pmatrix},
\begin{pmatrix}
0 & -|x|^{\nu-\kappa} \\
|x|^{\kappa-\nu} & 0
\end{pmatrix}
\right),
& x\in \Delta_1, \\
\diag\left( 1,
\begin{pmatrix}
0 & x^{\kappa} \\
-x^{-\kappa} & 0
\end{pmatrix},
1\right),
& x\in \Delta_2, \\
\diag\left(I_2,
\begin{pmatrix}
0 & -|x|^{\nu-\kappa} \\
|x|^{\kappa-\nu} & 0
\end{pmatrix}
\right),
& x \in \Delta_3\setminus \overline{\Delta_1}.
\end{cases}
\end{equation}

\item[\rm (3)] As $z\to\infty$ away from $\R_-$, we have
\begin{equation}\label{eq:asympt_behavior_global_param}
    G(z) = (I_4+\Boh(z^{-1}))\mathcal B(z),
\end{equation}
where $\mathcal B$ is as in \eqref{def:Bz}.

\end{enumerate}
\end{rhp}

Note that we are not imposing any endpoint behaviors for $G$, so the solution to RH problem \ref{rhp:global_parametrix} might not be unique. Nevertheless, we will construct some $G$ explicitly that will be enough to finish the further asymptotic analysis. The construction relies on the uniformization map of the Riemann surface described in Section \ref{sec:rational_parametrization}.

\subsection{Construction of the global parametrix for  \texorpdfstring{$\kappa=\nu=0$}{k=v=0}}

In this section, we will solve the model RH problem \ref{rhp:global_parametrix} with $\kappa=\nu=0$, whose solution will be denoted by $G_0$. The basic idea is to lift the original RH problem to the Riemann surface $\mathcal{R}$, and then transform the matrix-valued RH problem into several scalar RH problems on the $t$-plane with the aid of the rational parametrization \eqref{eq:rational_parametrization}.

To proceed, let $t_k=t_k(z)$, $k=1,2,3,4$, be the inverse of the map $z=z(t)$ in \eqref{def:zt} restricted to $\mathcal R_k$, i.e.,
\begin{equation*}
t_k: \mathcal R_k \rightarrow \mathbb{C}.
\end{equation*}
We then have the following proposition.
\begin{prop}\label{prop:G0}
A solution of the model RH problem \ref{rhp:global_parametrix} with $\kappa=\nu=0$ is given by
\begin{equation}\label{eq:G0}
G_0(z)=\begin{pmatrix}
G_{k}(t_j(z))
\end{pmatrix}_{k,j=1}^4,
\end{equation}
where
\begin{align}
G_1(t)&=\mathfrak e_1
\left(\frac{t-\alpha^2}{t-t_-}\right)^{\frac{1}{2}}\frac{(t-\beta^2)^{\frac{3}{2}}}{(t-t_+)^{\frac{1}{2}}}, \label{def:G1}\\
G_2(t)& = \mathfrak e_2  \; \frac{ t\; (t-\beta^2)^\frac{3}{2}}{(t-t_+)^{\frac{1}{2}}}\left((t-\alpha^2)(t-t_-)\right)^{-\frac{1}{2}} , \\
G_3(t)& = \mathfrak e_3 \;  t \; (t-\alpha^2)\left( \frac{t-\alpha^2}{t-t_-}\right)^{\frac{1}{2}}(t-t_+)^{-\frac{1}{2}}(t-\beta^2)^{-\frac{1}{2}}, \\
G_4(t) & =\mathfrak e_4 \; (t-\alpha^2)\left( \frac{t-\alpha^2}{t-t_-}\right)^{\frac{1}{2}}(t-t_+)^{-\frac{1}{2}}(t-\beta^2)^{\frac{1}{2}}. \label{def:G4}
\end{align}
Here, $t_{\pm}$ is given in \eqref{def:tpm}, the branch cut for the root of $(t-\alpha^2)^\bullet(t-t_-)^\bullet$ is taken along $\gamma_1^+$, the branch cuts of $(t-t_+)^{\frac12}$ and $(t-\beta^2)^{\frac12}$ are taken along $\gamma_2^-$ and $\gamma_3^-$, respectively, and $\mathfrak{e}_1,\mathfrak{e}_2,\mathfrak{e}_3,\mathfrak{e}_4$ are explicitly computable non-zero constants.
\end{prop}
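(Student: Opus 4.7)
The plan is to convert the matrix-valued RH problem into a scalar problem on the Riemann surface $\mathcal R$ of Section~\ref{sec:rational_parametrization}, and then pull everything back to the $t$-sphere via the rational parametrization $H$. Since $\kappa=\nu=0$, each $2\times 2$ block of $J_G$ in \eqref{eq:jumps_global_parametrix} is a signed permutation matrix, so combined with the crosswise gluing of sheets, the four entries of each row of $G_0$ can be viewed as the values of a single meromorphic function $\widetilde G_k$ on $\mathcal R$, evaluated on the four sheets $\mathcal R_1,\ldots,\mathcal R_4$, with sign jumps across the sheet gluings that encode the $-1$'s in $J_G$.

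The next step is to realize $\widetilde G_k$ concretely. Setting $G_k(t):=\widetilde G_k(H(t))$, each $G_k$ becomes a function on $\overline{\mathbb C}$ whose only possible branch points lie on $\gamma_1\cup\gamma_2\cup\gamma_3$ (where it must have $-1$ monodromy, as dictated by the signed-permutation blocks of $J_G$) and on the critical set $\{t_-,\alpha^2,t_+,\beta^2,\infty\}$ of $H$. The required sign jumps translate directly into square-root factors: a factor of the form $((t-\alpha^2)/(t-t_-))^{1/2}$ is single-valued off $\gamma_1$ (which joins $t_-$ to $\alpha^2$) and picks up a $-1$ jump there, while $(t-t_+)^{\pm 1/2}$ and $(t-\beta^2)^{\pm 1/2},(t-\beta^2)^{\pm 3/2}$ play the same role on $\gamma_2$ and $\gamma_3$ respectively. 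This dictates the half-integer exponents appearing in \eqref{def:G1}--\eqref{def:G4}, while the integer factors $t$ and $(t-\alpha^2)$ control the remaining zeros and poles.

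The orders of vanishing are then fixed by the asymptotic condition \eqref{eq:asympt_behavior_global_param}. Using the expansions \eqref{eq:asymptotics_rational_parametrization}, as $z\to\infty$ we have $t_j-\alpha^2=O(z^{-1/2})$ for $j=1,2$ and $t_j-\beta^2=O(z^{-1/2})$ for $j=3,4$, with opposite signs on the two sheets sharing a branch point. Matching row $k$ with the $k$-th row of $\mathcal B(z)$ in \eqref{def:Bz} dictates, for example, that $G_1$ vanish like $(t-\alpha^2)^{1/2}$ near $\alpha^2$ (to produce the $z^{-1/4}$ in columns 1,2) and like $(t-\beta^2)^{3/2}$ near $\beta^2$ (so that columns 3,4 contribute only to the $O(z^{-1})$ subdominant term); the orders required for $G_2,G_3,G_4$ are read off analogously, and together they determine the ansatz in \eqref{def:G1}--\eqref{def:G4} uniquely up to the four constants $\mathfrak e_1,\ldots,\mathfrak e_4$, which are then set by matching the leading coefficient and are nonzero because the leading blocks of $\mathcal B$ are invertible.

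Once the formulas are written down, the verification is direct: analyticity of $G_0$ on $\mathbb C\setminus(-\infty,p]$ follows because the branch cuts of the $t_j$'s together with those of the $G_k$'s along $\gamma_1\cup\gamma_2\cup\gamma_3$ reproduce exactly $\Delta_1\cup\Delta_2\cup\Delta_3$; the jump relation \eqref{eq:jumps_global_parametrix} reduces to the sign bookkeeping on $\gamma_k^\pm$ just carried out; and the large-$z$ behaviour follows from \eqref{eq:asymptotics_rational_parametrization}. I expect the main obstacle to be exactly this sign bookkeeping: on $\Delta_1$ both the sheet-1/2 and the sheet-3/4 transitions occur simultaneously, so one has to ensure that the chosen branches of the square-root factors produce the correct signed-permutation in the $(1,2)$ block as well as in the $(3,4)$ block of $J_G|_{\Delta_1}$, without contaminating the simpler jumps on $\Delta_2$ and $\Delta_3\setminus\overline{\Delta_1}$. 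Identifying $\gamma_1^\pm,\gamma_2^\pm,\gamma_3^\pm$ unambiguously via Figure~\ref{fig:uniformization}, and orienting each closed curve $\gamma_k$ so that the $-1$ monodromy of the chosen square root matches the crosswise identification $t_{j,+}=t_{j+1,-}$, is what makes the identification go through.
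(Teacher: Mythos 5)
Your proposal follows essentially the same route as the paper: lifting each row of $G_0$ to a scalar function on the genus-zero surface $\mathcal R$, pulling back to the $t$-sphere via the rational parametrization $H$, translating the signed-permutation jumps of $J_G$ into $-1$ jumps on the arcs $\gamma_1^+,\gamma_2^-,\gamma_3^-$, and pinning down the half-integer exponents and constants by matching the rows of $\mathcal B(z)$ through the local expansions \eqref{eq:asymptotics_rational_parametrization} at $t=\alpha^2$ and $t=\beta^2$. The only cosmetic looseness is referring to the branch cut of $((t-\alpha^2)/(t-t_-))^{1/2}$ as lying along the closed curve $\gamma_1$ rather than a single arc $\gamma_1^\pm$, but you correct for this at the end when you note that the $\pm$-labels must be matched to the crosswise gluings; this is exactly the bookkeeping the paper does by fixing the cuts along $\gamma_1^+,\gamma_2^-,\gamma_3^-$.
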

\begin{proof}
Suppose that
\begin{equation*}
G_0(z)=\begin{pmatrix}
g_{k,j}(z)
\end{pmatrix}_{k,j=1}^4
\end{equation*}
solves the model RH problem \ref{rhp:global_parametrix} with $\kappa=\nu=0$. We lift the RH problem to the Riemann surface $\mathcal{R}$ by treating each entry $g_{k,j}(z)$ of the $k$-th row of $G_0$ as defined on the sheet $\mathcal{R}_j$ of $\mathcal{R}$ and define
\begin{equation}\label{eq:definition_G0_2}
g_{k}: \mathcal{R} \to \mathbb{C}, \qquad  \restr{g_k}{\mathcal R_j}=g_{k,j}, \qquad j,k=1,2,3,4.
\end{equation}
It is then easily seen that the RH problem for $G_0$ is equivalent to the following RH problem on $\mathcal{R}$.
\begin{rhp}
For $k=1,2,3,4$, the function $g_k$ defined in \eqref{eq:definition_G0_2} has the following properties:
\begin{enumerate}[\rm (1)]
\item $g_k$ is analytic in $\mathcal{R}\setminus \Gamma_g$, where
\begin{equation*}
\Gamma_g: = H(\gamma_1^+)\cup H(\gamma_2^-)\cup H(\gamma_3^-)
\end{equation*}
with $H$ being the rational parametrization \eqref{eq:rational_parametrization}. Here, each of the contours $H(\gamma_1^+)$, $H(\gamma_2^-)$ and $H(\gamma_3^-)$ is a real interval on $\mathcal R$ with the orientation taken from the left to the right.

\item $g_k$ satisfies the jump condition
$$
g_{k,+}(z)=-g_{k,-}(z),\qquad z\in \Gamma_g.
$$

\item $g_k$ has the following large $z$ asymptotic behaviors.

 \begin{itemize}
   \item As $z\to \infty$ along $\mathcal R_1$,
$$
g_1(z)=\frac{ z^{-\frac{1}{4} } }{ \sqrt{2} }(1+\Boh(z^{-\frac{1}{2} } )),  \quad g_2(z)=\frac{i z^{\frac{1}{4}} }{ \sqrt{2} }(1+\Boh(z^{-1})), \quad g_3(z)=g_4(z)=\Boh(z^{-\frac{3}{4}}).
$$

   \item As $z\to \infty$ along $\mathcal R_2$,
$$
g_1(z)=\frac{ iz^{-\frac{1}{4} } }{ \sqrt{2} }(1+\Boh(z^{-\frac{1}{2} } )),  \quad g_2(z)=\frac{z^{\frac{1}{4}} }{ \sqrt{2} }(1+\Boh(z^{-1})), \quad g_3(z)=g_4(z)=\Boh(z^{-\frac{3}{4}}).
$$
\item As $z\to \infty$ along $\mathcal R_3$,
$$
g_1(z)=g_2(z)=\Boh(z^{-\frac{3}{4}}),\quad
g_3(z)=\frac{ z^{\frac{1}{4} } }{ \sqrt{2} }(1+\Boh(z^{-1 } )),  \quad g_4(z)=\frac{iz^{-\frac{1}{4}} }{ \sqrt{2} }(1+\Boh(z^{-\frac{1}{2}})).
$$
\item As $z\to \infty$ along $\mathcal R_4$,
$$
g_1(z)=g_2(z)=\Boh(z^{-\frac{3}{4}}),\quad
g_3(z)=\frac{ i z^{\frac{1}{4} } }{ \sqrt{2} }(1+\Boh(z^{-1 } )),  \quad g_4(z)=\frac{z^{-\frac{1}{4}} }{ \sqrt{2} }(1+\Boh(z^{-\frac{1}{2}})).
$$
 \end{itemize}

\end{enumerate}
\end{rhp}

Using the rational parametrization \eqref{eq:rational_parametrization}, we further transfer the above RH problem for $g_k$ to a scalar RH problem on the $t$-complex plane
by setting
\begin{equation}\label{def:Gk}
G_k(t)=g_k(z(t)).
\end{equation}
The RH problem for $g_k$ is then equivalent to the following RH problem for $G_k$.

\begin{rhp}\label{rhp:Gk}
For $k=1,2,3,4$, the function $G_k$ defined in \eqref{def:Gk} has the following properties:
\begin{enumerate}[\rm (1)]
\item $G_k$ is analytic in $\mathbb{C} \setminus \Gamma_G$, where
\begin{equation*}
\Gamma_G=H^{-1}(\Gamma_g)=\gamma_1^+\cup\gamma_2^-\cup\gamma_3^-.
\end{equation*}
\item $G_k$ satisfies the jump condition
$$G_{k,+}(t)=-G_{k,-}(t),\qquad t\in \Gamma_G.$$
\item As $t\to \alpha^2$, we have
\begin{align*}
G_1(t) & =\frac{1}{\sqrt{2}}\left(\frac{\alpha}{\beta^2-\alpha^2}\right)^{-\frac{1}{2}}(\alpha^2-t)^{\frac{1}{2}}(1+\Boh(t-\alpha^2)), \\
G_2(t) & =\frac{i}{\sqrt{2}}\left(\frac{\alpha}{\beta^2-\alpha^2}\right)^{\frac{1}{2}}(\alpha^2-t)^{-\frac{1}{2}}(1+\Boh(t-\alpha^2)), \\
G_k(t) & =\Boh((t-\alpha^2)^{\frac{3}{2}}),\qquad k=3,4.
\end{align*}
\item As $t\to \beta^2$, we have
\begin{align*}
G_k(t) & =(\Boh(t-\beta^2)^{\frac{3}{2}}),\qquad k=1,2, \\
G_3(t) & =\frac{1}{\sqrt{2}}\left(\frac{\alpha}{\beta^2-\alpha^2}\right)^{\frac{1}{2}}(\beta^2-t)^{-\frac{1}{2}}(1+\Boh(t-\beta^2)), \\
G_4(t) & =\frac{i}{\sqrt{2}}\left(\frac{\alpha}{\beta^2-\alpha^2}\right)^{-\frac{1}{2}}(\beta^2-t)^{\frac{1}{2}}(1+\Boh(t-\beta^2)).
\end{align*}
\end{enumerate}
\end{rhp}

It is straightforward to check that the function $G_k$ defined in \eqref{def:G1}--\eqref{def:G4} with specified branch cuts satisfies the RH problem \ref{rhp:Gk}. In particular, the constants $\mathfrak{e}_\bullet$ are determined by the explicit leading coefficients given in items (3) and (4) of the above RH problem.

This completes the proof of Proposition \ref{prop:G0}.
\end{proof}

From \eqref{def:G1}--\eqref{def:G4}, it follows that
\begin{equation*}
G_{k}(t)=\left\{
           \begin{array}{ll}
             \mathfrak e_k t (1+\Boh(t^{-1})), & \hbox{ $t \to \infty$,} \\
             \Boh((t-t_+)^{-\frac12}), & \hbox{ $t \to t_+$,} \\
            \Boh((t-t_-)^{-\frac12}), & \hbox{ $t \to t_-$.}
           \end{array}
         \right.
\end{equation*}
This, together with \eqref{eq:G0} and Proposition \ref{prop:zt}, implies that the following rough estimate of $G_0$ near the endpoints of the jump contours:
\begin{equation}\label{eq:asymptotics_global_parametrix_zero_zero_param}
G_0(z)=\left\{
         \begin{array}{ll}
           \Boh(z^{-\frac{1}{3}}), & \hbox{$z \to 0$,} \\
           \Boh((z-p)^{-\frac14}), & \hbox{$z \to p$,} \\
           \Boh((z+q)^{-\frac14}), & \hbox{$z \to -q$.}
         \end{array}
       \right.
\end{equation}

\subsection{Construction of the global parametrix for general \texorpdfstring{$\kappa$ and $\nu$}{k and v}}
With the aid of $G_0$ in \eqref{eq:G0}, we could construct the global parametrix for general parameters $\kappa$ and $\nu$. To state the result,
let us define
\begin{equation}\label{def:twologs}
\log(t-\alpha^2): \mathbb{C} \setminus ( \gamma_1^+\cup(-\infty,t_-] ) \to \mathbb{C},
\quad
\log(t-\beta^2): \mathbb{C} \setminus  \gamma_3^- \to \mathbb{C},
\end{equation}
where both the branches are chosen to be purely real for large positive values of $t$, and further set
\begin{equation}\label{def:Fk}
F_k(z)=e^{-\kappa \log(t_k(z)-\alpha^2)-(\kappa-\nu)\log(t_k(z)-\beta^2)},\qquad z\in \mathcal R_k,\qquad k=1,2,3,4.
\end{equation}

\begin{prop}\label{prop:GlobCons}
A solution of the model RH problem \ref{rhp:global_parametrix} is given by
\begin{equation}\label{eq:G}
G(z)=\diag(\mathfrak f_1 ,\mathfrak f_1,\mathfrak f_3,\mathfrak f_3) G_0(z) \diag\left(\mathfrak c_{\kappa}F_1(z),F_2(z)e^{-\kappa \log z},F_3(z),\mathfrak c_{\kappa-\nu}F_4(z)e^{(\nu-\kappa)\log z}\right),
\end{equation}
where $G_0$ given in \eqref{eq:G0} solves the RH problem \ref{rhp:global_parametrix} with $\kappa=\nu=0$, the function $F_k$, $k=1,2,3,4$, is defined in \eqref{def:Fk}, the branch cut of $\log z$ is taken along the negative real axis, and $\mathfrak{f}_1,\mathfrak{f}_3$ are explicitly computable non-zero constants.
\end{prop}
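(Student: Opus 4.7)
The plan is to write $G(z) = L\, G_0(z)\, D(z)$, where $L = \diag(\mathfrak f_1, \mathfrak f_1, \mathfrak f_3, \mathfrak f_3)$ is a constant diagonal matrix and $D(z) = \diag(d_1(z), d_2(z), d_3(z), d_4(z))$ is the scalar diagonal factor displayed in \eqref{eq:G}. Since $L$ is constant it does not contribute to jumps; it only rescales the leading behavior at infinity. This splits the verification into two independent tasks: (i) showing that $D$ correctly dresses the jumps of $G_0$ onto those of $G$, and (ii) choosing $\mathfrak f_1, \mathfrak f_3$ so the asymptotic \eqref{eq:asympt_behavior_global_param} is satisfied.

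For task (i) I would use the identity $J_G = D_-^{-1} J_{G_0} D_+$, which must be checked on each of $\Delta_1, \Delta_2, \Delta_3$. Reading off the target jumps \eqref{eq:jumps_global_parametrix} and comparing with the $\kappa=\nu=0$ jumps of $G_0$, one finds that $D$ must satisfy specific scalar boundary-value relations (for instance, $d_{2,-}^{-1}d_{3,+} = x^\kappa$ on $\Delta_2$, and analogous identities mixing $d_1 \leftrightarrow d_2$ on $\Delta_1$ and $d_3 \leftrightarrow d_4$ on $\Delta_3$). The crucial geometric fact behind these identities is that the local inverses $t_k$ of $z(t)$ satisfy $t_{k,\pm}(x) = t_{k+1,\mp}(x)$ on $\Delta_k$, reflecting the crosswise gluing of $\mathcal R_k$ and $\mathcal R_{k+1}$. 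Consequently, crossing $\Delta_k$ swaps the boundary values of $F_k$ and $F_{k+1}$ in \eqref{def:Fk}. Combined with the prescribed branch cut choices for $\log(t-\alpha^2)$ on $\gamma_1^+ \cup (-\infty,t_-]$ and $\log(t-\beta^2)$ on $\gamma_3^-$ (made precisely so that each $F_k$ is single-valued on its sheet off $\partial\mathcal R_k$), and with the jumps of $e^{\pm \kappa \log z}$ across $\R_-$, the products $d_{j,\pm}$ acquire exactly the required powers of $|x|$. The constants $\mathfrak c_\kappa$ and $\mathfrak c_{\kappa-\nu}$ absorb the $e^{\pm \kappa \pi i}$ phases that arise upon evaluating the $t$-plane logarithms at their boundary values.

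For task (ii), as $z \to \infty$ one applies the local expansions \eqref{eq:asymptotics_rational_parametrization}, namely $t_k(z) \to \alpha^2$ for $k=1,2$ and $t_k(z) \to \beta^2$ for $k=3,4$, to expand each $F_k$. The factors $e^{-\kappa \log z}$ and $e^{(\nu-\kappa)\log z}$ produce the explicit powers $z^{\mp\kappa}, z^{\pm(\nu-\kappa)}$ that match the outer diagonal blocks of $\mathcal B(z)$ in \eqref{def:Bz}, while the leading asymptotics of $G_0$ obtained from \eqref{eq:G0}--\eqref{def:G4} match the $z^{\pm 1/4}$ and Pauli-like structure. The remaining column-wise constant mismatch is then eliminated by choosing $\mathfrak f_1$ and $\mathfrak f_3$ explicitly. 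Analyticity of $G$ off $(-\infty,p]$ follows because each $F_k$ extends analytically to $\mathcal R_k$ (the logs are cut along curves that lift to $\partial\mathcal R_k$), and the $\log z$ factors are analytic off $\R_-$, which is already part of the cut structure of $G_0$.

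The main obstacle will be the combinatorial bookkeeping of branch cuts: tracking how each $\log$ cut on the $t$-plane gets pushed via $H$ onto the three real cuts $\Delta_1,\Delta_2,\Delta_3$ on the $z$-plane, and keeping signs straight at the triple branch point $z=0$ where $\mathcal R_2, \mathcal R_3, \mathcal R_4$ meet. Particular care is needed to verify that the $\log z$ factors inserted in $d_2$ and $d_4$, together with the phases $\mathfrak c_\kappa$ and $\mathfrak c_{\kappa-\nu}$, precisely cancel the auxiliary phase jumps generated by $F_2$ and $F_4$ across $\Delta_3 \setminus \overline{\Delta_1}$, where $\log z$ itself is discontinuous. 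Once this bookkeeping is done for one interval, the other two follow by an analogous (but notationally simpler) computation.
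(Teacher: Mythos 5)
Your proposal follows essentially the same route as the paper: write $G=L\,G_0\,D$ with $L$ constant diagonal, check the jumps by the scalar identity $J_G=D_-^{-1}J_{G_0}D_+$ using the branch behavior of the $t$-plane logarithms pushed forward by $H$, and read off the asymptotics from the local expansions in \eqref{eq:asymptotics_rational_parametrization}. The paper does exactly this, listing the concrete boundary relations for $\log(t_k-\alpha^2)$ and $\log(t_k-\beta^2)$ before saying the jump check is straightforward.

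One point where your phrasing is imprecise and worth flagging: you claim that each $F_k$ extends analytically to $\mathcal R_k$ because ``the logs are cut along curves that lift to $\partial \mathcal R_k$.'' This is not true for $F_2$: the branch cut of $\log(t-\alpha^2)$ in \eqref{def:twologs} is $\gamma_1^+ \cup (-\infty,t_-]$, and the ray $(-\infty,t_-]$ lies in the \emph{interior} of $\mathcal D_2$ and is mapped by $H$ to $(-q,0)\subset\Delta_3\setminus\overline{\Delta_1}$, an interior segment of $\mathcal R_2$. So $F_2$ genuinely jumps across $(-q,0)$; this is exactly the relation $(\log(t_2-\alpha^2))_+-(\log(t_2-\alpha^2))_- = -2\pi i$ recorded in the paper's proof. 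You do note in the following paragraph that special care is needed with the $\log z$ factor on $\Delta_3\setminus\overline{\Delta_1}$, but you attribute auxiliary phase jumps there to both $F_2$ and $F_4$; in fact only $F_2$ has the interior-cut jump, while $F_4$'s jumps all sit on $\partial\mathcal R_4$. Also, the constants $\mathfrak f_1,\mathfrak f_3$ are not chosen after the fact to remove a column mismatch; they are \emph{defined} as the nonzero constants appearing in the leading terms of the $F_k$ expansions near $t\to\alpha^2,\beta^2$, and the left factor $\diag(\mathfrak f_1,\mathfrak f_1,\mathfrak f_3,\mathfrak f_3)$ then cancels them. These are small corrections; the structure of the argument matches the paper.
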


\begin{proof}
By the definition \eqref{def:twologs}, it is easily seen that the maps
$$
\mathcal R_k \ni z \mapsto \log (t_k(z)-\alpha^2), \; \log(t_k(z)-\beta^2),
$$
satisfy the following boundary relations:
\begin{itemize}
  \item if $x \in \Delta_1$,
\begin{align*}
(\log (t_1(z)-\alpha^2))_+ -(\log(t_2(z)-\alpha^2))_- & =-2\pi i,\\
(\log (t_1(z)-\alpha^2))_- -(\log(t_2(z)-\alpha^2))_+ & =0,
\end{align*}
  \item if $x\in \Delta_3$,
\begin{align*}
(\log (t_3(z)-\beta^2))_+ -(\log(t_4(z)-\beta^2))_-&=0,\\
(\log (t_3(z)-\beta^2))_- -(\log(t_4(z)-\beta^2))_+&=2\pi i,
\end{align*}
  \item if $x\in \Delta_3 \setminus \overline {\Delta_1}$,
\begin{align*}
 (\log(t_2(z)-\alpha^2))_+-(\log(t_2(z)-\alpha^2))_-=-2\pi i,
\end{align*}
\end{itemize}
and are otherwise analytic in their domains of definition. As a consequence, the function
\begin{equation*}
F: \mathcal{R} \to \mathbb{C}, \qquad  \restr{F}{\mathcal R_k}=F_{k}, \qquad k=1,2,3,4,
\end{equation*}
with $F_k$ given in \eqref{def:Fk} extends to a meromorphic function on $\mathcal R$, and it is easy to check that the function $G$ defined in \eqref{eq:G} satisfies the jump condition \eqref{eq:jumps_global_parametrix}.

Finally, in virtue of the expansions in \eqref{eq:asymptotics_rational_parametrization}, we have that, as $z\to \infty$,
\begin{align*}
F_{1}(z) & = \frac{\mathfrak c_{-\kappa}}{\mathfrak f_1} z^{\frac{\kappa}{2}}(1+\Boh(z^{-1/2})), \qquad &&F_{2}(z)  = \frac{1}{\mathfrak f_1} z^{\frac{\kappa}{2}}(1+\Boh(z^{-1/2})) \\
F_{3}(z) & =\frac{1}{\mathfrak f_3} z^{\frac{\kappa-\nu}{2}}(1+\Boh(z^{-1/2})), \qquad && F_{4}(z)  =\frac{\mathfrak c_{\nu-\kappa}}{\mathfrak f_3} z^{\frac{\kappa-\nu}{2}}(1+\Boh(z^{-1/2})),
\end{align*}
for some non-zero constants $\mathfrak f_1$, $\mathfrak f_3$, which implies the large $z$ asymptotics stated in \eqref{eq:asympt_behavior_global_param}.

This completes the proof of Proposition \ref{prop:GlobCons}.
\end{proof}

By Proposition \ref{prop:zt}, it is also readily seen that
\begin{equation*}
G(z)=\left\{
         \begin{array}{ll}
           \Boh((z-p)^{-\frac14}), & \hbox{$z \to p$,} \\
           \Boh((z+q)^{-\frac14}), & \hbox{$z \to -q$.}
         \end{array}
       \right.
\end{equation*}
The local behavior of $G$ near the origin, however, is crucial in our further analysis. By setting
\begin{equation}\label{def:matrix_Upm}
\mathcal U^+ =
\widehat{\mathcal U}
\diag(\omega^{\frac{\kappa+\nu}{2}\sigma_3},1),
\qquad
\mathcal U^-=
\mathcal U^+ \diag\left(
\begin{pmatrix}
0 & -1 \\
1 & 0
\end{pmatrix},
1
\right),
\end{equation}
where
\begin{equation}\label{def:calU}
\widehat{\mathcal U }= \begin{pmatrix}
\omega^- & \omega^+ & 1 \\
-1& -1 & -1 \\
\omega^+ & \omega^- & 1
\end{pmatrix}
\end{equation}
with $\omega=e^{2\pi i /3}$, $\omega^\pm = \omega^{\pm 1}$, we have the following proposition regarding the asymptotics of $G$ near the origin.

\begin{prop}\label{lem:prefactor_global_param}
The matrix
\begin{equation}\label{def:matrix_G_hat}
\widehat G(z):= G(z)\diag\left(1, z^{\frac{A}{3}}(\mathcal U^\pm)^{-1}z^{\frac{B}{3}}\right),\qquad \pm \im z>0,
\end{equation}
is analytic in a neighborhood of $z=0$, and has an analytic inverse as well, where
\begin{align}\label{def:matrix_A}
A&=A(\nu,\kappa)=\diag (\nu+\kappa,\nu-2\kappa,\kappa-2\nu),
\\
B&=\diag(1,0,-1),\label{def:matrix_B}
\end{align}
and the matrices $\mathcal U^\pm$ are defined through \eqref{def:matrix_Upm}--\eqref{def:calU}.
\end{prop}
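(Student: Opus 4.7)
The plan is to verify directly that the right multiplier in \eqref{def:matrix_G_hat} exactly compensates for the branch behavior of $G$ at the triple branch point $z=0$. Since the first column of $G$ is attached to the unramified sheet $\mathcal R_1$, where $t_1(0)=0$ is a regular point of the rational parametrization (see Proposition \ref{prop:zt}), direct inspection of \eqref{def:G1} and \eqref{def:Fk} shows this column extends analytically to a neighborhood of $z=0$ with nonzero value there, accounting for the leading $\diag(1,\ldots)$ factor. So the real task is the local analysis of the last three columns of $G$.

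First I would expand each of the inverse branches $t_j(z)$, $j=2,3,4$. From \eqref{eq:asymptotics_rational_parametrization},
\begin{equation*}
 t_j(z)=(\beta^2-\alpha^2)^{2/3}\,\zeta_j\, z^{-1/3}\bigl(1+\Boh(z^{1/3})\bigr),\qquad \pm\im z>0,
\end{equation*}
where $(\zeta_2,\zeta_3,\zeta_4)$ is a permutation of the cube roots of unity $\{1,\omega,\omega^{-1}\}$ dictated by how the arcs $\gamma_k^\pm$ emanate from $t=\infty$, read off Figure \ref{fig:uniformization} and Table \ref{table:branchpoints}. Plugging these into the rational formulas \eqref{def:G1}--\eqref{def:G4} gives, for each of the second through fourth columns of $G_0$, a Laurent series in $z^{1/3}$ whose coefficients are polynomials in $\zeta_j$. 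From \eqref{def:twologs}--\eqref{def:Fk} one also gets
\begin{equation*}
 \log(t_j(z)-a)=-\tfrac13\log z+\log\zeta_j+\tfrac{2}{3}\log(\beta^2-\alpha^2)+\Boh(z^{1/3}),\qquad a\in\{\alpha^2,\beta^2\},
\end{equation*}
so $F_j(z)=\mathrm{const}\cdot\zeta_j^{\nu-2\kappa}\, z^{(2\kappa-\nu)/3}(1+\Boh(z^{1/3}))$, and the extra factors $e^{-\kappa\log z}$ and $e^{(\nu-\kappa)\log z}$ in \eqref{eq:G} introduce further integer powers of $z$. Collecting the three columns produces a factorization
\begin{equation*}
 G(z)\,\diag(0,I_3)\cdot\diag(0,z^{-B/3})=V(z)\,z^{A/3}\,\widetilde{\mathcal U}^{\pm},\qquad \pm\im z>0,
\end{equation*}
where $V(z)$ is analytic and invertible at $z=0$, and $\widetilde{\mathcal U}^{\pm}$ is the Vandermonde matrix of the ordered triple $(\zeta_2,\zeta_3,\zeta_4)$ multiplied by an $\omega^{(\kappa+\nu)/2}$-twist coming from the $\zeta_j^{\nu-2\kappa}$-contribution of $F_j$. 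Identifying this triple with $(\omega^{-1},\omega,1)$ upstairs reproduces the matrix $\widehat{\mathcal U}$ in \eqref{def:calU}, and hence $\widetilde{\mathcal U}^{+}=\mathcal U^{+}$ as defined in \eqref{def:matrix_Upm}. This shows $\widehat G(z)=V(z)$ extends analytically and invertibly to a one-sided neighborhood of the origin.

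Second, I would verify that the upper- and lower-half-plane extensions glue to a genuine neighborhood of $z=0$. Crossing $(-\delta,0)$ involves both the prescribed jump of $G$ in \eqref{eq:jumps_global_parametrix}, whose last $2\times 2$ block is antidiagonal with entries $\pm|x|^{\nu-\kappa},\mp|x|^{\kappa-\nu}$, and the monodromies $z_+^{A/3}=z_-^{A/3}e^{2\pi i A/3}$, $z_+^{B/3}=z_-^{B/3}e^{2\pi i B/3}$. The definition $\mathcal U^{-}=\mathcal U^{+}\diag\bigl(\bigl(\begin{smallmatrix}0&-1\\1&0\end{smallmatrix}\bigr),1\bigr)$ is engineered so that the contribution from the swap of the third and fourth sheets lines up with the monodromy, while the $\omega^{(\kappa+\nu)/2\sigma_3}$-twist in $\mathcal U^{\pm}$ matches the $|x|^{\kappa-\nu}$-jump; the $\Delta_1$ portion works analogously with the already-established analyticity of the first column. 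A direct substitution of $J_G$ into the definition of $\widehat G$ confirms continuity across $(-\delta,0)\setminus\{0\}$, and the removability of the remaining isolated singularity at $0$ then follows because each entry of $\widehat G$ is bounded there (by construction the powers of $z^{1/3}$ and $\log z$ in \eqref{eq:asymptotics_global_parametrix_zero_zero_param} and in $F_j$ have been completely absorbed). Invertibility of $\widehat G(0)$ reduces to invertibility of $V(0)$, which holds because $\det\widehat{\mathcal U}$ is a nonzero Vandermonde in $\{1,\omega,\omega^{-1}\}$ and because $G_0$ provides a non-degenerate lift to the Riemann surface $\mathcal R$.

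The main obstacle will be the bookkeeping of the precise assignment $j\mapsto\zeta_j$ and of the $\omega$-twists arising from $\log\zeta_j$ in $F_j$, so that the column combinations line up exactly with the rows of $\widehat{\mathcal U}$ and the $\pm$ distinction of $\mathcal U^{\pm}$ comes out correctly. Both require a careful traversal of the arcs $\gamma_k^{\pm}$ as $z$ winds around $0$, using the geometry of Section \ref{sec:rational_parametrization}; once the permutation is pinned down, the rest is a routine (if tedious) power-series matching.
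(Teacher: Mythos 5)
Your plan would work, but it takes a genuinely different and considerably more laborious route than the paper's. You propose a bottom-up reconstruction: expand the inverse branches $t_j(z)\sim(\beta^2-\alpha^2)^{2/3}\zeta_j z^{-1/3}$ ($\zeta_j^3=1$) near the triple branch point, push this through \eqref{def:G1}--\eqref{def:G4} and \eqref{def:Fk}, and derive a Vandermonde-in-cube-roots-of-unity matrix that you then identify with $\mathcal U^\pm$. The paper instead takes $\mathcal U^\pm$ as given (it comes from the Meijer-G parametrix, see \eqref{eq:asymptotics_meijer_parametrix_infinity}), shows by a direct substitution of $J_G$ that $\widehat G$ has trivial jump across $(-\delta,\delta)\setminus\{0\}$ --- the key computation being $\mathcal U^-\diag\bigl(\bigl(\begin{smallmatrix}0&1\\-1&0\end{smallmatrix}\bigr),1\bigr)(\mathcal U^+)^{-1}=I_3$ on $(0,p)$ and an analogous identity on $(-q,0)$ using $\widehat{\mathcal U}^{-1}$ explicitly --- and then uses the cheap bound $\widehat G(z)=\Boh(z^{-2/3})$ (coming from $G_0=\Boh(z^{-1/3})$ and $\widehat F=\Boh(1)$) plus $-2/3>-1$ to conclude removability. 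Your version buys the insight of \emph{why} $\mathcal U^\pm$ has its particular form, which the paper does not explain, at the cost of the permutation/twist bookkeeping you explicitly flag as unfinished (the assignment $j\mapsto\zeta_j$ from the geometry of Figure~\ref{fig:uniformization}, the $\zeta_j^{\nu-2\kappa}$ twist from $F_j$, and the $\omega^{\pm\frac{\kappa+\nu}{2}}$ factor). Since that bookkeeping is where all the content lies, your proposal as written is a credible plan rather than a proof; it also contains a small sign/dimension slip in the displayed factorization (you want $G=V(z)\diag\bigl(1,z^{-\frac{B}{3}}\mathcal U^\pm z^{-\frac{A}{3}}\bigr)$, not $z^{+A/3}$ on the right after moving things around, and $V$ should be $4\times4$), but that is notational, not conceptual. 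The invertibility argument (nonvanishing $\det\widehat{\mathcal U}$) is in substance the same as the paper's observation that $\det G$, $\det\mathcal U^\pm$, $\det z^{A/3}$, $\det z^{B/3}$ are all constant and nonzero.
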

\begin{proof}

It is clear that $\widehat G$ defined in \eqref{def:matrix_G_hat} is analytic in the upper and lower half planes.
We now compute its jumps across the real axis in a neighborhood of the origin.

For $0<x<p$, it follows from \eqref{eq:jumps_global_parametrix} and \eqref{def:matrix_Upm} that
\begin{align*}
(\widehat G_-(x))^{-1}{\widehat G_+(x)} & =\diag \left(1,x^{-\frac{B}{3}}  \mathcal U^{-}x^{-\frac{A}{3}}\right) J_G(x)\diag\left(1,x^{\frac{A}{3}}(\mathcal U^+)^{-1}x^{\frac{B}{3}}\right) \\
& = \diag\left( 1, x^{-\frac{B}{3}} \mathcal U^-
\diag\left(
\begin{pmatrix}
0 & 1 \\
-1 & 0
\end{pmatrix},
1
 \right)
(\mathcal U^{+})^{-1}x^{\frac{B}{3}}\right) \\
& = \diag\left(1, x^{-\frac{B}{3}}\mathcal U^+ (\mathcal U^{+})^{-1}x^{\frac{B}{3}}\right)=I_4.
\end{align*}
Similarly, if $-q<x<0$, we use again \eqref{eq:jumps_global_parametrix} and compute
\begin{align*}
(\widehat G_-(x))^{-1}{\widehat G_+(x)} & =\diag\left(1,x_-^{-\frac{B}{3}} \mathcal U^{-}x^{-\frac{A}{3}}_-\right)J_G(x)\diag\left(1, x^{\frac{A}{3}}_+(\mathcal U^+)^{-1}x^{\frac{B}{3}}_+\right)
\\
& = \diag \left(1, x^{-\frac{B}{3}}_- \mathcal U^{-}
\diag\left(\mathfrak c_{\frac{2(\kappa+\nu)}{3}},
\begin{pmatrix}
0 & -\mathfrak c_{-\frac{\kappa+\nu}{3}} \\
\mathfrak c_{-\frac{\kappa+\nu}{3}} & 0
\end{pmatrix}
 \right)
( \mathcal U^{+})^{-1}x^{\frac{B}{3}}_+ \right)
\\
& = \diag \left(1, x^{-\frac{B}{3}}_- \widehat{\mathcal U }
\begin{pmatrix}
0 & 0 & 1 \\
1 & 0 & 0 \\
0 & 1 & 0
\end{pmatrix}
\widehat{\mathcal U }^{-1}
x^{\frac{B}{3}}_+ \right).
\end{align*}
Note that
$$
\widehat{\mathcal U}^{-1}=
\frac{1}{3}
\begin{pmatrix}
\omega^+ & -1 & \omega^- \\
\omega^- & -1 & \omega^+ \\
1 & -1 & 1
\end{pmatrix},
$$
a straightforward calculation shows that
\begin{align*}
(\widehat G_-(x))^{-1}{\widehat G_+(x)}
& = \diag \left(1, x^{-\frac{B}{3}}_-
\diag\left(\omega^-, 1, \omega^+ \right)
x^{\frac{B}{3}}_+ \right)=I_4, \qquad -q<x<0.
\end{align*}
Hence, we can conclude that $G$ is analytic in a neighborhood of the origin with $z=0$ being an isolated singularity.

We next show that $z=0$ is a removable singularity. Note that, as $z \to 0$,
\begin{equation*}
F_1(z) =\Boh(1), \qquad
F_{2,3,4}(z) = \Boh(z^{\frac{2\kappa -\nu}{3}}).
\end{equation*}
Thus,
$$
\diag\left(\mathfrak c_{\kappa}F_1(z),F_2(z)e^{-\kappa \log z},F_3(z),\mathfrak c_{\kappa-\nu}F_4(z)e^{(\nu-\kappa)\log z}\right) = \widehat F(z)\diag(1,z^{-\frac{A}{3}}),
$$
where $\widehat F$ is a diagonal matrix satisfying
$$\widehat F(z)=F_0 +\Boh(z^{\frac{1}{3}}), \qquad z\to 0, $$ for some non-singular constant matrix $F_0$. This, together with \eqref{eq:G} and \eqref{eq:asymptotics_global_parametrix_zero_zero_param}, implies that
\begin{equation*}
G(z)\diag\left(1, z^{\frac{A}{3}}\right)=\diag(\mathfrak f_1 ,\mathfrak f_1,\mathfrak f_3,\mathfrak f_3) G_0(z)\widehat F(z)=\Boh(z^{-\frac{1}{3}}),\qquad z\to 0.
\end{equation*}
By \eqref{def:matrix_G_hat}, we further get that
$$\widehat G(z)=\Boh(z^{-\frac{2}{3}}), \qquad z\to 0,$$ so $z=0$ must be a removable singularity, as claimed.

Finally, the existence of the analytic inverse follows immediately because the determinants of $G$, $\mathcal U^{\pm}$, $z^{\frac{A}{3}}$ and $z^{\frac{B}{3}}$ are all constant and non-zero, so the same is true for $\det \widehat G$.

This completes the proof of Proposition \ref{lem:prefactor_global_param}.
\end{proof}

\section{Local parametrices near \texorpdfstring{$p$}{p} and \texorpdfstring{$-q$}{-q}}\label{section:airy_parametrices}
From our definition of $\phi$-functions given in \eqref{def:phi_functions1}--\eqref{def:phi_functions3}, it is readily seen that
\begin{align*}
\phi_2(z)&=C_2(z-p)^{\frac32}(1+\Boh(z-p)),  \quad~~~ \; \; \;z\to p,
\\
\phi_1(z)&=-C_1(z+q)^{\frac32}(1+\Boh(z+p)),  \qquad  z\to -q,
\end{align*}
for some positive constants $C_1$ and $C_2$. Hence, by setting $D_p(\delta)$ and $D_{-q}(\delta)$ with $\delta>0$ sufficiently small to be two small disks around $p$ and $-q$, we could construct local parametrices $L_{p}$ and $L_{-q}$ in each of the disk with the aid of the standard $2\times 2$ Airy parametrix \cite{DKMVZ99}. Since this construction is very well-known, we omit the details but mention that as one of the outcomes we get the matching
\begin{equation}\label{eq:matchingp}
L_{j}(z)=(I_4+\Boh(n^{-1}))G(z), \qquad n\to\infty,
\end{equation}
uniformly for $z\in \partial D_{j}(\delta)$, $j=p,-q$.

\section{Local parametrix near the origin}

In this section, we will construct the local parametrix near the origin, which is somewhat involved and performed in several steps. The main difficulty lies in the fact one cannot expect a nice matching like \eqref{eq:matchingp} immediately in this case, and this phenomenon is quite common in the asymptotic analysis of higher order RH problem; cf. \cite{BB15,BK07,KM19}. Here, we follow a new and novel technique recently developed by Kuijlaars and Molag \cite{KM19}, which requires to construct a matching condition on two circles.

Let $D(\delta)$ and $D(r)$ be disks centered at the origin with radii $\delta>r>0$. We will take $D(\delta)$ to be small but fixed and $D(r)=D(r_n)$ to be shrinking with $n$. A more precise requirement on $r$ will be given later.

On account of the second inequality in \eqref{eq:inequalities_phi_off_support} and the fact that $\kappa \geq 0$, we could simply ignore the $(2,1)$-entry of $J_S$ on $(-\delta, 0)$ for large $n$ and this leads us to consider the following RH problem.
\begin{rhp}\label{rhp:local_parametrix}
We look for a $4\times 4$ matrix-valued function $L_0$ with the following properties:
\begin{enumerate}[\rm (1)]
\item $L_0$ is defined and analytic in $D(\delta) \setminus \left((\Gamma_S\cap D(r))\cup (-\delta,\delta)\right)$, where the contour $\Gamma_S$ is defined in \eqref{def:gammaS}.
\item $L_0$ satisfies the jump condition
\begin{equation*}
L_{0,+}(z)=L_{0,-}(z)\begin{cases}
                         J_G(z), & z\in (-\delta,\delta)\setminus [-r,r],
                         \\
                         J_{L_0}(z), & z\in D(r) \cap \Gamma_S,
                       \end{cases}
\end{equation*}
where $J_G$ is defined in \eqref{eq:jumps_global_parametrix}, and
\begin{equation*}
J_{L_0}(z)=
\begin{cases}
J_S(z), & z\in (D(r)\cap \Gamma_S)\setminus (-r,0),
\\
\diag\left(I_2,\begin{pmatrix}
0 & -|z|^{\nu-\kappa} \\
|z|^{\kappa-\nu} & 0 \\
\end{pmatrix}\right),
&
z\in (-r,0),
\end{cases}
\end{equation*}
with $J_S$ given in \eqref{eq:jumps_S}.

\item As $z\to 0$, $L_0$ has at worse a power log singularity.

\item As $n\to\infty$, we have the matching conditions
\begin{equation}\label{eq:local_param_decay_boundary}
L_0(z)=(I_4+\Boh(n^{-1}))G(z), \qquad z\in \partial D(\delta) \setminus (-\delta,\delta),
\end{equation}
where $G$ is the global parametrix \eqref{eq:G}, and
\begin{equation}\label{eq:local_param_decay_shrinking_boundary}
L_{0,+}(z)=(I_4+\Boh(n^{-1}))L_{0,-}(z),\qquad z\in \partial D(r)\setminus \Gamma_S,
\end{equation}
where the orientation of the circle is taken in a counter-clockwise manner and the error terms in \eqref{eq:local_param_decay_boundary} and \eqref{eq:local_param_decay_shrinking_boundary} are uniform in $z$.
\end{enumerate}
\end{rhp}

In previous works in the literature, only the matching \eqref{eq:local_param_decay_boundary} is present, with possibly a shrinking radius $\delta=\delta_n$. In these scenarios, one often has to make several post-corrections to the matching, as the initial error term is not of the appropriate order. As mentioned earlier, in \cite{KM19} Kuijlaars and Molag explored the introduction of this new matching condition \eqref{eq:local_param_decay_shrinking_boundary}, which allows to keep $\delta$ fixed but make $r=r_n$ shrinking. It turns out that this double-matching \eqref{eq:local_param_decay_boundary}--\eqref{eq:local_param_decay_shrinking_boundary} makes the coming calculations more systematic, and this will be the approach we follow. We next present some preliminary work before doing that.

As the first step to solve the RH Problem \ref{rhp:local_parametrix}, we remove all the $\phi$-functions from the jumps of $L_0$ by defining

\begin{equation}\label{eq:from_L0_to_P}
P(z)=L_0(z)\diag\left(1,e^{-n(\lambda_2(z)+c)},e^{-n(\lambda_3(z)+c)},e^{-n(\lambda_4(z)+c)}\right), \quad z\in D(r)\setminus \Gamma_S,
\end{equation}
where
\begin{equation}\label{def:constantc}
c=\int_0^p \xi_{2,+}(s)\ud s=2\pi i + \int_0^p \xi_{2,-}(s)\ud s=\int_0^p \xi_{3,-}(s)\ud s=2\pi i + \int_0^p \xi_{3,+}(s)\ud s,
\end{equation}
and the $\lambda$-functions  are defined in \eqref{def:lambda_functions2}--\eqref{def:lambda_functions4}.

An appropriate, but straightforward, combination of \eqref{def:lambda_functions2}--\eqref{def:lambda_functions4}, \eqref{def:phi_functions2}--\eqref{def:phi_functions3} and Proposition \ref{prop:lamdaphi} leads us to consider the following RH Problem that $P$ must satisfy.

\begin{rhp}\label{rhp:parametrix_P}
The function $P$ defined in \eqref{eq:from_L0_to_P} has the following properties:
\begin{enumerate}[\rm (1)]
\item $P$ is defined and analytic on $D(r)\setminus \Gamma_S$.
\item $P$ satisfies the jump condition
\begin{equation*}
P_{+}(z)=P_{-}(z)J_P(z), \qquad z\in \Gamma_S \cap D(r),
\end{equation*}
where
$$
J_P(z)=
\begin{cases}
I_4+ z^{-\kappa}E_{32},
&
z\in D(r)\cap \partial \mathcal L_2^\pm, \\
I_4- \mathfrak c_{\nu-\kappa} z^{\kappa-\nu}E_{43},
&
z\in D(r)\cap \partial \mathcal L_3^\pm, \\
\diag\left( 1,
\begin{pmatrix}
0 & z^{\kappa} \\
-z^{-\kappa} & 0
\end{pmatrix},
1\right),
&
z\in (0,r),\\
\diag\left(I_2,\begin{pmatrix}
0 & -|z|^{\nu-\kappa} \\
|z|^{\kappa-\nu} & 0 \\
\end{pmatrix}\right),
&
z\in (-r,0).
\end{cases}
$$

\item As $z\to 0$, $P$ has at worse a power log singularity.

\end{enumerate}
\end{rhp}
Note that we do not pose any asymptotic behavior of $P$ on $\partial D(r)$. We will give an explicit solution to the above RH problem, and, after some further manipulations, modify $P$ in such a way that, at the end of the day, the corresponding matrix $L_0$ solves the RH problem \ref{rhp:local_parametrix}.

For later use, we introduce the functions $\widehat\lambda_k^\pm(z)$ defined by
\begin{equation}\label{def:lambda_hat}
\widehat\lambda_k^\pm(z)=\int_0^z\xi_{k}(s) \ud s, \qquad  \pm \im z>0, \qquad k=2,3,4,
\end{equation}
where the path for $\widehat\lambda_k^+(z)$ ($\widehat\lambda_k^-(z)$) is contained in the upper (lower) half plane. It is easily seen from \eqref{def:lambda_functions2}--\eqref{def:lambda_functions4}, \eqref{def:constantc} and \eqref{def:lambda_hat} that
\begin{equation}\label{eq:lambdaandhatlambda}
e^{-n(\lambda_k(z)+c)}=e^{-n\widehat\lambda_k^{\pm}(z)}, \qquad \pm \im z>0, \qquad k=2,3,4.
\end{equation}
The asymptotic behaviors of $\widehat\lambda_k^\pm$ near the origin are collected in the following proposition.
\begin{prop}\label{prop:hatlambda}
There exist analytic functions $f_4$, $g_4$ and $h_4$ in a neighborhood of $z=0$ so that for $\pm \im z>0$
\begin{align}
\widehat\lambda_2^\pm(z)&=\omega^{\pm}z^{1/3}f_4(z)+\omega^{\mp}z^{2/3}g_4(z)+zh_4(z),
\label{eq:hatlambda1}
\\
\widehat\lambda_3^\pm(z)&=\omega^{\mp}z^{1/3}f_4(z)+\omega^{\pm}z^{2/3}g_4(z)+zh_4(z),
\label{eq:hatlambda2}
\\
\widehat\lambda_4^\pm(z)&=z^{1/3}f_4(z)+z^{2/3}g_4(z)+zh_4(z).
\label{eq:hatlambda3}
\end{align}
Furthermore, we have
\begin{equation}\label{eq:conformal_map_norming}
f_4(0)=3(\beta^2-\alpha^2)^{1/3}>0.
\end{equation}
\end{prop}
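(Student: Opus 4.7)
The plan is to derive \eqref{eq:hatlambda1}--\eqref{eq:hatlambda3} by Puiseux-expanding each $\xi_k$ at the branch point $z=0$---where the sheets $\mathcal{R}_2,\mathcal{R}_3,\mathcal{R}_4$ meet with ramification index three---and then integrating term by term, reorganizing the result into three analytic functions according to powers of $z^{1/3}$.

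The Puiseux data come from the rational parametrization $H$ of Section~\ref{sec:rational_parametrization}. By Table~\ref{table:branchpoints} the branch point $z=0$ corresponds to $t=\infty$, and \eqref{eq:asymptotics_rational_parametrization} gives $z(t) = (\beta^2-\alpha^2)^2/t^3 \cdot (1+\Boh(t^{-1}))$ as $t\to\infty$. Setting $w=z^{1/3}$ (principal cube root, cut along $\mathbb{R}_-$) thus provides a local uniformizing parameter on $\mathcal{R}$ at the branch point, with $1/t$ analytic in $w$ and vanishing at $w=0$. Because $\xi=h(t)$ is polynomial in $t$ of degree two, the quotient $h(t)/t^2=(1-(\alpha^2+\beta^2)/t+\alpha^2\beta^2/t^2)/(\beta^2-\alpha^2)$ is polynomial in $1/t$, hence analytic in $w$. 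Consequently $\xi = G(w)/w^2$ with $G$ analytic near $w=0$ and
\[
G(0) \;=\; \lim_{w\to 0} w^2\xi \;=\; (\beta^2-\alpha^2)^{1/3} \;>\; 0,
\]
where the value is obtained from $w^2 t^2 \to (\beta^2-\alpha^2)^{4/3}$ combined with $h(t)/t^2 \to 1/(\beta^2-\alpha^2)$.

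Next I identify which cube root plays the role of $w$ on each sheet and in each half-plane. For a small $z$, the three preimages on $\mathcal R$ are $z^{1/3}$, $\omega z^{1/3}$, $\omega^2 z^{1/3}$ with $\omega=e^{2\pi i/3}$. Since $\xi_4$ is analytic on $\mathcal{R}_4=\mathbb{C}\setminus\mathbb{R}_-$, the branch on $\mathcal{R}_4$ must be $w=z^{1/3}$, so the paths defining $\widehat\lambda_4^+$ and $\widehat\lambda_4^-$ yield the same value. The assignments on $\mathcal{R}_2$ and $\mathcal{R}_3$ are then forced by the jump relations of Proposition~\ref{prop:lamdaphi} ($\xi_{2,\pm}=\xi_{3,\mp}$ on $(0,p)$ and $\xi_{3,\mp}=\xi_{4,\pm}$ on $\mathbb{R}_-$) together with $(z^{1/3})_\pm = e^{\pm i\pi/3}|z|^{1/3}$ for $z<0$. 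A direct bookkeeping check yields
\[
w\big|_{\mathcal{R}_2,\pm} \;=\; \omega^{\pm 1}z^{1/3}, \qquad w\big|_{\mathcal{R}_3,\pm} \;=\; \omega^{\mp 1}z^{1/3}, \qquad w\big|_{\mathcal{R}_4,\pm} \;=\; z^{1/3}.
\]

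The integration is then routine. Along any path from $0$ to $z$ in the appropriate half-plane, substituting $\zeta=v^3$ (so $d\zeta = 3v^2\,dv$) with $v$ running from $0$ to $\omega^{j_k^\pm} z^{1/3}$ gives $\widehat\lambda_k^\pm(z) = 3\int_0^{\omega^{j_k^\pm} z^{1/3}} G(v)\,dv$. Expanding $G(v)=\sum_{m\geq 0} g_m v^m$ and integrating term by term produces
\[
\widehat\lambda_k^\pm(z) \;=\; \sum_{m\geq 0}\frac{3 g_m}{m+1}\,\omega^{(m+1)j_k^\pm}\,z^{(m+1)/3},
\]
and grouping according to the residue of $m+1$ modulo $3$ yields three analytic series
\[
f_4(z) = 3\sum_{l\geq 0}\frac{g_{3l}}{3l+1}z^l,\qquad g_4(z) = 3\sum_{l\geq 0}\frac{g_{3l+1}}{3l+2}z^l,\qquad h_4(z) = \sum_{l\geq 0}\frac{g_{3l+2}}{l+1}z^l,
\]
independent of the branch, which combined with the assignments above deliver \eqref{eq:hatlambda1}--\eqref{eq:hatlambda3} and the normalization $f_4(0)=3g_0=3(\beta^2-\alpha^2)^{1/3}$ of \eqref{eq:conformal_map_norming}. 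The main obstacle I anticipate is the bookkeeping in the third paragraph---the precise matching of each sheet and each half-plane to the correct power of $\omega$---which must align exactly with the signs in the claimed formulas; everything else is a standard Puiseux manipulation.
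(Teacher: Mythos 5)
Your proposal is correct, and it takes a genuinely different route from the paper. The paper never invokes the explicit uniformization $H$; instead it directly posits Puiseux expansions $\widehat\lambda^\pm_k(z)=z^{1/3}f^\pm_k(z)+z^{2/3}g^\pm_k(z)+zh^\pm_k(z)$ (justified abstractly by the ramification order $3$ at $z=0$ and the spectral curve), and then extracts the relations $f^\pm_2=\omega^\pm f_4$, $g^\pm_2=\omega^\mp g_4$, etc., purely from the jump conditions of the $\xi$-functions across $\R_+$ and $\R_-$. You instead uniformize $\xi$ as an explicit analytic function $G(w)/w^2$ of the local parameter $w=z^{1/3}$ via the rational map $H$ of Section~\ref{sec:rational_parametrization} and integrate term by term; the branch assignment you need (the ``bookkeeping'' you flag) is precisely the reincarnation of the paper's relation-deriving step. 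Your assignments are in fact correct: checking $\xi_{3,\mp}=\xi_{4,\pm}$ on $\R_-$ and $\xi_{2,\pm}=\xi_{3,\mp}$ on $(0,p)$ confirms $w|_{\mathcal R_2,\pm}=\omega^{\pm}z^{1/3}$, $w|_{\mathcal R_3,\pm}=\omega^{\mp}z^{1/3}$, $w|_{\mathcal R_4,\pm}=z^{1/3}$, and selecting the principal branch on $\mathcal R_4$ is forced by $\xi_4>0$ on $(0,p)$. One gain of your route is that it makes the three series $f_4,g_4,h_4$ entirely explicit in terms of the Taylor coefficients of $G$, so that $f_4(0)=3G(0)=3(\beta^2-\alpha^2)^{1/3}$ comes out cleanly; incidentally, the paper's proof contains a factor-of-$3$ slip when it writes $\xi_4(z)\sim 3(\beta^2-\alpha^2)^{1/3}z^{-2/3}$, whereas your computation gives the correct $\xi_4(z)\sim(\beta^2-\alpha^2)^{1/3}z^{-2/3}$ (the factor of $3$ arising only upon integration). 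What the paper's approach buys is that it does not rely on the genus-zero parametrization at all, only on the abstract Puiseux structure, so it would survive if the spectral curve were of higher genus; your approach is tied to having $H$ in hand. The only real shortcoming in your write-up is that you assert the branch assignments without verifying them against Proposition~\ref{prop:lamdaphi}, and since the entire content of the proposition lives in exactly those signs, you should spell out at least one of the checks.
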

\begin{proof}
From the local behavior of the $\xi$-functions near $z=0$ (which can be derived from \eqref{eq:local_behavior_xi}, \eqref{eq:delta2} and the spectral curve \eqref{eq:spectral_curve}), it is readily seen that, as $z \to 0$,
\begin{equation}\label{eq:eqhatlambdaexpan}
\widehat \lambda^{\pm}_k(z)=z^{1/3}f_{k}^{\pm}(z)+z^{2/3}g_{k}^\pm(z)+zh_k^\pm (z), \qquad \pm \im z>0, \qquad k=2,3,4,
\end{equation}
where $f_k^\pm$, $g_k^\pm$ and $h_k^\pm$ are analytic in a neighborhood of $z=0$ and all the roots are taken
the principal branches with cuts along the negative axis. The jump relations for the $\xi$-functions across the positive axis imply in particular that
\begin{align*}
& f_2^\pm(z)=f_3^\mp(z), &&  g_2^\pm(z)=g_3^\mp(z), &&  h_2^\pm(z)=h_3^\mp(z),
\\
& f_4^+(z)=f_4^-:=f_4(z),&& g_4^+=g_4^-:=g_4(z), && h_4^+(z)=h_4^-(z):=h_4(z),
\end{align*}
while the jump conditions across the negative axis give that
\begin{align*}
& f_2^-(z)=\omega f_2^+(z), &&  g_2^-(z)=\omega^- g_2^+(z), &&  h_2^-(z)=h_2^+(z), \\
& f_3^\pm(z)=\omega^\mp f_4(z), && g_3^\pm (z)= \omega^\pm g_4(z), && h_3^\pm(z)=h_4(z).
\end{align*}
As a consequence, we obtain the relations
$$
f_2^\pm(z)=f_3^\mp(z)=\omega^\pm f_4(z), \quad g_2^\pm(z)=g_3^\mp(z)=\omega^\mp g_4(z), \quad h_2^\pm(z)=h_3^\pm(z) = h_4(z).
$$
Inserting the above formulas into \eqref{eq:eqhatlambdaexpan} gives us \eqref{eq:hatlambda1}--\eqref{eq:hatlambda3}.

Finally, the fact that $\xi_4>0$ on the positive axis (see \eqref{eq:definition_xi_functions}) allows us to conclude from \eqref{eq:spectral_curve} that
$$\xi_4(z) \sim 3(\beta^2-\alpha^2)^{1/3}z^{-2/3}, \qquad  z\to 0,$$
which in turn implies \eqref{eq:conformal_map_norming}.

This completes the proof of Proposition \ref{prop:hatlambda}.
\end{proof}

The explicit construction of $P$ is based on the bare Meijer-G parametrix which is described in the next section.

\subsection{The Meijer-G parametrix of Bertola-Bothner}

The model RH problem we need to solve for $P$ was introduced by Bertola and Bothner in the context of a model of several coupled positive-definite matrices \cite{BB15}, which is called `bare Meijer-G parametrix for $p$-chain', $p=2,3,\ldots$, therein. The one that is relevant to the present work corresponds to the case $p=2$ and reads as follows.\footnote{
For convenience, the correspondence between our notations and those used in \cite{BB15} is listed below:
$$
a_1=\kappa,\quad a_2=\nu-\kappa, \quad A_1=-\nu-\kappa, \quad A_2=2\kappa-\nu,\quad A_3=2\nu -\kappa,
$$
and
$$
(a_{j,k})_{j,k=1,2}=
\begin{pmatrix}
\kappa & \nu \\
0 & \nu-\kappa
\end{pmatrix},
\qquad
\Omega_{\pm} = \diag(\omega^\pm,\omega^\mp,1).
$$
Moreover, in \cite{BB15}, $\omega=\omega_{BB}=e^{\pi i/3}$, so $\omega_{BB}^2=\omega$ and the contours are
$$
\mathfrak r_0=\Gamma_0, \quad \mathfrak r_1=\Gamma_1, \quad \mathfrak r_2=\Gamma_5, \quad \mathfrak r_3=\Gamma_2, \quad \mathfrak r_4=\Gamma_4, \quad \mathfrak r_5=\Gamma_3,
$$
with all the $\mathfrak r_k$'s oriented from the origin towards $\infty$.}

\begin{rhp}\label{rhp:MeijerG}
The function $\Psi$ is a $3\times 3$ matrix-valued function satisfying the following properties:
\begin{enumerate}
\item[\rm (1)] $\Psi$ is defined and analytic in $ \mathbb{C} \setminus\Gamma_\Psi$,
where
$$
\Gamma_\Psi:=\bigcup\limits_{k=0}^5 \Gamma_k, \qquad \Gamma_k=e^{k\frac{\pi i}{3}} [0,+\infty), \quad k=0,\hdots,5,
$$
with the orientations as shown in Figure \ref{fig: model rhp}.
\item[\rm (2)]$\Psi$ satisfies the jump condition
$$\Psi_{+}(z)=\Psi_{-}(z)J_\Psi(z),\qquad z\in\Gamma_{\Psi},$$
where
\begin{equation}\label{def:JPsi}
J_\Psi(z)=
\begin{cases}
\begin{pmatrix}
0 & z^{\kappa} & 0 \\
-z^{-\kappa} & 0 & 0\\
0 & 0 & 1
\end{pmatrix},
& z\in \Gamma_0, \\
\begin{pmatrix}
1 & 0 & 0 \\
z^{-\kappa} & 1 & 0 \\
0 & 0 & 1
\end{pmatrix}, & z\in \Gamma_1\cup \Gamma_5,\\
\begin{pmatrix}
1 & 0 & 0 \\
0 & 1 & 0 \\
0 & -\mathfrak c_{\nu-\kappa} z^{\kappa-\nu} & 1
\end{pmatrix}, & z\in \Gamma_{2}\cup\Gamma_4, \\
\begin{pmatrix}
1 & 0 & 0 \\
0 & 0 & -|z|^{\nu-\kappa} \\
0 & |z|^{\kappa-\nu} & 0
\end{pmatrix}, & z\in \Gamma_3.
\end{cases}
\end{equation}

\item[\rm (3)] As $z\to 0$, $\Psi$ has at worse a power-log singularity. In particular, we have, as $z\to 0$,
\begin{equation}\label{eq:zeropsi}
\Psi(z) \begin{pmatrix}
1 & 0 & 0
\end{pmatrix}^{T}=\Boh(1).
\end{equation}

\item[\rm (4)] As $z\to\infty$ with $\pm \im z>0$, we have
\begin{equation}\label{eq:asymptotics_meijer_parametrix_infinity}
\Psi(z)=z^{-\frac{B}{3}} \; \mathcal U^\pm  K(z) z^{-\frac{A}{3}} \diag(e^{-3z^{1/3}\omega^\pm},e^{-3z^{1/3}\omega^\mp},e^{-3z^{1/3}}),
\end{equation}
where the diagonal matrices $A$ and $B$ are as in \eqref{def:matrix_A} and \eqref{def:matrix_B},
$\mathcal U^\pm$ are given in \eqref{def:matrix_Upm} and $K$ admits an asymptotic expansion of the form
\begin{equation}\label{eq:asympt_expansion_error_Psi}
K(z)\sim I_3+\sum_{j=1}^\infty \frac{K_j}{z^{\frac{j}{3}}},\qquad z\to \infty,
\end{equation}
where the coefficient $K_j$, $j=1,2,3,\ldots$, possibly depends on the sector $\Theta_j$, $j=0,\hdots, 5$, along which $z\to\infty$. Here, $\Theta_k$, $k=0,1,\ldots,5$, denotes the region between the contours $\Gamma_k$ and $\Gamma_{k+1}$; see Figure \ref{fig: model rhp} for an illustration.
\end{enumerate}
\end{rhp}

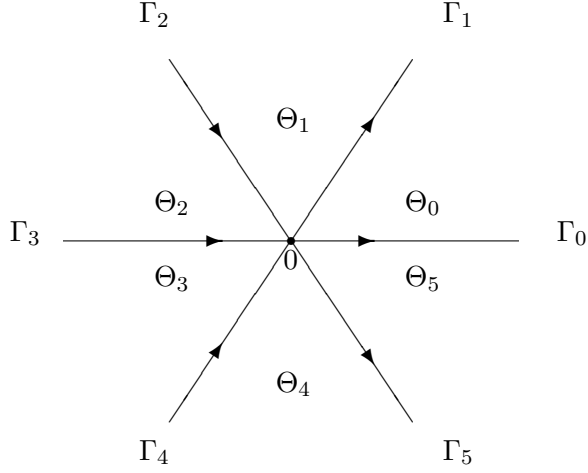
\begin{figure}[t]
\begin{center}
   \setlength{\unitlength}{1truemm}
   \begin{picture}(100,70)(-5,2)
       \put(40,40){\line(-2,-3){16}}
       \put(40,40){\line(-2,3){16}}
       \put(40,40){\line(-1,0){30}}
       \put(40,40){\line(1,0){30}}
       \put(40,40){\line(2,3){16}}
       \put(40,40){\line(2,-3){16}}

       \put(30,55){\thicklines\vector(2,-3){1}}
       \put(30,40){\thicklines\vector(1,0){1}}
       \put(30,25){\thicklines\vector(2,3){1}}
       \put(50,55){\thicklines\vector(2,3){1}}
       \put(50,40){\thicklines\vector(1,0){1}}
       \put(50,25){\thicklines\vector(2,-3){1}}

       \put(39,36.3){$0$}
       \put(20,11){$\Gamma_4$}

       \put(20,69){$\Gamma_2$}
       \put(3,40){$\Gamma_3$}
       \put(60,11){$\Gamma_5$}
       \put(60,69){$\Gamma_1$}
       \put(75,40){$\Gamma_0$}

       \put(22,44){$\Theta_2$}
       \put(22,34){$\Theta_3$}
       \put(55,44){$\Theta_0$}
       \put(55,34){$\Theta_5$}
       \put(38,55){$\Theta_1$}
       \put(38,20){$\Theta_4$}

       \put(40,40){\thicklines\circle*{1}}

   \end{picture}
   \caption{The jump contour $\Gamma_{\Psi}$ for the model RH problem $ \Psi$ and the regions $\Theta_k$, $k=0,\ldots,5$.}
   \label{fig: model rhp}
\end{center}
\end{figure}

The precise asymptotic behavior of $\Psi$ as $z\to 0$ depends on whether the values $\kappa$ and $\nu$ are zero or equal to one another. This behavior is indicated in \cite{BB15} as the behavior of certain iterated Cauchy transforms (see the RH Problem 4.22 and also equations (2.6) and (2.7) therein), but for our purposes the behavior as in item (3) above will suffice.

In the construction carried out by Bertola and Bothner \cite[Theorem 4.23]{BB15}, they actually only show that $K(z)=I+\Boh(z^{-1/3})$. Nevertheless, the existence of the full asymptotic expansion as in \eqref{eq:asympt_expansion_error_Psi} follows from the existence of the asymptotic expansion for the entries of $\Psi$ which, as we now explain, are given by Meijer G-functions - hence this parametrix bears the name {\it the Meijer-G parametrix}.

To describe it, recall that the Meijer G-function is given by the following contour integral in the complex plane:
\begin{equation}\label{def:Meijer}
G^{m,n}_{p,q}\left({a_1,\ldots,a_p \atop b_1,\ldots,b_q} \; \Big{|} \; \zeta \right)
 = \frac{1}{2\pi i}\int_L \frac{\prod_{j=1}^m\Gamma(b_j+u)\prod_{j=1}^n\Gamma(1-a_j-u)}
{\prod_{j=m+1}^q\Gamma(1-b_j-u)\prod_{j=n+1}^p\Gamma(a_j+u)}\zeta^{-u}\ud u.
\end{equation}
Here, it is assumed that
\begin{itemize}
  \item $0\leq m\leq q$ and $0\leq n \leq p$, where $m,n,p$ and $q$
  are integer numbers;
  \item The real or complex parameters $a_1,\ldots,a_p$ and
  $b_1,\ldots,b_q$ satisfy the conditions
  \begin{equation*}
  a_k-b_j \neq 1,2,3, \ldots, \quad \textrm{for $k=1,2,\ldots,n$ and $j=1,2,\ldots,m$,}
  \end{equation*}
  i.e., none of the poles of $\Gamma(b_j+u)$, $j=1,2,\ldots,m$ coincides
  with any poles of $\Gamma(1-a_k-u)$, $k=1,2,\ldots,n$.
\end{itemize}
The contour $L$ is chosen in such a way that all the poles of $\Gamma(b_j+u)$, $j=1,\ldots,m$, are on the left of the path, while all the poles of $\Gamma(1-a_k-u)$, $k=1,\ldots,n$, are on the right, which is usually taken to go from $-i\infty$ to $i\infty$. We now set
\begin{alignat*}{2}
g_1(z) & =\frac{c_1}{2\pi i}\int_{L} \frac{\Gamma(s)}{\Gamma(1+\kappa-s)\Gamma(1+\nu-s)}z^{-s}\ud s
&& =c_1G^{1,0}_{0,3}\left({- \atop 0,-\kappa,-\nu}\;\Big{|}\; z\right), \\
g_2^{(\pm)} (z) & =\frac{c_2}{2\pi i}\int_L \frac{\Gamma(s+\kappa)\Gamma(s)}{\Gamma(1+\nu-\kappa-s)}e^{\pm \pi i s}z^{-s}\ud s
&& = c_2G^{2,0}_{0,3}\left({- \atop 0,\kappa,\kappa-\nu}\;\Big{|}\; e^{\mp \pi i}z\right), \\
g_3(z) & = \frac{c_3}{2\pi i}\int_L \Gamma(s+\nu)\Gamma(s+\nu-\kappa)\Gamma(s)z^{-s}\ud s
&&=c_3G^{3,0}_{0,3}\left({- \atop 0,\nu-\kappa,\nu}\;\Big{|}\; z\right),
\end{alignat*}
where
$$
c_k=(2\pi i )^{3-k}\frac{\sqrt{3}}{2\pi}, \qquad k=1,2,3,
$$
and define the auxiliary function
$$
\widehat\Psi(z)=
\begin{pmatrix}
g_1(z) & g_2^{(\pm)}(z) & g_3(z) \\
z\frac{\ud g_1}{\ud z}(z) & \left(z\frac{\ud}{\ud z}-\kappa\right)g_2^{(\pm)}(z) & \left(z\frac{\ud}{\ud z}-\nu \right)g_3(z) \\
\left(z\frac{\ud}{\ud z}\right)^2 g_1(z) & \left(z\frac{\ud}{\ud z}-\kappa\right)^2g_2^{(\pm)}(z) & \left(z\frac{\ud}{\ud z}-\nu \right)^2g_3(z) \\
\end{pmatrix},
\quad \pm \im z>0.
$$
Then, the solution to the model RH problem \ref{rhp:MeijerG} for $\Psi$ is given by
\begin{equation}\label{def:meijerpara}
\Psi(z)=
\begin{cases}
\widehat \Psi(z),& z\in \Theta_1\cup \Theta_4, \\
\widehat \Psi(z) \diag
\left( 1,
\begin{pmatrix}
1 & 0 \\
\mathfrak c_{\nu-\kappa} \;  z^{\kappa-\nu}  & 1
\end{pmatrix}
\right),& z\in \Theta_2, \\
\widehat \Psi(z) \diag
\left(
\begin{pmatrix}
1 & 0\\
-  z^{-\kappa}  & 1
\end{pmatrix} , 1
\right),& z\in \Theta_0, \\
\widehat \Psi(z) \diag
\left( 1,
\begin{pmatrix}
1 & 0 \\
-\mathfrak c_{\nu-\kappa} \;  z^{\kappa-\nu}  & 1
\end{pmatrix}
\right),& z\in \Theta_3, \\
\widehat \Psi(z) \diag
\left(
\begin{pmatrix}
1 & 0\\
z^{-\kappa}  & 1
\end{pmatrix} , 1
\right),& z\in \Theta_5,
\end{cases}
\end{equation}
recall that the region $\Theta_k$, $k=0,1,\ldots,5$, is shown in Figure \ref{fig: model rhp}.

We conclude this section with some auxiliary results for later purposes.

\begin{lem}\label{lem:entire_factor_expansion_infinity}
The matrix-valued function
\begin{equation}\label{def:matrixQ1}
Q_1(z)=z^{-\frac{B}{3}}\mathcal U^{\pm}\diag(e^{-3z^{1/3}\omega^\pm},e^{-3z^{1/3}\omega^\mp},e^{-3z^{1/3}}) (\mathcal U^\pm)^{-1}z^{\frac{B}{3}},\qquad \pm \im z>0,
\end{equation}
is entire, where the matrices $B$ and $\mathcal U^\pm$ are defined in \eqref{def:matrix_B} and \eqref{def:matrix_Upm}. Similarly,
for any function $\vartheta$ analytic near the origin, the matrix-valued function
\begin{equation*}
Q_2(z)=z^{-\frac{B}{3}}\mathcal U^{\pm}\diag(e^{z^{2/3}\omega^\mp\vartheta(z)},e^{z^{2/3}\omega^\pm\vartheta(z)},e^{z^{2/3}\vartheta(z)}) (\mathcal U^\pm)^{-1}z^{\frac{B}{3}},\qquad \pm \im z>0,
\end{equation*}
is analytic near the origin as well.
\end{lem}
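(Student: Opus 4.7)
The plan is to prove both assertions by a parallel calculation: the outer factors $z^{\pm B/3}$ and the conjugation by $\mathcal{U}^\pm$ combine so as to remove every fractional power of $z$ from the entries of the product.

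First, since $\mathcal{U}^\pm = \widehat{\mathcal{U}} D^\pm$ for certain diagonal matrices $D^\pm$, and any diagonal matrix commutes with the diagonal middle factor, I may replace $\mathcal{U}^\pm$ by $\widehat{\mathcal{U}}$ throughout without changing $Q_1$ or $Q_2$. Next, I verify that the $+$ and $-$ definitions produce the same function: using $\mathcal{U}^- = \mathcal{U}^+ J$ with $J = \diag\bigl(\bigl(\begin{smallmatrix} 0 & -1 \\ 1 & \phantom{-}0\end{smallmatrix}\bigr),1\bigr)$, one checks directly that $J\diag(a,b,c)J^{-1} = \diag(b,a,c)$, and this swap of the first two diagonal entries exactly cancels the interchange $\omega^\pm \leftrightarrow \omega^\mp$ in the middle factor. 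Hence $Q_1$ and $Q_2$ are each single-valued on $\mathbb{C} \setminus (-\infty, 0]$, and it suffices to work in the upper half-plane.

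The heart of the argument is an explicit computation. Setting $\zeta := z^{1/3}$ with the upper half-plane branch, and using
\begin{equation*}
\widehat{\mathcal{U}}^{-1} = \tfrac{1}{3}\begin{pmatrix} \omega^+ & -1 & \omega^- \\ \omega^- & -1 & \omega^+ \\ 1 & -1 & 1\end{pmatrix},
\end{equation*}
a direct multiplication gives
\begin{equation*}
\widehat{\mathcal{U}}\,\diag(d_1,d_2,d_0)\,\widehat{\mathcal{U}}^{-1} = \begin{pmatrix} A & -C & B \\ -B & A & -C \\ C & -B & A \end{pmatrix},
\end{equation*}
with $A = \tfrac13(d_0+d_1+d_2)$, $B = \tfrac13(d_0+\omega^+ d_1+\omega^- d_2)$ and $C = \tfrac13(d_0+\omega^- d_1+\omega^+ d_2)$. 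For $Q_1$ I take $d_1 = e^{-3\omega^+\zeta}$, $d_2 = e^{-3\omega^-\zeta}$, $d_0 = e^{-3\zeta}$; expanding each exponential in powers of $\zeta$ and invoking the identity $\tfrac13(1+\omega^m+\omega^{-m}) = \mathbf{1}_{m \equiv 0 \,(\mathrm{mod}\, 3)}$, one finds that $A$ contains only powers $\zeta^{3k}$, $B$ only powers $\zeta^{3k+2}$, and $C$ only powers $\zeta^{3k+1}$. For $Q_2$ the analogous choice $d_1 = e^{\omega^-\vartheta(z)\zeta^2}$, $d_2 = e^{\omega^+\vartheta(z)\zeta^2}$, $d_0 = e^{\vartheta(z)\zeta^2}$ produces exactly the same three congruence classes mod $3$ for $A$, $B$, $C$ respectively (the doubling $2n$ of exponents is absorbed into the same residues), and the coefficients remain analytic in $z$ because $\vartheta$ is.

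The final step is to multiply on the left by $z^{-B/3} = \diag(\zeta^{-1},1,\zeta)$ and on the right by $z^{B/3} = \diag(\zeta,1,\zeta^{-1})$, which multiplies the $(j,k)$-entry by $\zeta^{b_k - b_j}$ with $b=(1,0,-1)$. A nine-fold case check shows that in every entry the net power of $\zeta$ is a multiple of $3$, hence an integer power of $z$. For $Q_1$ the resulting series converges on all of $\mathbb{C}$ since the original exponentials are entire, yielding $Q_1$ entire; for $Q_2$ the series converges in a neighborhood of $z=0$ because $\vartheta$ is analytic there, yielding $Q_2$ analytic at the origin. The most delicate point is this last case check, where one must verify that the shifts $b_k - b_j \in \{-2,-1,0,1,2\}$ line up with the three congruence classes of $\zeta$-powers occupying the $A$, $\pm B$ and $\pm C$ positions to give a multiple of $3$ in each of the nine entries.
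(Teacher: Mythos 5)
Your proof is correct, and it takes a genuinely different route from the paper's. The paper argues abstractly: it writes $Q = z^{-B/3}\,\mathcal U^{\pm}\diag(e^{\vartheta_1},e^{\vartheta_2},e^{\vartheta_3})(\mathcal U^{\pm})^{-1}z^{B/3}$, records the six jump relations satisfied by the $\vartheta_j$ across $\R_+$ and $\R_-$, asserts (without writing it out) that these relations together with the jumps of $z^{\pm B/3}$ and the $\mathcal U^\pm$ swap imply $Q$ has no jump across $\R$, and then concludes from the crude estimate $Q(z)=\Boh(z^{-2/3})$ that the isolated singularity at $0$ is removable. You instead compute the conjugated matrix $\widehat{\mathcal U}\diag(d_1,d_2,d_0)\widehat{\mathcal U}^{-1}$ in closed form — correctly arriving at the circulant-like pattern in $A,B,C$ — and then use the character-sum identity $\tfrac13(1+\omega^m+\omega^{-m})=\mathbf 1_{3\mid m}$ to show each of $A,B,C$ occupies a fixed residue class mod $3$ in its $\zeta$-expansion, after which the outer conjugation by $z^{\pm B/3}$ shifts each of the nine entries into residue class $0$, producing a genuine power series in $z$. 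The two approaches buy different things: the paper's is shorter to state (removable singularity plus a one-line growth bound) but defers the real work to a "cumbersome but straightforward" jump check; yours is longer to write but fully constructive, making visible exactly which Taylor coefficients survive and giving analyticity without appeal to the removable singularity theorem. Your treatment of the $\mathcal U^+\!\to\widehat{\mathcal U}$ reduction and the $J$-swap identity $J\diag(a,b,c)J^{-1}=\diag(b,a,c)$ is correct, as is the nine-fold residue check and the observation that passing from $\zeta$ to $\vartheta(z)\zeta^2$ in $Q_2$ preserves the residue classes; the logic "$+$ and $-$ formulas agree off $\R_-$, so a single branch suffices, and the resulting $\zeta$-series is really a $z$-series" is exactly right, though it would be cleaner to state explicitly that the resulting $z$-power series furnishes the analytic continuation of $Q$ across $\R_-$ and through $0$, which is the punchline.
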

\begin{proof}
Both $Q_1$ and $Q_2$ take the form
$$
Q(z)=z^{-\frac{B}{3}}\mathcal U^{\pm}\diag(e^{\vartheta_1(z)},e^{\vartheta_2(z)},e^{\vartheta_3(z)}) (\mathcal U^\pm)^{-1}z^{\frac{B}{3}},
$$
where $\vartheta_1$, $\vartheta_2$ and $\vartheta_3$ are analytic functions on $\mathcal V\setminus \R$ ($\mathcal V=\C$ for $Q_1$, and $\mathcal V$ is a neighborhood of the origin for $Q_2$), with jumps across $\R$ related through
\begin{align*}
& \vartheta_{1,+}(x)-\vartheta_{2,-}(x)=\vartheta_{2,+}(x)-\vartheta_{1,-}(x)=\vartheta_{3,+}(x)-\vartheta_{3,-}(x)=0,\qquad x>0,\\
& \vartheta_{1,+}(x)-\vartheta_{1,-}(x)=\vartheta_{2,+}(x)-\vartheta_{3,-}(x)=\vartheta_{3,+}(x)-\vartheta_{2,-}(x)=0,\qquad x<0.
\end{align*}
After a cumbersome but straightforward calculation, these relations combined assure us that $Q$ has no jumps across the real axis, so $z=0$ is an isolated singularity of $Q$.
Furthermore, it is clear that
$$
Q(z)=\Boh(z^{-2/3}),
$$
so $z=0$ is actually a removable singularity of $Q$, as required.

This completes the proof of Lemma \ref{lem:entire_factor_expansion_infinity}.
\end{proof}

\subsection{Construction of the local parametrix \texorpdfstring{$P$}{P}}

We now construct the parametrix $P$ that solves the RH problem \ref{rhp:parametrix_P}.  To do so, recall the function $f_4(z)$ given in Proposition \ref{prop:hatlambda} and set
\begin{equation*}
\varphi(z)=\frac{1}{27}z(f_4(z))^{3}.
\end{equation*}
From Proposition~\ref{prop:hatlambda}, it follows that the function $\varphi$ is conformal in a neighborhood of $z=0$. Even more so, we actually have
\begin{equation}\label{eq:varphizero}
\varphi(z)= z(\beta^2-\alpha^2)(1+\Boh(z)), \qquad z\to 0.
\end{equation}
Furthermore, by deforming the lenses if needed, we can assume that $\Gamma_S\cap D(r)$ is mapped by $z\mapsto n^3\varphi(z)$ to the union of the contours $\Gamma_0\cup\cdots\cup\Gamma_5$ and, in virtue of \eqref{eq:varphizero}, also that $\varphi((0,r))\subset \Gamma_0$. We then define
\begin{equation}\label{def:matrix_P}
 P(z)=\diag\left(1, \left( \frac{n}{3}f_4(z) \right)^{B}\Psi(n^3\varphi(z))\left(\frac{n}{3}f_4(z)\right)^A\right),\qquad z\in D(r)\setminus \Gamma_S.
 \end{equation}
where $\Psi$ is the Meijer-G parametrix \eqref{def:meijerpara}, and the matrices $A,B$ are given in \eqref{def:matrix_A} and \eqref{def:matrix_B}, respectively.

\begin{prop}\label{prop:Pz}
The matrix-valued function $P(z)$ defined in \eqref{def:matrix_P} solves the RH problem \ref{rhp:parametrix_P}.
\end{prop}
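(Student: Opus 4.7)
The plan is to verify the three defining properties of RH Problem~\ref{rhp:parametrix_P} for the matrix $P$ in \eqref{def:matrix_P} one at a time, with the bulk of the work being the jump verification. The analyticity in (1) and the singularity structure in (3) come essentially for free from the construction, so the core of the proof lies in translating the jumps of $\Psi$ into the desired jumps of $P$ via the conformal map $\varphi$ and the diagonal conjugators.

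For property (1), I would note that $M(z):=\left(\tfrac{n}{3}f_4(z)\right)^{B}$ and $N(z):=\left(\tfrac{n}{3}f_4(z)\right)^{A}$ are diagonal matrices whose entries are integer powers of the analytic, non-vanishing function $\tfrac{n}{3}f_4(z)$ (recall $B=\diag(1,0,-1)$ and $A=\diag(\nu+\kappa,\nu-2\kappa,\kappa-2\nu)$ have integer entries, and $f_4(0)>0$ by \eqref{eq:conformal_map_norming}). Hence $M$ and $N$ are analytic and invertible on $D(r)$ with no jumps, and $z\mapsto n^3\varphi(z)$ is conformal on $D(r)$ mapping $\Gamma_S\cap D(r)$ onto $\Gamma_\Psi$ (after the mild deformation of lenses mentioned before \eqref{def:matrix_P}). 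Thus $P$ inherits analyticity from $\Psi$.

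For property (2), the master identity is
\begin{equation*}
\bigl(n^{3}\varphi(z)\bigr)^{\kappa}=\Bigl(\tfrac{nf_{4}(z)}{3}\Bigr)^{3\kappa}z^{\kappa},\qquad
\bigl|n^{3}\varphi(z)\bigr|^{\nu-\kappa}=\Bigl(\tfrac{nf_{4}(z)}{3}\Bigr)^{3(\nu-\kappa)}|z|^{\nu-\kappa},
\end{equation*}
which follows from $\varphi(z)=\tfrac{1}{27}zf_{4}(z)^{3}$ and the reality of $f_{4}$ on $\mathbb{R}$. Since $M,N$ are jumpless one has $J_{P}=\diag(1,N^{-1}J_{\Psi}(n^{3}\varphi(\cdot))N)$. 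I would then check each ray in turn: on $\Gamma_{0}$ the $(1,2)$ block entry of $N^{-1}J_{\Psi}N$ is $(\tfrac{n}{3}f_{4})^{-(\nu+\kappa)}(n^{3}\varphi)^{\kappa}(\tfrac{n}{3}f_{4})^{\nu-2\kappa}=(\tfrac{nf_{4}}{3})^{-3\kappa}\cdot(\tfrac{nf_{4}}{3})^{3\kappa}z^{\kappa}=z^{\kappa}$, and the $(2,1)$ entry becomes $-z^{-\kappa}$ in the same way, recovering the jump on $(0,r)$. The rays $\Gamma_{1},\Gamma_{5}$ give $z^{-\kappa}E_{32}$ factors (matching the jumps on $\partial\mathcal{L}_{2}^{\pm}$), the rays $\Gamma_{2},\Gamma_{4}$ give $-\mathfrak{c}_{\nu-\kappa}z^{\kappa-\nu}E_{43}$ factors (matching $\partial\mathcal{L}_{3}^{\pm}$), and on $\Gamma_{3}$ the anti-diagonal block transforms by the second identity above to $|z|^{\pm(\nu-\kappa)}$, producing the jump required on $(-r,0)$. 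In every case, the key is that the prefactors $(\tfrac{nf_{4}}{3})^{\pm 3\kappa}$ and $(\tfrac{nf_{4}}{3})^{\pm 3(\nu-\kappa)}$ generated by the conjugation cancel exactly against the $(n^{3}\varphi)^{\bullet}$ factors coming from $J_{\Psi}$. Property (3) is immediate: $M$ and $N$ are analytic and invertible near $z=0$, and $\varphi$ is a conformal map fixing the origin, so $P(z)=\diag(1,M(z)\Psi(n^{3}\varphi(z))N(z))$ inherits from $\Psi$ at most a power-log singularity at $z=0$.

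The proof is therefore almost entirely bookkeeping; the only place where one must be careful is in matching the orientations and sector labels on the $\Psi$-side (the sectors $\Theta_{k}$ and contours $\Gamma_{k}$ of Figure~\ref{fig: model rhp}) with those of $\Gamma_{S}\cap D(r)$ on the $z$-side, after the deformation of lenses. I expect this to be the main, though purely routine, obstacle. No genuinely new analytic ingredient is needed beyond the conformal property of $\varphi$ at the origin given by Proposition~\ref{prop:hatlambda} and the explicit jump data of $\Psi$ from RH Problem~\ref{rhp:MeijerG}.
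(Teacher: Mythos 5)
Your proposal is correct and follows essentially the same route as the paper's own proof: you identify that the diagonal conjugators $\left(\tfrac{n}{3}f_4\right)^{B}$ and $\left(\tfrac{n}{3}f_4\right)^{A}$ are jumpless because they are integer powers of the analytic non-vanishing $f_4$, then use the cancellation identity $n^3\varphi(z)=\left(\tfrac{n}{3}f_4(z)\right)^3 z$ to reduce the conjugated $J_\Psi$ to the required jumps, with analyticity and the origin behaviour coming for free. The paper carries out this cancellation in detail only on $(0,r)$ and states the other rays are handled similarly, whereas you spell out the ray-by-ray correspondence and the absolute-value variant of the identity for $\Gamma_3$; this is more explicit bookkeeping but not a different method.
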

\begin{proof}
It is easily seen that $P$ is analytic on $D(r) \setminus \Gamma_S$. To show $P$ satisfies other items of the RH problem \ref{rhp:parametrix_P}, we start with checking the jump condition.
If $z\in (0,r)$, with $J_{\Psi}$ defined in  \eqref{def:JPsi}, we have
\begin{align*}
J_P(z) & =P_{-}(z)^{-1}P_+(z)
\\
&= \diag\left(1,\left(\frac{n}{3}f_4(z)\right)_-^{-A}J_\Psi(n^3\varphi(z))\left(\frac{n}{3}f_4(z)\right)_+^{A} \right)
 \\
& =
I_4+
 \left(\frac{n}{3}f_4(z)\right)^{-\nu-\kappa}\left(n^3\varphi(z)\right)^{\kappa}\left(\frac{n}{3}f_4(z)\right)^{\nu-2\kappa}E_{23}
\\
& \quad - \left(\frac{n}{3}f_4(z)\right)^{-\nu+2\kappa}\left(n^3\varphi(z)\right)^{-\kappa} \left(\frac{n}{3}f_4(z)\right)^{\nu+\kappa}E_{32}
\\
& = I_4+\left(\frac{n}{3}f_4(z)\right)^{-3\kappa}\left(\frac{n}{3}z^{\frac{1}{3}}f_4(z)\right)^{3\kappa}E_{23}-\left(\frac{n}{3}f_4(z)\right)^{3\kappa}
\left(\frac{n}{3}z^{\frac{1}{3}}f_4(z)\right)^{-3\kappa}E_{32}
\\
& = \diag\left( 1,
\begin{pmatrix}
0 & z^{\kappa} \\
-z^{-\kappa} & 0
\end{pmatrix},
1\right),
\end{align*}
as expected. The jump matrix of $P$ on other parts of $D(r) \cap \Gamma_S$ can be computed similarly, we omit the details here. Finally, the behavior of $P$ near the origin follows from the behavior of $\Psi$ given in item (3) of the RH problem \ref{rhp:MeijerG} and the fact that all the other terms in \eqref{def:matrix_P} remain bounded as $z\to 0$.

This completes the proof of Proposition \ref{prop:Pz}.
\end{proof}

We further set
\begin{equation}\label{def:hatP}
\widehat P(z):=P(z)\diag\left(1, \diag(e^{n\widehat \lambda_{k+1}^{\pm}(z)})_{k=1}^3  \right),
\end{equation}
where the functions $\widehat\lambda_k^\pm(z)$, $k=2,3,4$, are defined in \eqref{def:lambda_hat}. On account of \eqref{eq:from_L0_to_P} and \eqref{eq:lambdaandhatlambda}, it is easily seen that $\widehat P(z)$ satisfies the same jump condition as $L_0$ for $z\in \Gamma_S \cap D(r)$ and item (3) of the RH problem \ref{rhp:local_parametrix} for $L_0$. As shown later, we will solve the RH problem \ref{rhp:local_parametrix} with the aid of $\widehat P$. For that purpose, we next explore the asymptotics of $\widehat P$ on the boundary of the disk.

From now on, we assume, as mentioned before, that $\delta>0$ is sufficiently small and fixed but make $r=r_n$ shrink with $n$, namely,
\begin{equation}\label{eq:scaling_boundary}
r=r_n=n^{-\frac{3}{2}}.
\end{equation}
Since
$$n^3\varphi(z)\to\infty, \qquad z\in \partial D(r_n),\qquad n\to \infty,$$  under the scaling \eqref{eq:scaling_boundary}, we can use \eqref{def:hatP}, \eqref{def:matrix_P} and \eqref{eq:asymptotics_meijer_parametrix_infinity} to compute
\begin{align}\label{eq:asymptotics_parametrix_P}
\widehat P(z)&= \diag\left(1, \left( \frac{n}{3}f_4(z) \right)^{B}\Psi(n^3\varphi(z))\left(\frac{n}{3}f_4(z)\right)^A\diag(e^{n\widehat \lambda_{k+1}^{\pm}(z)})_{k=1}^3\right),
\nonumber
\\
&=\diag\left(1, z^{-\frac{B}{3}}\mathcal K(z)\mathcal U^{\pm} z^{-\frac{A}{3}}D_n(z) \right),\qquad z\in \partial D(r_n), \qquad n\to\infty,
\end{align}
where $\mathcal K=\mathcal K_n$ is an error matrix explicitly given by
\begin{equation}\label{def:curly_K}
\mathcal K(z):=\mathcal U^\pm K(n^3\varphi(z)) \mathcal (\mathcal U^{\pm})^{-1},\qquad \pm \im z>0,
\end{equation}
with $K$ being given in the asymptotic formula \eqref{eq:asymptotics_meijer_parametrix_infinity}, and
\begin{align}
D_n(z)&= \diag\left( e^{n(\widehat \lambda_2^\pm(z)-3\omega^\pm\varphi(z)^{1/3})}, e^{n(\widehat \lambda_3^\pm(z)-3\omega^\mp\varphi(z)^{1/3})}, e^{n(\widehat \lambda_4^\pm(z)-3\varphi(z)^{1/3})} \right) \nonumber \\
& =  e^{nzh_4(z)}\diag\left(e^{n \omega^\mp z^{2/3}g_4(z)},e^{n \omega^\pm z^{2/3}g_4(z)},e^{n z^{2/3}g_4(z)}  \right),
 \qquad \pm \im z>0.\label{def:matrix_Dn}
\end{align}
In the second equality of \eqref{def:matrix_Dn}, we have made use of Proposition \ref{prop:hatlambda}, which also implies that $D_n(z)$ remains bounded for $z\in \overline{D(r_n)}$ under the scaling \eqref{eq:scaling_boundary}. By defining
\begin{equation}\label{def:hat_Dn}
\widehat D_n(z):=\diag\left(1,(n^{3/2}z)^{-\frac{B}{3}}\mathcal U^\pm D_n(z) (\mathcal U^\pm)^{-1}(n^{3/2}z)^{\frac{B}{3}}\right),\qquad \pm \im z>0,
\end{equation}
it then follows from $\eqref{def:matrix_G_hat}$ that
\begin{align*}
\diag\left(1,z^{-\frac{B}{3}}\mathcal U^\pm z^{-\frac{A}{3}}D_n(z)\right) & = n^{\frac{\widehat B}{2}}\widehat D_n(z)n^{-\frac{\widehat B}{2}}\diag\left(1,z^{-\frac{B}{3}}\mathcal U^\pm z^{-\frac{A}{3}}\right)
\\
&=n^{\frac{\widehat B}{2}} \widehat D_n(z)n^{-\frac{\widehat B}{2}}\widehat G(z)^{-1}G(z),
\end{align*}
where
\begin{equation}\label{def:hatB}
\widehat B:=\diag(0,B)=\diag(0,1,0,-1).
\end{equation}
Thus, we could rewrite the asymptotics in \eqref{eq:asymptotics_parametrix_P} as
\begin{equation}\label{eq:error_P}
\widehat P(z)=\diag\left(1,z^{-\frac{B}{3}}\mathcal K(z)z^{\frac{B}{3}}\right)n^{\frac{\widehat B}{2}} \widehat D_n(z)n^{-\frac{\widehat B}{2}}\widehat G(z)^{-1}G(z),
\end{equation}
for $z\in \partial D(r_n)$ and $n\to\infty$.
We will need some auxiliary results on the matrices $\mathcal K$ and $\widehat D_n$ in the above formula that we discuss next.

\begin{lem}\label{lem:asycurlyK}
With the function $\mathcal K(z)$ defined in \eqref{def:curly_K}, we have that
for $z \in \partial D(r_n)$ and large $n$, $\mathcal K(z)$
admits a formal asymptotic expansion of the form
\begin{equation}\label{eq:puiseaux_exp_error}
\mathcal  K(z)\sim I_3+\sum_{j=1}^{\infty}\frac{\mathcal K_j}{n^jz^\frac{j}{3}},
\end{equation}
where the matrix coefficients $\mathcal K_j$ are independent of $z$ and $n$, and take the following structures:
\begin{equation}\label{eq:puiseaux_exp_error_structure}
\mathcal K_j =
\begin{cases}
\begin{pmatrix}
\ast & 0 & 0 \\
0 & \ast & 0 \\
0 & 0 & \ast
\end{pmatrix},
& j\equiv 0 \mod 3, \\
\begin{pmatrix}
0 & 0 & \ast \\
\ast & 0 & 0 \\
0 & \ast & 0
\end{pmatrix},
& j\equiv 1 \mod 3, \\
\begin{pmatrix}
0 & \ast & 0 \\
0 & 0 & \ast \\
\ast & 0 & 0
\end{pmatrix},
& j\equiv 2 \mod 3.
\end{cases}
\end{equation}
\end{lem}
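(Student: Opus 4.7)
The proof has two parts: establishing the formal expansion \eqref{eq:puiseaux_exp_error} and verifying the cyclic structure \eqref{eq:puiseaux_exp_error_structure} of its coefficients.

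For the expansion, the plan is to substitute \eqref{eq:asympt_expansion_error_Psi} into \eqref{def:curly_K} at $w=n^3\varphi(z)$. Using \eqref{eq:varphizero}, write $\varphi(z)=(\beta^2-\alpha^2)z\,\chi(z)$ with $\chi$ analytic near $z=0$ and $\chi(0)=1$, so that
$$(n^3\varphi(z))^{-j/3}=(\beta^2-\alpha^2)^{-j/3}(nz^{1/3})^{-j}\chi(z)^{-j/3}.$$
Taylor-expanding $\chi(z)^{-j/3}=\sum_{k\ge 0}b_{j,k}z^k$ around $z=0$ and collecting terms by their total order in the small parameter $(nz^{1/3})^{-1}$ --- using that on $\partial D(r_n)$ with $r_n=n^{-3/2}$ every extra factor of $z$ decreases the magnitude by $n^{-3/2}$, i.e. by two orders of $(nz^{1/3})^{-1}$ --- yields the formal expansion \eqref{eq:puiseaux_exp_error}. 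The leading contribution at order $(nz^{1/3})^{-j}$ is $(\beta^2-\alpha^2)^{-j/3}\,\mathcal U^\pm K_j(\mathcal U^\pm)^{-1}$, and all further contributions at this order involve only those $K_l$ with $l\le j$ and $l\equiv j\pmod 3$, so every $\mathcal K_j$ is a fixed finite linear combination of such conjugated coefficients.

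For the cyclic structure, the plan is to exploit the $\mathbb Z/3$-symmetry of the Meijer-G parametrix. Under the monodromy $z^{1/3}\mapsto\omega z^{1/3}$, the diagonal factor $\diag(e^{-3z^{1/3}\omega^\pm},e^{-3z^{1/3}\omega^\mp},e^{-3z^{1/3}})$ in \eqref{eq:asymptotics_meijer_parametrix_infinity} is cyclically permuted, an action implemented by conjugation with a cyclic permutation matrix $P$. The matrix $\mathcal U^\pm$ is, up to a constant diagonal normalization, the eigenvector matrix of $P$, so $(\mathcal U^\pm)^{-1}P\,\mathcal U^\pm=\Sigma:=\diag(1,\omega,\omega^{-1})$. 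Comparing the monodromy-transformed and original versions of \eqref{eq:asymptotics_meijer_parametrix_infinity} produces the covariance
$$\Sigma\bigl(\mathcal U^\pm K_l(\mathcal U^\pm)^{-1}\bigr)\Sigma^{-1}=\omega^l\,\mathcal U^\pm K_l(\mathcal U^\pm)^{-1},$$
so each such conjugated coefficient lies in the $\omega^l$-eigenspace of the adjoint action of $\Sigma$. A direct check shows this eigenspace is spanned precisely by the elementary matrices $E_{rs}$ with $r-s\equiv l\pmod 3$, matching the support patterns in \eqref{eq:puiseaux_exp_error_structure}. Since only indices $l\equiv j\pmod 3$ contribute to $\mathcal K_j$ by the previous step, the resulting $\mathcal K_j$ inherits the same support structure.

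The main obstacle will be making the monodromy covariance of $K_l$ rigorous, because $K_l$ is sector-dependent in \eqref{eq:asympt_expansion_error_Psi} and the naive monodromy interacts with the Stokes data on the rays $\Gamma_k$. The cleanest path is to analyze the three Mellin-Barnes integrals \eqref{def:Meijer} defining $g_1,g_2^{(\pm)},g_3$ directly: each gives a Puiseux series at infinity whose coefficients are residues of gamma factors, and the residue at the $l$-th pole carries a natural $\mod 3$ grading that translates, after the algebraic normalization encoded in $\mathcal U^\pm$, into the claimed eigenspace decomposition. Once this covariance is in place, the remaining verification is elementary linear algebra.
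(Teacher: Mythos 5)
Your expansion argument (substituting $u=n^3\varphi(z)$, Taylor-expanding $\chi(z)^{-j/3}$, and collecting powers of $(nz^{1/3})^{-1}$) is sound, and you have correctly identified the cyclic support patterns via the $\Sigma$-eigenspace decomposition, with the right correspondence between $j \bmod 3$ and the residue class $r-s$. The gap is in the second half: the claimed covariance $\Sigma(\mathcal U^\pm K_l(\mathcal U^\pm)^{-1})\Sigma^{-1}=\omega^l\,\mathcal U^\pm K_l(\mathcal U^\pm)^{-1}$ is precisely the hard part, and your fallback route does not close it. The coefficients $K_l$ of \eqref{eq:asympt_expansion_error_Psi} are \emph{not} residues of the Mellin--Barnes integrands: the functions $g_1,g_2^{(\pm)},g_3$ behave exponentially as $z\to\infty$, so the residue sum (at the poles of $\Gamma(s)$) gives the convergent Taylor series around the \emph{origin}, while the large-$z$ asymptotics arise from a saddle-point analysis. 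Moreover, you never address why the $\mathcal K_j$ are sector-\emph{in}dependent even though the $K_l$ are explicitly sector-dependent (a claim that is part of the Lemma), and tracking the monodromy $z^{1/3}\mapsto\omega z^{1/3}$ across three Stokes rays would require you to control this Stokes data explicitly.

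The paper sidesteps both difficulties at once by a clean change of viewpoint: instead of analyzing the individual $K_l$, it considers the conjugate $z^{-B/3}\mathcal K(z)z^{B/3}$ and shows it is \emph{single-valued} near the origin. The key identity is
\begin{equation*}
z^{-\frac{B}{3}}\mathcal K(z)z^{\frac{B}{3}}=\left(\tfrac{n}{3}f_4(z)\right)^B\Psi(u)u^{\frac{A}{3}}(\mathcal U^\pm)^{-1}u^{\frac{B}{3}}\,Q_1(u)^{-1}\left(\tfrac{n}{3}f_4(z)\right)^{-B},\qquad u=n^3\varphi(z),
\end{equation*}
where $Q_1$ is the entire matrix \eqref{def:matrixQ1} and $\Psi(u)u^{A/3}(\mathcal U^\pm)^{-1}u^{B/3}$ is verified to have no jump on $\R\setminus\{0\}$ by a computation analogous to Proposition~\ref{lem:prefactor_global_param}. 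Since the right side is a composition of single-valued functions, the Puiseux expansion of the left side must reduce to integer powers of $z^{-1}$, and because different $j$'s carry different weights $n^{-j}$ there can be no cross-cancellation; this forces each $z^{-B/3}\mathcal K_j z^{B/3}\cdot z^{-j/3}$ separately to have integer powers, which is exactly \eqref{eq:puiseaux_exp_error_structure}, and simultaneously proves sector-independence. Your eigenspace calculation is the same combinatorics as the paper's diagonal-conjugation bookkeeping, but the paper replaces the problematic monodromy covariance of the $K_l$ by a direct single-valuedness argument that avoids the Stokes structure entirely.
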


\begin{proof}
If $z \in \partial D(r_n)$, we have that $z=\Boh(n^{-\frac{3}{2}})$ and $n^3\varphi(z)=\Boh(n^3z)=\Boh(n^{\frac{3}{2}})$, so the existence of the expansion \eqref{eq:puiseaux_exp_error} with coefficients independent of $n$, but possibly depending on the sector along which $z\to \infty$, follows from the asymptotic expansion of $K$ given in \eqref{eq:asympt_expansion_error_Psi}.

Let $u\in \C$ be a dummy variable. It follows from a calculation similar to that carried out in the proof of Proposition \ref{lem:prefactor_global_param} that $\Psi(u)u^{\frac{A}{3}}(\mathcal U^\pm)^{-1}u^{\frac{B}{3}}$ has no jump on $\R\setminus \{0 \}$. Furthermore, from \eqref{eq:asymptotics_meijer_parametrix_infinity}, we find that
$$
\Psi(u)u^{\frac{A}{3}}(\mathcal U^\pm)^{-1}u^{\frac{B}{3}}=u^{-\frac{B}{3}}\mathcal U^{\pm}K(u)(\mathcal U^\pm)^{-1}u^{\frac{B}{3}}Q_1(u),\qquad u\to \infty, \qquad \pm \im u>0,
$$
where $Q_1$ defined in \eqref{def:matrixQ1} is an entire function. Setting $u=n^3\varphi(z)$, this yields
\begin{equation}\label{eq:uinphi}
z^{-\frac{B}{3}}\mathcal K(z) z^{\frac{B}{3}}=\left( \frac{n}{3}f_4(z) \right)^B \Psi(u)u^{\frac{A}{3}}(\mathcal U^\pm)^{-1}u^{\frac{B}{3}} Q_1(u)^{-1}\left( \frac{n}{3}f_4(z) \right)^{-B},
\end{equation}
which should be understood in the scaling \eqref{eq:scaling_boundary} and $n$ sufficiently large but fixed. Now, the functions in $u$ appearing on the right-hand side of \eqref{eq:uinphi} do not have jumps on the real axis, and neither do the functions in $z$ because they are entire. Thus, the right-hand side admits an asymptotic expansion in integer powers of $z$ (recall that $u=u(z)$ is conformal), with $n$-dependent coefficients.

On the other hand, the left-hand side admits an asymptotic expansion in inverse powers of $z^{\frac{1}{3}}$, but possibly with different coefficients in different sectors of the plane.
A comparison of the asymptotic expansions on both sides then yields that the expansion
$$
z^{-\frac{B}{3}}\mathcal K(z)z^{\frac{B}{3}}\sim I_3+\sum_{j=1}^\infty z^{-\frac{B}{3}}\mathcal K_j z^{\frac{B}{3}}\frac{1}{n^j z^{\frac{j}{3}}}
$$
must involve only inverse integer powers, and furthermore the coefficients should not depend on the sector along which $z\to\infty$. Further noticing the identity
$$
z^{-\frac{B}{3}}\mathcal K_j z^{\frac{B}{3}}=\Boh
\begin{pmatrix}
1 & z^{-\frac{1}{3}} & z^{-\frac{2}{3}} \\
z^{\frac{1}{3}} & 1 & z^{-\frac{1}{3}} \\
z^{\frac{2}{3}} & z^{\frac{1}{3}} & 1
\end{pmatrix},
$$
we then conclude the structure \eqref{eq:puiseaux_exp_error_structure}.

This completes the proof of Lemma \ref{lem:asycurlyK}.
\end{proof}

For any $a,b,c\in \C$, it is straightforward to verify the following commutation relations:
\begin{equation}\label{eq:commutation_relations}
\begin{aligned}
z^{-\frac{B}{3}}
\begin{pmatrix}
0 & 0 & a \\
b & 0 & 0 \\
0 & c & 0
\end{pmatrix}
z^{\frac{B}{3}}
& =
\begin{pmatrix}
0 & 0 & 0 \\
b & 0 & 0 \\
0 & c & 0
\end{pmatrix} z^{\frac{1}{3}} +
\begin{pmatrix}
0 & 0 & a \\
0 & 0 & 0 \\
0 & 0 & 0
\end{pmatrix}z^{-\frac{2}{3}}, \\
z^{-\frac{B}{3}}
\begin{pmatrix}
0 & a & 0 \\
0 & 0 & b \\
c & 0 & 0
\end{pmatrix}
z^{\frac{B}{3}}
& =
\begin{pmatrix}
0 & 0 & 0 \\
0 & 0 & 0 \\
c & 0 & 0
\end{pmatrix} z^{\frac{2}{3}} +
\begin{pmatrix}
0 & a & 0 \\
0 & 0 & b \\
0 & 0 & 0
\end{pmatrix}z^{-\frac{1}{3}}.
\end{aligned}
\end{equation}
%
Thus, we obtain from \eqref{eq:commutation_relations}, Lemma \ref{lem:asycurlyK} and a rearrangement of terms that for $z\in \partial D(r_n)$ and large $n$,
\begin{equation}\label{def:first_error}
\diag \left(1,z^{-\frac{B}{3}}\mathcal K(z)z^{\frac{B}{3}} \right)=n^{\widehat B}\left(I_4+ T_0 +E_n^{(1)}(z)\right) n^{-\widehat B},
\end{equation}
where the matrix $T_0$ is a strictly lower triangular constant matrix with first column zero, and the error term $E_n^{(1)}$ admits an asymptotic expansion of the form
\begin{equation}\label{def:En1}
E_n^{(1)}(z)\sim \sum_{j=1}^\infty \frac{(I_4+T_0)A_j^{(1)}}{n^{3j}z^j},\qquad z\in \partial D(r_n), \qquad n\to\infty,
\end{equation}
with the coefficients $A_j^{(1)}$ being independent of $n$. In \eqref{def:En1}, the factor $I_4+T_0$ is added just for later convenience, to avoid some coming cumbersome notations.

To explore the properties of $\widehat D_n$, we need the following basic fact.
\begin{lem}\label{lem:key_difference_estimate}
Suppose that $\{M_n(z)\}$ is a sequence of matrix-valued functions, analytic and uniformly bounded in a neighborhood $D(2\varepsilon)$ of the origin with $\varepsilon=\varepsilon_n\to 0$ as $n\to \infty$. Then, we have
\begin{equation}\label{eq:Mnest}
M_n(z)-M_n(w)=\Boh(\varepsilon^{-1} (z-w)),
\end{equation}
and
\begin{equation}\label{eq:w=0}
M_n(z)=M_n(0)+\Boh(\varepsilon^{-1} z),
\end{equation}
uniformly for $z,w\in \overline{D\left(\varepsilon\right)}$ as $n\to\infty$.
\end{lem}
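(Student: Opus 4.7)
The plan is to apply the standard Cauchy integral estimate for analytic functions on a disk whose radius shrinks with $n$. Since $M_n$ is analytic on $D(2\varepsilon)$ and the points $z,w$ lie in the smaller disk $\overline{D(\varepsilon)}$, one has room to integrate over the circle $|\zeta|=2\varepsilon$ of radius comparable to the distance from $\{z,w\}$ to the boundary.

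Concretely, I would write, using Cauchy's formula on $\partial D(2\varepsilon)$ (entry-wise on $M_n$),
\begin{equation*}
M_n(z)-M_n(w)=\frac{z-w}{2\pi i}\oint_{|\zeta|=2\varepsilon}\frac{M_n(\zeta)}{(\zeta-z)(\zeta-w)}\,\ud\zeta.
\end{equation*}
For $z,w\in\overline{D(\varepsilon)}$ and $\zeta\in\partial D(2\varepsilon)$ one has the elementary lower bounds $|\zeta-z|\geq\varepsilon$ and $|\zeta-w|\geq\varepsilon$. Combining this with the hypothesis that there exists a constant $C>0$ (independent of $n$) with $\|M_n(\zeta)\|\leq C$ on $D(2\varepsilon)$, together with the fact that the arclength of $\partial D(2\varepsilon)$ is $4\pi\varepsilon$, yields
\begin{equation*}
\|M_n(z)-M_n(w)\|\leq \frac{|z-w|}{2\pi}\cdot\frac{C}{\varepsilon^{2}}\cdot 4\pi\varepsilon=\frac{2C\,|z-w|}{\varepsilon},
\end{equation*}
which is precisely \eqref{eq:Mnest}. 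The second estimate \eqref{eq:w=0} is then obtained by specializing $w=0$, using that $0\in\overline{D(\varepsilon)}$.

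There is no real obstacle here: the content is just Cauchy's formula and the triangle inequality, and the only point worth checking is that $\varepsilon=\varepsilon_n$ is small enough that $\overline{D(2\varepsilon)}$ lies in the common domain of analyticity and uniform boundedness of the $M_n$'s, which is exactly the hypothesis. The lemma will be applied later with $M_n$ built out of the blocks $\mathcal K(z)$ and $\widehat D_n(z)$ on the shrinking circle $\partial D(r_n)$ with $r_n=n^{-3/2}$, and it is this uniform estimate that will allow us to replace such factors by their value at $z=0$ up to an error of the required size.
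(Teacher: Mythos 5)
Your proof is correct and is essentially identical to the paper's own argument: both write $M_n(z)-M_n(w)$ as a single Cauchy integral over $\partial D(2\varepsilon)$ with kernel $1/((\zeta-z)(\zeta-w))$ and then estimate using uniform boundedness together with the lower bounds $|\zeta-z|,|\zeta-w|\geq\varepsilon$. You simply spell out the final numerical bound a bit more explicitly than the paper does; no gap.
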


\begin{proof}
The estimate \eqref{eq:w=0} follows immediately from \eqref{eq:Mnest}. To show \eqref{eq:Mnest}, we fix $z,w\in \overline{D(\varepsilon)}$ and use Cauchy's Theorem to write
$$
M_n(z)-M_n(w)=\frac{1}{2\pi i}\left( \oint_{|t|=2\varepsilon}\frac{M_n(t)}{t-z}\ud t-\oint_{|t|=2\varepsilon}\frac{M_n(t)}{t-w}\ud t  \right)=\frac{z-w}{2\pi i} \oint_{|t|=2\varepsilon} \frac{M_n(t)}{t-z}\frac{\ud t}{t-w}.
$$
Because $\{M_n(z)\}$ is uniformly bounded, the identity above immediately implies \eqref{eq:Mnest}.

This completes the proof of Lemma \ref{lem:key_difference_estimate}.

\end{proof}

We finally state the consequence for $\widehat D_n$ explicitly, as it will be used repeatedly in the next section.

\begin{prop}\label{prop:hatD_estimates}
The matrix-valued function $\widehat D_n(z)$ defined in \eqref{def:hat_Dn} is invertible. Furthermore, the matrices $\widehat D_n(z)^{\pm 1}$ are analytic near the origin and uniformly bounded for $z\in \overline{D(r_n)}$ as $n\to\infty$ with the estimates
%
\begin{equation}\label{eq:derivative_rule_Dhat}
\widehat D_n(z)^{\pm 1}-\widehat D_n(w)^{\pm 1}=\Boh(n^{3/2}(z-w)), \\
\end{equation}
and
\begin{equation}\label{eq:derivative_rule_Dhat2}
\widehat D_n(z)^{\pm 1}=\widehat D_n(0)^{\pm 1}+\Boh(n^{3/2}z), \\
\end{equation}
all valid uniformly for $z,w\in \overline{D\left(r_n\right)}$ as $n\to\infty$.
\end{prop}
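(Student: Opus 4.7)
The plan is to write $\widehat D_n$ as an expression to which Lemma~\ref{lem:entire_factor_expansion_infinity} applies after a simple change of variable. First, observe that the scalar factor $e^{nzh_4(z)}$ in \eqref{def:matrix_Dn} commutes with everything, so if we denote by $\widetilde D_n(z)$ the diagonal matrix obtained by dropping this scalar factor, we have
\[
\widehat D_n(z)=\diag\!\left(1,e^{nzh_4(z)}\,R_n(z)\right), \qquad R_n(z):=(n^{3/2}z)^{-B/3}\mathcal U^{\pm}\widetilde D_n(z)(\mathcal U^{\pm})^{-1}(n^{3/2}z)^{B/3}.
\]
On $|z|\le 2r_n=2n^{-3/2}$ we have $|nzh_4(z)|\lesssim n^{-1/2}$, so $e^{nzh_4(z)}=1+\boh(1)$ uniformly; this factor is harmless.

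Next I would perform the substitution $u=n^{3/2}z$, which turns $|z|\le 2r_n$ into $|u|\le 2$, and note that $nz^{2/3}g_4(z)=u^{2/3}\vartheta_n(u)$ with $\vartheta_n(u):=g_4(u\,n^{-3/2})$. For $n$ large, $\vartheta_n$ is analytic and uniformly bounded on $|u|\le R$ for any fixed $R$, converging to the constant $g_4(0)$ as $n\to\infty$. Under this substitution $R_n$ becomes exactly the matrix $Q_2(u)$ from Lemma~\ref{lem:entire_factor_expansion_infinity} with function $\vartheta_n$; inspection of the proof of that lemma shows that the argument (cancellation of jumps across $\R$ via $\mathcal U^\pm$ and removability of the apparent singularity at the origin) is uniform in the analytic parameter $\vartheta_n$, so $Q_2$ is analytic in $|u|<R$ with a bound independent of $n$. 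Therefore $R_n$, and hence $\widehat D_n$, is analytic in $|z|<R\,n^{-3/2}$ and uniformly bounded on $|z|\le 2r_n$.

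For invertibility, a direct computation using $\omega^++\omega^-+1=0$ gives $\det\widetilde D_n(z)=1$ and hence $\det\widehat D_n(z)=e^{3nzh_4(z)}$, which is never zero. Moreover $\widehat D_n(z)^{-1}$ has exactly the same structural form as $\widehat D_n(z)$ (with $n\mapsto -n$ in the exponentials, which does not affect the above argument), so the same analyticity and uniform boundedness assertions hold for the inverse. Finally, the two estimates \eqref{eq:derivative_rule_Dhat} and \eqref{eq:derivative_rule_Dhat2} follow by applying Lemma~\ref{lem:key_difference_estimate} to $M_n=\widehat D_n^{\pm 1}$ with $\varepsilon=r_n=n^{-3/2}$, since we have just shown that these matrix-valued functions are analytic and uniformly bounded in $D(2r_n)$.

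The only delicate point, and the main thing to verify carefully, is that Lemma~\ref{lem:entire_factor_expansion_infinity} can be applied with the $n$-dependent analytic function $\vartheta_n$; this is essentially a uniformity statement, and it reduces to the observation that in the construction of $Q_2$ the only properties of $\vartheta$ used are analyticity and local boundedness, both of which hold for $\vartheta_n$ uniformly in $n$ on any fixed compact neighborhood of $u=0$.
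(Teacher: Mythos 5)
Your proof is correct and matches the paper's approach in its essential steps: analyticity via Lemma~\ref{lem:entire_factor_expansion_infinity}, then uniform boundedness, then the Lipschitz estimates via Lemma~\ref{lem:key_difference_estimate}. The one cosmetic difference is your rescaling $u = n^{3/2}z$, which normalizes the shrinking disk to a fixed one and makes $\vartheta_n(u) = g_4(un^{-3/2})$ uniformly bounded. The paper instead applies Lemma~\ref{lem:entire_factor_expansion_infinity} directly with $\vartheta(z) = ng_4(z)$ for each fixed $n$ (no uniformity question arises for the pure analyticity statement, since analyticity is a pointwise-in-$n$ claim), then observes that $D_n(z)$ and $(n^{3/2}z)^{\pm B/3}$ are $\Boh(1)$ on $\partial D(r_n)$ under the scaling $r_n = n^{-3/2}$, and invokes the maximum modulus principle to transfer the uniform bound into the disk. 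That max-modulus step is precisely the content behind your statement that ``inspection of the proof shows the argument is uniform in $\vartheta_n$'' — the removable-singularity argument in Lemma~\ref{lem:entire_factor_expansion_infinity} alone yields analyticity but does not by itself produce a quantitative bound — so you should state the max-modulus step explicitly rather than leave it as an inspection claim. Your computation $\det\widehat D_n(z) = e^{3nzh_4(z)}$ is a nice explicit version of the invertibility, which the paper leaves implicit as a product of invertible factors.
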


\begin{proof}
The invertibility of $\widehat D_n$ follows immediately from its definition \eqref{def:hat_Dn}. For ease of notation, we will focus on $\widehat D_n$ in what follows, since the arguments for $\widehat D_n^{-1}$ are essentially the same.

Recalling the definition of $D_n$ given in \eqref{def:matrix_Dn}, the fact that $\widehat D_n$ is analytic near the origin follows from a direct application of the second part of Lemma~\ref{lem:entire_factor_expansion_infinity}. Moreover, under the scaling \eqref{eq:scaling_boundary}, the function $D_n(z)$ as well as $(n^{3/2}z)^{\pm B/3}$ remain uniformly bounded for $z\in \partial D(r_n)$ as $n\to\infty$. This implies that $\widehat D_n$ is uniformly bounded for $z\in \partial D(r_n)$ and, as a consequence of the maximum principle, also on the whole set $\overline{D(r_n)}$. The estimates \eqref{eq:derivative_rule_Dhat} and \eqref{eq:derivative_rule_Dhat2} are then immediate from Lemma~\ref{lem:key_difference_estimate}.
This completes the proof of Proposition \ref{prop:hatD_estimates}.
\end{proof}

\subsection{Construction of the local parametrix $L_0$}\label{sec:construction_matching}

With the above preparations, we are finally ready to build a solution to the RH problem \ref{rhp:local_parametrix} for $L_0$, construction which is carried out in five steps as explained next.

\subsubsection*{Initial step}

As the initial step, we define
\begin{equation}\label{eq:definition_L01}
L_0^{(1)}(z)=
\begin{cases}
\widehat G(z)n^{\frac{\widehat B}{2}}\widehat{D}_n(z)^{-1}n^{-\frac{\widehat B}{2}}\widehat P(z), & z\in D(r_n)\setminus \Gamma_S, \\
G(z), & z\in D(\delta)\setminus\left( \overline{D(r_n)}\cup (-\delta,\delta)\right),
\end{cases}
\end{equation}
where the matrices $\widehat G$ and $\widehat P$ are given in \eqref{def:matrix_G_hat} and \eqref{def:hatP}, respectively. We then have the following proposition.
\begin{prop}\label{prop:L01}
The matrix-valued function $L_0^{(1)}(z)$ defined in \eqref{eq:definition_L01} satisfies items (1)--(3) and the matching condition $\eqref{eq:local_param_decay_boundary}$ of the RH problem \ref{rhp:local_parametrix} for $L_0$. Moreover, we have, as $n\to \infty$,
\begin{equation}\label{eq:L01bnd}
L_{0,+}^{(1)}(z)=(I_4+\Boh(n^{1/2}))L_{0,-}^{(1)}(z), \qquad z\in \partial D(r_n)\setminus \Gamma_S.
\end{equation}
\end{prop}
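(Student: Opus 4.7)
\medskip

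\noindent\textbf{Proof plan for Proposition~\ref{prop:L01}.}
The strategy is to break the verification of items (1)--(3) and (\ref{eq:local_param_decay_boundary}) into a ``trivial'' check away from the small disk $D(r_n)$ and an ``algebraic'' check inside it, and then to dedicate the bulk of the work to the jump relation~\eqref{eq:L01bnd} across $\partial D(r_n)$, which is where the matching calculation of Kuijlaars--Molag enters.

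First I would handle the analyticity, jumps, and endpoint behavior. Outside $\overline{D(r_n)}$ we have $L_0^{(1)}=G$, which by the construction of the global parametrix is analytic in $D(\delta)\setminus(-\delta,\delta)$ and satisfies the jump $J_G$ on $(-\delta,-r_n)\cup(r_n,\delta)$; this is exactly what RH Problem~\ref{rhp:local_parametrix} demands there. Inside $D(r_n)$ the prefactor $\widehat G(z)n^{\widehat B/2}\widehat D_n(z)^{-1}n^{-\widehat B/2}$ is analytic with analytic inverse by Propositions~\ref{lem:prefactor_global_param} and~\ref{prop:hatD_estimates}, so the jumps of $L_0^{(1)}$ on $\Gamma_S\cap D(r_n)$ are inherited from those of $\widehat P$. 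But by the very definitions \eqref{eq:from_L0_to_P} and \eqref{def:hatP}, together with the identity \eqref{eq:lambdaandhatlambda}, the map $L_0\mapsto P\mapsto \widehat P$ restores the $\lambda$-conjugation that was introduced to pass from $L_0$ to $P$, so $\widehat P$ satisfies exactly the jump $J_{L_0}$ on $\Gamma_S\cap D(r_n)$. The power-log behavior at $z=0$ is inherited from the Meijer-G parametrix $\Psi$. The matching condition \eqref{eq:local_param_decay_boundary} holds trivially because $L_0^{(1)}=G$ on $\partial D(\delta)$.

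The core of the proof is the jump estimate \eqref{eq:L01bnd} on $\partial D(r_n)$. Here $L_{0,-}^{(1)}=G$ (exterior) and $L_{0,+}^{(1)}=\widehat G\,n^{\widehat B/2}\widehat D_n^{-1}n^{-\widehat B/2}\widehat P$ (interior), and I would substitute the exact identity \eqref{eq:error_P} for $\widehat P$ to obtain, after a straightforward rearrangement using the factorization \eqref{def:first_error} of the $\mathcal K$-prefactor,
\begin{equation*}
L_{0,+}^{(1)}L_{0,-}^{(1),-1}=\widehat G\,n^{\widehat B/2}\widehat D_n^{-1}\,\mathcal M\,\widehat D_n\,n^{-\widehat B/2}\widehat G^{-1},
\qquad \mathcal M:=n^{\widehat B/2}(I_4+T_0+E_n^{(1)})\,n^{-\widehat B/2}.
\end{equation*}
The structural information about $T_0$ (strictly lower triangular, first column zero) combined with $\widehat B=\diag(0,1,0,-1)$ shows that the three nonzero entries of $T_0$ get rescaled by $n^{-1/2}$ or $n^{-1}$ upon conjugation, so $n^{\widehat B/2}T_0n^{-\widehat B/2}=\mathcal O(n^{-1/2})$. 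On $\partial D(r_n)$ with $r_n=n^{-3/2}$, the expansion \eqref{def:En1} gives $E_n^{(1)}=\mathcal O(n^{-3/2})$, and conjugation by $n^{\widehat B/2}$ can amplify this by at most $n$, yielding $\mathcal M=I_4+\mathcal O(n^{-1/2})$. Using the uniform boundedness of $\widehat D_n^{\pm 1}$ provided by Proposition~\ref{prop:hatD_estimates}, one gets $\widehat D_n^{-1}\mathcal M\widehat D_n=I_4+\mathcal O(n^{-1/2})$. A final conjugation by $\widehat G\,n^{\widehat B/2}$ amplifies the error by at most $n$ (again from the $(2,4)$-type entries of $\widehat B$), delivering the advertised estimate $I_4+\mathcal O(n^{1/2})$.

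The main subtlety I anticipate is precisely this last amplification step: the initial bound $\mathcal O(n^{-1/2})$ for $\mathcal M-I_4$ is comfortable, but the asymmetry of $\widehat B$ means the off-diagonal conjugation is lossy, and one must check carefully that no entry loses more than a factor $n$ -- which is why the block-structure information on $T_0$ and the Puiseux coefficients $\mathcal K_j$ (Lemma~\ref{lem:asycurlyK}) is essential. The estimate \eqref{eq:L01bnd} is therefore the best possible at this stage; the growing $n^{1/2}$ is harmless because $L_0^{(1)}$ is only the first in the sequence of approximations, and the subsequent four steps of the construction will systematically absorb the offending constant matrix $T_0$ (and higher-order Puiseux terms) to bring the jump across $\partial D(r_n)$ down to $I_4+\mathcal O(n^{-1/3})$ or better, as in \cite{KM19}.
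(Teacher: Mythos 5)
Your proposal is correct and follows essentially the same route as the paper's own proof. You correctly identify that items (1)--(3) and the outer matching are immediate (with the outer matching being exact since $L_0^{(1)}=G$ on $\partial D(\delta)$), and for the inner jump estimate you reproduce the paper's chain: substitute \eqref{eq:error_P} and \eqref{def:first_error}, use the triangular structure of $T_0$ and the $O(n^{-3/2})$ size of $E_n^{(1)}$ on $\partial D(r_n)$ to get $n^{\widehat B/2}(T_0+E_n^{(1)})n^{-\widehat B/2}=O(n^{-1/2})$, conjugate by the uniformly bounded $\widehat D_n^{\pm1}$ (Proposition~\ref{prop:hatD_estimates}), and finally absorb the $n$-loss from the $n^{\widehat B/2}$ conjugation and the bounded $\widehat G^{\pm1}$ (Proposition~\ref{lem:prefactor_global_param}). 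The only minor slip is in the closing forward-looking remark: the subsequent four corrections aim to reach the $O(n^{-1})$ rate of \eqref{eq:local_param_decay_shrinking_boundary} (passing through $O(1)$, $O(n^{-1/2})$, etc.), rather than ``$O(n^{-1/3})$ or better'' --- but this does not affect the proof of the proposition at hand.
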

\begin{proof}
Note that $\widehat P$ is analytic in $D(r_n)\setminus \Gamma_S$ and the global parametrix $G$ is analytic in $\mathbb{C} \setminus (-\infty, p]$. Thus, the analyticity properties of $L_0$ claimed in item (1) follows from the fact that both $\widehat G(z)$ and $\widehat D_n(z)^{-1}$ are analytic everywhere near $z=0$. The jumps claimed in item (2) follow from the jumps of $\widehat P$ and $G$ and again by the analyticity of $\widehat G(z)$ and $\widehat D_n(z)^{-1}$.  The local behavior of $L_0^{(1)}$ near $z=0$ can be seen from the behavior of $\widehat P$ near $z=0$ and it is also easily seen that the matching condition \eqref{eq:local_param_decay_boundary} is actually exact.

To show \eqref{eq:L01bnd}, we obtain from \eqref{eq:definition_L01}, \eqref{eq:error_P} and \eqref{def:first_error} that, for $z\in \partial D(r_n)\setminus \Gamma_S$ and $n \to \infty$,
\begin{equation}\label{eq:expansion_L01}
L_0^{(1)}(z) = \widehat G(z)n^{\frac{\widehat B}{2}}\widehat{D}_n(z)^{-1}n^{\frac{\widehat B}{2}}\left( I_4+T_0+E_n^{(1)}(z) \right)n^{-\frac{\widehat B}{2}}  \widehat{D}_n(z)n^{-\frac{\widehat B}{2}}\widehat G(z)^{-1}G(z).
\end{equation}
Since $T_0$ is a strictly lower triangular constant matrix with first column zero, we have
$$    n^{\frac{\widehat B}{2}}T_0n^{-\frac{\widehat B}{2}} = \Boh(n^{-1/2}),     $$
and
by \eqref{def:En1},
$$    n^{\frac{\widehat B}{2}}E_n^{(1)}(z)n^{-\frac{\widehat B}{2}} = \Boh(n^{-1/2}),  \qquad z\in  \partial D(r_n)\setminus \Gamma_S. $$
Inserting the above two estimates into \eqref{eq:expansion_L01}, we arrive at \eqref{eq:L01bnd} on account of the analiticity and boundedness in $n$ of both $\widehat G(z)$ and $\widehat D_n^{-1}(z)$ near the origin.

This completes the proof of Proposition \ref{prop:L01}.
\end{proof}

In view of \eqref{eq:L01bnd}, it follows that the matching condition \eqref{eq:local_param_decay_shrinking_boundary} is not satisfied. The next few steps are then devoted to refine the error term in \eqref{eq:L01bnd}.

\subsubsection*{Second step towards the matching}
In the second step, we eliminate the term $T_0$ in \eqref{eq:expansion_L01} by defining
\begin{align}\label{eq:definition_L02}
& L_0^{(2)}(z)
\nonumber
\\
& =
\begin{cases}
\begin{multlined}[b]
\widehat G(z)n^{\frac{\widehat B}{2}}\widehat{D}_n(z)^{-1}n^{\frac{\widehat B}{2}}
\vspace{-15pt}
\\
 \times \left(I_4+T_0 \right)^{-1}n^{-\frac{\widehat B}{2}}\widehat D_n(z)n^{-\frac{\widehat B}{2}}\widehat G(z)^{-1}L_0^{(1)}(z),
\end{multlined}
 & z\in D(r_n)\setminus \Gamma_S, \\
 & \\
                L_0^{(1)}(z)=G(z), & z\in D(\delta)\setminus\left( \overline{D(r_n)}\cup (-\delta,\delta)\right).p
\end{cases}
\end{align}
On account of the triangularity structure of $T_0$, we have that $T_0^3=0$, which implies
\begin{equation}\label{eq:inverse_T0}
(I_4+T_0)^{-1}=I_4-T_0+T_0^2.
\end{equation}
Thus, $L_0^{(2)}$ is well defined, and it has the follow properties.

\begin{prop}\label{prop:L02}
The matrix-valued function $L_0^{(2)}(z)$ defined in \eqref{eq:definition_L02} satisfies items (1)--(3) and the matching condition $\eqref{eq:local_param_decay_boundary}$ of the RH problem \ref{rhp:local_parametrix} for $L_0$. Moreover, we have, as $n\to \infty$,
\begin{equation}\label{eq:L02bnd}
L_{0,+}^{(2)}(z)=(I_4+\Boh(n^{1/2}))L_{0,-}^{(2)}(z), \qquad z\in \partial D(r_n)\setminus \Gamma_S.
\end{equation}
\end{prop}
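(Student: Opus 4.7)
The strategy is to mimic the verification done for $L_0^{(1)}$ in Proposition \ref{prop:L01}, showing that the new left‑multiplier inserted in \eqref{eq:definition_L02} is analytic, invertible, and leaves the jumps, the endpoint behavior at the origin and the exact matching on $\partial D(\delta)$ unchanged, while the substitution of \eqref{eq:expansion_L01} reveals that the obstruction coming from $T_0$ is exactly cancelled on $\partial D(r_n)$.

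First I would check items (1)--(3) and the exact matching \eqref{eq:local_param_decay_boundary}. The inverse $(I_4+T_0)^{-1}=I_4-T_0+T_0^2$ is a polynomial in $T_0$ because $T_0^3=0$, cf.\ \eqref{eq:inverse_T0}, and the strictly lower triangular structure of $T_0$ with vanishing first column is preserved under this inversion. Consequently the matrix
\[
\mathcal{M}(z):=\widehat G(z)\, n^{\widehat B/2}\widehat D_n(z)^{-1} n^{\widehat B/2}
\]
and its inverse $n^{-\widehat B/2}\widehat D_n(z)n^{-\widehat B/2}\widehat G(z)^{-1}$ are both analytic in a full neighborhood of the origin, by Proposition \ref{lem:prefactor_global_param} and the first part of Proposition \ref{prop:hatD_estimates}. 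Therefore the full prefactor $\mathcal{M}(z)(I_4+T_0)^{-1}\mathcal{M}(z)^{-1}$ is analytic and invertible on $D(r_n)$, which immediately transfers the analyticity properties, the jumps on $\Gamma_S\cap D(r_n)$ and the endpoint behavior from $L_0^{(1)}$ to $L_0^{(2)}$. The matching on $\partial D(\delta)$ is preserved verbatim since outside $\overline{D(r_n)}$ both functions coincide with $G$.

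Next I would compute the new jump on $\partial D(r_n)$. Combining \eqref{eq:definition_L02} with the identity \eqref{eq:expansion_L01} and cancelling $\mathcal{M}(z)^{-1}\mathcal{M}(z)$ gives, for $z\in\partial D(r_n)\setminus\Gamma_S$ approached from the inside,
\[
L_0^{(2)}(z)\;=\;\mathcal{M}(z)\Bigl(I_4+(I_4+T_0)^{-1}E_n^{(1)}(z)\Bigr)\mathcal{M}(z)^{-1}G(z),
\]
whereas from the outside one has simply $L_0^{(2)}(z)=G(z)$. Hence the jump ratio across $\partial D(r_n)$ equals $\mathcal{M}(z)\bigl(I_4+(I_4+T_0)^{-1}E_n^{(1)}(z)\bigr)\mathcal{M}(z)^{-1}$. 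The cancellation of the undesired constant contribution $T_0$ is precisely the purpose of the insertion of $(I_4+T_0)^{-1}$; what remains inside the parentheses is a power series in $1/(n^3 z)$ whose coefficients, by \eqref{def:En1} and \eqref{eq:inverse_T0}, are independent of $n$.

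Finally I would estimate the conjugation. From Proposition \ref{prop:hatD_estimates} both $\widehat D_n(z)^{\pm1}$ are uniformly bounded on $\overline{D(r_n)}$, and $\widehat G(z)^{\pm 1}$ are analytic and bounded near the origin. Thus the $n$-growth in $\mathcal{M}(z)(\,\cdot\,)\mathcal{M}(z)^{-1}$ is controlled solely by the conjugation by $n^{\widehat B/2}=\diag(1,n^{1/2},1,n^{-1/2})$, which can boost an entry of a perturbation matrix by at most a factor $n$. On $\partial D(r_n)$ one has $|z|^{-j}=n^{3j/2}$, so the $j$-th summand of $(I_4+T_0)^{-1}E_n^{(1)}(z)$ contributes at most $\Boh(n\cdot n^{-3j/2})$, and summing these estimates yields the bound stated in \eqref{eq:L02bnd}.

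The technically delicate point, and the main obstacle in my view, is making the bookkeeping of the $n^{\widehat B/2}$-conjugation precise enough so as not to lose the improvement: naively, since the jump order has not decreased as compared to \eqref{eq:L01bnd}, the present step looks redundant. Its value is structural, not quantitative: the elimination of the constant matrix $T_0$ creates the algebraic room needed for the next peeling steps (which will exploit the precise column/row locations of the nonzero entries in the matrix coefficients $A_j^{(1)}$ in \eqref{def:En1}) to actually lower the power of $n$ in the matching error.
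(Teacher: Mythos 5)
Your structural plan is exactly the paper's: analyticity of the prefactor $\mathcal{M}(z)(I_4+T_0)^{-1}\mathcal{M}(z)^{-1}$ transfers items (1)--(3) and the matching on $\partial D(\delta)$, and the substitution of $L_0^{(1)}(z)G(z)^{-1}$ from \eqref{eq:expansion_L01} reveals that the $T_0$ contribution cancels on $\partial D(r_n)$, leaving $\mathcal{M}(z)\bigl(I_4+(I_4+T_0)^{-1}E_n^{(1)}(z)\bigr)\mathcal{M}(z)^{-1}$. The paper arrives at the same jump ratio, merely reorganizing the inner expansion through the matrices $A_k^{(2)}(z)=\widehat D_n(z)^{-1}n^{\widehat B/2}A_k^{(1)}n^{-\widehat B/2}\widehat D_n(z)$ of Lemma \ref{lem:Ak2}.

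The gap is in the final quantitative step. You claim the conjugation by $\mathcal{M}(z)$ ``can boost an entry of a perturbation matrix by at most a factor $n$,'' and then estimate the $j$-th summand as $\Boh(n\cdot n^{-3j/2})$, which for $j=1$ gives $\Boh(n^{-1/2})$. This is off by a factor of $n$. Write out the conjugation of a constant matrix $A$:
$$
\mathcal{M}(z)\,A\,\mathcal{M}(z)^{-1}
=\widehat G(z)\,n^{\frac{\widehat B}{2}}\,\widehat D_n(z)^{-1}\bigl(n^{\frac{\widehat B}{2}}A\,n^{-\frac{\widehat B}{2}}\bigr)\widehat D_n(z)\,n^{-\frac{\widehat B}{2}}\,\widehat G(z)^{-1}.
$$
There are \emph{two} nested applications of $n^{\pm\widehat B/2}(\cdot)n^{\mp\widehat B/2}$: the inner one turns $A=\Boh(1)$ into $\Boh(n)$, then conjugation by the bounded $\widehat D_n^{\pm 1}$ keeps it $\Boh(n)$, and the outer $n^{\widehat B/2}(\cdot)n^{-\widehat B/2}$ can boost again by $n$, giving $\Boh(n^2)$ overall. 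Hence the $j$-th term is $\Boh(n^{2-3j/2})$, which for $j=1$ yields precisely the $\Boh(n^{1/2})$ asserted in \eqref{eq:L02bnd}. Had your $\Boh(n^{-1/2})$ bound for $j=1$ been correct, $L_0^{(2)}$ would essentially already satisfy the shrinking-circle matching and the subsequent transformations $L_0^{(3)},L_0^{(4)},L_0^{(5)}$ would be superfluous --- which is inconsistent with your own (correct) remark that the purpose of the present step is structural, not quantitative. The fix is just to replace the single factor $n$ by $n^2$ in your bookkeeping; everything else stands.
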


To prove the above proposition and for later convenience, we need the following lemma, which is a version of the key observation  \cite[Proposition~5.15]{KM19} adapted to our setting.

\begin{lem}\label{lem:Ak2}
With $A_k^{(1)}$, $k=1,2,\ldots$, being the constant matrix in \eqref{def:En1}, define
\begin{equation}\label{def:Ak2}
A_k^{(2)}(z):=\widehat D_n(z)^{-1}n^{\frac{\widehat B}{2}} A^{(1)}_k n^{-\frac{\widehat B}{2}}\widehat D_n(z).
\end{equation}
Then, $A_k^{(2)}(z)$ is analytic near the origin, and we have, as $n \to \infty$, for any indices $k_1,k_2,\ldots,k_m$,
\begin{equation}\label{eq:estimate_Ak2}
A_{k_1}^{(2)}(z)A_{k_2}^{(2)}(z)\cdots A_{k_m}^{(2)}(z)=\Boh(n),
\end{equation}
uniformly for $z\in \overline{ D(r_n)}$, and
\begin{equation}\label{eq:estimate_Ak22}
A_{k_1}^{(2)}(z_1)A_{k_2}^{(2)}(z_2)\cdots A_{k_m}^{(2)}(z_m)=\Boh(n^{m-l}),
\end{equation}
uniformly for $z_1,\ldots, z_m \in \overline{D(r_n)}$, where
$$l:= \#\{j| 1 \leq j\leq m, z_j=z_{j+1} \}.$$
\end{lem}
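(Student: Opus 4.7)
The analyticity of $A_k^{(2)}(z)$ is immediate from Proposition~\ref{prop:hatD_estimates}: $\widehat D_n(z)^{\pm 1}$ are analytic in a neighborhood of the origin that contains $\overline{D(r_n)}$ for large $n$, and $n^{\widehat B/2} A_k^{(1)} n^{-\widehat B/2}$ is a constant matrix. The crucial structural observation, which I would extract first, is that each $A_k^{(1)}$ has vanishing first row and first column. Indeed, the identity \eqref{def:first_error} has the block form $\mathrm{diag}(1,\, z^{-B/3}\mathcal{K}(z)z^{B/3})$ on the left, while $n^{\pm\widehat B}$ is block-diagonal of the form $\mathrm{diag}(1,n^{\pm B})$, so the conjugation preserves this block structure. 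Subtracting $I_4$ and reading off the coefficient of $z^{-j}$ in the expansion \eqref{def:En1} shows that both $T_0$ and every $A_k^{(1)}$ have zero first row and zero first column, a property that is preserved under matrix multiplication.

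For the first estimate \eqref{eq:estimate_Ak2}, a direct telescoping yields
\begin{equation*}
A_{k_1}^{(2)}(z)\cdots A_{k_m}^{(2)}(z) \;=\; \widehat D_n(z)^{-1}\, n^{\widehat B/2}\bigl(A_{k_1}^{(1)}\cdots A_{k_m}^{(1)}\bigr) n^{-\widehat B/2}\, \widehat D_n(z).
\end{equation*}
Since $\widehat D_n(z)^{\pm 1}=\Boh(1)$ uniformly on $\overline{D(r_n)}$ by Proposition~\ref{prop:hatD_estimates}, it suffices to bound the middle factor. For any constant matrix $X$ with vanishing first row and column, the entries of $n^{\widehat B/2} X n^{-\widehat B/2}$ scale as $n^{(\widehat B_i - \widehat B_j)/2}X_{ij}$; with $\widehat B=\mathrm{diag}(0,1,0,-1)$, the exponent $(\widehat B_i - \widehat B_j)/2$ is maximized by the value $1$ at $(i,j)=(2,4)$ (noting that rows/columns with index $1$ contribute nothing since $X_{1j}=X_{i1}=0$). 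Applying this with $X = A_{k_1}^{(1)}\cdots A_{k_m}^{(1)}$, which still has the zero first row and column, yields the bound $\Boh(n)$.

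For the second estimate \eqref{eq:estimate_Ak22}, I define $R_j := \widehat D_n(z_j)\widehat D_n(z_{j+1})^{-1} - I_4$ so that, by \eqref{eq:derivative_rule_Dhat} combined with $|z_j - z_{j+1}|\le 2r_n = 2n^{-3/2}$, one has $R_j=\Boh(1)$ uniformly on $\overline{D(r_n)}$, and crucially $R_j = 0$ whenever $z_j = z_{j+1}$. Inserting $I_4 = \widehat D_n(z_j)^{-1}\widehat D_n(z_j)$ between consecutive factors and grouping,
\begin{equation*}
\prod_{j=1}^m A_{k_j}^{(2)}(z_j) = \widehat D_n(z_1)^{-1}\!\left[\prod_{j=1}^{m-1} n^{\widehat B/2}A_{k_j}^{(1)}n^{-\widehat B/2}(I_4+R_j)\right] n^{\widehat B/2}A_{k_m}^{(1)}n^{-\widehat B/2}\, \widehat D_n(z_m).
\end{equation*}
Expanding $\prod_{j=1}^{m-1}(I_4+R_j)$ into $2^{m-1}$ terms indexed by subsets $S\subset\{1,\ldots,m-1\}$, each surviving term places $R$-factors at positions of $S$ (the $l$ positions with $z_j=z_{j+1}$ contribute $R_j=0$, so $S$ is confined to the remaining $m-1-l$ indices). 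Each such term partitions the $A^{(1)}$-factors into $|S|+1$ consecutive \emph{segments} of merged conjugations of the form $n^{\widehat B/2}(A^{(1)}_{i_1}\cdots A^{(1)}_{i_k})n^{-\widehat B/2}$, each of which is $\Boh(n)$ by the argument above, while each $R_j=\Boh(1)$. Therefore each term is $\Boh(n^{|S|+1})$, and maximizing over $|S|\le m-1-l$ gives the overall bound $\Boh(n^{m-l})$.

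\textbf{Main obstacle.} The heart of the argument is the structural observation that $A_k^{(1)}$ has zero first row and column: without it, the worst-case $(2,4)$-scaling of $\Boh(n)$ would compound under products, producing only $\Boh(n^m)$ and ruining the telescoping gain that is essential for matching \eqref{eq:local_param_decay_shrinking_boundary}. The combinatorial bookkeeping of segments versus $R$-insertions is routine once this structural fact is in hand.
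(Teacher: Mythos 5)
Your proof is correct, and for the parts the paper spells out (analyticity, the telescoping identity for equal arguments, the use of Proposition~\ref{prop:hatD_estimates}) it follows the same route. For the second estimate \eqref{eq:estimate_Ak22} the paper simply says "similar, details omitted"; your expansion in the $R_j$'s together with the segment-counting argument is a valid way to fill that in. A marginally more economical route is to observe that whenever $z_j=z_{j+1}$ the factor $\widehat D_n(z_j)\widehat D_n(z_{j+1})^{-1}$ is \emph{exactly} the identity, so one can group the indices into $m-l$ maximal runs of consecutive equal arguments, telescope inside each run, and bound each of the $m-l$ resulting conjugated blocks by $\Boh(n)$ and each of the $m-l-1$ separating factors $\widehat D_n(z_b)\widehat D_n(z_{b'})^{-1}$ by $\Boh(1)$; this gives $\Boh(n^{m-l})$ directly, without the $2^{m-1}$-term expansion. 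Your version amounts to expanding each of those separators as $I_4 + R_j$, which is fine but not needed.

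One claim in your ``Main obstacle'' paragraph is off the mark and should be retracted. The structural fact that $A_k^{(1)}$ has vanishing first row and column is true (and can be read from the block form $\diag(1,n^B)$ and $\diag(1,\cdot)$ on both sides of \eqref{def:first_error}), but it plays no role. Since $\widehat B = \diag(0,1,0,-1)$, the $(i,j)$-entry of $n^{\widehat B/2}Xn^{-\widehat B/2}$ scales as $n^{(\widehat B_i-\widehat B_j)/2}X_{ij}$, and $\max_{i,j}(\widehat B_i-\widehat B_j)/2=1$ is attained at $(i,j)=(2,4)$. The index $1$, which carries $\widehat B_1=0$, is not extremal, so zeroing out that row and column does not lower the bound: $n^{\widehat B/2}Xn^{-\widehat B/2}=\Boh(n)$ holds for \emph{every} constant $X$. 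Likewise, the reason the bound does not degenerate to $\Boh(n^m)$ is not that $A_k^{(1)}$ has a special shape; it is that the interior factors $n^{-\widehat B/2}\widehat D_n(z_j)\widehat D_n(z_{j+1})^{-1}n^{\widehat B/2}$ either cancel exactly (when $z_j=z_{j+1}$) or are $\Boh(n)$ at worst, and that the telescoping collapses the conjugations so that no $n$-dependent factors multiply across the whole chain. In short: the argument is carried by the telescoping and Proposition~\ref{prop:hatD_estimates}, not by the vanishing first row and column.
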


\begin{proof}
The analyticity of $A_k^{(2)}(z)$ near the origin follows directly from its definition and the analyticity of $\widehat D_n(z)^{\pm 1}$.

By \eqref{def:Ak2}, it is readily seen that
$$
A_{k_1}^{(2)}(z)\cdots A_{k_m}^{(2)}(z) =\widehat D_n(z)^{-1}n^{\frac{\widehat B}{2}} A_{k_1}^{(1)}\cdots A_{k_m}^{(1)} n^{-\frac{\widehat B}{2}}\widehat D_n(z).
$$
This, together with the fact that $\widehat D_n(z)$ remains bounded for $z\in \overline{D(r_n)}$ as $n\to\infty$ (see Proposition~\ref{prop:hatD_estimates}) gives us \eqref{eq:estimate_Ak2}. The proof of \eqref{eq:estimate_Ak22} is similar to that of \eqref{eq:estimate_Ak2}, we omit the details here.

This completes the proof of Lemma \ref{lem:Ak2}.
\end{proof}

\paragraph{Proof of Proposition \ref{prop:L02}}
In \eqref{eq:definition_L02}, the factor multiplying $L_0^{(1)}$ to the left is analytic on $D(\delta)$, which then gives that $L_0^{(2)}$ still satisfies items (1)--(3) of the RH problem \ref{rhp:local_parametrix}, and the matching condition \eqref{eq:local_param_decay_boundary} is obvious as well.

To show \eqref{eq:L02bnd}, we see from  \eqref{eq:definition_L02}, \eqref{eq:expansion_L01} and \eqref{def:En1} that
\begin{align}
&L_{0,+}^{(2)}(z)L_{0,-}^{(2)}(z)^{-1}
\nonumber
\\
& =\widehat G(z)n^{\frac{\widehat B}{2}}\widehat{D}_n(z)^{-1}n^{\frac{\widehat B}{2}}\left(I_4+T_0 \right)^{-1}n^{-\frac{\widehat B}{2}}\widehat D_n(z)n^{-\frac{\widehat B}{2}}\widehat G(z)^{-1}L_0^{(1)}(z)G(z)^{-1}
\nonumber
\\
&=\widehat G(z)n^{\frac{\widehat B}{2}}\widehat{D}_n(z)^{-1}n^{\frac{\widehat B}{2}}\left(I_4+T_0\right)^{-1}(I_4+T_0+E_n^{(1)}(z))n^{-\frac{\widehat B}{2}}\widehat D_n(z)n^{-\frac{\widehat B}{2}}\widehat G(z)^{-1}\nonumber \\
&=\widehat G(z)\left(I_4+n^{\frac{\widehat B}{2}}\widehat{D}_n(z)^{-1}n^{\frac{\widehat B}{2}}\left(I_4+T_0\right)^{-1}E_n^{(1)}(z)n^{-\frac{\widehat B}{2}}\widehat D_n(z)n^{-\frac{\widehat B}{2}}\right)\widehat G(z)^{-1}
\nonumber
\\
& = \widehat G(z)\left(I_4+ \frac{n^{\frac{\widehat B}{2}}A_1^{(2)}(z)n^{-\frac{\widehat B}{2}}}{n^3z}+E_n^{(2)}(z) \right)\widehat G(z)^{-1},
\label{eq:matching_boundary_L02}
\end{align}
uniformly for $z\in \partial D(r_n)$ as $n\to \infty$, where
\begin{equation}\label{eq:expansion_E2}
E_n^{(2)}(z)\sim \sum_{k= 2}^\infty \frac{n^{\frac{\widehat B}{2}}A^{(2)}_k(z)n^{-\frac{\widehat B}{2}}}{n^{3k}z^k}, \qquad z\in \partial D(r_n), \qquad  n\to\infty.
\end{equation}
and $A_k^{(2)}(z)$ is defined in \eqref{def:Ak2}.

By \eqref{eq:estimate_Ak2} and \eqref{eq:scaling_boundary}, it follows that
\begin{equation*}
\frac{n^{\frac{\widehat B}{2}}A_1^{(2)}(z)n^{-\frac{\widehat B}{2}}}{n^3z}=\Boh(n^{1/2}), \qquad z\in \partial D(r_n), \qquad n\to\infty.
\end{equation*}
Similarly, we obtain from \eqref{eq:expansion_E2}, \eqref{eq:estimate_Ak2} and \eqref{eq:scaling_boundary} that
\begin{equation*}
E_n^{(2)}(z)=\Boh(n^{-1}), \qquad z\in \partial D(r_n), \qquad n\to\infty.
\end{equation*}
A combination of the above two estimates, \eqref{eq:matching_boundary_L02} and Proposition \ref{lem:prefactor_global_param} then gives us \eqref{eq:L02bnd}.

This completes the proof of Proposition \ref{prop:L02}. \qed

\subsubsection*{Third step towards the matching}

In the third step, we eliminate the growing term in \eqref{eq:matching_boundary_L02} by defining
\begin{align}\label{eq:definition_L03}
& L_0^{(3)}(z)=
\nonumber \\
& \begin{cases}
\widehat G(z)n^{\frac{\widehat B}{2}}\left(I_4-\dfrac{A^{(2)}_1(z)-A^{(2)}_1(0)}{n^3z}\right)n^{-\frac{\widehat B}{2}}\widehat G(z)^{-1}L_0^{(2)}(z), & z\in D(r_n)\setminus \Gamma_S, \\
\widehat G(z) n^{\frac{\widehat B}{2}}\left(I_4-\dfrac{A^{(2)}_1(0)}{n^3z}\right)^{-1} n^{-\frac{\widehat B}{2}}\widehat G(z)^{-1}L_0^{(2)}(z), & z\in D(\delta)\setminus\left( \overline{D(r_n)}\cup (-\delta,\delta)\right).
\end{cases}
\end{align}
Here, we observe from \eqref{eq:estimate_Ak2} and \eqref{eq:scaling_boundary} that, as $n\to \infty$,
\begin{equation}\label{eq:estA210}
\dfrac{A^{(2)}_1(0)}{n^3z}=
\begin{cases}
\Boh(n^{-1/2}), & \quad z\in \partial D(r_n),
\\
\Boh(n^{-2}), & \quad z\in \partial D(\delta),
\end{cases}
\end{equation}
which implies that the inverse of $I_4-\dfrac{A^{(2)}_1(0)}{n^3z}$ is well-defined in the definition of $L_0^{(3)}$. We then have the following proposition.

\begin{prop}\label{prop:L03}
The matrix-valued function $L_0^{(3)}(z)$ defined in \eqref{eq:definition_L03} satisfies items (1)--(3) and the matching condition $\eqref{eq:local_param_decay_boundary}$ of the RH problem \ref{rhp:local_parametrix} for $L_0$. Moreover, we have, as $n\to \infty$,
\begin{equation}\label{eq:L03bnd}
L_{0,+}^{(3)}(z)=(I_4+\Boh(1))L_{0,-}^{(3)}(z), \qquad z\in \partial D(r_n)\setminus \Gamma_S.
\end{equation}
\end{prop}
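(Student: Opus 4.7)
I will verify the four claims of Proposition~\ref{prop:L03} following the pattern of Propositions~\ref{prop:L01} and~\ref{prop:L02}, with the matching across $\partial D(r_n)$ being the main point.

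For items (1)--(3) and the matching \eqref{eq:local_param_decay_boundary}, observe first that the prefactor multiplying $L_0^{(2)}$ in the first line of \eqref{eq:definition_L03} is analytic throughout $D(r_n)$: the quotient $(A_1^{(2)}(z)-A_1^{(2)}(0))/z$ is analytic at $z=0$ since $A_1^{(2)}$ is analytic there by Lemma~\ref{lem:Ak2}, while $\widehat G$ and $\widehat G^{-1}$ are analytic near the origin by Proposition~\ref{lem:prefactor_global_param}. In the annulus $D(\delta)\setminus\overline{D(r_n)}$ the prefactor contains $(I_4-A_1^{(2)}(0)/(n^3z))^{-1}$, which is well-defined for $n$ sufficiently large by \eqref{eq:estA210}. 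Since neither prefactor intersects $\Gamma_S$ nor affects the point $z=0$, the jumps of $L_0^{(2)}$ and its local behavior at the origin are preserved. The matching \eqref{eq:local_param_decay_boundary} follows from $L_0^{(2)}(z)=G(z)$ on $\partial D(\delta)$ together with the estimate
\[
n^{\widehat B/2}\!\left(I_4-\frac{A_1^{(2)}(0)}{n^3z}\right)^{\!-1}\!n^{-\widehat B/2}=I_4+\Boh(n^{-1}),\qquad z\in\partial D(\delta),
\]
obtained from \eqref{eq:estA210} and the fact that conjugation by $n^{\widehat B/2}$ amplifies matrix norms by at most a factor of $n$, since the entries of $\widehat B$ lie in $\{-1,0,1\}$.

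The core of the proof is the jump estimate \eqref{eq:L03bnd}. Taking interior ($+$) and exterior ($-$) boundary values of $L_0^{(3)}$ on $\partial D(r_n)$ from the two lines of \eqref{eq:definition_L03} and inserting the jump \eqref{eq:matching_boundary_L02} for $L_0^{(2)}$, a direct rearrangement yields
\[
L_{0,+}^{(3)}(z)\,L_{0,-}^{(3)}(z)^{-1}=\widehat G(z)\,n^{\widehat B/2}M(z)\,n^{-\widehat B/2}\widehat G(z)^{-1},
\]
where, abbreviating $\lambda=1/(n^3z)$, $A=A_1^{(2)}(z)$, $A_0=A_1^{(2)}(0)$, $\tilde A=A-A_0$ and $F=n^{-\widehat B/2}E_n^{(2)}(z)n^{\widehat B/2}$, the matrix $M(z)$ reads
\[
M(z)=(I_4-\lambda\tilde A)(I_4+\lambda A+F)(I_4-\lambda A_0).
\]
The prefactors in \eqref{eq:definition_L03} are chosen precisely so that the first-order contribution in $\lambda$ cancels: using $A=A_0+\tilde A$, a direct expansion yields
\[
M(z)=I_4-\lambda^2(A_0^2+\tilde A A)+\lambda^3\tilde A A A_0+(I_4-\lambda\tilde A)F(I_4-\lambda A_0).
\]

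It remains to estimate each summand on $\partial D(r_n)$, where $\lambda=\Boh(n^{-3/2})$. Lemma~\ref{lem:Ak2} provides $A,A_0=\Boh(n)$ as well as the product bounds needed to control $A_0^2$, $\tilde A A$ and $\tilde A A A_0$, yielding $\lambda^2 A_0^2=\Boh(n^{-2})$, $\lambda^2\tilde A A=\Boh(n^{-1})$ and $\lambda^3\tilde A A A_0=\Boh(n^{-3/2})$, while the definition of $E_n^{(2)}$ together with \eqref{eq:scaling_boundary} gives $F=\Boh(n^{-2})$ and therefore $(I_4-\lambda\tilde A)F(I_4-\lambda A_0)=\Boh(n^{-2})$. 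Combining, $M(z)-I_4=\Boh(n^{-1})$, and conjugation by $n^{\widehat B/2}$ amplifies this by at most a factor of $n$, producing \eqref{eq:L03bnd}. The hardest part is not any single algebraic manipulation but the careful bookkeeping in this final estimation: one must verify the exact cancellation of the first-order $\lambda$-terms and then tightly bound each remaining summand, so that the outer amplification by $n^{\widehat B/2}$ still yields a bounded rather than growing error.
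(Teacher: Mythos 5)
Your proof is correct and follows essentially the same route as the paper: express the jump on $\partial D(r_n)$ as $\widehat G\,n^{\widehat B/2}M(z)n^{-\widehat B/2}\widehat G^{-1}$, verify the exact first-order cancellation in $\lambda=1/(n^3z)$ built into the definition of $L_0^{(3)}$, then bound the quadratic and higher terms via Lemma~\ref{lem:Ak2} and account for the factor-of-$n$ amplification from conjugation by $n^{\widehat B/2}$. The only cosmetic difference is that you keep the closed-form decomposition $M=I_4-\lambda^2(A_0^2+\tilde A A)+\lambda^3\tilde A A A_0+(I_4-\lambda\tilde A)F(I_4-\lambda A_0)$ in one shot, while the paper expands in two stages and isolates the dominant products $A_1^{(2)}(0)A_1^{(2)}(z)$ and $A_1^{(2)}(0)A_1^{(2)}(z)A_1^{(2)}(0)$; the estimates and conclusion are identical.
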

\begin{proof}
As before, the fact that the prefactors multiplying $L_0^{(2)}$ to the left are analytic makes sure that $L_0^{(3)}$, too, satisfies items  (1)--(3) of the RH problem \ref{rhp:local_parametrix}. Since $L_0^{(2)}$ already satisfies \eqref{eq:local_param_decay_boundary}, it is then easily seen from \eqref{eq:estA210} that $L_0^{(3)}$ satisfies \eqref{eq:local_param_decay_boundary} as well.

To show \eqref{eq:L03bnd}, we begin with some elementary estimates. From Lemma \ref{lem:Ak2} and \eqref{eq:expansion_E2}, it is easily seen that for $z_1,z_2 \in \partial D(r_n)$ and $n\to\infty$,
\begin{equation}\label{eq:estinL03}
n^{-\frac{\widehat B}{2}}E_n^{(2)}(z_1)n^{\frac{\widehat B}{2}}=\Boh(n^{-2}),
\qquad
\frac{A_1^{(2)}(z_1)A_1^{(2)}(z_2) }{n^6z_2^2}
=
\begin{cases}
\Boh(n^{-1}), & \quad z_1 \neq z_2,
\\
\Boh(n^{-2}), & \quad z_1=z_2.
\end{cases}
\end{equation}
Thus, it follows from \eqref{eq:estA210} and the above formula that
\begin{multline*}
\left(I_4-\dfrac{A^{(2)}_1(z)-A^{(2)}_1(0)}{n^3z}\right)\left(I_4+ \frac{A_1^{(2)}(z)}{n^3z}+n^{-\frac{\widehat B}{2}}E_n^{(2)}(z)n^{\frac{\widehat B}{2}} \right)
\\
=I_4 +\frac{A_1^{(2)}(0)}{n^3z} + \frac{A_1^{(2)}(0)A_1^{(2)}(z) }{n^6z^2}+\Boh(n^{-2}),
\end{multline*}
uniformly on $\partial D(r_n)$ as $n\to \infty$. Combining this with \eqref{eq:matching_boundary_L02}, \eqref{eq:definition_L03}, \eqref{eq:estA210} and \eqref{eq:estinL03}, we get that, as $n\to \infty$,
\begin{align}\label{eq:L03onDrn}
&L_{0,+}^{(3)}(z)L_{0,-}^{(3)}(z)^{-1}
\nonumber
\\
&=\widehat G(z)n^{\frac{\widehat B}{2}}\left(I_4-\dfrac{A^{(2)}_1(z)-A^{(2)}_1(0)}{n^3z}\right)n^{-\frac{\widehat B}{2}}\widehat G(z)^{-1}L_{0,+}^{(2)}(z)L_{0,-}^{(2)}(z)^{-1}
\nonumber
\\
& \quad \times \widehat G(z) n^{\frac{\widehat B}{2}}\left(I_4-\dfrac{A^{(2)}_1(0)}{n^3z}\right)n^{-\frac{\widehat B}{2}}\widehat G(z)^{-1}
 \nonumber
\\
& =\widehat G(z)n^{\frac{\widehat B}{2}} \left(I_4-\dfrac{A^{(2)}_1(z)-A^{(2)}_1(0)}{n^3z}\right)\left(I_4+ \frac{A_1^{(2)}(z)}{n^3z}+n^{-\frac{\widehat B}{2}}E_n^{(2)}(z)n^{\frac{\widehat B}{2}} \right)
\nonumber
\\
& \quad \times \left(I_4-\dfrac{A^{(2)}_1(0)}{n^3z}\right)n^{-\frac{\widehat B}{2}}\widehat G(z)^{-1}
\nonumber
\\
& =\widehat G(z)n^{\frac{\widehat B}{2}} \left(I_4 +\frac{A_1^{(2)}(0)}{n^3z} + \frac{A_1^{(2)}(0)A_1^{(2)}(z) }{n^6z^2}+\Boh(n^{-2})\right)\left(I_4-\dfrac{A^{(2)}_1(0)}{n^3z}\right)n^{-\frac{\widehat B}{2}}\widehat G(z)^{-1}
\nonumber
\\
& =
\widehat G(z) n^{\frac{\widehat B}{2}}\left( I_4 + \frac{A_1^{(2)}(0)A_1^{(2)}(z) }{n^6z^2} - \frac{A_1^{(2)}(0)A_1^{(2)}(z)A_1^{(2)}(0)}{n^{9}z^3} +\Boh(n^{-2}) \right)n^{-\frac{\widehat B}{2}}  \widehat G(z)^{-1},
\end{align}
uniformly for $z\in \partial D(r_n)$. Again, by Lemma \ref{lem:Ak2}, we see that
\begin{equation*}
n^{\frac{\widehat B}{2}} \frac{A_1^{(2)}(0)A_1^{(2)}(z) }{n^6z^2} n^{-\frac{\widehat B}{2}} =\Boh(1), \qquad n^{\frac{\widehat B}{2}} \frac{A_1^{(2)}(0)A_1^{(2)}(z)A_1^{(2)}(0)}{n^{9}z^3} n^{-\frac{\widehat B}{2}}=\Boh(n^{-1/2}),
\end{equation*}
uniformly on $\partial D(r_n)$ as $n\to \infty$. Inserting the above estimates into \eqref{eq:L03onDrn} gives us \eqref{eq:L03bnd}.

This completes the proof of Proposition \ref{prop:L03}.
\end{proof}

As a preparation for the next step, we now introduce some new functions to rewrite \eqref{eq:L03onDrn} in a convenient form. With $A_1^{(2)}(z)$ defined in \eqref{def:Ak2}, write
\begin{equation}\label{eq:identity_A12}
A_1^{(2)}(0)A_1^{(2)}(z)=z A_1^{(3)}(z)+A_1^{(2)}(0)^2,
\end{equation}
where
\begin{equation}\label{def:A13}
A_1^{(3)}(z):=\frac{1}{z}A_1^{(2)}(0)(A_1^{(2)}(z)-A_1^{(2)}(0))
\end{equation}
is analytic on $D(r_n)$. In view of Lemma \ref{lem:Ak2}, we have
\begin{equation}\label{eq:A13bnd}
A_1^{(3)}(z)=\Boh(n^{7/2}),
\end{equation}
and by applying Lemma~\ref{lem:key_difference_estimate} to the bounded analytic function $A_1^{(3)}(z)/n^{7/2}$, it follows that
\begin{equation}\label{eq:estimate_A31}
A_1^{(3)}(z)=A_1^{(3)}(0)+\Boh(n^{5}z),
\end{equation}
both of which being valid uniformly for $z\in \overline{D(r_n)}$ as $n\to\infty$. Combining \eqref{eq:identity_A12} with \eqref{eq:estinL03} then gives us
\begin{equation}\label{eq:decompA12}
\frac{A_1^{(2)}(0) A_1^{(2)}(z)}{n^6z^2}=\frac{A_1^{(3)}(z)}{n^6z}+\Boh(n^{-2}).
\end{equation}

In a way similar to \eqref{eq:identity_A12}, we also rewrite the other fraction in \eqref{eq:L03onDrn} into the form
\begin{equation}\label{decomposition:A32}
-\frac{A_1^{(2)}(0)A_1^{(2)}(z)A_1^{(2)}(0)}{n^{9}z^3} =
\frac{ A_2^{(3)}(z) }{n^{9}z}
+\frac{A_1^{(2)}(z)^3-A_1^{(2)}(z)^2A_1^{(2)}(0)-A_1^{(2)}(0)A_1^{(2)}(z)^2}{n^{9}z^3},
\end{equation}
where
\begin{equation}\label{def:A32}
A^{(3)}_2(z)=\frac{1}{z^2}(A_1^{(2)}(z)-A_1^{(2)}(0))A_1^{(2)}(z)(A_1^{(2)}(0)-A_1^{(2)}(z))
\end{equation}
is an analytic function near the origin. Applying the same arguments as in \eqref{eq:A13bnd} and \eqref{eq:estimate_A31}, it is readily seen that
\begin{equation}\label{eq:estimate_A32}
A_2^{(3)}(z)=\Boh(n^{6}),\qquad  A^{(3)}_2(z)=A_2^{(3)}(0)+\Boh(n^{15/2}z),
\end{equation}
which is valid, as always, uniformly for $z\in \overline{D(r_n)}$ as $n\to\infty$. Due to the decomposition \eqref{decomposition:A32}, we again obtain from Lemma \ref{lem:Ak2} that
\begin{equation}\label{eq:decompA32}
-\frac{A_1^{(2)}(0)A_1^{(2)}(z)A_1^{(2)}(0)}{n^{9}z^3} =
\frac{ A_2^{(3)}(z) }{n^{9}z}
+\Boh(n^{-5/2}).
\end{equation}
Inserting the estimates \eqref{eq:decompA12} and \eqref{eq:decompA32} into \eqref{eq:L03onDrn} we obtain
\begin{equation}\label{eq:L03onDrn2}
L_{0,+}^{(3)}(z)L_{0,-}^{(3)}(z)^{-1} =
\widehat G(z) n^{\frac{\widehat B}{2}}\left( I_4 + \frac{A_1^{(3)}(z) }{n^6z} +\frac{ A^{(3)}_2(z) }{n^{9}z} +\Boh(n^{-2}) \right) n^{-\frac{\widehat B}{2}}\widehat G(z)^{-1},
\end{equation}
uniformly valid for $z\in \partial D(r_n)$ as $n\to\infty$.

\subsubsection*{Fourth step towards the matching}

In a format already familiar to the reader, we define in the fourth step the following transformation:
\begin{align}\label{eq:definition_L04}
& L_0^{(4)}(z)=
\nonumber
\\
& \begin{cases}
\widehat G(z) n^{\frac{\widehat B}{2}} \left(I_4-\dfrac{A^{(3)}_1(z)-A^{(3)}_1(0)}{n^6z}\right) n^{-\frac{\widehat B}{2}} \widehat G(z)^{-1}L_0^{(3)}(z), & z\in D(r_n)\setminus \Gamma_S, \\
\widehat G(z)  n^{\frac{\widehat B}{2}} \left(I_4-\dfrac{A^{(3)}_1(0)}{n^6z}\right)^{-1} n^{-\frac{\widehat B}{2}} \widehat G(z)^{-1} L_0^{(3)}(z), & z\in D(\delta)\setminus\left( \overline{D(r_n)}\cup (-\delta,\delta)\right),
\end{cases}
\end{align}
where $A^{(3)}_1(z)$ is given in \eqref{def:A13}. In view of \eqref{eq:A13bnd}, it follows that, as $n\to \infty$,
\begin{equation*}
\dfrac{A^{(3)}_1(0)}{n^6z}=
\begin{cases}
\Boh(n^{-1}), & \quad z\in \partial D(r_n),
\\
\Boh(n^{-5/2}), & \quad z\in \partial D(\delta),
\end{cases}
\end{equation*}
which implies that the inverse of $I_4-\dfrac{A^{(3)}_1(0)}{n^6z}$ is well defined, and thus is $L_0^{(4)}$. Furthermore, we have the following proposition.
\begin{prop}\label{prop:L04}
The matrix-valued function $L_0^{(4)}(z)$ defined in \eqref{eq:definition_L04} satisfies items (1)--(3) and the matching condition $\eqref{eq:local_param_decay_boundary}$ of the RH problem \ref{rhp:local_parametrix} for $L_0$. Moreover, we have, as $n\to \infty$,
\begin{equation}\label{eq:L04bnd}
L_{0,+}^{(4)}(z)=(I_4+\Boh(n^{-1/2}))L_{0,-}^{(4)}(z), \qquad z\in \partial D(r_n)\setminus \Gamma_S.
\end{equation}
\end{prop}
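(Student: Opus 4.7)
The argument follows the same two-part template used for Propositions \ref{prop:L01}--\ref{prop:L03}. First I would verify items (1)--(3) and the outer matching \eqref{eq:local_param_decay_boundary} by an analyticity discussion: inside $D(r_n)$ the prefactor $I_4 - (A_1^{(3)}(z) - A_1^{(3)}(0))/(n^6 z)$ is analytic because the numerator vanishes at $z=0$ and $A_1^{(3)}$ is analytic near the origin by \eqref{def:A13} together with the analyticity of $A_1^{(2)}$ from Lemma \ref{lem:Ak2}. On $D(\delta)\setminus \overline{D(r_n)}$, by \eqref{eq:A13bnd} we have $A_1^{(3)}(0)/(n^6 z) = \Boh(n^{-1})$ on $\partial D(r_n)$ and $\Boh(n^{-5/2})$ on $\partial D(\delta)$, so $(I_4 - A_1^{(3)}(0)/(n^6 z))^{-1}$ is well-defined and analytic there. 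Hence $L_0^{(4)}$ inherits items (1)--(3) from $L_0^{(3)}$, and on $\partial D(\delta)$ the conjugation by $\widehat G(z)\, n^{\widehat B/2}$ amplifies entries by at most a factor of $n$, so the full prefactor applied to $L_0^{(3)}$ equals $I_4 + \Boh(n^{-3/2})$ there, which is absorbed by the already-valid $\Boh(n^{-1})$ outer matching of $L_0^{(3)}$.

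The second and main step is the inner matching. Combining \eqref{eq:definition_L04} with the explicit expansion \eqref{eq:L03onDrn2}, I would write
\begin{equation*}
L_{0,+}^{(4)}(z)\, L_{0,-}^{(4)}(z)^{-1} = \widehat G(z)\, n^{\widehat B/2}\, \mathcal M(z)\, n^{-\widehat B/2}\, \widehat G(z)^{-1},
\end{equation*}
where $\mathcal M(z) = (I_4 - X)(I_4 + Y + Z + \Boh(n^{-2}))(I_4 - W)$ with the shorthands
\begin{equation*}
X := \frac{A_1^{(3)}(z) - A_1^{(3)}(0)}{n^6 z},\quad Y := \frac{A_1^{(3)}(z)}{n^6 z},\quad W := \frac{A_1^{(3)}(0)}{n^6 z},\quad Z := \frac{A_2^{(3)}(z)}{n^9 z}.
\end{equation*}
By \eqref{eq:A13bnd}, \eqref{eq:estimate_A31} and \eqref{eq:estimate_A32}, these satisfy $X, W = \Boh(n^{-1})$ and $Z = \Boh(n^{-3/2})$ uniformly on $\partial D(r_n)$. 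The crux is the algebraic identity $Y = W + X$: substituting it and expanding, each cross term ($XW$, $X^2$, $W^2$, $XZ$, $ZW$, and the interactions with the $\Boh(n^{-2})$ remainder) is of order $\Boh(n^{-2})$ or smaller, so $\mathcal M(z) = I_4 + Z + \Boh(n^{-2})$. Finally, conjugation by $n^{\widehat B/2} = \diag(1, n^{1/2}, 1, n^{-1/2})$ inflates entries by at most a factor of $n$, turning $Z$ into $\Boh(n^{-1/2})$ and the remainder into $\Boh(n^{-1})$, giving \eqref{eq:L04bnd}.

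The main technical obstacle will be careful matrix-product bookkeeping: one must confirm that the cancellation $(I_4 - X)(I_4 + Y) = I_4 + W - XW - X^2$ (which uses $Y = W + X$) truly eliminates the dominant-order contribution of $Y$ after conjugation, without generating new large terms, and one must track which specific off-diagonal entries would be amplified by the full factor of $n$ under conjugation by $n^{\widehat B/2}$. The two-piece definition of $L_0^{(4)}$ in \eqref{eq:definition_L04} is engineered precisely for this purpose: the \emph{inside} piece removes the $z$-varying part of $A_1^{(3)}$ while the \emph{outside} piece encodes $A_1^{(3)}(0)$, so that on $\partial D(r_n)$ the two combine to cancel $Y$ exactly. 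This is directly parallel to the mechanism by which $L_0^{(3)}$ eliminated the dominant term in the jump of $L_0^{(2)}$, and one anticipates an analogous fifth step to eliminate $Z$ and achieve the final $\Boh(n^{-1})$ matching required by the RH Problem \ref{rhp:local_parametrix}.
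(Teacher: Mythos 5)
Your proof is correct and follows essentially the same route as the paper: you expand $L_{0,+}^{(4)}L_{0,-}^{(4)}{}^{-1}$ using the definition \eqref{eq:definition_L04} together with \eqref{eq:L03onDrn2}, observe the algebraic cancellation (the paper does this implicitly by substituting $\frac{A_1^{(3)}(z)}{n^6z}-\frac{A_1^{(3)}(z)-A_1^{(3)}(0)}{n^6z}=\frac{A_1^{(3)}(0)}{n^6z}$ in the second equality of \eqref{eq:L04onDrn}, which is precisely your $Y=W+X$), absorb all cross terms into $\Boh(n^{-2})$ using \eqref{eq:A13bnd}, \eqref{eq:estimate_A31}, \eqref{eq:estimate_A32}, and then let the conjugation by $n^{\widehat B/2}$ inflate $Z=\Boh(n^{-3/2})$ to $\Boh(n^{-1/2})$. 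Your explicit flagging of the identity $Y=X+W$ and the anticipated fifth step are accurate readings of the mechanism; the paper states the same calculation slightly more tersely and simply asserts items (1)--(3) and the outer matching \eqref{eq:local_param_decay_boundary} hold "directly," matching your analyticity discussion.
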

\begin{proof}
It suffices to show \eqref{eq:L04bnd}, while the other claims can be verified directly. For $z\in \partial D(r_n)\setminus \Gamma_S$ and $n\to \infty$, it is readily seen from \eqref{eq:definition_L04} and \eqref{eq:L03onDrn2} that
\begin{align}\label{eq:L04onDrn}
&L_{0,+}^{(4)}(z)L_{0,-}^{(4)}(z)^{-1}
\nonumber
\\
&=\widehat G(z)n^{\frac{\widehat B}{2}}\left(I_4-\dfrac{A^{(3)}_1(z)-A^{(3)}_1(0)}{n^6z}\right)\left( I_4 + \frac{A_1^{(3)}(z) }{n^6z} +\frac{ A^{(3)}_2(z) }{n^{9}z} +\Boh(n^{-2}) \right)
\nonumber
\\
& \quad \times \left(I_4-\dfrac{A^{(3)}_1(0)}{n^6z}\right)n^{-\frac{\widehat B}{2}}\widehat G(z)^{-1}
 \nonumber
\\
& = \widehat G(z)n^{\frac{\widehat B}{2}} \left(I_4 +\frac{A_1^{(3)}(0)}{n^6z}+\frac{A^{(3)}_2(z)}{n^9z}+\Boh(n^{-2})\right)\left(I_4-\dfrac{A^{(3)}_1(0)}{n^6z}\right)n^{-\frac{\widehat B}{2}}\widehat G(z)^{-1}
\nonumber
\\
& = \widehat G(z)n^{\frac{\widehat B}{2}} \left(I_4 +\frac{A^{(3)}_2(z)}{n^9z}+\Boh(n^{-2})\right)n^{-\frac{\widehat B}{2}}\widehat G(z)^{-1},
\end{align}
where for the second and third equality we have made use of the estimates \eqref{eq:A13bnd}, \eqref{eq:estimate_A31} and \eqref{eq:estimate_A32} to suppress the error terms. By \eqref{eq:estimate_A32}, we further have
$$ \frac{A^{(3)}_2(z)}{n^9z}=\Boh(n^{-3/2}), \qquad z\in \partial D(r_n), \qquad n\to\infty, $$
which, together with \eqref{eq:L04onDrn}, yields \eqref{eq:L04bnd}.

This completes the proof of Proposition \ref{prop:L04}.
\end{proof}

\subsubsection*{Last step towards the matching}

As the fifth and last step, we modify $L^{(4)}_0$ to
\begin{align}\label{eq:definition_L05}
& L_0(z)=L_0^{(5)}(z) =
\nonumber
\\
&\begin{cases}
\widehat G(z) n^{\frac{\widehat B}{2}} \left(I_4-\dfrac{A^{(3)}_2(z)-A^{(3)}_2(0)}{n^9 z}\right) n^{-\frac{\widehat B}{2}} \widehat G(z)^{-1}L_0^{(4)}(z), & z\in D(r_n)\setminus \Gamma_S, \\
\widehat G(z)  n^{\frac{\widehat B}{2}} \left(I_4-\dfrac{A^{(3)}_2(0)}{n^9z}\right)^{-1} n^{-\frac{\widehat B}{2}} \widehat G(z)^{-1} L_0^{(4)}(z), & z\in D(\delta)\setminus\left( \overline{D(r_n)}\cup (-\delta,\delta)\right),
\end{cases}
\end{align}
where $A^{(3)}_2(z)$ is given in \eqref{def:A32}. In view of \eqref{eq:estimate_A32}, it follows that, as $n\to \infty$,
\begin{equation*}
\dfrac{A^{(3)}_2(0)}{n^9 z}=
\begin{cases}
\Boh(n^{-3/2}), & \quad z\in \partial D(r_n),
\\
\Boh(n^{-3}), & \quad z\in \partial D(\delta),
\end{cases}
\end{equation*}
which implies that the inverse of $I_4-\dfrac{A^{(3)}_2(0)}{n^9z}$ is well defined, thus so is $L_0^{(5)}$. Following the same arguments as in the proof of Proposition \ref{prop:L04}, it is straightforward to conclude that
\begin{prop}\label{prop:L05}
The matrix-valued function $L_0^{(5)}(z)$ defined in \eqref{eq:definition_L05} solves the RH problem \ref{rhp:local_parametrix} for $L_0$.
\end{prop}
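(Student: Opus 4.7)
The plan is to carry out one final iteration of the same template that produced $L_0^{(2)},L_0^{(3)},L_0^{(4)}$. First I would check, as in Propositions \ref{prop:L02}--\ref{prop:L04}, that items (1)--(3) and the outer matching \eqref{eq:local_param_decay_boundary} are automatic consequences of the construction. Indeed, the prefactors multiplying $L_0^{(4)}$ from the left in \eqref{eq:definition_L05} are analytic on $D(\delta)$ (using that $\widehat G$ is analytic and invertible there by Proposition \ref{lem:prefactor_global_param}, and that $I_4-A_2^{(3)}(0)/(n^9 z)$ is invertible and analytic away from the origin). Hence $L_0^{(5)}$ inherits the analyticity domain, the jumps and the endpoint behaviour of $L_0^{(4)}$. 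For the matching on $\partial D(\delta)$, the second bullet in the definition contributes an extra factor that is $I_4+\Boh(n^{-3})$ there by \eqref{eq:estimate_A32}, so \eqref{eq:local_param_decay_boundary} persists.

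The only new content is therefore the matching on the shrinking circle. Starting from the final display \eqref{eq:L04onDrn} in the proof of Proposition \ref{prop:L04}, I would compute
\begin{align*}
L_{0,+}^{(5)}(z) L_{0,-}^{(5)}(z)^{-1}
&= \widehat G(z)\, n^{\widehat B/2} \left(I_4 - \frac{A_2^{(3)}(z)-A_2^{(3)}(0)}{n^9 z}\right) \\
&\quad\times \left(I_4 + \frac{A_2^{(3)}(z)}{n^9 z} + \Boh(n^{-2})\right)\left(I_4-\frac{A_2^{(3)}(0)}{n^9 z}\right) n^{-\widehat B/2}\, \widehat G(z)^{-1}.
\end{align*}
Expanding and using that $A_2^{(3)}(z)-A_2^{(3)}(0)=\Boh(n^{15/2} z)$ from \eqref{eq:estimate_A32}, together with $A_2^{(3)}(0)/(n^9 z)=\Boh(n^{-3/2})$ on $\partial D(r_n)$, the leading $1/(n^9 z)$ term is cancelled and every surviving cross-term is $\Boh(n^{-2})$ on $\partial D(r_n)$; concretely, the $z$-dependent correction is absorbed since $[A_2^{(3)}(z)-A_2^{(3)}(0)]/(n^9 z)=\Boh(n^{-3/2})$ and its product with the $\Boh(n^{-3/2})$ term from $A_2^{(3)}(0)/(n^9z)$ is of order $n^{-3}$. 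The outcome is
\begin{equation*}
L_{0,+}^{(5)}(z) L_{0,-}^{(5)}(z)^{-1} = \widehat G(z)\, n^{\widehat B/2}\bigl(I_4+\Boh(n^{-2})\bigr)n^{-\widehat B/2}\, \widehat G(z)^{-1}.
\end{equation*}

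Since $\widehat B=\diag(0,1,0,-1)$, conjugation by $n^{\widehat B/2}$ amplifies entries by at most a factor of $n$, so the bracket is $I_4+\Boh(n^{-1})$. Together with the boundedness of $\widehat G$ and $\widehat G^{-1}$ near the origin, this gives exactly the required matching \eqref{eq:local_param_decay_shrinking_boundary}, finishing the proof. The only delicate point in this plan is the bookkeeping that shows no new growing contribution is produced by the outer conjugation $\widehat G(z) n^{\widehat B/2}(\,\cdot\,)n^{-\widehat B/2}\widehat G(z)^{-1}$: I expect this to follow mechanically from \eqref{eq:estimate_A32} and the fact that the error matrix is estimated before the conjugation, exactly as in Propositions \ref{prop:L03} and \ref{prop:L04}.
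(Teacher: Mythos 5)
Your proposal is correct and takes essentially the same approach as the paper: the paper's proof of Proposition \ref{prop:L05} consists only of the remark ``Following the same arguments as in the proof of Proposition \ref{prop:L04}, it is straightforward to conclude...'', and your write-up is precisely that iteration spelled out, with the telescoping cancellation of the leading $1/(n^9z)$ term and the bookkeeping of the cross-terms using \eqref{eq:estimate_A32} exactly as the template demands.
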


This completes the construction of the local parametrix near the origin.

\section{Final transformation $S\mapsto R$} \label{sec:StoR}

With the global parametrix $G$ given in Proposition~\ref{prop:GlobCons}, the local parametrices $L_p$ and $L_{-q}$ near $p$ and $-q$ briefly discussed in Section~\ref{section:airy_parametrices} and the local parametrix $L_0$ near the origin constructed in \eqref{eq:definition_L05},  the final transformation is defined by
\begin{equation}\label{def:R}
R(z)=
\begin{cases}
S(z)L_0(z)^{-1},& z \in D(\delta), \\
S(z)L_p(z)^{-1}, & z\in D_p(\delta),\\
S(z)L_{-q}(z)^{-1}, & z\in D_{-q}(\delta), \\
S(z)G(z)^{-1}, & \mathbb{C}\setminus D_R,
\end{cases}
\end{equation}
where
$$D_R:=D(\delta) \cup D_p(\delta) \cup D_{-q}(\delta).$$
It is then straightforward to check that $R$ satisfies the following RH problem.
\begin{rhp}\label{rhp:R}
The function $R$ defined in \eqref{def:R} has the following properties:
\begin{enumerate}[(1)]
\item $R$ is defined and analytic in $\C\setminus \Gamma_R$,
where
\begin{multline*}
\Gamma_R:=\partial D_R \cup \partial D(r_n) \cup (-q+\delta,0)\cup (p+\delta,+\infty)
\\ \cup \left( \bigcup\limits_{j=1}^3\partial \mathcal L_j^\pm \right) \setminus \left(D(r_n) \cup D_p(\delta)\cup  D_{-q}(\delta)\right)
\end{multline*}
with the orientations as illustrated in Figure \ref{fig:contour_R}.
\item For $z\in\Gamma_R$, $R$ satisfies the jump condition
$$ R_+(z)=R_-(z)J_R(z),$$
where
\begin{equation}\label{eq:jumps_R}
J_R(z)=
\begin{cases}
G(z)J_S(z)G(z)^{-1}, & z\in \Gamma_R\setminus \left(D_R \cup (-q+\delta,0)\right),  \\
L_0(z)J_S(z)L_0(z)^{-1}, & z\in \bigcup\limits_{j=2,3}\left(\partial \mathcal L_j^\pm \cap D(\delta) \setminus D(r_n) \right),  \\
I_4 -|z|^\kappa e^{n\phi_1(z)}G_-(z)E_{21}G_-(z)^{-1} ,
& z\in (-q+\delta,-\delta),\\
I_4 -|z|^\kappa e^{n\phi_1(z)}L_{0,-}(z)E_{21}L_{0,-}(z)^{-1} ,
& z\in (-\delta,0),\\
G(z)L_{p}(z)^{-1}, & z\in \partial D_{p}(\delta), \\
G(z)L_{-q}(z)^{-1}, & z\in \partial D_{-q}(\delta), \\
G(z)L_{0}(z)^{-1}, & z\in \partial D(\delta), \\
L_{0,-}(z)L_{0,+}(z)^{-1}, & z\in \partial D(r_n),
\end{cases}
\end{equation}
and where $J_S(z)$ is given in \eqref{eq:jumps_S}.
\item As $z\to \infty$, we have
$$
R(z)=I_4+\Boh(z^{-1}).
$$
\end{enumerate}
\end{rhp}

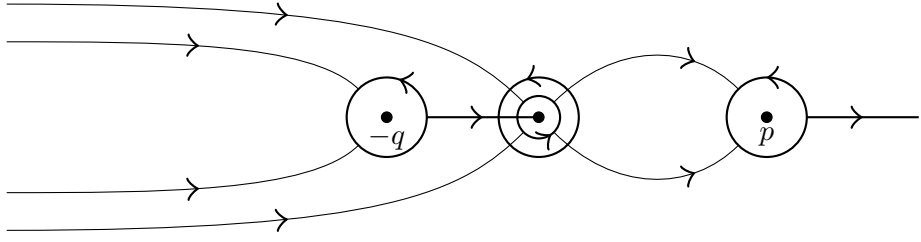
\begin{figure}[t]
\begin{center}
\begin{tikzpicture}
%
\filldraw [black] (3,3) circle (2pt);
\filldraw [black]  (5,3) circle (2pt);
\filldraw [black] (8,3) circle (2pt);
\node[below] (q) at (3,3) {$-q$};
\node[below] (p) at (8,3) {$p$};
\draw[thick,->-=.2] (3,3) circle (15pt);
\draw[thick,->-=.3] (5,3) circle (15pt);
\draw[thick,->-=.85] (5,3) circle (8pt);
\draw[thick,->-=.25] (8,3) circle (15pt);
%
%
\path (8,3) ++(135:15pt) coordinate (pplus);
\path (8,3) ++(-135:15pt) coordinate (pminus);
\path (8,3) ++(180:15pt) coordinate (pl);
\path (8,3) ++(0:15pt) coordinate (pr);
\path (5,3) ++(45:8pt) coordinate (oplusr);
\path (5,3) ++(-45:8pt) coordinate (ominusr);
\path (5,3) ++(135:8pt) coordinate (oplusl);
\path (5,3) ++(-135:8pt) coordinate (ominusl);
\path (5,3) ++(180:15pt) coordinate (ol);
\path (5,3) ++(0:15pt) coordinate (or);
\path (3,3) ++(135:15pt) coordinate (qplusl);
\path (3,3) ++(-135:15pt) coordinate (qminusl);
\path (3,3) ++(180:15pt) coordinate (ql);
\path (3,3) ++(0:15pt) coordinate (qr);
%
%
%
\draw[thick,->-=.5] (qr)--(5,3);
\draw[thick,->-=.5] (pr)--(10,3);
%
%
\draw[->-=.75] (oplusr) .. controls (6,4) and (7,4) .. (pplus);
\draw[->-=.75] (ominusr) .. controls (6,2) and (7,2) .. (pminus);
\draw[-<-=.5] (qplusl) .. controls (2,4) and (0,4) .. (-2,4);
\draw[-<-=.5] (qminusl) .. controls (2,2) and (0,2) .. (-2,2);
\draw[-<-=.5] (oplusl) .. controls (4,4) and (3,4.5) .. (-2,4.5);
\draw[-<-=.5] (ominusl) .. controls (4,2) and (3,1.5) .. (-2,1.5);
%
\end{tikzpicture}
\end{center}
\caption{The jump contours for the matrix $R$.}
\label{fig:contour_R}
\end{figure}

It comes out that the jump matrix of $R$ on each jump contour tends to the identity matrix for large $n$ with the convergence rate given in the next lemma.
\begin{lem}\label{lem:estJR}
Let $J_R(z)$ be defined in \eqref{eq:jumps_R}. There exists two positive constants $c_1,c_2$ such that, as $n\to \infty$,
\begin{equation}\label{eq:estJR}
J_R(z)=
\begin{cases}
I_4+\Boh(n^{-1}), & z\in \partial D_R \cup \partial D(r_n),
\\
I_4+\Boh(e^{-c_1n^{1/2}}), & z\in \bigcup\limits_{j=2,3}\left(\partial \mathcal L_j^\pm \cap D(\delta) \setminus D(r_n) \right),
\\
I_4+\Boh(e^{-c_2n}), &  \mbox{elsewhere on $\Gamma_R$},
\end{cases}
\end{equation}
uniformly for $z$ on the indicated contours.
\end{lem}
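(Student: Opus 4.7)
The plan is to verify the estimate \eqref{eq:estJR} contour by contour, using the explicit form of the jumps recorded in \eqref{eq:jumps_R}. The argument splits into four groups of contours, of which only the last is genuinely subtle.

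On the pieces of $\Gamma_R$ lying outside all three disks---the lens boundaries $\partial \mathcal L_j^\pm$ after removal of $D_R$, and the real segments $(-q+\delta,-\delta)$ and $(p+\delta,+\infty)$---the jump is $G\,J_S\,G^{-1}$. The non-identity entries of $J_S$ in \eqref{eq:jumps_S} carry factors of the form $e^{\pm n\phi_k(z)}$, and Proposition~\ref{prop:inequalities_phi} together with \eqref{eq:inequalities_phi_off_support} provides a constant $c_2>0$ such that $\re(\pm\phi_k(z))\le -c_2$ uniformly along these compact contours. Since $G$ has algebraic/logarithmic singularities only at $-q$, $0$, $p$, which now lie strictly inside the disks, both $G$ and $G^{-1}$ are bounded here, and the third line of \eqref{eq:estJR} follows. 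On the outer boundaries of the local disks I would use the matching conditions: for $\partial D_p(\delta)$ and $\partial D_{-q}(\delta)$ the Airy matching \eqref{eq:matchingp} directly gives $G L_p^{-1}, G L_{-q}^{-1} = I_4+\mathcal O(n^{-1})$; for $\partial D(\delta)$ one starts from $L_0^{(1)}=G$, which holds exactly outside $\overline{D(r_n)}$ by \eqref{eq:definition_L01}, and observes that each modification in \eqref{eq:definition_L02}--\eqref{eq:definition_L05} multiplies $L_0$ on the left by an analytic prefactor whose restriction to $\partial D(\delta)$ differs from $I_4$ by $\mathcal O(n^{-2})$ or better, as read off from \eqref{eq:estA210} and the analogous bounds produced by \eqref{eq:A13bnd} and \eqref{eq:estimate_A32} with $|z|=\delta$. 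The inner circle $\partial D(r_n)$ is then handled directly by the cascade \eqref{eq:L01bnd}--\eqref{eq:L04bnd} refined once more in Proposition~\ref{prop:L05}, which provides precisely the bound $I_4+\mathcal O(n^{-1})$.

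The main obstacle is the portion of the lens boundaries inside $D(\delta)\setminus D(r_n)$, where $J_R = L_0 J_S L_0^{-1}$ and the factor $e^{n\phi_k(z)}$ in $J_S-I_4$ is not uniformly of unit order since $\phi_k$ may vanish at the origin. Here I would first extract, from \eqref{eq:hatlambda1}--\eqref{eq:hatlambda3} together with the normalization \eqref{eq:conformal_map_norming} and the sign information of Proposition~\ref{prop:inequalities_phi}, the pointwise estimate
\begin{equation*}
\re \phi_k(z) \le -c\,|z|^{1/3}
\end{equation*}
along these arcs. Since $|z|\ge r_n = n^{-3/2}$, this yields the exponential bound $|e^{n\phi_k(z)}|\le e^{-c n^{1/2}}$. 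The remaining point is to show that the conjugation by $L_0$ does not destroy this decay: tracing \eqref{eq:definition_L01}--\eqref{eq:definition_L05} one sees that outside $\overline{D(r_n)}$ the matrix $L_0(z)$ equals $G(z)$ times uniformly bounded analytic prefactors, and $G$ has only algebraic/logarithmic singularities at $z=0$, so $\|L_0^{\pm 1}(z)\| \le C|z|^{-M}\le C\,n^{3M/2}$ on the relevant arc. This polynomial factor is absorbed into any slightly weaker exponential, giving the middle line of \eqref{eq:estJR} for any $c_1<c$, and completing the proof.
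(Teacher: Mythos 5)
Your argument has a genuine gap: it nowhere addresses the real segment $(-\delta,0)$, which belongs to $\Gamma_R$ (through the piece $(-q+\delta,0)$ in its definition). On that segment, by \eqref{eq:jumps_R},
\begin{equation*}
J_R(z)=I_4-|z|^\kappa e^{n\phi_1(z)}L_{0,-}(z)E_{21}L_{0,-}(z)^{-1},
\end{equation*}
and your power bound $\|L_0^{\pm1}(z)\|\le C|z|^{-M}\le Cn^{3M/2}$ is derived from $|z|\ge r_n$, which fails on the sub-interval $(-r_n,0)$. The exponential $e^{n\phi_1(z)}$ is uniformly of order $e^{-cn}$ on $[-\delta,0]$ because $\phi_1$ is strictly negative and bounded away from zero there (this follows from \eqref{eq:variation23}, which holds on the closed interval up to $0$), but your estimate gives no control on the conjugating factor near $z=0$. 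The paper closes this gap by a finer observation: $L_0E_{21}L_0^{-1}$ is the outer product of the second column of $L_0$ with the first row of $L_0^{-1}$; writing $L_0=\mathcal A_n\widehat P$ near the origin (by unwinding \eqref{eq:definition_L05}) with $\mathcal A_n$ analytic and of at worst power growth in $n$, the structure of $\widehat P$ from \eqref{def:hatP} together with the vanishing condition \eqref{eq:zeropsi} on the Meijer-G parametrix $\Psi$ makes precisely that column of $\widehat P$ and that row of $\widehat P^{-1}$ bounded as $z\to 0$. Hence $L_0E_{21}L_0^{-1}$ has only power-of-$n$ growth, which is absorbed by $e^{-cn}$.

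The rest of your argument matches the paper's proof. The lens arcs in $D(\delta)\setminus D(r_n)$ are handled, exactly as in the paper, via $\re\phi_k(z)\le -c|z|^{1/3}$ from Proposition~\ref{prop:hatlambda} combined with $|z|\ge n^{-3/2}$ and a power bound on $L_0^{\pm1}$ off of $D(r_n)$; the four circles reduce to the matchings \eqref{eq:matchingp}, \eqref{eq:local_param_decay_boundary} and \eqref{eq:local_param_decay_shrinking_boundary}; and the outer contours use Proposition~\ref{prop:inequalities_phi} and \eqref{eq:inequalities_phi_off_support}. One small imprecision worth flagging: on $(-q+\delta,-\delta)$ the jump is not literally $G\,J_S\,G^{-1}$ because $G$ itself is discontinuous across $(-\infty,p]$; the cancellation of the diagonal pieces leaves $J_R=I_4-|z|^\kappa e^{n\phi_1(z)}G_-(z)E_{21}G_-(z)^{-1}$ as in \eqref{eq:jumps_R}, but the estimate you give is unaffected.
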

\begin{proof}
By \eqref{eq:jumps_R}, the first estimate of $J_R(z)$ in \eqref{eq:estJR}, that is, on the boundaries of the four disks $\partial D_R\cup \partial D(r_n)$, follows directly from \eqref{eq:matchingp}, \eqref{eq:local_param_decay_boundary} and \eqref{eq:local_param_decay_shrinking_boundary}.

For the estimate of $J_R(z)$ on $\bigcup\limits_{j=2,3}\left(\partial \mathcal L_j^\pm \cap D(\delta) \setminus D(r_n) \right)$, we first focus on the case $j=2$. From \eqref{eq:jumps_R} and \eqref{eq:jumps_S}, it follows that
\begin{equation}\label{eq:JRL2}
J_R(z)=I_4+z^{-\kappa}e^{n\phi_2(z)}L_0(z)E_{32}L_0(z)^{-1},\qquad z\in  \partial \mathcal L_2^\pm \cap D(\delta) \setminus D(r_n).
\end{equation}
Since $L_0(z)^{\pm 1}$ has at most power log singularities near the origin, the estimate of $J_R(z)$ then essentially relies on the behavior of $e^{n \phi_2 (z)}$ near $z=0$. In view of  \eqref{def:lambda_functions2}, \eqref{def:lambda_functions3}, \eqref{def:phi_functions2}, \eqref{eq:lambdaandhatlambda} and Proposition \ref{prop:hatlambda}, we have, for $z\in  \partial \mathcal L_2^\pm \cap D(\delta) \setminus D(r_n)$,
\begin{multline*}
|e^{n \phi_2 (z)}|=|e^{n(\lambda_2(z)-\lambda_3(z))}|=|e^{n(\widehat \lambda_2^{\pm}(z)-\widehat \lambda_3^{\pm}(z))}|
\\
=|e^{n((\omega^{\pm}-\omega^{\mp})f_4(0)z^{1/3}+\Boh(z^{2/3}))}|\leq e^{-cn|z|^{1/3}},
\end{multline*}
for some $c>0$. Note that $|z|>n^{-3/2}$ on the annulus $D(\delta) \setminus D(r_n)$, which together with the above estimate and \eqref{eq:JRL2} gives us
\begin{equation*}
J_R(z)= I_4+\Boh(e^{-cn^{1/2}}),  \qquad z\in \partial \mathcal L_2^\pm \cap D(\delta) \setminus D(r_n),
\end{equation*}
for large $n$. If $z\in \partial \mathcal L_3^\pm \cap D(\delta) \setminus D(r_n)$, the estimate of $J_R(z)$ can be derived in a similar manner, where one needs to explore the behavior of $e^{n\phi_3(z)}$ near $z=0$. We omit the details.

Finally, for $z$ belonging to other parts of $\Gamma_R$, we note from \eqref{eq:jumps_R} and \eqref{eq:jumps_S} that, if $z \in (p+\delta, +\infty) $,
$$J_{R}(z)=I_4+z^{\kappa}e^{-n\phi_2(z)}G(z)E_{23}G(z)^{-1}.$$
Since $G(z)^{\pm 1}$ grows at most in a power law for large $z$ (see \eqref{eq:asympt_behavior_global_param}) and $\phi_2(z)>c$ for some $c>0$ on  $(p+\delta, +\infty) $ (see \eqref{eq:inequalities_phi_off_support}),
it is immediate to conclude from the above formula that
\begin{equation*}
J_R(z)= I_4+\Boh(e^{-cn}),  \qquad z\in (p+\delta,+\infty),
\end{equation*}
for large $n$. The estimate of $J_R(z)$ on $(-q+\delta,0)\cup \left( \bigcup\limits_{j=1}^3\partial \mathcal L_j^\pm \right) \setminus D_R$ can be obtained by applying similar arguments.

A little extra effort is needed to handle the case $z\in(-\delta,0)$. Similarly as above, from \eqref{eq:jumps_R} and \eqref{eq:inequalities_phi_off_support}, it suffices to show that $L_{0}(z)E_{21}L_{0}(z)^{-1}$ has power growth in $n$. To see this, from the definition of $L_0$ given in \eqref{eq:definition_L05}, and tracing back the transformations $L_0^{(5)}\mapsto L_0^{(4)}\mapsto \hdots\mapsto L_0^{(1)}$, it is readily seen that
$$
L_0(z)=L_0^{(5)}(z)=\mathcal{A}_n(z)\widehat P(z), \qquad |z|< r_n,
$$
where the prefactor $\mathcal{A}_n$ is analytic and invertible near the origin, with $\mathcal A_n$ and $\mathcal A_n^{-1}$ having at worse power growth as $n\to\infty$, and $\widehat P$, defined in \eqref{def:hatP}, contains the Meijer-G parametrix. From the structure of $\widehat P$ (see the first identity in \eqref{eq:asymptotics_parametrix_P}) and from \eqref{eq:zeropsi} we see that, as $z\to 0$,
\begin{equation*}
\widehat P(z)\begin{pmatrix}
0 & 1 & 0 & 0
\end{pmatrix}^{T}=\Boh(1), \qquad \begin{pmatrix}
1 & 0 & 0 & 0
\end{pmatrix}\widehat P(z)^{-1}=\Boh(1).
\end{equation*}
Thus,
\begin{equation*}
L_{0}(z)E_{21}L_{0}(z)^{-1}=\mathcal{A}_n(z)\widehat P(z)\begin{pmatrix}
0 & 1 & 0 & 0
\end{pmatrix}^{T}\begin{pmatrix}
1 & 0 & 0 & 0
\end{pmatrix}\widehat P(z)^{-1}\mathcal{A}_n(z)^{-1}
\end{equation*}
has at worse power growth as $n\to\infty$.

This completes the proof of Lemma \ref{lem:estJR}.
\end{proof}

As a consequence of the above lemma, we conclude from the standard arguments in the RH analysis (cf.\cite{deift_book} and \cite[Appendix~A]{BK07}) that
\begin{equation}\label{eq:estR}
R(z)=I_4+\Boh(n^{-1}),\qquad n \to \infty,
\end{equation}
uniformly for $z\in \C\setminus \Gamma_R$.

\section{Proofs of asymptotic results}\label{sec:proofasy}

In this section, we will prove Theorems \ref{thm:limitingmeandistri} and \ref{thm:hardedge} by inverting the transformations \eqref{eq:transformations}.

\subsection{Proof of Theorem \ref{thm:limitingmeandistri}}

Let $x,y \in \Delta_2=(0,p)$ be fixed. In view of the representation of $K_n$ given in \eqref{kernel representation}, and having in mind \eqref{def:w1}--\eqref{def:w2} and the calculation \eqref{eq:relations_A_W}, we obtain from \eqref{eq:YtoX} and a straightforward calculation that
\begin{align*}
& n^2K_{n}\left(n^2x,n^2y \right)
\nonumber
\\
& = \frac{1}{2\pi i(x-y)}\begin{pmatrix}0 &0 & w_2(y) \end{pmatrix}\diag\left(A_1(y)y^{\frac{\kappa}{2}\sigma_3},A_2(y) y^{\frac{\kappa-\nu}{2}\sigma_3}\right)
\nonumber
\\
&\quad \times X_{+}(y)^{-1}X_{+}(x)\diag\left(x^{\frac{\kappa}{2}\sigma_3}A_1(x)^{-1},x^{\frac{\nu-\kappa}{2}\sigma_3}A_2(x)^{-1}\right)
\begin{pmatrix}
w_1(x)& 0 & 0
\end{pmatrix}^T
\nonumber
\\
& = \frac{1}{2\pi i(x-y)}
\begin{pmatrix}
0 &0 & 1 &
0 \end{pmatrix}
X_{+}(y)^{-1}X_{+}(x)
\begin{pmatrix}
0 & x^{\kappa} & 0 & 0
\end{pmatrix}^T.
\end{align*}
From \eqref{eq:XtoT}, this becomes
\begin{align*}
 n^2K_{n}\left(n^2x,n^2y \right)
= \frac{1}{2\pi i(x-y)}
\begin{pmatrix}
0 &0 & e^{n \lambda_{3,+}(y)} &
0 \end{pmatrix}
T_{+}(y)^{-1}T_{+}(x)
\begin{pmatrix}
0 \\ x^{\kappa}e^{-n\lambda_{2,+}(x)} \\ 0 \\ 0
\end{pmatrix}.
\end{align*}
A further appeal to \eqref{eq:TtoS} and Proposition \ref{prop:lamdaphi} yields
\begin{align}\label{kernel representation4}
& n^2K_{n}\left(n^2x,n^2y \right)
\nonumber
\\
& = \frac{1}{2\pi i(x-y)}
\begin{pmatrix}
0  & -y^{-\kappa}e^{n (\phi_{2,+}(y)+\lambda_{3,+}(y))} & e^{n \lambda_{3,+}(y)} &
0 \end{pmatrix}
S_{+}(y)^{-1}
\nonumber
\\
& \quad \times
S_{+}(x)
\begin{pmatrix}
0 & x^{\kappa}e^{-n\lambda_{2,+}(x)} & e^{n(\phi_{2,+}(x)-\lambda_{2,+}(x))}  & 0
\end{pmatrix}^T
\nonumber
\\
& = \frac{1}{2\pi i(x-y)}
\begin{pmatrix}
0  & -y^{-\kappa}e^{n \lambda_{2,+}(y)} & e^{n \lambda_{3,+}(y)} &
0 \end{pmatrix}
S_{+}(y)^{-1}
\nonumber
\\
& \quad \times S_{+}(x)
\begin{pmatrix}
0 & x^{\kappa}e^{-n\lambda_{2,+}(x)} & e^{-n\lambda_{3,+}(x)}  & 0
\end{pmatrix}^T.
\end{align}

Since both $x$ and $y$ are fixed, we may assume that $\delta$ is chosen so as that $x$ and $y$ are outside the discs around the edges $0$ and $p$. From \eqref{eq:estR} and the analyticity of $G_+$ away from $0$ and $p$, we obtain that
\begin{equation}\label{eq:TyTx}
S_{+}(y)^{-1}S_{+}(x)=I_4+\Boh(x-y), \qquad x \to y,
\end{equation}
uniformly for $x,y\in [\delta,p-\delta]$ as $n\to\infty$. Next, noticing that $\lambda_{2,\pm}(x)=\lambda_{3,\mp}(x)$ for $x\in \Delta_2$ (see Proposition \ref{prop:lamdaphi}), by taking $y \to x$, it then follows from \eqref{kernel representation4}, \eqref{eq:TyTx}, L'H\^{o}pital's rule, \eqref{def:lambda_functions3} and \eqref{eq:definition_xi_functions} that
\begin{align*}
 n^2K_{n}\left(n^2x,n^2x \right)
&= -\frac{n}{2\pi i}(\xi_{3,+}(x)-\xi_{3,-}(x))+\Boh(1)=\frac{n}{2\pi i}\left(C^{\mu_2}_{+}(x)-C^{\mu_2}_{-}(x)\right)+\Boh(1)
\nonumber
\\
&= n\frac{\ud \mu_2}{\ud x}(x)+\Boh(1),
\end{align*}
which implies that
\begin{equation*}
nK_{n}\left(n^2 x, n^2 x\right)=\frac{\ud \mu_2}{\ud x}(x)\left( 1+\Boh(n^{-1}) \right),
\end{equation*}
uniformly for $x\in (\delta,p-\delta)$ as $n\to\infty$.
Similarly, it can be shown that
\begin{equation*}
\lim_{n\to\infty}nK_{n}\left(n^2 x, n^2 x\right)=0,
\qquad x>p,
\end{equation*}
as desired.

This completes the proof of Theorem \ref{thm:limitingmeandistri} away from the endpoints $x=p$ and $x=0$. The case for $x=p$ can be handled similarly, with the Airy parametrix appearing instead of the global parametrix $G$, and with a worse error term.
\qed

\subsection{Proof of Theorem \ref{thm:hardedge}}

To prove Theorem~\ref{thm:hardedge}, let us start with $u$ and $v$ in the shrinking interval $(0,r_n)$ and trace back all the transformations \eqref{eq:transformations}.
The transformations $Y\mapsto X$ and $X\mapsto T$, given in \eqref{eq:YtoX} and \eqref{eq:XtoT} respectively, are defined globally, whereas the transformation $T\mapsto S$, defined in \eqref{eq:TtoS}, is the same on the plus side of $\Delta_2$. Thus, even for $u,v$ in the shrinking interval $(0,r_n)$, it holds
\begin{multline}\label{eq:hard_edge_kernel1}
n^2K_n(n^2u,n^2v) = \frac{1}{2\pi i(u-v)}
\begin{pmatrix}
0  & -v^{-\kappa}e^{n \lambda_{2,+}(v)} & e^{n \lambda_{3,+}(v)} &
0 \end{pmatrix} \\
\times
S_{+}(v)^{-1}S_{+}(u)
\begin{pmatrix}
0 & u^{\kappa}e^{-n\lambda_{2,+}(u)} & e^{-n\lambda_{3,+}(u)}  & 0
\end{pmatrix}^T;
\end{multline}
see \eqref{kernel representation4}. Using now the transformation \eqref{def:R} on $D(r_n)\subset D(\delta)$, we obtain
\begin{equation}\label{eq:hard_edge_relation_S_R}
S_+(v)^{-1}S_{+}(u)=L_{0,+}(v)^{-1}R(v)^{-1}R(u)L_{0,+}(u).
\end{equation}
We now scale
$$u=u_n=\frac{x}{n^3(\beta^2-\alpha^2) }\quad \mbox{and} \quad v=v_n=\frac{y}{n^3(\beta^2-\alpha^2)},$$
where $x,y$ are in fixed compact subsets of $(0,\infty)$. Note that with this scaling the points $u$ and $v$ fall inside $(0,r_n)$ and the calculations above are bona fide.
To estimate \eqref{eq:hard_edge_relation_S_R} for large $n$, we will need the following lemma, which is a refined version of Lemma~\ref{lem:key_difference_estimate}.

\begin{lem}\label{lem:key_estimate_improvement}
Suppose that $\{M_n\}$ is a sequence of matrix-valued functions satisfying the conditions of Lemma~\ref{lem:key_difference_estimate} and for which there exists a bounded sequence of constant matrices $\{\widetilde M_n\}$ for which
\begin{equation}\label{eq:estimate_Mn_improved}
M_n(z)-\widetilde M_n=\Boh(\delta_n),\qquad n\to\infty,
\end{equation}
uniformly for $z\in\partial D(2\varepsilon_n)$, where $\{\delta_n\}$ is a sequence of bounded positive numbers (possibly with $\delta_n\to 0$ but not necessarily). Then, \eqref{eq:Mnest} can be improved to
$$
M_n(z)-M_n(w)=\Boh\left( \frac{\delta_n}{\varepsilon_n}(z-w) \right),\qquad n\to\infty,
$$
uniformly for $z,w\in \overline{D(\varepsilon_n)}$.
\end{lem}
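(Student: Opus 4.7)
\textbf{Proof plan for Lemma~\ref{lem:key_estimate_improvement}.} The plan is to run the same Cauchy-integral argument used in the proof of Lemma~\ref{lem:key_difference_estimate}, but applied to the shifted family $M_n(z)-\widetilde M_n$ rather than to $M_n(z)$ itself. Since $\widetilde M_n$ is a constant matrix, it makes no contribution to the difference $M_n(z)-M_n(w)$, yet it does provide the crucial extra smallness on the boundary $\partial D(2\varepsilon_n)$ that refines the bound.

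Concretely, for $z,w\in\overline{D(\varepsilon_n)}$ I would write
\begin{equation*}
M_n(z)-M_n(w)=\bigl(M_n(z)-\widetilde M_n\bigr)-\bigl(M_n(w)-\widetilde M_n\bigr)
=\frac{z-w}{2\pi i}\oint_{|t|=2\varepsilon_n}\frac{M_n(t)-\widetilde M_n}{(t-z)(t-w)}\,\ud t,
\end{equation*}
which follows from Cauchy's formula (valid since $M_n-\widetilde M_n$ is analytic on $D(2\varepsilon_n)$). On the contour $|t|=2\varepsilon_n$, the hypothesis \eqref{eq:estimate_Mn_improved} gives $\|M_n(t)-\widetilde M_n\|=\Boh(\delta_n)$, while $|t-z|,|t-w|\geq \varepsilon_n$ since $|z|,|w|\leq \varepsilon_n$. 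The contour has length $4\pi\varepsilon_n$, so the integral is bounded by a constant multiple of $\delta_n\cdot \varepsilon_n/\varepsilon_n^2=\delta_n/\varepsilon_n$, and therefore
\begin{equation*}
M_n(z)-M_n(w)=\Boh\!\left(\frac{\delta_n}{\varepsilon_n}(z-w)\right),
\end{equation*}
uniformly for $z,w\in\overline{D(\varepsilon_n)}$, as claimed.

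There is no real obstacle here: the proof is essentially the one-line observation that subtracting a constant from $M_n$ does not change the divided difference but does reduce the size of the integrand on the boundary. The only thing to double-check is that the bound $\Boh(\delta_n)$ on $\partial D(2\varepsilon_n)$, combined with the analyticity of $M_n-\widetilde M_n$ on $D(2\varepsilon_n)$ and the maximum principle, is compatible with the assumed uniform boundedness of $\{M_n\}$ on $D(2\varepsilon_n)$; this is automatic because $\{\widetilde M_n\}$ is itself bounded by hypothesis. Note also that the lemma strictly generalizes Lemma~\ref{lem:key_difference_estimate}, which corresponds to the trivial choice $\widetilde M_n=0$ and $\delta_n=1$.
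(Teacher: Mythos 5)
Your proof is correct and follows essentially the same route as the paper: subtract the constant $\widetilde M_n$, apply Cauchy's integral formula to $M_n-\widetilde M_n$ on the circle $|t|=2\varepsilon_n$, and bound the integrand by $\Boh(\delta_n)$ using the hypothesis. The extra remarks at the end (about consistency with boundedness and the special case $\widetilde M_n=0$) are a harmless addition the paper does not spell out.
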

\begin{proof}
Similarly as in the proof of Lemma~\ref{lem:key_difference_estimate}, we write
$$
M_n(z)-M_n(w)=M_n(z)-\widetilde M_n-(M_n(w)-\widetilde M_n)=\frac{z-w}{2\pi i}\oint_{|t|=2\varepsilon}\frac{M_n(t)-\widetilde M_n}{t-z}\frac{\ud w}{t-w}.
$$
It remains to estimate the numerator using \eqref{eq:estimate_Mn_improved}, and the lemma follows.
\end{proof}

We start estimating $R$. The following lemma also appears in \cite[Lemma~6.5]{KM19}, although the proof has to be slightly modified to account for the jump of $R$ along $(-r_n,0)$ that appears here but not in the mentioned work.

\begin{lem}\label{lem:Rest}
The matrix $R$ satisfies
$$
R(v_n)^{-1}R(u_n)=I_4+\Boh(n^{-5/2}(x-y))
$$
uniformly for $x,y$ in compact subsets of $(0,\infty)$.
\end{lem}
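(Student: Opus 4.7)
The plan is to represent $R(u_n)-R(v_n)$ by a Cauchy-type integral over the boundary of the shrinking disc $D(r_n)$, using the fact that inside $D(r_n)$ the matrix $R$ is analytic except across the segment $(-r_n,0)$, while $u_n,v_n>0$. Fix a keyhole contour $\Gamma$ that is the circle $\partial D(r_n)$ together with two sides of the slit $(-r_n,0)$; both $u_n$ and $v_n$ lie inside $\Gamma$. By Cauchy's theorem,
\begin{equation*}
R(u_n)-R(v_n)=\frac{u_n-v_n}{2\pi i}\left[\oint_{\partial D(r_n)}\frac{R(t)\,\ud t}{(t-u_n)(t-v_n)}-\int_{-r_n}^{0}\frac{R_{-}(t)(J_R(t)-I_4)\,\ud t}{(t-u_n)(t-v_n)}\right].
\end{equation*}

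For the circle term, I use that $J_R = I_4+\Boh(n^{-1})$ on $\partial D(r_n)$ (see \eqref{eq:estJR}), together with \eqref{eq:estR}, to write $R(t)=I_4+\Boh(n^{-1})$ uniformly on $\partial D(r_n)$. The identity $\oint_{\partial D(r_n)}\ud t/[(t-u_n)(t-v_n)]=0$ shows that the constant part contributes nothing, so only the $\Boh(n^{-1})$ correction remains. Since $|u_n|,|v_n|=\Boh(n^{-3})\ll r_n=n^{-3/2}$, both $|t-u_n|$ and $|t-v_n|$ are of order $r_n$ on $\partial D(r_n)$, yielding the bound $\Boh(n^{-1}r_n\cdot r_n^{-2})=\Boh(n^{1/2})$ for the full circle integral. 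Multiplying by $u_n-v_n=(x-y)/[n^{3}(\beta^2-\alpha^2)]$ gives the desired $\Boh(n^{-5/2}(x-y))$.

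For the slit term, on $(-r_n,0)$ the jump is $J_R(t)-I_4=-|t|^{\kappa}e^{n\phi_1(t)}L_{0,-}(t)E_{21}L_{0,-}(t)^{-1}$. The factor $L_{0,-}(t)E_{21}L_{0,-}(t)^{-1}$ has at worst power growth in $n$, as exploited in the proof of Lemma~\ref{lem:estJR}, and $|t|^{\kappa}$ is bounded since $\kappa\geq 0$. The decisive point is that $\phi_1(0)<0$: on $(-q,0]$ the function $\phi_1$ is real analytic and equals $2U^{\mu_1}-U^{\mu_2}$ (using $\phi_1(-q)=0$ together with \eqref{eq:umu12}), so the strict inequality \eqref{eq:variation23} forces $\phi_1(0)<0$. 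By continuity, $\phi_1(t)\leq \tfrac12\phi_1(0)<0$ uniformly on a neighborhood of the origin, so $e^{n\phi_1(t)}$ is exponentially small on $(-r_n,0)$. Since $|t-u_n|\geq |t|$ and $|t-v_n|\geq |t|$ on $(-r_n,0)$, and the integrable singularity $1/t^2$ is dominated by the exponential, the slit integral is $\Boh(e^{-cn})$ for some $c>0$, which is negligible compared with the claimed error.

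Combining the two estimates yields $R(u_n)-R(v_n)=\Boh(n^{-5/2}(x-y))$, and then
\begin{equation*}
R(v_n)^{-1}R(u_n)=I_4+R(v_n)^{-1}(R(u_n)-R(v_n))=I_4+\Boh(n^{-5/2}(x-y))
\end{equation*}
follows from the uniform bound $R(v_n)^{-1}=I_4+\Boh(n^{-1})$. The main subtlety I expect is the verification that $\phi_1(0)<0$ strictly (not merely $\leq 0$), which is exactly what \eqref{eq:variation23} provides; the rest is a clean adaptation of the Cauchy-estimate argument in the spirit of Lemma~\ref{lem:key_estimate_improvement}, with the slit simply absorbed into an exponentially small error.
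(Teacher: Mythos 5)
Your proof is correct and follows essentially the same route as the paper's: a Cauchy representation of $R(u_n)-R(v_n)$ over the boundary of a slit disk of radius $\sim r_n$, with the circle contributing $\Boh(n^{-5/2}(x-y))$ via the $\Boh(n^{-1})$ bound from \eqref{eq:estR} and the Cauchy estimate, and the slit contributing only an exponentially small error because of the jump on $(-r_n,0)$. The verification that $\phi_1(0)<0$ strictly is indeed the point that makes the slit contribution negligible, but note this is already packaged into Lemma~\ref{lem:estJR} (via the second inequality in \eqref{eq:inequalities_phi_off_support} and \eqref{eq:variation23}), so the paper simply cites that $R_+(s)-R_-(s)=\Boh(e^{-c_2n})$ on the slit rather than re-deriving the sign of $\phi_1$ near the origin.
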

\begin{proof}
For $\gamma$ being any contour for which $R$ is analytic in its interior and encircling $u_n$ and $v_n$ counter-clockwise, we write with the help of Cauchy's integral formula
$$
R(u_n)-R(v_n)=R(u_n)-I_4-(R(v_n)-I_4)=\frac{u_n-v_n}{2\pi i}\oint_\gamma \frac{R(s)-I_4}{(s-u_n)(s-v_n)} \ud s.
$$
We apply this to $\gamma$ being the boundary of  the slit disk $D(r_n/2)\setminus (-r_n/2,0]$ and obtain
\begin{equation}\label{eq:estimate_difference_R}
R(u_n)-R(v_n) = \frac{u_n-v_n}{2\pi i}\left(\oint_{|s|=\frac{r_n}{2}} \frac{R(s)-I_4}{(s-u_n)(s-v_n)} \ud s + \int_{-\frac{r_n}{2}}^{0} \frac{R_+(s)-R_-(s)}{(s-u_n)(s-v_n)} \ud s   \right).
\end{equation}
For $s\in (-r_n/2,0]$, it is readily seen from Lemma~\ref{lem:estJR} that
$$
R_+(s)-R_-(s)=R_-(s)(J_R(s)-I_4)=\Boh(e^{-c_2n}),
$$
where we have also made use of the fact that $R$ remains uniformly bounded near $0$. Moreover, since $|s-u_n|, |s-v_n|$ decay with $\Boh(n^{-3})$ along the interval $(-r_n/2,0]$, it follows that
$$
\frac{u_n-v_n}{2\pi i}\int_{-\frac{r_n}{2}}^{0} \frac{R_+(s)-R_-(s)}{(s-u_n)(s-v_n)} \ud s =\Boh\left( (x-y)e^{-cn} \right),
$$
for some constant $c>0$. The first integral in \eqref{eq:estimate_difference_R} can be estimated from \eqref{eq:estR} and using the same approach in the proof of Lemma~\ref{lem:key_estimate_improvement}, allowing us to conclude that
$$
R(u_n)-R(v_n)=\Boh((x-y)n^{-5/2}),
$$
uniformly for $x,y$ in compact subsets of $(0,\infty)$. To conclude the lemma, simply write
$$
R(v_n)^{-1}R(u_n)=I_4+R(v_n)^{-1}(R(u_n)-R(v_n))
$$
and use that $R$ remains bounded near the origin.

This completes the proof of Lemma \ref{lem:Rest}.
\end{proof}

Next, we need to estimate $L_0$, which is more cumbersome. We start by spelling it out after unraveling the transformations $L_0=L_0^{(5)}\mapsto L_0^{(1)}$, which are given in Section~\ref{sec:construction_matching}, giving us that
\begin{equation}
\label{eq:def_hat_L0}
\begin{aligned}
L_0(z)&=\widehat L_0(z)\widehat P(z),
 \\
\widehat L_0(z)&:=\widehat G(z)n^{\frac{\widehat B}{2}}\mathcal A^{(3)}(z)\mathcal A^{(2)}(z)\mathcal A^{(1)}(z)\widehat D_n(z)^{-1}n^{\frac{\widehat B}{2}}\left(I_4+T_0\right)^{-1}n^{-\widehat B},
\end{aligned}
\end{equation}
with
$$
\mathcal A^{(j)}(z)=I_4-\frac{A_1^{(j+1)}(z)-A_1^{(j+1)}(0)}{n^{3j}z},\quad j=1,2,\qquad \mathcal A^{(3)}(z)=I_4-\frac{A_2^{(3)}(z)-A_2^{(3)}(0)}{n^9z}.
$$

\begin{lem}\label{lem:L0est}
The matrix-valued function $\widehat L_0(z)$ defined in \eqref{eq:def_hat_L0} satisfies
$$
\widehat L_0(v_n)^{-1}\widehat L_0(u_n)=n^{\frac{\widehat B}{2}}\left(I_4+\Boh(n^{-3/2}(x-y))\right)n^{-\frac{\widehat B}{2}},\qquad n\to \infty,
$$
uniformly for $x,y$ in compact subsets of $(0,\infty)$.
\end{lem}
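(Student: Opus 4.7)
The plan is to reorganize \eqref{eq:def_hat_L0} algebraically so that the outer conjugation by $n^{\widehat B/2}$ and $n^{-\widehat B/2}$ predicted in the conclusion emerges naturally from the trailing $n^{-\widehat B}$ factor inside $\widehat L_0$, and then estimate what remains factor-by-factor. Writing $\widehat L_0(z) = \widehat G(z)\, n^{\widehat B/2}\mathcal M(z)\,\mathcal F$ with
\[
\mathcal M(z) := \mathcal A^{(3)}(z)\mathcal A^{(2)}(z)\mathcal A^{(1)}(z)\widehat D_n(z)^{-1}, \qquad \mathcal F := n^{\widehat B/2}(I_4+T_0)^{-1}n^{-\widehat B},
\]
a direct manipulation yields
\[
n^{-\widehat B/2}\widehat L_0(v_n)^{-1}\widehat L_0(u_n)n^{\widehat B/2} = \widetilde C^{-1}\mathcal M(v_n)^{-1}\bigl[n^{-\widehat B/2}\widehat G(v_n)^{-1}\widehat G(u_n)n^{\widehat B/2}\bigr]\mathcal M(u_n)\widetilde C,
\]
where $\widetilde C := n^{\widehat B/2}(I_4+T_0)^{-1}n^{-\widehat B/2}$. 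The strictly lower-triangular structure of $T_0$ with zero first column, combined with \eqref{eq:inverse_T0}, forces $\widetilde C = I_4 + \Boh(n^{-1/2})$, so both $\widetilde C$ and $\widetilde C^{-1}$ are bounded. Therefore it is enough to show that the central matrix equals $I_4 + \Boh(n^{-3/2}(x-y))$.

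The $\widehat G$ contribution is easily dispatched. Since $\widehat G$ is $n$-independent and analytic at the origin, one has $\widehat G(v_n)^{-1}\widehat G(u_n) = I_4 + \Boh((x-y)/n^3)$ entrywise; conjugating by $n^{\mp\widehat B/2}$ amplifies each entry by at most a factor of $n^1$ (the largest value of $(\widehat B_{jj}-\widehat B_{ii})/2$), giving $n^{-\widehat B/2}\widehat G(v_n)^{-1}\widehat G(u_n)n^{\widehat B/2} = I_4 + \Boh((x-y)/n^2)$. Given that $\mathcal M^{\pm 1}$ are uniformly bounded on $\overline{D(r_n)}$ -- which follows from Proposition~\ref{prop:hatD_estimates} for $\widehat D_n^{\pm 1}$ together with the $I_4 + \Boh(n^{-1/2})$ bounds on each $\mathcal A^{(j)}$ coming from Lemma~\ref{lem:Ak2} -- the task collapses to proving $\mathcal M(v_n)^{-1}\mathcal M(u_n) = I_4 + \Boh(n^{-3/2}(x-y))$.

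The latter will be established by telescoping the factors of $\mathcal M(v_n)^{-1}\mathcal M(u_n)$, peeling off the $\mathcal A^{(j)}$'s from the innermost pair outward. Applying Lemma~\ref{lem:key_estimate_improvement} on $D(r_n)$ to the analytic functions $z\mapsto(A_1^{(2)}(z)-A_1^{(2)}(0))/z$, $z\mapsto(A_1^{(3)}(z)-A_1^{(3)}(0))/z$ and $z\mapsto(A_2^{(3)}(z)-A_2^{(3)}(0))/z$, with the uniform bounds $\Boh(n)$, $\Boh(n^{7/2})$ and $\Boh(n^{6})$ respectively supplied by Lemma~\ref{lem:Ak2}, \eqref{eq:A13bnd} and \eqref{eq:estimate_A32}, produces
\[
\mathcal A^{(1)}(v_n)^{-1}\mathcal A^{(1)}(u_n) = I_4 + \Boh((x-y)/n^2),\qquad \mathcal A^{(2)}(v_n)^{-1}\mathcal A^{(2)}(u_n) = I_4 + \Boh((x-y)/n^{5/2}),
\]
and $\mathcal A^{(3)}(v_n)^{-1}\mathcal A^{(3)}(u_n) = I_4 + \Boh((x-y)/n^3)$. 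Peeling these three off collapses the telescoped product, modulo negligible errors, to $\widehat D_n(v_n)\widehat D_n(u_n)^{-1}$, which by \eqref{eq:derivative_rule_Dhat} and the boundedness of $\widehat D_n^{-1}$ equals $I_4 + \Boh(n^{3/2}(v_n-u_n)) = I_4 + \Boh((x-y)/n^{3/2})$. This is the slowest-decaying contribution and pins down the overall rate.

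The main technical obstacle is the careful bookkeeping in the telescoping step: each peeling leaves an error term sandwiched between the remaining $\mathcal A^{(j)}$ and $\widehat D_n$ factors, and one must verify that none of these surrounding matrices amplifies the intermediate errors past $\Boh((x-y)/n^{3/2})$. The uniform boundedness in Proposition~\ref{prop:hatD_estimates}, combined with the $I_4 + \Boh(n^{-1/2})$ size of each $\mathcal A^{(j)}$, is exactly what prevents such amplification, so the argument should close once the telescoping is set up. Note that the $\widehat D_n$ contribution dominates and cannot be refined within the current framework, so the rate $n^{-3/2}$ is sharp here.
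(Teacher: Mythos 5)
Your argument is correct and follows the same overall strategy as the paper: estimate $\widehat L_0(v_n)^{-1}\widehat L_0(u_n)$ factor by factor, using Lemma~\ref{lem:key_estimate_improvement} for the $\mathcal A^{(j)}$'s, Lemma~\ref{lem:key_difference_estimate} for the $\widehat D_n$ contribution, and the strict triangular structure of $T_0$ for the outer conjugation. The one genuine variation is the innermost pair: you estimate $\mathcal A^{(1)}(v_n)^{-1}\mathcal A^{(1)}(u_n)=I_4+\Boh(n^{-2}(x-y))$ and $\widehat D_n(v_n)\widehat D_n(u_n)^{-1}=I_4+\Boh(n^{-3/2}(x-y))$ separately and let the telescoping absorb the mismatch, whereas the paper first uses the commutation identity $A_1^{(2)}(z)\widehat D_n(z)^{-1}=\widehat D_n(z)^{-1}n^{\widehat B/2}A_1^{(1)}n^{-\widehat B/2}$ from \eqref{def:Ak2} to rewrite $\mathcal A^{(1)}(z)\widehat D_n(z)^{-1}=\widehat D_n(z)^{-1}+\Boh(n^{-1/2})$ and then applies Lemma~\ref{lem:key_difference_estimate} to that single bounded product. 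Your split avoids the algebraic identity and makes it more transparent that $\widehat D_n$ alone is responsible for the $n^{-3/2}$ bottleneck; both routes yield the same final rate. One wording caveat: Lemma~\ref{lem:key_estimate_improvement} should be applied to $\mathcal A^{(j)}(z)$ itself, with $\widetilde M_n=I_4$ and $\delta_n=n^{-1/2},\,n^{-1},\,n^{-3/2}$ respectively (these follow from the bounds on $A_1^{(2)},A_1^{(3)},A_2^{(3)}$ you quote together with Lemma~\ref{lem:key_difference_estimate}), rather than to the quotients $(A_\bullet(z)-A_\bullet(0))/z$ directly, since those grow like powers of $n$ and so violate the requirement in Lemma~\ref{lem:key_estimate_improvement} that $\delta_n$ be bounded --- but the difference estimates for $\mathcal A^{(j)}(u_n)-\mathcal A^{(j)}(v_n)$ that you then extract are exactly the right ones.
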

\begin{proof}
The analyticity of $\widehat G$ and its inverse near the origin (recall Proposition~\ref{lem:prefactor_global_param}) and the fact that they do not depend on $n$ gives
\begin{multline}\label{eq:estimate_difference_hatG}
\widehat G(v_n)^{-1}\widehat G(u_n)=I_4+\widehat G(v_n)^{-1}(\widehat G(u_n)-\widehat G(v_n))\\ =I_4+\Boh(u_n-v_n)=I_4+\Boh((x-y)n^{-3}),\qquad n\to\infty,
\end{multline}
uniformly for $x,y\in (0,\infty)$.

The function $\mathcal A^{(3)}$ is analytic in a neighborhood of $\overline{D(r_n)}$, and in virtue of \eqref{eq:estimate_A32},
$$
\mathcal A^{(3)}(z)-I_4=\Boh(n^{-3/2}),
$$
so from Lemma~\ref{lem:key_estimate_improvement} with $\delta_n=n^{-3/2}=\varepsilon_n$ and $\widetilde M_n=I_4$, it follows that
$$
\mathcal A^{(3)}(u_n)-\mathcal A^{(3)}(v_n)=\Boh(n^{-3}(x-y)),
$$
and consequently as in \eqref{eq:estimate_difference_hatG}
\begin{equation}\label{eq:estimate_difference_curly_A3}
\mathcal A^{(3)}(v_n)^{-1}\mathcal A^{(3)}(u_n)=I_4+\Boh(n^{-3}(x-y)).
\end{equation}
Similarly, using \eqref{eq:estimate_A31}, we find that
\begin{equation}\label{eq:estimate_difference_curly_A2}
\mathcal A^{(2)}(v_n)^{-1}\mathcal A^{(2)}(u_n)=I_4+\Boh(n^{-5/2}(x-y)).
\end{equation}
Finally, from \eqref{def:Ak2}, it is readily seen that
$$
A^{(2)}_1(z)\widehat D_n(z)^{-1}=\widehat D_n(z)^{-1}n^{\frac{\widehat B}{2}}A_1^{(1)}n^{-\frac{\widehat B}{2}}.
$$
This, together with \eqref{eq:derivative_rule_Dhat2}, implies that
\begin{align*}
\mathcal A^{(1)}(z)\widehat D_n(z)^{-1}&=
\begin{multlined}[t]
\widehat D_n(z)^{-1}-\frac{\left(\widehat D_n(z)^{-1}-\widehat D_n(0)^{-1}\right)n^{\frac{\widehat B}{2}}A_1^{(1)}n^{-\frac{\widehat B}{2}}}{n^3z}\\ +\frac{\widehat D_n(0)^{-1}n^{\frac{\widehat B}{2}}A_1^{(1)}n^{-\frac{\widehat B}{2}}\Boh(n^{3/2 }z)}{n^3z}
\end{multlined}
\\
&=\widehat D_n(z)^{-1}+\Boh(n^{-1/2}).
\end{align*}
According to Proposition~\ref{prop:hatD_estimates}, the right-hand side above is bounded, so from Lemma~\ref{lem:key_difference_estimate}, we obtain
\begin{equation}\label{eq:estimate_difference_curly_A1}
\widehat D_n(v_n)\mathcal A^{(1)}(v_n)^{-1}\mathcal A^{(1)}(u_n)\widehat D_n(u_n)^{-1}=I_4+\Boh(n^{3/2}(u_n-v_n))=I_4+\Boh(n^{-3/2}(x-y)).
\end{equation}

Moving towards the end of the proof, let us combine all the equations \eqref{eq:estimate_difference_hatG}--\eqref{eq:estimate_difference_curly_A1} into the definition \eqref{eq:def_hat_L0} of $\widehat L_0$ to obtain
\begin{equation}\label{eq:estimate_L0_L0inverse}
\widehat L_0(v_n)^{-1}\widehat L_0(u_n)=n^{\widehat B}(I_4+T_0)n^{-\frac{\widehat B}{2}} \left(I_4+\Boh(n^{-3/2}(x-y))\right)n^{\frac{\widehat B}{2}}(I_4+T_0)^{-1}n^{-\widehat B}.
\end{equation}
Now, having in mind \eqref{eq:inverse_T0}, it follows that
$$
n^{\frac{\widehat B}{2}}(I+T_0)n^{-\frac{\widehat B}{2}}=I_4+n^{\frac{\widehat B}{2}}T_0 n^{-\frac{\widehat B}{2}}\quad \mbox{with} \quad n^{\frac{\widehat B}{2}}T_0^k n^{-\frac{\widehat B}{2}}=\Boh(n^{-1/2}),\;\; k=1,2,
$$
and also
$$
n^{\frac{\widehat B}{2}}(I_4+T_0)^{-1}n^{-\frac{\widehat B}{2}}=I_4-n^{\frac{\widehat B}{2}}T_0n^{-\frac{\widehat B}{2}}+n^{\frac{\widehat B}{2}}T_0^2n^{-\frac{\widehat B}{2}}.
$$
Plugging these last two identities into \eqref{eq:estimate_L0_L0inverse} concludes the proof.
\end{proof}

Using Lemmas \ref{lem:Rest} and \ref{lem:L0est} in \eqref{eq:hard_edge_relation_S_R}, we see that
$$
S_+(v_n)^{-1}S_+(u_n)=\widehat P_+(v_n)^{-1}n^{\frac{\widehat B}{2}}\left(I_4+\Boh(n^{-3/2}(x-y))\right)n^{-\frac{\widehat B}{2}}\widehat P_+(u_n),\qquad n\to \infty,
$$
uniformly for $x,y$ in compact subsets of $(0,\infty)$. Thus, it is readily seen from \eqref{eq:hard_edge_kernel1} that
\begin{align*}
& n^2K_n\left(\frac{x}{n(\beta^2-\alpha^2)},\frac{y}{n(\beta^2-\alpha^2)}\right)=n^2K_n(n^2u_n,n^2v_n) \\
& =
\frac{n^3(\beta^2-\alpha^2)}{2\pi i (x-y)}
\begin{pmatrix}
0  & -v_n^{-\kappa}e^{n \lambda_{2,+}(v_n)} & e^{n \lambda_{3,+}(v_n)} &
0 \end{pmatrix}
\widehat P_+(v_n)^{-1}n^{\frac{\widehat B}{2}}
\\
& \quad \times
\left(I_4+\Boh(n^{-3/2}(x-y))\right)n^{-\frac{\widehat B}{2}}\widehat P_+(u_n)
\begin{pmatrix}
0 & u_n^{\kappa}e^{-n\lambda_{2,+}(u_n)} & e^{-n\lambda_{3,+}(u_n)} &  0
\end{pmatrix}^T,
\end{align*}
and then using \eqref{def:hatP},
\begin{multline}\label{eq:simplification_kernel_hard_edge}
\frac{1}{n(\beta^2-\alpha^2)}K_n\left(\frac{x}{n(\beta^2-\alpha^2)},\frac{y}{n(\beta^2-\alpha^2)}\right) =\frac{1}{2\pi i (x-y)}
\begin{pmatrix}
0 & -v_n^{-\kappa} & 1 & 0
\end{pmatrix}
\\
\times
P_+(v_n)^{-1}n^{\frac{\widehat B}{2}}\left(I_4+\Boh(n^{-3/2}(x-y))\right)n^{-\frac{\widehat B}{2}} P_+(u_n)
\begin{pmatrix}
0 & u_n^{\kappa}& 1 & 0
\end{pmatrix}^T.
\end{multline}

To simplify it further, we use \eqref{def:matrix_P}, Proposition~\ref{prop:hatlambda} and the definitions of $A$ and $\widehat B$ in \eqref{def:matrix_A} and \eqref{def:hatB}, respectively, to get
\begin{multline*}
\begin{pmatrix}
0 & -v_n^{-\kappa} & 1 & 0
\end{pmatrix}
P_+(v_n)^{-1}= n^{-\nu+2\kappa}
\begin{pmatrix}
0 & -y^{-\kappa}(\beta^2-\alpha^2)^{\kappa} & 1 & 0
\end{pmatrix} \\ \times
\diag\left(1, \left(\frac{f_4(u_n)}{3}\right)^{-A}\Psi_+(n^3\varphi(v_n))^{-1}\left(\frac{f_4(v_n)}{3}\right)^{-B}\right)n^{-\widehat B}
\end{multline*}
and
\begin{multline*}
P_+(u_n)
\begin{pmatrix}
0 & u_n^{\kappa}&1 & 0
\end{pmatrix}^T=
n^{\nu-2\kappa} n^{\widehat B}
\\ \times
\diag\left(1, \left(\frac{f_4(u_n)}{3}\right)^B \Psi_+(n^3\varphi(u_n)) \left(\frac{f_4(u_n)}{3}\right)^A \right)
\begin{pmatrix}
0 & x^\kappa(\beta^2-\alpha^2)^{-\kappa} & 1 & 0
\end{pmatrix}^T.
\end{multline*}

Moving forward, we now use Proposition~\ref{prop:hatlambda} to obtain
$$
\frac{f_4(u_n)}{3}=(\beta^2-\alpha^2)^{\frac{1}{3}}+\Boh(n^{-3})=\frac{f_4(v_n)}{3},\qquad n\to \infty,
$$
and from \eqref{eq:varphizero}
$$
n^3\varphi(u_n)=x(1+\Boh(n^{-3})),\qquad n^3\varphi(v_n)=y(1+\Boh(n^{-3})),\qquad n\to\infty,
$$
where the error terms above are uniform for $x,y$ in compact subsets of $(0,+\infty)$. Combining with the analyticity of $\Psi_+$, we thus conclude
\begin{multline*}
\begin{pmatrix}
0 & -v_n^{-\kappa} & 1 & 0
\end{pmatrix}
P_+(v_n)^{-1}=
n^{-\nu+2\kappa} (\beta^2-\alpha^2)^{\frac{2\kappa-\nu}{3}}
\begin{pmatrix}
0 & -y^{-\kappa} & 1 & 0
\end{pmatrix}
\\
\times
\diag\left(1,\Psi_+(y)^{-1}\right) (\beta^2-\alpha^2)^{-\frac{\widehat B}{3}}(I_4+\mathcal E_n(v_n))^{-1}n^{-\widehat B},
\end{multline*}
where $\{\mathcal E_n\}$ is a sequence of $4\times 4$ matrix-valued analytic functions on $D(r_n)$ with
\begin{equation}\label{eq:estimate_final_error}
\mathcal E_n(z)=\Boh(n^{-3})\quad \mbox{uniformly for } z\in D(r_n) \mbox{ as } n\to\infty,
\end{equation}
and
\begin{multline*}
P_+(u_n)
\begin{pmatrix}
0 & u_n^{\kappa}&1 & 0
\end{pmatrix}^T= n^{\nu-2\kappa}(\beta^2-\alpha^2)^{\frac{\nu-2\kappa}{3}}n^{\widehat B}(I_4+\mathcal E_n(u_n))
\\
\times
(\beta^2-\alpha^2)^{\frac{\widehat B}{3}}\diag\left(1,\Psi_+(x)\right)
\begin{pmatrix}
0 & x^{\kappa} & 1 & 0
\end{pmatrix}^T
\end{multline*}
with the same error function $\mathcal E_n$.

Inserting these last two identities into \eqref{eq:simplification_kernel_hard_edge}, we have
\begin{align*}
& \frac{1}{n(\beta^2-\alpha^2)}K_n\left(\frac{x}{n(\beta^2-\alpha^2)},\frac{y}{n(\beta^2-\alpha^2)}\right)
\\
& =
\frac{1}{2\pi i (x-y)}
\begin{pmatrix}
0 & -y^{-\kappa} & 1 & 0
\end{pmatrix}
\diag\left(1,\Psi_+(y)^{-1}\right)
(\beta^2-\alpha^2)^{-\frac{\widehat B}{3}}\left(I_4+\mathcal E_n(v_n)\right)^{-1}n^{-\frac{\widehat B}{2}}
\\
& \quad \times
\left(I_4+\Boh(n^{-\frac{3}{2}}(x-y))\right)
n^{\frac{\widehat B}{2}} \left(I_4+\mathcal E_n(u_n)\right)(\beta^2-\alpha^2)^{\frac{\widehat B}{3}}\diag(1,\Psi_+(x))
\begin{pmatrix}
0 \\ x^{\kappa} \\ 1 \\ 0
\end{pmatrix}
\\
& =
\frac{1}{2\pi i (x-y)}
\begin{pmatrix}
0 & -y^{-\kappa} & 1 & 0
\end{pmatrix}
\diag\left(1,\Psi_+(y)^{-1}\right)
(\beta^2-\alpha^2)^{-\frac{\widehat B}{3}}
\\
& \quad \times
\left(\left(I_4+\mathcal E_n(v_n)\right)^{-1}\left(I_4+\mathcal E_n(u_n)\right)+\Boh(n^{-\frac{1}{2}}(x-y))\right)
(\beta^2-\alpha^2)^{\frac{\widehat B}{3}}\diag(1,\Psi_+(x))
\begin{pmatrix}
0 \\ x^{\kappa} \\ 1 \\ 0
\end{pmatrix}.
\end{align*}

In virtue of \eqref{eq:estimate_final_error}, we can once more apply Lemma~\ref{lem:key_difference_estimate} to get that
$$
(I_4+\mathcal E_n(v_n))^{-1}(I_4+\mathcal E_n(u_n))=I_4+\Boh(n^{-3}(x-y)), \qquad \mbox{as }n\to\infty,
$$
and we finally arrive at
\begin{align*}
&\frac{1}{n(\beta^2-\alpha^2)}K_n\left(\frac{x}{n(\beta^2-\alpha^2)},\frac{y}{n(\beta^2-\alpha^2)}\right)
\\
& =
\frac{1}{2\pi i (x-y)}
\begin{pmatrix}
0 & -y^{-\kappa} & 1 & 0
\end{pmatrix}
\diag\left(1,\Psi_+(y)^{-1}\right)
\left(I_4+\Boh(n^{-\frac{1}{2}}(x-y))\right)
\\
& \quad \times
\diag(1,\Psi_+(x))
\begin{pmatrix}
0 & x^{\kappa} & 1 & 0
\end{pmatrix}^T
\\
& =
\frac{1}{2\pi i(x-y)}
\begin{pmatrix}
-y^{-\kappa} & 1 & 0
\end{pmatrix}
\Psi_+(y)^{-1}\Psi_+(x)
\begin{pmatrix}
x^{\kappa} & 1 & 0
\end{pmatrix}^T + \Boh(n^{-1/2}),
\end{align*}
where, as always, the error term is uniform for $x,y$ in compact subsets of $(0,\infty)$. Hence, we obtain that
$$
\lim_{n\to\infty }\frac{1}{n(\beta^2-\alpha^2)}K_n\left(\frac{x}{n(\beta^2-\alpha^2)},\frac{y}{n(\beta^2-\alpha^2)}\right) = K_\infty(x,y)
$$
uniformly for $x,y$ in compact subsets of $(0,\infty)$, where
$$
K_\infty(x,y)=\frac{1}{2\pi i(x-y)}
\begin{pmatrix}
-y^{-\kappa} & 1 & 0
\end{pmatrix}
\Psi_+(y)^{-1}\Psi_+(x)
\begin{pmatrix}
x^{\kappa} \\ 1 \\ 0
\end{pmatrix}.
$$

To conclude the proof of Theorem~\ref{thm:hardedge}, it remains to relate $K_\infty$ with $K_{\nu,\kappa}$ as in \eqref{def:MeijerKer}. To do so, first observe that $\Psi$ - and hence $K_\infty$ - does not depend on $\alpha$ and $\beta$, as can be seen from the RH problem \ref{rhp:MeijerG} whose conditions do not depend on $\alpha$ and $\beta$. Thus, it is enough to relate $K_\infty$ with $K_{\nu,\kappa}$ for one specific choice of $\alpha$ and $\beta$, which we take to be matching those in \eqref{def:interpolating_parameters}, that is,
$$
\beta=\frac{1}{2\tau}+\frac{1}{2},\quad \alpha=\frac{1}{2\tau}-\frac{1}{2}, \quad \mbox{so}\quad \beta^2-\alpha^2=\frac{1}{\tau},
$$
where $0<\tau<1$ is any fixed number. For this specific coupling, our model \eqref{def:coupledmatrix} coincides with the model considered by Liu \cite{Liu16}, so comparing\footnote{The correspondence between our parameters $\alpha=\alpha_{SZ}$, $\beta=\beta_{SZ}$ and $\tau=\tau_{SZ}$ and Liu's parameters $\delta_L$, $\alpha_L$ and $\mu_L$ is $\beta_{SZ}=\alpha_L$, $\alpha_{SZ}=\delta_L$ and $\tau_{SZ}=\beta_L$.} with \cite[Theorem~1.3(i) and Equation~(5.20)]{Liu16} we arrive at
$$
K_\infty(x,y)=\left(\frac{y}{x}\right)^{\kappa/2} K_{\nu,\kappa}(y,x).
$$
Alternatively, the above relation can be seen from the RH characterization of the Meijer G-kernel commented in \cite[Section 4.2.5]{BB15}.

This completes the proof of Theorem \ref{thm:hardedge}. \qed
\begin{appendices}

\section{Heuristics on the vector equilibrium problem}\label{section:heuristics_equil_probl}

In this section, we give some heuristic arguments on how to formulate the vector equilibrium problem introduced in Section \ref{subsec:vep}, which is closely related to the asymptotic analysis of the RH problem for $Y$.

Recall that the goal of the second transformation $X \to T$ is to `normalize' the large $z$ asymptotics of $X$ and to prepare for the opening of lenses.  We assume that, at this moment, it takes the following form:
\begin{equation}\label{eq:transformationT}
T(z)=\widehat C X(z) \diag(e^{n\lambda_1(z)},e^{n\lambda_2(z)},e^{n\lambda_3(z)},e^{n\lambda_4(z)}),
\end{equation}
where $\widehat C $ is a constant matrix and the $\lambda$-functions are of the form
\begin{equation}\label{eq:lambda_functions}
\begin{aligned}
\lambda_1(z) & =\int^z C^{\mu_1}(s)\ud s+V_1(z), \\
\lambda_2(z) & =\int^z C^{\mu_2}(s)\ud s-\int^z C^{\mu_1}(s)\ud s+V_2(z), \\
\lambda_3(z) & =\int^z C^{\mu_3}(s)\ud s-\int^z C^{\mu_2}(s)\ud s+V_3(z), \\
\lambda_4(z) & =-\int^z C^{\mu_3}(s)\ud s+V_4(z).
\end{aligned}
\end{equation}
In the above formulas, $C^{\mu}(z)$ is the Cauchy transform of a measure $\mu$ given in \eqref{def:Cauchytransform}, $\mu_1$, $\mu_2$ and $\mu_3$ are three measures satisfying
\begin{align}
& \supp\mu_1\subset \R_-, \quad \supp\mu_2\subset \R_+, \quad \supp\mu_3\subset\R_-, \label{eq:location_supports}
\\
& 2|\mu_1|=|\mu_2|=2|\mu_3|=1, \label{eq:total_masses}
\end{align}
and $V_1,V_2,V_3,V_4$ are four functions to be determined.

As $z\to \infty$, it is readily seen from \eqref{eq:transformationT} and \eqref{eq:Xinfty} that,
\begin{multline*}
T(z)=(I_4+\Boh(z^{-1}))B(z) \\
\times \diag\left(z^{\frac{n}{2}}e^{n(\lambda_1-2\alpha z^{\frac{1}{2}})},z^{\frac{n}{2}}e^{n(\lambda_2+2\alpha z^{\frac{1}{2}})},z^{-\frac{n}{2}}e^{n(\lambda_3+2\beta z^{\frac{1}{2}})},z^{-\frac{n}{2}}e^{n(\lambda_4-2\beta z^{\frac{1}{2}})}\right).
\end{multline*}
The normalization requirement then invokes us to expect that, as $z\to \infty$,
\begin{equation}\label{eq:asymptotics_lambdas}
\begin{aligned}
\lambda_1(z)-2\alpha z^{\frac{1}{2}}+\frac{1}{2}\log z & = \boh(1), \qquad
\lambda_2(z)+2\alpha z^{\frac{1}{2}}+\frac{1}{2}\log z  = \boh(1),
\\
\lambda_3(z)+2\beta z^{\frac{1}{2}}-\frac{1}{2}\log z & = \boh(1), \qquad
\lambda_4(z)-2\beta z^{\frac{1}{2}}-\frac{1}{2}\log z  = \boh(1).
\end{aligned}
\end{equation}
On the other hand, in view of \eqref{eq:asymptotics_cauchy_transf_log_pot}, it follows that, as $z\to \infty$,
\begin{equation*}
\begin{aligned}
\lambda_1(z) & =V_1(z)-\frac{1}{2}\log z + \Boh(z^{-1}), \qquad
\lambda_2(z)  =V_2(z)-\frac{1}{2}\log z + \Boh(z^{-1}),
\\
\lambda_3(z) & =V_3(z)+\frac{1}{2}\log z + \Boh(z^{-1}), \qquad
\lambda_4(z)  =V_4(z)+\frac{1}{2}\log z + \Boh(z^{-1}).
\end{aligned}
\end{equation*}
Comparing these asymptotics with \eqref{eq:asymptotics_lambdas}, it is easily seen that we should have
\begin{equation}\label{eq:pre_potentials}
\begin{aligned}
V_1(z)&=2\alpha z^{\frac{1}{2}}, \qquad && V_2(z)=-2\alpha z^{\frac{1}{2}}, \\
V_3(z)&=-2\beta z^{\frac{1}{2}}, \qquad && V_4(z)=2\beta z^{\frac{1}{2}}.
\end{aligned}
\end{equation}

We next come to the jump condition satisfied by $T$. Taking into account \eqref{eq:transformationT}, \eqref{eq:location_supports} and \eqref{jump for X}, it is readily seen that
\begin{equation*}
T_+(x)=T_-(x)J_{T}(x), \qquad x\in\R,
\end{equation*}
where
\begin{multline*}
J_T(x)= \diag\left(1,e^{n(\lambda_{2,+}(x)-\lambda_{2,-}(x))},e^{n(\lambda_{3,+}(x)-\lambda_{3,-}(x))},1\right)
 \\
 + x^{\frac{\nu}{2}}e^{n(\lambda_{3,+}(x)-\lambda_{2,-}(x))}E_{23}, \qquad x\in \R_+,
 \end{multline*}
 and
 \begin{multline*}
J_T(x)= \Lambda\diag\left(e^{n(\lambda_{1,+}(x)-\lambda_{1,-}(x))},e^{n(\lambda_{2,+}(x)-\lambda_{2,-}(x))},e^{n(\lambda_{3,+}(x)-\lambda_{3,-}(x))},
e^{n(\lambda_{4,+}(x)-\lambda_{4,-}(x))}\right) \\ - e^{n(\lambda_{1,+}(x)-\lambda_{2,-}(x))}E_{21}- e^{n(\lambda_{4,+}(x)-\lambda_{3,-}(x))}E_{34}, \qquad x\in \R_-,
 \end{multline*}
with
$$
\Lambda:=\diag(e^{-\pi i \kappa \sigma_3},e^{\pi i (\nu-\kappa) \sigma_3}).
$$
We now look at the non-diagonal entries of the jump matrix $J_T$. It is expected that these entries to be constant on the supports of the measures. Taking their real part, we arrive at the following conditions.
\begin{itemize}
\item $(2,3)$-entry on  $\R_+$:
\begin{align*}
2U^{\mu_2}(x)-U^{\mu_1}(x)-U^{\mu_3}(x)+\re \, (V_2(x)-V_3(x))=\ell_2;
\end{align*}

\item $(2,1)$-entry on $\R_-$:
$$
2U^{\mu_1}(x)-U^{\mu_2}(x)+\re \, (V_{1,+}(x)-V_{2,-}(x))=\ell_1;
$$

\item $(3,4)$-entry on  $\R_-$:
$$
2U^{\mu_3}(x)-U^{\mu_2}(x)+\re \,(V_{3,-}(x)-V_{4,+}(x))=\ell_3,
$$
\end{itemize}
where $\ell_j$, $j=1,2,3$, is certain constant. From \eqref{eq:pre_potentials}, we thus find that the potentials $Q_1$, $Q_2$ and $Q_3$ acting on the measures $\mu_1$, $\mu_2$ and $\mu_3$ should be
\begin{align*}
Q_1(x) & = \re\,(V_{1,+}(x)-V_{2,-}(x))= 2\alpha (\sqrt x)_+ + 2\alpha (\sqrt x)_-=0, \\
Q_2(x) & = \re\,(V_2(x)-V_3(x))= 2(\beta-\alpha)\sqrt{x}, \\
Q_3(x) & = \re\,(V_{3,-}(x)-V_{4,+}(x))=-2\beta (\sqrt{x})_- -2\beta (\sqrt x)_+ = 0,
\end{align*}
as shown in \eqref{definition_vector_energy}.

Finally, we explain the upper constraint. The fact that there is an upper constraint for $\mu_1$ but not for $\mu_2,\mu_3$ is connected to the form of the jumps: the equilibrium conditions for $\mu_1$ play a role in a lower triangular block of the jump matrix, whereas for the remaining measures the corresponding equilibrium conditions appear in an upper triangular block. In virtue of the direction of the variational inequalities for the equilibrium problem, we thus expect that associated to $\mu_1$ there should be an upper constraint, but no upper constraint should appear on the remaining measures.

To find the explicit form of the constraint, again some ansatz is needed. We expect that the functions $\lambda_1'$, $\lambda_2'$, $\lambda_3'$ and $\lambda_4'$ should all be solutions to the same algebraic equation (a.k.a. spectral curve). From the sheet structure for the associated Riemann surface, we also expect that $\lambda_1'$ is analytic across the places where $\sigma$ is active, that is, $\lambda_1'$ should be analytic across $\R_-\setminus \supp(\sigma-\mu_1)$. Hence,
$$
\lambda_{1,+}'(x)-\lambda'_{1,-}(x)=0,\qquad x\in \R_- \setminus \supp(\sigma-\mu_1).
$$
Using the explicit expression for $\lambda_1$ (see \eqref{eq:lambda_functions} and \eqref{eq:pre_potentials}) and Plemelj's formula \eqref{eq:plemelj_relations}, we can rewrite the identity above as
\begin{multline*}
\frac{1}{2\pi i}\left( C^{\mu_1}_+(x)-C^{\mu_1}_-(x) +V'_{1,+}(x)-V'_{1,-}(x)\right) \\=\frac{\ud \mu_1}{\ud x}(x)+\frac{\alpha}{2\pi i}\left((x^{-\frac{1}{2}})_+-(x^{-\frac{1}{2}})_+\right)=0,\qquad x\in \R_-\setminus \supp(\sigma-\mu_1).
\end{multline*}
Taking into account that $\mu_1=\sigma$ on $\R_-\setminus \supp(\sigma-\mu_1)$, the identity above gives us
$$
\frac{\ud \sigma}{\ud x}(x)=\frac{\alpha}{\pi \sqrt{|x|}},\qquad x\in \R_-\setminus \supp(\sigma-\mu_1),
$$
which is \eqref{def:constraint_measure}.

\end{appendices}

\section*{Acknowledgements}
Guilherme Silva thanks the hospitality of the School of Mathematical Sciences at Fudan university where most of this research was carried out, and also thanks Leslie Molag for discussions related to this work.  Lun Zhang was partially supported by National Natural Science Foundation of China under grant number 11822104, by The Program for Professor of Special Appointment (Eastern Scholar) at Shanghai Institutions of Higher Learning, and by Grant EZH1411513 from Fudan University.



\begin{thebibliography}{10}

\bibitem{ADOS07}
G. Akemann, P.~H.~Damgaard, J.~C. Osborn and K. Splittorff, A new chiral two-matrix
theory for Dirac spectra with imaginary chemical potential, Nucl. Phys. B 766 (2007),
34--76.

\bibitem{AI15}
G. Akemann and J.~R. Ipsen, Recent exact and asymptotic results for products of independent random matrices, Acta Phys. Polon. B 46
(2015), 1747--1784.

\bibitem{AIK13}
G. Akemann, J.~R.~Ipsen and M. Kieburg, Products of rectangular random matrices: Singular
values and progressive scattering, Phys. Rev. E 88 (2013), 052118 13 pp.

\bibitem{AKW}
G. Akemann, M. Kieburg and L. Wei, Singular value correlation functions for products of
Wishart random matrices, J. Phys. A: Math. Theor. 46 (2013), 275205 22 pp.

\bibitem{AS18}
G. Akemann and E. Strahov, Product matrix processes for coupled
multi-matrix models and their hard edge scaling limits, Ann. Henri Poincar\'{e} 19 (2018), 2599--2649.

\bibitem{AS16b}
G. Akemann and E. Strahov, Hard edge limit of the product of two strongly coupled
random matrices, Nonlinearity 29 (2016), 3743--3776.

\bibitem{AS15}
G. Akemann and E. Strahov, Dropping the independence: singular values for products of two
coupled random matrices, Comm. Math. Phys. 345 (2016), 101--140.

\bibitem{ABK}
A.~I.~Aptekarev, P.~M.~Bleher and A.~B.~J.~Kuijlaars, Large $n$ limit of Gaussian
random matrices with external source. II, Comm. Math. Phys. 259 (2005), 367--389.

\bibitem{BKMW13}
B.~Beckermann, V.~Kalyagin, A.~Matos and F.~Wielonsky,
Equilibrium problems for vector potentials with semidefinite interaction matrices and constrained masses,
Constr. Approx. 37 (2013), 101–-134.

\bibitem{balogh_bertola_09}
F.~Balogh and M.~Bertola, Regularity of a vector potential problem and its spectral curve, J. Approx. Theory 161 (2009), 353–-370.

\bibitem{BB15}
M.~Bertola and T. Bothner, Universality conjecture and results for a model of several coupled positive-definite matrices, Comm. Math. Phys. 337 (2015), 1077--1141.

\bibitem{BEH03}
M.~Bertola, B.~Eynard and J.~Harnad, Differential systems for biorthogonal polynomials appearing in 2-matrix models and the associated Riemann-Hilbert problem, Comm. Math. Phys. 243 (2003), 193–-240.

\bibitem{BEH06}
M.~Bertola, B.~Eynard and J.~Harnad, Semiclassical orthogonal polynomials, matrix models and isomonodromic tau functions, Comm. Math. Phys. 263 (2006), 401–-437.

\bibitem{BGS14}
M. Bertola, M. Gekhtman and J. Szmigielski, Cauchy-Laguerre two-matrix model and the
Meijer-G random point field, Comm. Math. Phys. 326 (2014), 111--144.

\bibitem{BK07}
P.~M. Bleher and A.~B.~J. Kuijlaars. Large $n$ limit of Gaussian random matrices with external source, part III: double scaling limit, Comm. Math. Phys. 270 (2007), 481--517.

\bibitem{BK04}
P.~M. Bleher and A.~B.~J. Kuijlaars, Large $n$ limit of Gaussian random matrices with external
source. I, Comm. Math. Phys. 252 (2004), 43--76.

\bibitem{Bor}
A. Borodin, Biorthogonal ensembles, Nucl. Phys. B 536 (1999), 704--732.

\bibitem{BL85}
P. Bougerol and J. Lacroix, Products of random matrices with applications
to Schr\"{o}dinger operators (P. Huber and M. Rosenblatt, eds.),
Progress in probability and statistics, vol. 8, Birkh\"{a}user, Boston, 1985.

\bibitem{BJLNS}
Z. Burda, A. Jarosz, G. Livan, M.~A.~Nowak and A. Swiech, Eigenvalues
and singular values of products of rectangular Gaussian random matrices,
Phys. Rev. E 82 (2010), 061114; - the extended version Acta Phys.
Polon. B 42 (2011), 939--985.

\bibitem{CPV93}
A. Crisanti, G. Paladin and A. Vulpiani, Products of Random Matrices
in Statistical Physics, Springer Series in Solid-State Sciences 104,
Springer, Heidelberg 1993.

\bibitem{DK}
E.~Daems and A.~B.~J.~Kuijlaars, Multiple orthogonal
polynomials of mixed type and non-intersecting {B}rownian motions,
J. Approx. Theory 146 (2007), 91--114.

\bibitem{deift_book}
P.~Deift, Orthogonal polynomials and random matrices: a Riemann-Hilbert approach,
Courant Lecture Notes 3. New York University, 1999.

\bibitem{DKMVZ99}
P. Deift, T. Kriecherbauer, K. T.-R. McLaughlin, S. Venakides and X. Zhou, Uniform asymptotics for polynomials orthogonal with respect to varying exponential weights and applications to universality questions in random matrix theory. Comm. Pure Appl. Math. 52 (1999), 1335--1425.


\bibitem{DKRZ12}
S. Delvaux, A. B. J. Kuijlaars, P. Rom\'{a}n and L. Zhang,
 Non-intersecting squared Bessel paths with one positive starting
 and ending point, J. Anal. Math. 118 (2012), 105--159.

\bibitem{Dragnev_thesis}
P.~D.~Dragnev, Constrained energy problems for logarithmic potentials, Ph.D. Thesis, University of South Florida, Tampa, FL, 1997

\bibitem{Dragnev_Kuijlaars_1999}
P.~D.~Dragnev and A.~B.~J.~Kuijlaars,
Equilibrium problems associated with fast decreasing polynomials,
Proc. Amer. Math. Soc. 127 (1999), 1065--1074.

\bibitem{Dragnev_Saff_1997}
P.~D.~Dragnev and E.~B.~Saff,
Constrained energy problems with applications to orthogonal polynomials of a discrete variable,
J. Anal. Math. 72 (1997), 223--259.

\bibitem{DK09}
M. Duits and A.~B.~J. Kuijlaars, Universality in the two matrix model: a Riemann-Hilbert
steepest descent analysis, Comm. Pure Appl. Math. 62 (2009), 1076--1153.

\bibitem{duits_kuijlaars_mo}
M.~Duits, A.~B.~J.~Kuijlaars and M.~Y.~Mo,
The Hermitian two matrix model with an even quartic potential,
Mem. Amer. Math. Soc. 217 (2012), no. 1022, v+105 pp.

\bibitem{Fokas}
A.~S. Fokas, A.~R. Its and A.~V. Kitaev, The isomonodromy approach to matrix models
in $2$D quantum gravity, Comm. Math. Phys. 147 (1992), 395--430.

\bibitem{Fore14}
P.~J. Forrester, Eigenvalue statistics for product complex Wishart matrices, J.
Phys. A: Math. Theor. 47 (2014), 345202.

\bibitem{ForKie16}
P.~J. Forrester and M. Kieburg, Relating the Bures measure to the Cauchy two-matrix model, Comm. Math. Phys. 342 (2016), 151--187.

\bibitem{FK60}
H. Furstenberg and H. Kesten,
Products of random matrices, Ann. Math. Stat. 31 (1960), 457--469.

\bibitem{hardy_kuijlaars}
A.~Hardy and A.~B.~J.~Kuijlaars,
Weakly admissible vector equilibrium problems,
J. Approx. Theory 164 (2012),  854--868.

\bibitem{KuiBookChapter}
A.~B.~J. Kuijlaars, Universality, in the Oxford handbook of random matrix theory, pages
103--134, Oxford Univ. Press, Oxford, 2011.

\bibitem{KMW09}
A.~B.~J. Kuijlaars, A. Mart\'inez-Finkelstein and F. Wielonsky,
Non-intersecting squared Bessel paths and multiple orthogonal polynomials for modified Bessel weights,
Comm. Math. Phys. 286 (2009), 217--275.

\bibitem{KM19}
A.~B.~J. Kuijlaars and L.~Molag,
The local universality of Muttalib-Borodin biorthogonal ensembles with parameter $\theta=1/2$,
Nonlinearity 32 (2019), 3023--3081.

\bibitem{KS14}
A.~B.~J. Kuijlaars and D. Stivigny, Singular values of products of random matrices and
polynomial ensembles, Random Matrices Theory Appl. 3 (2014), 1450011 22 pp.

\bibitem{KT15}
A.~B.~J. Kuijlaars and A.~Tovbis, The supercritical regime in the normal matrix model with cubic potential, Adv. Math. 283 (2015), 530--587.

\bibitem{KZ}
A.~B.~J. Kuijlaars and L. Zhang, Singular values of products of Ginibre random matrices, multiple orthogonal polynomials and hard edge
scaling limits, Comm. Math. Phys. 332 (2014), 759--781.

\bibitem{Landkofbook}
N.~S. Landkof, Foundations of modern potential thoery. Grundlehren der mathematischen Wissenschaften, 180.
Springer-Verlag, Berlin, 1997.

\bibitem{Liu16}
D.-Z. Liu, Singular values for products of two coupled random matrices:
hard edge phase transition, Constr. Approx. 47 (2018), 487--528.

\bibitem{LWZ16}
D.-Z Liu, D. Wang and L. Zhang, Bulk and soft-edge universality for singular
values of products of Ginibre random matrices, Ann. Inst. Henri Poincar\'{e} - Prob.
Stat. 52 (2016), 1734--1762.

\bibitem{MFS19}
A. Mart\'inez-Finkelstein and G.~L.~F.~Silva, Spectral curves, variational problems, and the hermitian matrix model with external source, ArXiv:1907.08108, 2019, 75 pp.

\bibitem{Mehta}
M.~L. Mehta, Random Matrices, 3rd ed., Elsevier/Academic Press, Amsterdam,
2004.

\bibitem{MRick}
R.~Miranda, Algebraic Curves and {R}iemann Surfaces, vol.~5 of
Graduate Studies in Mathematics, American Mathematical Society,
Providence, RI, 1995.

\bibitem{Muskhelishvili}
N.~I.~Muskhelishvili, Singular integral equations: Boundary problems of functions theory and their applications to mathematical physics, Revised translation from the Russian, edited by J. R. M. Radok,
Wolters-Noordhoff Publishing, Groningen, 1972, xii+7--447.

\bibitem{Neuschel14}
T.~Neuschel, Plancherel-Rotach formulae for average characteristic polynomials of products of Ginibre random matrices and the Fuss-Catalan distribution, Random Matrices Theory Appl. 3 (2014), 1450003 18 pp.

\bibitem{DLMF}
F.~W.~J. Olver, D.~W. Lozier, R.~F. Boisvert and C.~W. Clark, editors,
NIST Handbook of Mathematical Functions, Cambridge
University Press, Cambridge 2010. Print companion to [DLMF].

\bibitem{OSborn04}
J.~C. Osborn, Universal results from an alternate random matrix model for QCD with a
baryon chemical potential, Phys. Rev. Lett. 93 (2004), 222001--222004.

\bibitem{pommerenke_book}
C.~Pommerenke, Univalent functions - With a chapter on quadratic differentials by Gerd Jensen, Vandenhoeck \& Ruprecht, G\"ottingen, 1975.

\bibitem{ransford_book}
T.~Ransford, Potential theory in the complex plane, London Mathematical Society Student Texts, 28, Cambridge University Press, Cambridge, 1995.

\bibitem{SaffTotik}
E.~B.~Saff and V.~Totik, Logarithmic potentials with external field. Grundlehren der mathematischen Wissenschaften, 316. Springer, Berlin, 1997.

\bibitem{rational_algebraic_curves_book}
J. R. Sendra, F. Winkler and S. P\'{e}rez-D\'{i}az, Rational algebraic curves: a computer algebra
approach, Algorithms and Computation in Mathematics, vol. 22, Springer, Berlin, 2008.

\bibitem{Tsuji75}
M. Tsuji, Potential theory in modern function theory. 2nd edition, Chelsea, New York, 1975.

\bibitem{VGJK2001}
W.~Van~Assche, J.~S.~Geronimo and A.~B.~J.~Kuijlaars,
Riemann-{H}ilbert problems for multiple orthogonal polynomials, in
Special functions 2000: current perspective and future directions
(J. Bustoz et al., eds.), Kluwer Acad. Publ., Dordrecht, 2001, pp. 23--59.

\bibitem{Zhang17}
L. Zhang, Mixed type multiple orthogonal polynomials associated with the modified Bessel functions and products of two coupled random matrices, J. Approx. Theory 213 (2017), 92--119.

\end{thebibliography}
\end{document}